\documentclass[12pt]{iopart}

\usepackage{float}  

\usepackage{graphicx}%
\usepackage{multirow}%
\usepackage{physics}
\usepackage{multirow}
\usepackage{colortbl}
\usepackage{xcolor}
\usepackage{booktabs}
\usepackage{multirow}
\usepackage{mathtools} 
\usepackage{amsfonts}
\usepackage{amsthm}%
\usepackage{lipsum}
\usepackage{mathrsfs}%
\usepackage[title]{appendix}%
\usepackage{xcolor}%
\usepackage{textcomp}%
\usepackage{manyfoot}%
\usepackage{booktabs}%
\usepackage{algorithm}%
\usepackage{algorithmicx}%
\usepackage{algpseudocode}%
\usepackage{listings}%
\usepackage{braket}
\usepackage{geometry}
\usepackage{graphicx}
\usepackage{subcaption}
\usepackage{dsfont}
\usepackage{hyperref}
\usepackage{soul}
\usepackage{enumitem}
\usepackage[nice]{nicefrac}
\usepackage[normalem]{ulem}
\usepackage{changes}
\usepackage{todonotes}
\usepackage{esvect}
\usepackage[T1]{fontenc}
\usepackage{tikz}
\usepackage{stackengine}
\usepackage{cite}
\usepackage{array} 
\newcolumntype{P}[1]{>{\raggedright\arraybackslash}p{#1}} 

\newtheorem{theorem}{Theorem}
\newtheorem{lemma}{Lemma}

\newcommand{\floor}[1]{\left\lfloor{#1}\right\rfloor}
\newcommand{\ceil}[1]{\left\lceil #1 \right\rceil}

\usepackage{accents}

\begin{document}

\title[]{Requirements for Teleportation in an Intercity Quantum Network}

\author{Soubhadra Maiti$^{1,2,3}$, Guus Avis$^{1,2,3,4}$, Sounak Kar$^{1,2,3}$, and Stephanie Wehner$^{1,2,3}$}

\address{$^1$QuTech, Delft University of Technology}
\address{$^2$Kavli Institute of Nanoscience, Delft University of Technology}
\address{$^3$Quantum Computer Science, Department of Electrical Engineering, Mathematics and Computer Science, Delft University of Technology}
\address{$^4$Manning College of Information and Computer Sciences, University of Massachusetts Amherst}

\ead{s.maiti-1@tudelft.nl}

\vspace{10pt}

\begin{abstract}
We investigate the hardware requirements for quantum teleportation in an intercity-scale network topology consisting of two metropolitan-scale networks connected via a long-distance backbone link.
Specifically, we identify the minimal improvements required beyond the state-of-the-art to achieve an end-to-end expected teleportation fidelity of $2/3$, which represents the classical limit.
To this end, we formulate the hardware requirements computation as optimisation problems, where the hardware parameters representing the underlying device capabilities serve as decision variables.
Assuming a simplified noise model, we derive closed-form analytical expressions for the teleportation fidelity and rate when the network is realised using heterogeneous quantum hardware, including a quantum repeater chain with a memory cut-off.
Our derivations are based on events defined by the order statistics of link generation durations in both the metropolitan networks and the backbone, and the resulting expressions are validated through simulations on the NetSquid platform.
The analytical expressions facilitate efficient exploration of the optimisation parameter space without resorting to computationally intensive simulations.
We then apply this framework to a representative realisation in which the metropolitan nodes are based on trapped-ion processors and the backbone is composed of ensemble-based quantum memories.
Our results suggest that teleportation across metropolitan distances is already achievable with state-of-the-art hardware when the data qubit is prepared after end-to-end entanglement has already been established, whereas extending teleportation to intercity scales requires additional, though plausibly achievable, improvements in hardware performance.

\vspace{-10pt}
\end{abstract}
%
%
%
%
%

\section{Introduction}
\label{sec:introduction}
\everypar{\looseness=-1}
Quantum networks are anticipated to facilitate tasks beyond the reach of the classical internet~\cite{wehner2018quantum}.
In such networks, remote entanglement serves as the primary resource for performing a wide range of applications, including quantum key distribution~\cite{ekert1991quantum, bennett1992quantum},
enhanced sensing~\cite{giovannetti2004quantum}, secure remote computation~\cite{arrighi2006blind, broadbent2009universal}, clock synchronisation~\cite{jozsa2000quantum}, and many other applications.
Furthermore, entanglement shared across spatially separated locations enables foundational tests of quantum mechanics, most notably Bell inequality violation~\cite{bell1964einstein}, as well as recent proposals for probing quantum gravity effects~\cite{borregaard2025testing}, among others.
A remote entangled link consists of a pair of entangled qubits shared between end nodes, i.e., quantum devices directly accessible to users.
The quality of service for any such application depends on the quality of these links, typically represented by the entanglement fidelity, and the rate at which they are generated.
Furthermore, the required thresholds for these performance metrics vary depending on the underlying application, motivating a careful analysis of these metrics in realistic network architectures.

\begin{figure}[t]
    \centering
    \includegraphics[width=0.70\linewidth]{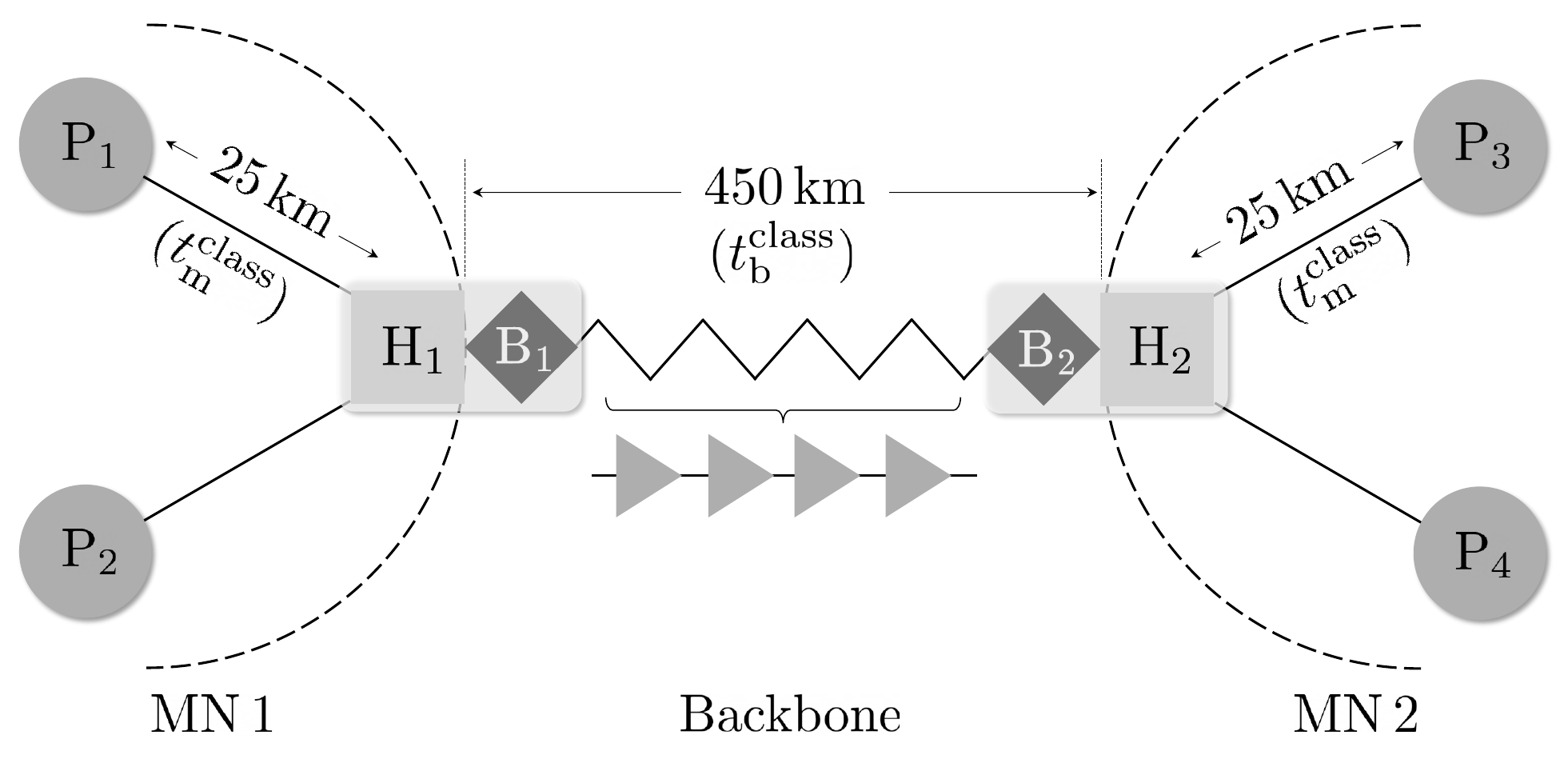}
    \caption{Schematic of an intercity quantum network architecture comprising four different components.
    User-controlled end nodes are denoted by circles $\text{P}_1$--$\text{P}_4$ and metropolitan hubs by squares $\text{H}_1$--$\text{H}_2$.
    The border nodes, shown as diamonds $\text{B}_1$--$\text{B}_2$, form the backbone together with the zig-zag line, which can be realised using either a space-based quantum communication channel or a terrestrial linear quantum repeater chain, with individual repeater nodes depicted as triangles.
    Each end node is connected to its nearest hub $25\,$km away and forms a metropolitan network (MN $1$ or $2$).
    The $450\,$km backbone connects the metropolitan regions via the border nodes at both ends.
    Together, the MNs and the backbone form the full IN, enabling long-distance quantum communication between multiple end nodes.
    }
    \vspace{-10pt}
    \label{fig:intercity_network_diagram}
\end{figure}

Photon transmission through optical fibre decays exponentially with propagation distance due to attenuation losses, which limits the feasibility of entanglement generation over long distances via direct photon transmission.
Moreover, the no-cloning theorem~\cite{wootters1982single} forbids the creation of identical copies of an unknown quantum state, making classical amplification techniques infeasible for quantum states.
To overcome this, quantum repeaters~\cite{briegel1998quantum} 
introduce intermediate nodes that enable entanglement distribution over extended distances.
Entanglement generation has been experimentally demonstrated across various physical platforms, including trapped ions~\cite{maunz2007quantum, stephenson2020high, Krutyanskiy2023Entanglement230Meters, Krutyanskiy2019Light, Krutyanskiy2023TelecomWavelength, Krutyanskiy2024Entanglement101km, Schupp2021InterfaceBetweenTrapped-Ion}, colour centres in solids~\cite{bernien2013heralded, 
humphreys2018deterministic}, rare-earth ions~\cite{lago2021telecom, liu2021heralded, ruskuc2025multiplexed}, and atomic ensemble memories~\cite{chou2007functional, yuan2008experimental, yu2020entanglement}, where matter-photon entanglement is used to generate entanglement between distant matter qubits located at remote nodes.
These platforms exhibit distinct advantages and limitations, and future large-scale quantum networks are therefore envisioned to adopt a heterogeneous architecture that integrates multiple hardware platforms~\cite{tissot2025hybrid, tissot2025single, sun2025hybrid}.

One of the central challenges in realising a scalable quantum network lies in identifying the precise hardware requirements to support efficient quantum communication.
Experimentally, Pompili \textit{et al.}~\cite{pompili2021realization} demonstrated a lab-scale three-node quantum network based on nitrogen-vacancy centres in diamond, while Krutyanskiy \textit{et al.}~\cite{Krutyanskiy2023TelecomWavelength} presented a quantum repeater node based on trapped ions capable of distributing entanglement over distances up to $50\,\text{km}$, marking significant experimental milestones.
On the theoretical side, the performance of single-repeater setups and homogeneous repeater chains has been studied extensively, both analytically~\cite{brand2020efficient, khatri2019practical, hartmann2007role, goodenough2025noise, de2024analysis, li2021efficient, rozpkedek2018parameter} and via simulation~\cite{ Ferreira_da_Silva_2021, Avis2023DelftEindhoven}.
However, to the best of our knowledge, performance analyses of heterogeneous repeater chains have so far relied predominantly on simulation-based approaches~\cite{Avis2023DelftEindhoven}.

In the context of quantum networks, quantum teleportation serves both as a fundamental primitive and as a high-level performance benchmark.
Applications such as blind quantum computing impose stringent requirements on teleportation fidelity and rate to ensure correctness and security~\cite{chia2012quantum}.
Beyond its direct use for communication, teleportation constitutes a key resource for implementing non-local gates, thereby enabling the preparation of multipartite entangled states and distributed quantum information processing~\cite{gottesman1999demonstrating}.
At the network level, the achievable end-to-end teleportation fidelity
provides a quantitative measure of overall network performance~\cite{mylavarapu2024teleportation}.
Moreover, attaining a teleportation fidelity exceeding the classical limit of $2/3$ constitutes a clear demonstration of quantum advantage~\cite{horodecki1999general}, 
analogous to Bell-inequality tests~\cite{hensen2015loophole}.

In this work, we investigate requirements for quantum teleportation in an \emph{intercity-scale network} (IN) shown in Fig.~\ref{fig:intercity_network_diagram},
consisting of two \emph{metropolitan-scale networks} (MNs) and a long-distance \emph{backbone} realised as a quantum repeater chain.
A detailed description of the network model is provided in Sec.~\ref{sec:setup}.
We adopt standard assumptions, including heralded entanglement generation (HEG), depolarising noise in memory, instantaneous local operations, swap-as-soon-as-possible (swap-ASAP) policy~\cite{kamin2023exact}, and a cut-off strategy together with the standard deterministic teleportation protocol in which teleportation succeeds with unit probability~\cite{bennett1993teleporting};
see~\ref{assumption:werner_state}–\ref{assumption:instantaneous_qubit_preparation_local_operation} for details.
Specifically, we study the requirement for attaining a teleportation fidelity of $2/3$ across metropolitan and intercity-scale distances and carry out our analysis in a hardware-agnostic manner,
i.e., we characterise the network using a set of parameters capturing the essential hardware properties.
We also evaluate these requirements for a heterogeneous platform comprising trapped-ion processors in MNs and an ensemble-based repeater chain in the backbone, thereby illustrating their relevance to realistic quantum network implementations.

For studying the requirements, we focus on two teleportation scenarios that capture different operational regimes:
\begin{itemize}
    \item \label{def:entanglement_ready_teleportation} \textbf{Entanglement-ready (ER)}:
    in ER teleportation, the data qubit (qubit to be teleported) is prepared only after end-to-end entanglement has been established.
    This approach allows the system to avoid the decoherence accumulated during the probabilistic entanglement generation process, thereby significantly improving teleportation fidelity
    ~\cite{pfaff2014unconditional}.
    \item \label{def:qubit_ready_teleportation} \textbf{Qubit-ready (QR)}:
    here, the data qubit is prepared in advance and stored in quantum memory until end-to-end entanglement is established.
    This scenario naturally arises when the data qubit preparation rate and entanglement generation rate are mismatched~\cite{
    chandra2022scheduling}, 
    a condition anticipated in large-scale network architectures.
    The resulting memory storage leads to additional decoherence.
    Although we assume instantaneous location operation and qubit preparation (see ~\ref{assumption:instantaneous_qubit_preparation_local_operation}), we include the QR scenario for completeness, as it represents an extreme yet practically relevant operating regime where the entanglement generation begins only after the data qubit is ready.
\end{itemize}

In order to determine the hardware requirements, we formulate the task as an optimisation problem, in which the relevant cost function is designed along the lines of~\cite{Avis2023DelftEindhoven}.
To solve the resulting formulation, we first derive closed-form analytical expressions for the expected fidelities and generation rates of both the end-to-end entangled link and teleported qubit.
We then validate the accuracy of the expressions by comparing them with empirical estimates from NetSquid-based simulations~\cite{Coopmans2021Netsquid}.
The analytical expressions enable us to efficiently identify the hardware requirements without resorting to computationally expensive simulations~\cite{Ferreira_da_Silva_2021, Avis2023DelftEindhoven, da2024requirements}.
Also, to quantify the requirements, we introduce two reference values for each network parameter: the \emph{baseline} value corresponds to the state-of-the-art hardware capability, while the \emph{optimistic} value represents capability anticipated in the near future based on ongoing experimental advances and theoretical predictions.
We define the hardware requirements as the \textit{minimal improvements} over the baseline parameters needed to achieve the target teleportation fidelity threshold of $2/3$.
Although our framework itself is hardware-agnostic, we evaluate the requirements using a realistic network architecture comprising trapped-ion processors for the MN and a repeater chain based on ensemble-based memories as the backbone~\cite{tissot2025hybrid, tissot2025single}, along with corresponding baseline and optimistic parameter sets.

\noindent \textit{Summary of results:}
We specifically address the following questions with respect to hardware requirements:
\begin{enumerate}[label={\textbf{Q\arabic*}},nolistsep,leftmargin=*]
    \item \label{Q:1}
    Do the baseline parameters enable quantum teleportation in an MN with a fidelity exceeding the classical limit of $2/3$?
    If not, what are the minimal parameter improvements required to reach this threshold?
    \begin{itemize}[label={-}, leftmargin=1em, itemsep=0pt]
        \item We find that current trapped-ion technology (baseline) already supports ER teleportation at metropolitan scales, whereas QR teleportation requires further improvements and becomes feasible with near-term experimental advances; see Sec.~\ref{sec:results_teleportation_in_metropolitan_network} for details.
    \end{itemize}
    \item \label{Q:2a}
    Assuming that the backbone attains optimistic performance levels, what minimal improvements to the MN parameters are required to achieve quantum teleportation with an expected fidelity $\geq 2/3$ across the IN?
    This analysis reveals the trade-offs among metropolitan parameters and provides a lower bound on the required improvements for metropolitan hardware.
    \begin{itemize}[label={-}, leftmargin=1em, itemsep=0pt]
        \item We find that although ER teleportation across IN is possible, QR teleportation requires significant improvements beyond the baseline, which nonetheless remains within the reach of optimistic estimates.
        We also identify the feasible regions in parameter space and determine the optimal parameter configurations that minimise the hardware cost; see Sec.~\ref{sec:results_teleportation_in_intercity_network} for details.
    \end{itemize}

    \item \label{Q:2b}
    Conversely, assuming optimistic MN performance, what are the corresponding backbone requirements to enable teleportation with an expected fidelity $\geq 2/3$ in the IN?
    This highlights the trade-offs between backbone parameters and establishes a lower bound on their required improvements.
    \begin{itemize}[label={-}, leftmargin=1em, itemsep=0pt]
        \item Similar to~\ref{Q:2a}, we find that ER teleportation across IN is possible and QR teleportation requires significant improvements beyond the baseline but is still within the reach of optimistic estimates;
        see Sec.~\ref{sec:results_teleportation_in_intercity_network} for details.
    \end{itemize}

    \item \label{Q:2c}
    Given baseline parameters representative of current experimental capabilities for both the MN and backbone, what minimal joint improvements are needed to achieve an expected teleportation fidelity $\geq 2/3$ in the IN?
    \begin{itemize}[label={-}, leftmargin=1em, itemsep=0pt]
        \item In this case, neither ER nor QR teleportation is possible, and we identify the minimal joint improvements required to attain the teleportation fidelity threshold; see Sec.~\ref{sec:results_teleportation_in_intercity_network} for details.
    \end{itemize}
\end{enumerate}
Beyond their role in the optimisation procedure, the analytical expressions and the underlying methodology can also be used for evaluating performance metrics for heterogeneous networks and are therefore of independent interest.

The rest of this paper is organised as follows.
First, we review relevant literature in Sec.~\ref{sec:related_work}, followed by a description of the network setup and assumptions in Sec.~\ref{sec:setup}.
Sec.~\ref{sec:objectives_and_methods} formally presents the objectives and methodology. 
In Sec.~\ref{sec:derivation_rate_an_fidelity_IN}, we derive analytical expressions for the teleportation rate and expected fidelity for the IN.
Evaluations addressing questions \textbf{Q1}--\textbf{Q4} are presented in Sec.~\ref{sec:results} and the conclusions are drawn in Sec.~\ref{sec:conclusion}.
\vspace{-8pt}

\section{Related Work}
\label{sec:related_work}
\everypar{\looseness=-1}
We begin by surveying prior studies that are closely related to our work.
This review is not intended to be an extensive review of performance analyses of quantum repeater chains; instead, we focus on key contributions that are directly relevant to our setting.
In this context, commonly considered performance metrics include the end-to-end entanglement generation time, end-to-end entanglement fidelity, and secret key rate~\cite{shor2000simple}.
Regarding the performance of a quantum repeater, Rozp\k{e}dek \textit{et al.}~\cite{rozpkedek2018parameter} identified realistic parameter regimes where a single sequential repeater outperforms direct transmission.
Analytical models for entanglement generation in a homogeneous repeater chain under various swapping and decoherence assumptions have been developed in~\cite{bernardes2011rate, brand2020efficient, khatri2019practical, hartmann2007role, kamin2023exact, goodenough2025noise, de2024analysis}, while Li \textit{et al.}~\cite{li2021efficient} optimised cutoff strategies to improve the resulting entanglement quality.
Goodenough \textit{et al.}~\cite{goodenough2025noise} derived exact formulas and tight approximations for expected fidelity under a global cut-off policy and swap-asap policy, while Andrade \textit{et al.}~\cite{de2024analysis} provided closed-form expressions for expected end-to-end fidelity in homogeneous chains with sequential entanglement swapping.
As an alternative strategy, Collins \textit{et al.}~\cite{collins2007multiplexed} proposed multiplexed quantum repeater architectures that significantly reduce sensitivity to memory coherence times.
However, these frameworks do not directly extend to inhomogeneous repeater chains where the distances between nodes are different.
In our work, we analyse a four-node repeater chain comprising two memory-equipped repeaters, where the two outer links are identical while the central link differs, thereby enabling heterogeneous hardware implementations.
We derive exact analytical expressions for expected end-to-end fidelity and generation rate of both entanglement and teleportation with a cutoff strategy, while fully accounting for communication delays to closely reflect real-world scenarios.

Several experimental approaches to quantum repeaters are based on atomic ensemble memories and linear optics following the DLCZ protocol~\cite{duan2001long}. 
Notably, Simon \textit{et al.}~\cite{simon2007quantum} extended this scheme by combining photon-pair sources with multimode quantum memories in rare-earth–doped solids, enabling faster and robust entanglement generation while retaining the simplicity of linear optics and single-photon detection.
Jiang \textit{et al.}~\cite{jiang2007fast} proposed an improved ensemble-based repeater that allows active purification, suppresses multi-excitation noise, and achieves polynomial scaling with realistic inefficiencies.
Sangouard \textit{et al.}~\cite{Sangouard2011} provided extensive surveys of such theoretical proposals and compared their entanglement generation rates against direct photon transmission.
More recently, Wu \textit{et al.}~\cite{wu2020near} incorporated time-dependent memory decay into the analysis and obtained analytical expressions for entanglement rate, showing feasibility with current technology while highlighting the importance of multiplexing.
Implementations of quantum repeaters based on other hardware platforms involve trapped ions~\cite{sangouard2009quantum, Krutyanskiy2023TelecomWavelength, zwerger2017quantum},
trapped neutral atoms~\cite{lloyd2001long, razavi2006long}, rare-earth ions doped crystals~\cite{asadi2018quantum}, and colour centres~\cite{childress2006fault, rozpkedek2019near}.
In particular, Sangouard \textit{et al.}~\cite{sangouard2009quantum} demonstrated that trapped-ion–based repeaters employing deterministic entanglement swapping and temporal multiplexing can achieve substantially higher entanglement rates than ensemble-based schemes.
Experimentally, Krutyanskiy \textit{et al.}~\cite{Krutyanskiy2023TelecomWavelength} demonstrated a trapped-ion quantum repeater node that generates and swaps entanglement over two $25\,\text{km}$ fibres, extending it to $50\,\text{km}$., and outlined near-term improvements to scale such nodes to a repeater chain of $800\,\text{km}$.
For this chain, however, they assume that the memory coherence time significantly exceeds the elementary link creation time between adjacent nodes, thereby ignoring memory decoherence.
In this context, Zwerger \textit{et al.}~\cite{zwerger2017quantum} proposed decoherence-free subspace encoding to mitigate collective dephasing.
Lloyd \textit{et al.}~\cite{lloyd2001long} proposed a cavity-based quantum network architecture in which long-distance entanglement is stored in trapped atoms, enabling unconditional teleportation via full Bell-state measurements.
Razavi \textit{et al.}~\cite{razavi2006long} compared this scheme with the DLCZ protocol, noting that while DLCZ allows faster entanglement distribution, it supports only conditional teleportation, whereas the trapped-atom approach enables unconditional teleportation.
Recent developments include hybrid architectures combining ensemble-based repeaters with single-atom or ion-based processors~\cite{tissot2025hybrid, tissot2025single, sun2025hybrid}, integration of repeaters with satellite-based links for global-scale entanglement distribution~\cite{liorni2021quantum}, and all-photonic repeaters that operate without matter-based memories~\cite{azuma2015all}.
In contrast to these works, we derive the analytics in a platform-agnostic manner while adopting parameter values for evaluations from particular experimental implementations with trapped ions~\cite{Krutyanskiy2023Entanglement230Meters, Krutyanskiy2019Light, Krutyanskiy2023TelecomWavelength, Krutyanskiy2024Entanglement101km} 
and predictions from~\cite{tissot2025hybrid,tissot2025single} for the backbone; see Sec.~\ref{sec:baseline_and_optimistic_parameters} for details.

Finally, we review prior studies most relevant to our work that address the hardware specifications for efficient quantum communication.
A critical challenge in realising quantum repeaters is that their hardware requirements remain largely unknown.
Silva \textit{et al.}~\cite{Ferreira_da_Silva_2021} introduced a systematic approach using genetic algorithms and NetSquid simulations to optimise entanglement generation in repeater chains, identifying minimal hardware requirements under realistic conditions.
In a related work, Avis \textit{et al.}~\cite{Avis2023DelftEindhoven} analysed entanglement generation between two end nodes connected via a single repeater, determining the requirements for verifiable blind quantum computing (VBQC).
Their analysis considers the restrictions imposed by real-world fibre grids and employs hardware-specific models of colour centres and trapped ions using NetSquid.
Conversely, Silva \textit{et al.}~\cite{da2024requirements} extended this to a multi-repeater network spanning $900\,\text{km}$, determining the specifications for repeaters enabling VBQC and quantum key distribution (QKD). 
Similarly, van Dam \textit{et al.}~\cite{vanDam2024hardware} identified the requirements for a trapped-ion server and measurement-only client to perform VBQC across $50\,~\text{km}$.
While these works rely on simulations in NetSquid, we derive exact analytical expressions for teleportation rates and expected fidelities to carry out the analysis.
In particular, we identify the minimal requirements in terms of the coherence time, elementary link generation probability, and link fidelity needed to surpass the classical fidelity threshold of $2/3$ for teleportation in an MN over $50\,\text{km}$ and a IN over $500\,\text{km}$~\ref{fig:intercity_network_diagram}.
\vspace{8pt}
\section{Setup and Assumptions}
\label{sec:setup}
\everypar{\looseness=-1}

In this section, we briefly introduce the quantum network setup from~\cite{beauchamp2025modularquantumnetworkarchitecture} and outline the assumptions underlying our analysis.
We specify the relevant hardware parameters that characterise the network, enabling us to formally define~\ref{Q:1}--\ref{Q:2c}.
In Fig.~\ref{fig:intercity_network_diagram}, we provide a schematic of the network,
which comprises four fundamental components: end nodes, metropolitan hubs, a backbone, and border nodes, each serving distinct functions to enable scalable quantum communication.
Given that we analyse requirements for teleportation over metropolitan (up to $50\,\text{km}$) and intercity-scale (up to $500\,\text{km}$) distances, the network components must be equipped with capabilities beyond the minimal specifications described in~\cite{beauchamp2025modularquantumnetworkarchitecture}.
We now briefly describe the functions of these components.

The \emph{end nodes}, labelled $\text{P}_1$--$\text{P}_4$ in Fig.~\ref{fig:intercity_network_diagram}, serve as primary access points for users, supporting quantum applications under user-defined control.
These devices can generate photons entangled with matter qubits and possess quantum processing capabilities, such as measurement and gate operations on matter qubits.
For deterministic teleportation, we require the end nodes to possess two quantum memories and execute classical operations, including classical protocols for coordination and computation.
Recent experiments~\cite{stephenson2020high, Krutyanskiy2023Entanglement230Meters, pompili2021realization, covey2023quantum, uphoff2016integrated, Ruf2021quantum} have demonstrated promising platforms for such end nodes, achieving long coherence times and high quantum gate fidelities using colour centres, trapped ions, and neutral atoms.

End nodes are connected to their respective \emph{metropolitan hubs}, labelled $\text{H}_1$--$\text{H}_2$ in Fig.~\ref{fig:intercity_network_diagram}, which serve as central points for facilitating HEG between node pairs within an MN, spanning city-scale distances up to $50\,\text{km}$.
End nodes $\text{P}_1$ and $\text{P}_2$ together with hub $\text{H}_1$ constitute MN $1$, while end nodes $\text{P}_3$ and $\text{P}_4$ with hub $\text{H}_2$ constitute MN $2$.
To generate and route entanglement between different pairs of nodes, these hubs may comprise entanglement generation switches~\cite{gauthier2023architecture}
or entanglement distribution switches~\cite{herbauts2013demonstration}.
For our purpose of teleportation, we only require that the hubs contain beam splitters and photon detectors to perform Bell state measurements (BSMs) on incoming photons from two nodes.
End nodes are assumed to be connected to their respective hub via standard telecommunication fibres supporting photonic qubit transmission.

To overcome the distance limitations imposed by fibre loss in direct photon transmission, the architecture incorporates a long-range \emph{backbone}, denoted by the zigzag line in Fig.~\ref{fig:intercity_network_diagram}, together with {\emph{border nodes}} $\text{B}_1$ and $\text{B}_2$.
These border nodes (referred to as junction nodes in~\cite{beauchamp2025modularquantumnetworkarchitecture}) are located adjacent to the metropolitan hubs and serve as interfaces between the backbone and MNs.
Each border node is assumed to contain at least two memories and to possess the same processing and storage capabilities as the end nodes,  allowing it to store entanglement both with neighbouring end nodes and across the backbone.
A key feature of this architecture is its ability to integrate heterogeneous quantum hardware platforms, thereby leveraging their distinct advantages, such as high-fidelity quantum gates, long memory coherence time, and multiplexing.
The backbone itself may be realised using different technological platforms.
One approach relies on space-based quantum communication~\cite{ursin2007entanglement, yin2012quantum, yin2017satellite} 
which offers low-loss photon transmission over long distances.
Alternatively, a fibre-based implementation using a multiplexed quantum repeater chain~\cite{tissot2025hybrid, simon2007quantum, Sangouard2011} can be employed, offering a promising route to preserve quantum coherence while extending the range of entanglement distribution.
We further define the IN as the composite architecture in which two distant MNs are connected via a backbone, enabling quantum communication between them.

Throughout this paper, we use the terms \emph{entanglement} and \emph{link} interchangeably to denote an entangled quantum state shared between two nodes.
Furthermore, we model entanglement generation in the IN as a single-shot process: once a border node performs an entanglement swap, it remains idle and does not initiate further generation until the ongoing end-to-end entanglement generation process is completed.
We refer to a complete trial to establish an end-to-end link as a \emph{round}, while individual trials for elementary link generation are termed as \emph{attempts}.
Under this framework, we analyse the performance of this architecture in a platform-agnostic manner, where the network is fully characterised by key hardware parameters.
In particular, for an MN, the expected teleportation fidelity and corresponding rate are primarily determined by three parameters: the elementary link generation probability between end nodes $p_{\text{m}'}$, the memory coherence time of the end nodes $t_{\text{coh}}$, and the fidelity of a freshly generated link $f_{\text{m}'}$.
Here, we assume a symmetric architecture where the two MNs are identical, end nodes are equidistant from the metropolitan hub, and border nodes have the same properties as end nodes.
Thus, the performance of the IN is governed by five key hardware parameters: the entanglement-generation probability between an

\begin{center}
\begin{tabular}{p{0.1\textwidth}p{0.83\textwidth}}
\multicolumn{2}{c}{\textbf{Definitions}}\\[1pt]
\hline
  $\lfloor x \rfloor$ & $\max\{n \in \mathbb{N}: n \le x\}$ for $x \in \mathbb{R}$ \\
  $\lceil x \rceil$ & $\min\{n \in \mathbb{N}: n \ge x\}$ for $x \in \mathbb{R}$ \\
  $m^*$ & $t_{\text{b}} / \text{gcd}(t_{\text{m}}, t_{\text{b}})$ \\ [5pt]
  \multicolumn{2}{c}{\textbf{Parameters}}\\[1pt]
\hline
  $p_{\text{m}}^0$ & base efficiency, i.e., the success probability of an entanglement generation attempt between two end nodes (or an end and border node) at zero separation \\
  $d_{\text{m}'}$ & distance between an end node and a nearby metropolitan hub \\
  $p_{\text{m}'}$ & success probability of an entanglement generation attempt between two end nodes in an MN, i.e., ${p_{\text{m}'} \!=\! p_\text{m}^0 10^{-2 \alpha d_{\text{m}'}/10}}$; see~\eqref{eq:p_m_metro} \\
  $p_{\text{m}}$ & success probability of an entanglement generation attempt between an end node and a neighbouring border node, i.e., ${p_{\text{m}} \!=\! p_\text{m}^0 10^{-\alpha d_{\text{m}'}/10}}$; see~\eqref{eq:p_m_int} \\
  $t_{\text{m}'}$ & duration of an entanglement generation attempt between two end nodes in an MN \\
  $t_{\text{m}}$ & duration of an entanglement generation attempt between an end node and a neighbouring border node \\
  $t_{\text{prep}}$ & (constant) average duration of a photon generation from an end node or border node \\
  $f_{\text{m}'}$ & fidelity of a freshly generated entangled link between two end nodes in an MN, $f_{\text{m}'} = (1+3w_{\text{m}'})/4$, where $w_{\text{m}'}$ is the corresponding Werner parameter
  \\
  $f_{\text{m}}$ & fidelity of a freshly generated entangled link between an end and a neighbouring border node, $f_{\text{m}}=(1+3w_\text{m})/4$ where $w_\text{m}$ is the corresponding Werner parameter
  \\
  $p_{\text{b}}$ & success probability of an entanglement generation attempt in the backbone \\
  $t_{\text{b}}$ & (constant) duration of an entanglement generation attempt in the backbone \\
  $f_{\text{b}}$ & fidelity of a freshly generated entangled link in the backbone, equals $(1+3w_\text{b})/4$ where $w_\text{b}$ is the corresponding Werner parameter 
  \\
  $t_\text{coh}$ & memory coherence time of an end or border node\\
  $t_\text{cut}$ & cut-off time, i.e., the maximum allowable time between the earliest and latest generated entangled links \\
  $t_\text{msg}$ & communication time between an end node and the farthest border node \\
  \multicolumn{2}{c}{\textbf{Variables}}\\[1pt]
\hline
  $M_1$ & number of attempts until successful entanglement generation between an end node in MN $1$ and neighbouring border node, $M_1 \sim \text{Geo}(p_{\text{m}})$ \\
  $X_1$ & time until successful entanglement generation between an end node in MN $1$ and neighbouring border node, i.e., $X_1 = t_{\text{m}} M_1$ \\
  $M_2$ & number of attempts until successful entanglement generation between an end node in MN $2$ and neighbouring border node, $M_2 \sim \text{Geo}(p_{\text{m}})$ \\
  $X_2$ & time until successful entanglement generation between an end node in MN $2$ and neighbouring border node, i.e., $X_2 = t_{\text{m}} M_2$ \\
  $M_\text{b}$ & number of attempts until successful entanglement generation in the backbone, $M_b \sim \text{Geo}(p_{\text{b}})$ \\
  $X_\text{b}$ & time until successful entanglement generation in the backbone, i.e., $X_\text{b} = t_{\text{b}} M_\text{b}$ \\
  \hline
\end{tabular}
\captionsetup{hypcap=false}
\captionof{table}{List of notations.}
\vspace{-0pt}
\label{tab:parameter_definitions}
\end{center}
end node and a border node $p_\text{m}$, the memory coherence time of nodes $t_{\text{coh}}$, the fidelity of a freshly generated link between an end and a border node $f_\text{m}$, entanglement-generation probability in the backbone $p_{\text{b}}$, and the fidelity of a freshly generated link in backbone $f_\text{b}$.
The definitions of all parameters are summarised in Tab.~\ref{tab:parameter_definitions}.
Note that while we focus on this symmetric setting for clarity, the framework naturally extends to more general asymmetric networks with heterogeneous nodes and link distances, at the cost of introducing additional parameters.
Throughout the analysis, we assume time is slotted and derive all performance metrics accordingly.

We now state the assumptions necessary to model the processes that influence teleportation in the intercity network.
This helps us derive the performance metrics, i.e., the expected fidelity of teleportation and corresponding rate.
\begin{enumerate}[label={\textbf{A\arabic*}},nolistsep,leftmargin=*]
\item \label{assumption:werner_state} \textbf{Entangled state description}:
We model all entangled links in the MN and backbone as Werner states~\cite{werner1989quantum}:
\begin{equation}
\label{eq:werner_state}
    \rho = w\ket{\Phi^+}\bra{\Phi^+} + \left(1-w\right)\frac{\mathbb{I}_4}{4}~,
\end{equation}
where ${\ket{\Phi^+} \!=\! (\ket{00} \!+\! \ket{11})/\sqrt{2}}$ is a maximally entangled Bell state, $w\in[0,1]$ is the corresponding Werner parameter, and $\mathbb{I}_4$ is the $4 \times 4$ identity matrix.
The fidelity of $\rho$ with respect to $\ket{\Phi^+}$ can be seen as ${{(1+3w)}/{4}}$.

While physically realised remote entangled states are generally not of the Werner form~\cite{simon2007quantum, zwerger2017quantum}, we model link-level entanglements as Werner states due to the following reasons.
First, when two Werner states are swapped, the resulting state is also a Werner state with the corresponding Werner parameter given by the product of those of the initial states~\cite{Munro2015}, making analysis easier.
Further, any bipartite state can be transformed into a Werner state by twirling, i.e., by applying transformations uniformly at random from a set of operations that involve identical rotations on each qubit~\cite{bennett1996mixed}. 

\item \label{assumption:depolarising_noise} \textbf{Noise in memory}: 
During entanglement generation across multiple links or in the presence of classical communication delays, the qubits stored in the memory undergo decoherence.
We model this noise as a depolarising channel acting on the stored qubit.
Specifically, the evolution of a quantum state ${\rho}_A$ undergoing a depolarising channel $\mathcal{E}_t$ over a storage time $t$ in memory $A$ is given by
\vspace{-8pt}
\begin{equation}
\label{eq:depolarising_channel}
    \mathcal{E}_t({\rho}_A) = e^{-t/t_{\text{coh}}}{\rho}_A + \left(1-e^{-t/t_{\text{coh}}}\right) \frac{\mathbb{I}_2}{2}~,
\end{equation}
where $t_{\text{coh}}$ denotes the memory coherence time.
For a Werner state ${\rho}_{AB}$ which has maximally mixed marginals, when both qubits undergo depolarising noise for a duration $t$, the resulting state is given by
\begin{equation}
     \mathcal{E}_t \otimes \mathcal{E}_t({\rho}_{AB}) = e^{-2t/t_{\text{coh}}}{\rho}_{AB} + \left(1-e^{-2t/t_{\text{coh}}}\right) \frac{\mathbb{I}_4}{4}~.
\end{equation}
\looseness=-1
It is worth noting that in certain platforms, such as trapped-ion systems, memory noise can be described more accurately by a correlated dephasing process with a Gaussian temporal profile characterised by the coherence time~\cite{Krutyanskiy2023TelecomWavelength}.
In contrast, we model the memory noise as an exponential depolarising channel as defined in~\eqref{eq:depolarising_channel}.
The choice is primamily motivated by analytical tractability and platform independence.
Specifically, depolarising noise preserves the Werner form of states~\eqref{eq:werner_state}, thereby simplifying the analysis~\cite{goodenough2025noise}.
Moreover, while the depolarising channel does not accurately capture the correlated and non-Markovian nature of Gaussian dephasing, it provides a convenient and broadly comparable benchmark that facilitates general conclusions across architectures.
A more detailed investigation of how correlated Gaussian dephasing influences network performance would be valuable but lies beyond the scope of this work.
\item \label{assumption:entangling_metro} \looseness = -1 \textbf{Entanglement generation between end nodes in a metropolitan network}:
We adopt the double-click HEG protocol~\cite{Barrett2005DoubleClick} in our model.
Specifically, to entangle two end nodes in an MN, e.g., $\text{P}_1$ and $\text{P}_2$ in MN $1$ (Fig.~\ref{fig:intercity_network_diagram}), both nodes first generate matter-photon entanglement.
Photons from both nodes then travel to the midpoint metropolitan hub, where a photonic BSM is performed.
The BSM succeeds with a certain probability, and the result travels back to both end nodes, heralding the signal whether entanglement is successfully generated.
We further assume that entanglement generation is attempted at fixed intervals defined as the cycle time.
The cycle time and success probability depend on the fibre length between the nodes.

We denote the distance between an end node and its nearest hub as $d_{\text{m}'}$, such that the fibre length between two end nodes in an MN is given by $2d_{\text{m}'}$.
The entanglement-generation probability is attenuated by fibre loss and is modelled as an exponential factor $\eta(\cdot)$~\cite{da2024requirements}.
Consequently, the entanglement-generation probability between two nodes is given by:
\begingroup
    \setlength{\abovedisplayskip}{6pt}   
    \setlength{\belowdisplayskip}{6pt}   
    \setlength{\abovedisplayshortskip}{2pt}
    \setlength{\belowdisplayshortskip}{2pt}
    \begin{align}
    \label{eq:p_m_metro}
        & p_{\text{m}'} = p_\text{m}^0 ~\eta(2d_{\text{m}'}) = p_\text{m}^0 10^{-2 \alpha d_{\text{m}'}/10}~,
    \end{align}
\endgroup
where $\alpha \!=\! 0.2\,\text{km}^{-1}$ is the attenuation coefficient~\cite{sangouard2009quantum} for typical telecommunication optical fibres in the optical wavelength range around $1550\,\text{nm}$ and $p_\text{m}^0$, named base efficiency, captures all distance-independent factors, including photon source brightness, coupling losses, and visibility.

To derive the cycle time, i.e., the time for each entanglement generation attempt between the end nodes, we observe that the photon propagation time from an end node to the hub and the heralding signal propagation time from the hub to the end node are each given by $t_\text{m}^\text{class} \!=\! d_{\text{m}'}/c$, where $c$ is the speed of light in fibre.
To account for delays apart from the propagation times, we introduce $t_{\text{prep}}$ as the average duration of all local (i.e., node-level) experimental overheads, including photon generation attempts. 
This can, for example, represent intrinsic delays between entanglement generation attempts in trapped-ion systems, which arise from necessary procedures such as laser excitation, qubit initialisation, cooling, optical pumping, and system control latencies~\cite{Krutyanskiy2023TelecomWavelength}.
Since the photon generation attempts at the end nodes are synchronised, the cycle time is given by
\begin{align} 
\label{eq:t_m_metro}
    t_{\text{m}'} = t_{\text{prep}} + 2t_{\text{m}}^{\text{class}}~.
\end{align}

Note that we express the terms $t_{\text{m}'}$, $t_\text{prep}$, and $t_\text{m}^\text{class}$ in units chosen such that their numerical values correspond to integers.  
Finally, in our discrete-time analysis, the number of cycles $M'$ to successfully establish a link is geometrically distributed:
\begingroup
    \setlength{\abovedisplayskip}{5pt}   
    \setlength{\belowdisplayskip}{5pt}   
    \setlength{\abovedisplayshortskip}{2pt}
    \setlength{\belowdisplayshortskip}{2pt}
    \begin{align}
    \label{eq:attempts_unitl_success_metro}
        M' \sim \text{Geo}({p}_{\text{m}'})~,
    \end{align}
\endgroup
whereas a freshly generated link is assumed to be in the Werner form~\eqref{eq:werner_state} with fidelity $f_{\text{m}'}$ and corresponding Werner parameter $w_{\text{m}'}$ such that $f_{\text{m'}}\!=\!(1\!+\!3w_{\text{m}'})/4$ is the fidelity.
Note that imperfections such as reduced two-photon interference visibility, detector inefficiencies, or dark counts contribute to noise and consequently lower the fidelity.
We do not model these effects explicitly and instead incorporate them into the effective fidelity $f_\text{m'}$.
This approach similarly applies to the assumptions outlined in~\ref{assumption:entangling_metro_int}--\ref{assumption:entangling_backbone}.

\item \label{assumption:entangling_metro_int} \textbf{Entanglement generation between an end node and a neighbouring border node}:
Similar to~\ref{assumption:entangling_metro}, we adopt a discrete-time model for entanglement generation, now between an end node and a neighbouring border node, e.g., between $\text{P}_1$ and $\text{B}_1$.
Since $\text{B}_1$ is located right next to the hub $\text{H}_1$, only the photon from $\text{P}_1$ traverses the distance $d_{\text{m}'}$ to $\text{H}_1$, while the photon from $\text{B}_1$ has negligible travel time.
Since the border nodes are assumed to have the same properties as the end nodes, the entanglement-generation probability between an end node and a border node can be obtained as 
\begingroup
    \setlength{\abovedisplayskip}{6pt}   
    \setlength{\belowdisplayskip}{6pt}   
    \setlength{\abovedisplayshortskip}{2pt}
    \setlength{\belowdisplayshortskip}{2pt}
    \begin{align}
    \label{eq:p_m_int}
        p_\text{m} = p_\text{m}^0\eta(d_{\text{m}'}) = p_\text{m}^0 10^{-\alpha d_{\text{m}'}/10}~.
    \end{align}
\endgroup
Furthermore, the photon travel time from an end node to the hub and the heralded signal propagation time from the hub to the end node are each given by $t_\text{m}^\text{class} \!=\! d_{\text{m}'}/c$.
In comparison to this, the duration for both the photon from the border node to the hub and the return heralding signal from the hub to the border node is negligible.
For properly synchronised entanglement generation attempts, considering that the average duration of all local experimental overheads is as in~\ref{assumption:entangling_metro}, the cycle time is given by
\begingroup
    \setlength{\abovedisplayskip}{6pt}   
    \setlength{\belowdisplayskip}{6pt}   
    \setlength{\abovedisplayshortskip}{2pt}
    \setlength{\belowdisplayshortskip}{2pt}
    \begin{align}
    \label{eq:t_m_int}
        t_{\text{m}} = t_{\text{prep}} + 2t_{\text{m}}^{\text{class}}~. 
    \end{align}
\endgroup
Note that the quantities $t_{\text{m}}$, $t_\text{prep}$, and $t_\text{m}^\text{class}$ are expressed in units chosen such that their numerical values are integers.  
Finally, in an entanglement generation attempt, the number of trials $M_i$ required to establish an entangled link $i$ follows a geometric distribution:
\begingroup
    \setlength{\abovedisplayskip}{6pt}   
    \setlength{\belowdisplayskip}{6pt}   
    \setlength{\abovedisplayshortskip}{2pt}
    \setlength{\belowdisplayshortskip}{2pt}
    \begin{align}
        M_i \sim \text{Geo}({p}_{\text{m}})~, \quad i \in \{1,2\}~.
    \end{align}
\endgroup
Once an attempt succeeds, we assume that the generated link is of the Werner form~\eqref{eq:werner_state} with fidelity $f_\text{m}$ and corresponding Werner parameter $w_\text{m}$ such that $f_\text{m}\!=\!(1\!+\!3w_\text{m})/4$.

\item \label{assumption:entangling_backbone} \textbf{Entanglement generation in the backbone}:
We model entanglement generation in the backbone as a monolithic process, where each attempt takes a fixed duration $t_{\text{b}}$ and succeeds with probability $p_{\text{b}}$.
This abstraction captures the behaviour of quantum systems that can store spin qubits in memory, such as trapped ions, colour centres, as well as systems capable of storing photonic qubits in memory for a finite time window, such as atomic ensemble-based memories.
Thus, the number of attempts $M_b$ required for successful entanglement generation in the backbone and the corresponding generation time $X_b$ are given by
\begingroup
    \setlength{\abovedisplayskip}{6pt}   
    \setlength{\belowdisplayskip}{5pt}   
    \setlength{\abovedisplayshortskip}{2pt}
    \setlength{\belowdisplayshortskip}{2pt}
    \begin{align}
        M_\text{b} \sim \text{Geo}(p_{\text{b}})~, \quad X_\text{b}:=t_{\text{b}} M_\text{b}~,
    \end{align}
\endgroup
\vspace{-4pt}
and the resulting entanglement generation rate is 
\begin{align}
\label{eq:backbone_rate_definition}
    R_\text{b} = 1/\mathbb{E}(X_\text{b}) = p_{\text{b}}/t_{\text{b}}~.
\end{align}
Since we consider that the border nodes share the same properties as the end nodes, we again assign $t_{\text{prep}}$ as the average duration of a single photon generation attempt at a border node, including local experimental delays (see~\ref{assumption:entangling_metro}).
Thus, we define the cycle time $t_\text{b}$ as the sum of the time it takes to create a single photon from the border node, i.e., $t_{\text{prep}}$, added with the one-way classical communication time between the two border nodes across the backbone  $t_{\text{b}}^{\text{class}}$, i.e.,
\begingroup
    \setlength{\abovedisplayskip}{6pt}   
    \setlength{\belowdisplayskip}{6pt}   
    \setlength{\abovedisplayshortskip}{2pt}
    \setlength{\belowdisplayshortskip}{2pt}
    \begin{equation} \label{eq:t_b_total}
        t_{\text{b}} = t_{\text{prep}} + t_{\text{b}}^{\text{class}}~.
    \end{equation}
\endgroup
For evaluation, we use an estimate of the entanglement generation rate in the backbone from a model of trapped-ion nodes connected via repeaters composed of ensemble-based memory~\cite{tissot2025hybrid}.
Thus, we can obtain the estimate for $p_\text{b}$ by using~\eqref{eq:backbone_rate_definition} and~\eqref{eq:t_b_total}.
Note that the quantities $t_{\text{b}}$, $t_\text{prep}$, and $t_\text{b}^\text{class}$ are expressed in units chosen such that their numerical values are integers.  

We also assume that the entangled link produced in the backbone can be transferred to the border node memories without loss.
Consequently, the resulting link between the border nodes can also be described by a Werner state~\eqref{eq:werner_state} with a Werner parameter $w_\text{b}$ and corresponding fidelity $f_\text{b}$ such that $f_\text{b} \!=\! (1\!+\!3w_\text{b})/4$.

\item \label{assumption:entanglement_swap_asap} \textbf{Entanglement swapping}:
End-to-end entanglement across the IN is created in the following steps: first, entangled links between end and border nodes and in the backbone are created in parallel, which we refer to as elementary links.
We adopt the swap-ASAP policy~\cite{kamin2023exact}, i.e., as soon as a border node holds entangled links on both sides, it performs an entanglement swap~\cite{bennett1993teleporting}.
Two such swaps on two border nodes complete the creation of an end-to-end entanglement in the IN.
We assume this swap is realised deterministically, since we assume that the border nodes can implement swap using quantum gates and measurements on the qubits in memory.
Furthermore, we assume that the swap process is noiseless.
This is justified in the context of long-range quantum teleportation, where qubit decoherence due to waiting times during remote entanglement generation is the dominant source of noise.
\item \label{assumption:node_communication} \textbf{Classical communication time between nodes}:
At the beginning of each end-to-end entanglement generation round, all participating nodes in the network are synchronised and prepared to initiate entanglement generation with their immediate neighbours.
In an elementary link generation attempt, whether between adjacent end nodes or between an end node and a border node, a BSM is performed on the incoming photons at the hub.
Based on the outcome, a classical heralding signal is issued to both nodes to indicate the success or failure of the attempt.
Similarly, upon completion of the final swap, a classical message is sent to both end nodes, confirming successful end-to-end entanglement generation.
Since the border node is located right next to the metropolitan hub, the time to send a message from a border node to a neighbouring end node is $t_{\text{m}}^{\text{class}}$.
Denoting by $t_{\text{b}}^{\text{class}}$ the time for propagation of a message across the backbone~\ref{assumption:entangling_backbone}, the total communication delay associated with the final notification is given by
\begingroup
    \setlength{\abovedisplayskip}{0pt}   
    \setlength{\belowdisplayskip}{6pt}   
    \setlength{\abovedisplayshortskip}{2pt}
    \setlength{\belowdisplayshortskip}{2pt}
    \begin{align}
    \label{eq:t_msg}
        t_\text{msg} = t_\text{m}^\text{class} + t_\text{b}^\text{class}~.
    \end{align}
\endgroup
This communication time is shown in Fig.~\ref{fig:intercity_network_time}.
Consequently, the communication time across the IN between two end nodes belonging to different metropolitan networks, e.g., $\text{P}_1$ and $\text{P}_3$ (see Fig.~\ref{fig:intercity_network_time}), is given by
\begingroup
    \setlength{\abovedisplayskip}{0pt}   
    \setlength{\belowdisplayskip}{6pt}   
    \setlength{\abovedisplayshortskip}{2pt}
    \setlength{\belowdisplayshortskip}{2pt}
    \begin{align}
    \label{eq:t_class_int}
        t_\text{int}^\text{class} = 2 t_\text{m}^\text{class} + t_\text{b}^\text{class}~.
    \end{align}
\endgroup

\begin{figure}[t]
    \centering
    \includegraphics[width=0.70\linewidth]{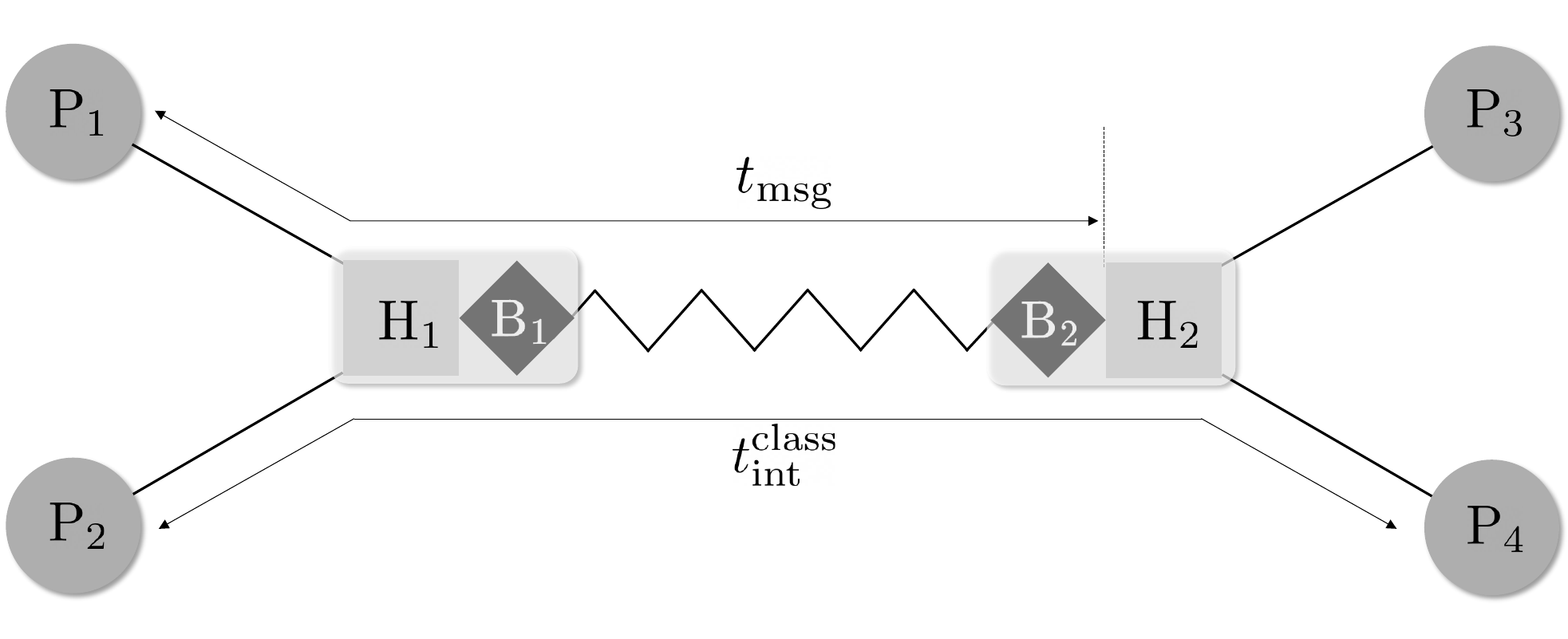}
    \caption{Classical communication time between nodes in an intercity quantum network.
    }
    \vspace{-10pt}
    \label{fig:intercity_network_time}
\end{figure}

\item \label{assumption:cutoff_time} \textbf{Cut-off time}:
Due to the probabilistic nature of elementary link generation, 
links that are generated earlier must wait until a neighbouring link is formed to start entanglement swapping.
During this period, links decay in the memory, which degrades the overall fidelity of the final link.
To mitigate the effects of decoherence, a common strategy is to employ a cut-off mechanism~\cite{collins2007multiplexed} 
where old degraded links are discarded and replaced with freshly prepared links.
In our model for single-shot entanglement generation across the intercity network, we implement this strategy as follows: initiate entanglement generation trials by simultaneously starting generation attempts across all elementary links.
A timer activates upon the successful establishment of the first link.
If the remaining links are established within a predetermined cut-off time $t_\text{cut}$, we perform entanglement swaps between adjacent links as they become available, ultimately yielding an end-to-end link after the final swap.
Conversely, if all required links are not created within $t_\text{cut}$, all active links are discarded, and the entanglement generation process restarts across the network.
This entire approach is commonly referred to as a \emph{global cut-off} strategy in literature, e.g.~\cite{goodenough2025noise}.
As mentioned earlier, we define a \emph{round} as a whole trial to establish an end-to-end link, composed of multiple attempts for elementary link generation.
A round is considered successful if the time interval between the earliest and latest generated elementary links remains below a specified cut-off time.

While a lower cut-off time $t_\text{cut}$ improves the fidelity of the end-to-end link by reducing the storage time of qubits, it leads to a reduction in the corresponding entanglement generation rate, as more links are discarded. 
Following~\cite{Avis2023DelftEindhoven}, we vary $t_\text{cut}$ within the range
\begingroup
    \setlength{\abovedisplayskip}{6pt}   
    \setlength{\belowdisplayskip}{6pt}   
    \setlength{\abovedisplayshortskip}{2pt}
    \setlength{\belowdisplayshortskip}{2pt}
    \begin{align}
    \label{eq:t_cut_range_naive}
        t_\text{cut} / t_\text{coh} \in [0.01, 1]~,
    \end{align}
\endgroup
where $t_\text{coh}$ denotes the memory coherence time.
Moreover, the cut-off time must be sufficiently large to accommodate both the time required for entanglement generation attempts on individual links ($t_\text{m}, t_\text{b}$) and the classical communication delays ($t_\text{msg}$).
In this work, time is treated as a discrete variable, taking values in the set of natural numbers.
Thus, we obtain the following range for the cut-off time:
\begingroup
    \setlength{\abovedisplayskip}{6pt}   
    \setlength{\belowdisplayskip}{6pt}   
    \setlength{\abovedisplayshortskip}{2pt}
    \setlength{\belowdisplayshortskip}{2pt}
    \begin{align} 
    \label{eq:t_cut_range}
    t_\text{cut} \in \left( \max\left\{t_\text{m}, t_\text{b}, t_\text{msg}, 0.01\,t_\text{coh}\right\},\ t_\text{coh} \right] \cap \mathbb{N} =: T_\text{cut}~.
    \end{align}
\endgroup
Note that our analysis in Sec.~\ref{sec:derivation_rate_IN} only requires the lower bound of $t_\text{cut}$, but not the upper bound.
Also, cut-off times exclusively apply to HEG between end nodes involving multiple links and the data qubits are never discarded.

Note that, as per our cut-off strategy for single-shot entanglement generation in the intercity network, the decision to restart is determined independently at each link based on its generation attempt.
However, it is possible to enhance the end-to-end entanglement generation rate by adopting more flexible cut-off strategies, 
for instance, one may discard and restart only those links whose generation time exceeds $t_\text{cut}$, while retaining other successfully generated links. 
This process continues until all three links are generated within a $t_\text{cut}$ time window.
Analysing this cut-off policy is more involved and is left as future work.
\item \label{assumption:instantaneous_qubit_preparation_local_operation} \textbf{Data qubit preparation, local gate operation, and qubit measurement}:
Although the time scales associated with qubit initialisation, gate operations, and measurements vary across different physical platforms, it is orders of magnitude shorter than the entanglement generation times and classical message propagation times in networks where nodes are separated by hundreds of kilometres.
For example, durations for local operations such as qubit preparation, gate operation, and measurement in trapped-ion systems typically lie in the range of hundreds of nanoseconds to a few microseconds~\cite{Avis2023DelftEindhoven, Bruzewicz2019}.
For colour centres, these operations can range from a few nanoseconds to a few microseconds ~\cite{humphreys2018deterministic, pfaff2014unconditional, taminiau2014universal, reiserer2016robust, kalb2017entanglement}.
In contrast, classical communication between spatially separated nodes over hundreds of kilometres incurs delays on the order of milliseconds, constrained by the speed of light in optical fibre (approx. $5\,\mu\text{s/km}$).
Furthermore, entanglement generation between distant nodes typically requires multiple attempts due to the probabilistic photon generation process, photon loss through fibre, detector inefficiencies, and the probabilistic nature of heralding, which significantly increases the overall time required for entanglement generation.
Due to this pronounced separation of time scales, we assume that the local operations are instantaneous relative to the long-distance entanglement generation time.
Note that in the HEG process, we account for the local experimental overheads separately in~\ref{assumption:entangling_metro} and~\ref{assumption:entangling_metro_int} via~\eqref{eq:t_m_metro} and~\eqref{eq:t_m_int}, which can range from several hundred microseconds to milliseconds per attempt~\cite{Krutyanskiy2023TelecomWavelength}. 

\end{enumerate}

\noindent Equipped with these assumptions, we now formally describe~\ref{Q:1}--\ref{Q:2c}.
\vspace{-10pt}

\section{Objectives and Methods}
\label{sec:objectives_and_methods}
\everypar{\looseness=-1}
In this section, we present the methodology for determining the necessary requirements to achieve teleportation with a desired level of quality in the network shown in Fig.~\ref{fig:intercity_network_diagram}.
Specifically, we address \ref{Q:1}-\ref{Q:2c}.
We observe that the teleportation rate and the expected fidelities of ER and QR teleportation in the MNs and IN depend on the hardware parameters as shown in Tab.~\ref{tab:quantities_and_parameters}.
In addition, we include the \textit{non-hardware} parameter cut-off time $t_\text{cut}$ (in parentheses) to show its influence on the performance-relevant metrics.
\begin{table}[ht]
\centering
\caption{Description of quantities of interest and the influencing parameters.}
\label{tab:quantities_and_parameters}
\begin{tabular}{ll}
\hline
\textbf{Quantities of interest} & \textbf{Influencing parameters} \\
\hline
\textit{Metropolitan network (MN)}: \\
Teleportation rate $R_\text{m}$         & $p_\text{m}^0$ \\
Expected ER teleportation fidelity $\mathbb{E}(F_\text{m}^\text{ER})$      & $p_\text{m}^0$, $t_\text{coh}$, $f_{\text{m}'}$ \\
Expected QR teleportation fidelity $\mathbb{E}(F_\text{m}^\text{QR})$      & $p_\text{m}^0$, $t_\text{coh}$, $f_{\text{m}'}$ \\
\hline
\textit{Intercity network (IN)}: \\
Teleportation rate $R_\text{int}$      & $p_\text{m}^0$, $p_\text{b}$, ($t_\text{cut}$) \\
Expected ER teleportation fidelity $\mathbb{E}(F_\text{int}^\text{ER})$      & $p_\text{m}^0$, $t_\text{coh}$,  $f_\text{m}$, $p_\text{b}$, $f_\text{b}$, ($t_\text{cut}$) \\
Expected QR teleportation fidelity $\mathbb{E}(F_\text{int}^\text{QR})$      & $p_\text{m}^0$, $t_\text{coh}$, $f_\text{m}$, $p_\text{b}$, $f_\text{b}$, ($t_\text{cut}$) \\
\hline
\end{tabular}
\vspace{-7pt}
\end{table}
Note that under our assumption of instantaneous local operations, the teleportation rates are identical for both teleportation types in the MN and IN.
Our primary objective is to identify the combination of hardware parameters required to achieve the target teleportation fidelity $2/3$ in the MN and IN.
Within the feasible parameter space, we seek to determine the minimal hardware improvement required over the state-of-the-art parameters, with respect to the cost function introduced in the next section.

\subsection{Optimisation Framework and Hardware Cost}
\label{sec:optimization_and_cost_function}

We recall from~\ref{Q:1}--\ref{Q:2c} that our first objective is to investigate whether the state-of-the-art hardware can achieve the target teleportation fidelity $f_\text{target} \!:=\! 2/3$.
To that end, we characterise the feasible parameter range that helps attain this threshold.
Let $\vec\lambda$ denote the generic hardware parameter vector determining the expected fidelity of teleportation $\mathbb{E}(F)$, where ${F \in \{F_\text{m}^\text{ER}, F_\text{m}^\text{QR}, F_\text{int}^\text{ER}, F_\text{int}^\text{QR}\}}$; see Tab.~\ref{tab:quantities_and_parameters} for the explicit forms of $\vec\lambda$ and $F$.
In addition, Tab.~\ref{tab:quantities_and_parameters} shows the dependence of $F$ on the \textit{optional} non-hardware parameter $t_\text{cut}$.
Denoting the feasible hardware parameter space as ${\Lambda}$ and the range of $t_\text{cut}$ as $T_\text{cut}$, the desired range of hardware parameters can be written as
\begingroup
    \setlength{\abovedisplayskip}{6pt}   
    \setlength{\belowdisplayskip}{6pt}   
    \setlength{\abovedisplayshortskip}{2pt}
    \setlength{\belowdisplayshortskip}{2pt}
    \begin{align}\label{eq:Lambda+}
        \Lambda_{+} &:= \{\vec\lambda \in {\Lambda} \!: \max_{t_\text{cut} \in T_\text{cut}} \mathbb{E}\big(F(\vec\lambda \mid t_\text{cut})\big) \geq f_\text{target}\}\,,~\text{where}~{\Lambda}:= \times_{i=1}^{m} [\underaccent{=}{\lambda}^{(i)}, \overline{\lambda}^{(i)}]~,
    \end{align}
\endgroup
with $\underaccent{=}{\lambda}^{(i)}$, $\overline{\lambda}^{(i)}$, and $m$ respectively denoting the 
lowest possible value of the $i$th hardware parameter, its optimistic value, and the total number of hardware parameters influencing the expected teleportation fidelity $\mathbb{E}(F)$.
The optimistic values represent projections by experimental groups for achievable performance in the near future; see Sec.~\ref{sec:baseline_and_optimistic_parameters}.
Since these projections are based on current experimental status and possible room for improvements, it remains uncertain a priori whether these optimistic values are sufficient to achieve the target teleportation performance.
Therefore, we impose $\overline{\lambda}^{(i)}$ as an upper bound in~\eqref{eq:Lambda+} to identify, if it exists, the most accessible point in the parameter space that meets the required criteria.
Note that the vertical bar preceding $t_\text{cut}$ in the expression for $F$ in~\eqref{eq:Lambda+} highlights that it is a non-hardware parameter, relevant only to teleportation in the intercity network. 
Furthermore, the specific form of $T_\text{cut}$ is provided in~\eqref{eq:t_cut_range}.
Furthermore, we define the baseline parameters, denoted as $\underline{\lambda}^{(i)}$, to represent the current state-of-the-art values such that $\underline{\lambda}^{(i)} \in [\underaccent{=}{\lambda}^{(i)},\overline{\lambda}^{(i)}]$, for all $i$.
In our evaluation, we specifically adopt the baseline and optimistic parameter sets corresponding to experiments with trapped-ion and ensemble-based memories; see Sec.~\ref{sec:baseline_and_optimistic_parameters} for more details.

It is evident that if the state-of-the-art (i.e., baseline) values of the hardware parameter lie in the desired range $\Lambda_{+}$, which achieves teleportation with target fidelity, no further hardware improvement is required for attaining our objective.
Otherwise, we aim to find the \emph{minimal} hardware improvements necessary, in the following space
\begingroup
    \setlength{\abovedisplayskip}{6pt}   
    \setlength{\belowdisplayskip}{6pt}   
    \setlength{\abovedisplayshortskip}{2pt}
    \setlength{\belowdisplayshortskip}{2pt}
    \begin{align}\label{eq:Lambda++}
        \Lambda_{++} &:= \{\vec\lambda \in {\Lambda}_\text{base} \!: \max_{t_\text{cut} \in T_\text{cut}} \mathbb{E}\big(F(\vec\lambda \mid t_\text{cut})\big) \geq f_\text{target}\}\,,~\text{where}~{\Lambda}_\text{base} := \times_{i=1}^{m} [\underline{\lambda}^{(i)}, \overline{\lambda}^{(i)}]~.
    \end{align}
\endgroup
Then, the \emph{minimality} of hardware improvement is defined in terms of the cost function~\cite{Avis2023DelftEindhoven, Ferreira_da_Silva_2021, da2024requirements} 
denoted  $h$, which leads to the set of resulting hardware parameter values as $\Lambda'_{*}$. 
That is,
\begingroup
    \setlength{\abovedisplayskip}{6pt}   
    \setlength{\belowdisplayskip}{6pt}   
    \setlength{\abovedisplayshortskip}{2pt}
    \setlength{\belowdisplayshortskip}{2pt}
    \begin{align}
        \label{eq:definition_optimal_point_set}
        \Lambda'_{*} := \{ \vec\lambda \in \Lambda_{++}\!: h(\vec\lambda,\underline{\vec{\lambda}}) = \min\limits_{\vec\lambda' \in \Lambda_{++}} h(\vec\lambda',\underline{\vec{\lambda}}) \}\,,~\text{where}~ h(\vec\lambda,\underline{\vec{\lambda}}) &:= \sum_{i=1}^m \text{IF}(\lambda^{(i)}\!, \underline{\lambda}^{(i)})~, \\
        \text{IF}(\lambda^{(i)}\!, \underline{\lambda}^{(i)}) &:= \frac{\ln(p^{(i)}_{NI}(\underline{\lambda}^{(i)}))}{\ln(p^{(i)}_{NI}(\lambda^{(i)}))}~, \label{eq:improvement_factor}
    \end{align}
\endgroup
and the functions $p^{(i)}_{NI},~ i \!\in\! [m]$ denote the \textit{probability of no-imperfection}, a metric~\cite{Coopmans2021Netsquid} that maps hardware parameters with varying ranges to $[0,1]$.
In this mapping, the functional value $1$ corresponds to ideal performance such as infinite coherence time, perfect link fidelity, and $100\%$ base efficiency.
We provide the specific forms of $p_{NI}$s in Tab.~\ref{tab:no_imperfection_probability_baseline_and_optimistic}.
Moreover, to quantify the degree of improvement required for each parameter relative to its baseline, we define the improvement factor (IF) as in~\eqref{eq:improvement_factor} and define the hardware cost $h$ as the sum of the improvement factors across all parameters.
It is important to note that the hardware cost computed through this methodology serves as a measure of the technical difficulty in enhancing hardware parameters to the specified levels, rather than representing any financial costs.

Observe that the definition of the set $\Lambda'_{*}$ in~\eqref{eq:definition_optimal_point_set} involves solving a constrained optimisation problem, where the constraint ${\vec\lambda' \in \Lambda_{++}}$ is imposed to ensure that the target performance in~\eqref{eq:Lambda+} is met.
Moreover, restricting the search space to $\Lambda_{++}$ rather than $\Lambda_{+}$ ensures that the optimisation process seeks only parameter improvements, preventing any parameter from being reduced below its baseline value.
We adopt the approach called scalarisation~\cite{pascoletti1984scalarizing, schaffer1985some} to convert this into an unconstrained optimisation problem, which is numerically convenient to solve. 
Specifically, we add the constraint as a large penalty term to the objective function $h$, yielding a new objective function $c$, which, due to the presence of the penalty term, \emph{may} also depend on the non-hardware parameter $t_\text{cut}$:
\begingroup
    \setlength{\abovedisplayskip}{6pt}   
    \setlength{\belowdisplayskip}{6pt}   
    \setlength{\abovedisplayshortskip}{2pt}
    \setlength{\belowdisplayshortskip}{2pt}
    \begin{align}
        c(\vec\lambda,\underline{\vec{\lambda}},F\mid t_\text{cut}) := \omega_1\big(1 \!+\! \big(f_\text{target} \!- \!\mathbb{E}\big(F(\vec\lambda \mid t_\text{cut})\big) \big)^2 \big)\, \mathds{1}_{\mathbb{E}(F(\vec\lambda \mid t_\text{cut})) < f_\text{target}} \!\!+\! \omega_2 h(\vec\lambda,\underline{\vec{\lambda}})~.
    \label{eq:defition_total_cost_function2}
    \end{align}
\endgroup
As mentioned, the indicator function $\mathds{1}_{\mathbb{E}(F(\vec\lambda) \mid t_\text{cut}) < f_\text{target}}$ introduces a penalty when the target fidelity is not achieved.
Further, the weights $\omega_1$ and $\omega_2$ control the relative importance of satisfying the fidelity constraint versus minimising hardware cost.
By imposing $\omega_1 \!\gg\! \omega_2$, we ensure that  the optimiser prioritises solutions satisfying $F(\vec\lambda \mid t_\text{cut}) \!\geq\! f_{\text{target}}$.
In our implementation, we set $w_1\!=\!10^{100}$ and $w_2\!=\!1$ following~\cite{da2024requirements} since the precise value of the total cost is irrelevant when the fidelity constraint is violated.

With the updated objective function $c$, we derive the set of desired hardware parameter values $\Lambda_{*}$ as below:
\begingroup
    \setlength{\abovedisplayskip}{2pt}   
    \setlength{\belowdisplayskip}{6pt}   
    \setlength{\abovedisplayshortskip}{2pt}
    \setlength{\belowdisplayshortskip}{2pt}
    \begin{align}
        \Lambda_{*} := \{ \vec\lambda \in \Lambda_\text{base} \!: c(\vec\lambda,\underline{\vec{\lambda}},F\mid t_\text{cut}) = \min\limits_{\vec\lambda' \in \Lambda_\text{base},\, t'_\text{cut} \in T_\text{cut}} c(\vec\lambda',\underline{\vec{\lambda}},F\mid t'_\text{cut}) \}~. \label{eq:lambdaStar}
    \end{align}
\endgroup
Finally, we use the optimisation heuristic from~\cite{Prielinger2024}, which outputs a point $\vec\lambda_{*} \in \Lambda_{*}$ that is \textit{practically} equivalent to any other point in $\Lambda_{*}$ considering our objective. 
We now address the individual questions in~\ref{Q:1}--\ref{Q:2c}.

\subsection{Reformulation of~\ref{Q:1}}
\label{sec:reformulation_of_Q1}
Recall that in~\ref{Q:1}, we consider teleportation in an MN, i.e., between nodes $P_1$ and $P_2$ (resp. $P_3$ and $P_4$) via the metropolitan hub $H_1$ (resp. $H_2$).
To identify the desired parameter space for the MN, we first introduce the following shorthand notations for the relevant hardware parameters and their ranges:
\begingroup
    \setlength{\abovedisplayskip}{6pt}   
    \setlength{\belowdisplayskip}{6pt}   
    \setlength{\abovedisplayshortskip}{2pt}
    \setlength{\belowdisplayshortskip}{2pt}
    \begin{align}
    \label{eq:paraemter_space_metropolitan}
        & \vec\lambda_{\text{m}'} := (p_\text{m}^0,t_\text{coh},f_{\text{m}'})\,, ~\vec{\underaccent{=}{\lambda}}_{\text{m}'} := (\underaccent{=}{p}_\text{m}^0, \underaccent{=}{t}_\text{coh}, \underaccent{=}{f}_{\text{m}'})\,, ~\vec{\underline{\lambda}}_{\text{m}'} := (\underline{p}_\text{m}^0, \underline{t}_\text{coh}, \underline{f}_{\text{m}'})\,,
        ~\vec{\overline{\lambda}}_{\text{m}'} := (\overline{p}_\text{m}^0, \overline{t}_\text{coh}, \overline{f}_{\text{m}'})\,, \\
        & \Lambda_{\text{m}'} := [\underaccent{=}{p}_\text{m}^0,\overline{p}_\text{m}^0] \times [\underaccent{=}{t}_\text{coh},\overline{t}_\text{coh}] \times [\underaccent{=}{f}_{\text{m}'},\overline{f}_{\text{m}'}]\,, ~\Lambda_{\text{m}',\text{base}} := [\underline{p}_\text{m}^0,\overline{p}_\text{m}^0] \times [\underline{t}_\text{coh},\overline{t}_\text{coh}] \times [\underline{f}_{\text{m}'},\overline{f}_{\text{m}'}]~.
    \end{align}
\endgroup
Recall that the parameters with lower bars denote baseline values, and those with upper bars denote optimistic values.

For the ER case, we denote the desired parameter range enabling teleportation in the MN with the threshold fidelity as
\begingroup
    \setlength{\abovedisplayskip}{6pt}   
    \setlength{\belowdisplayskip}{6pt}   
    \setlength{\abovedisplayshortskip}{2pt}
    \setlength{\belowdisplayshortskip}{2pt}
    \begin{align}
    \label{eq:desired_space_metro_ER}
        \Lambda_{1+}^\text{ER} := \{\vec\lambda_{\text{m}'} \in \Lambda_{\text{m}'} \!: \mathbb{E}\big(F_\text{m}^\text{ER} (\vec\lambda_{\text{m}'})\big)\geq f_\text{target} \}~.
    \end{align}
\endgroup
In case the baseline values lie outside this desired region, we proceed to find the set of points that minimises the hardware improvement cost as
\begingroup
    \setlength{\abovedisplayskip}{6pt}   
    \setlength{\belowdisplayskip}{6pt}   
    \setlength{\abovedisplayshortskip}{2pt}
    \setlength{\belowdisplayshortskip}{2pt}
    \begin{align}
    \label{eq:optimal_points_set_metro_ER}
        \Lambda_{1*}^\text{ER} \!:= \!\Big\{\vec\lambda_{\text{m}'} \in \Lambda_{\text{m}',\text{base}} \!: c(\vec{\lambda}_{\text{m}'},\underline{\vec{\lambda}}_{\text{m}'},F_\text{m}^\text{ER}) = \! \min\limits_{\vec{\lambda}_{\text{m}'}' \in \Lambda_{\text{m}',\text{base}}}  c(\vec{\lambda}_{\text{m}'}',\underline{\vec{\lambda}}_{\text{m}'},F_\text{m}^\text{ER}) \Big\}~.
    \end{align}
\endgroup
Depending on the cost function landscape, there could be multiple points in the parameter space satisfying the required criteria.
However, as discussed earlier, our numerical optimisation algorithm yields a specific solution $\vec\lambda_{1*}^\text{ER} \in \Lambda_{1*}^\text{ER}$, which is sufficient from a \textit{practical} point of view, since all such points are equivalent with respect to minimising the cost function~\eqref{eq:defition_total_cost_function2}.
Moreover, for better visualisation, we plot the following surface
\begingroup
    \setlength{\abovedisplayskip}{6pt}   
    \setlength{\belowdisplayskip}{6pt}   
    \setlength{\abovedisplayshortskip}{2pt}
    \setlength{\belowdisplayshortskip}{2pt}
    \begin{align}
    \label{eq:min_fm_definition_metro_ER}
        \tilde{\Lambda}_{1+}^\text{ER} := \{(p_\text{m}^0,t_\text{coh},f_{\text{m}'}) \in \Lambda_{1+}^\text{ER} \!: f_{\text{m}'} = \min_{(p_\text{m}^0,t_\text{coh},z)\in \Lambda_{1+}^\text{ER}} z \}~,
    \end{align}
\endgroup
instead of the the desired parameter range $\Lambda_{1+}^\text{ER}$.
Observe that $\tilde{\Lambda}_{1+}^\text{ER}$ denotes the set requiring the minimum link fidelity $f_\text{m}$ among the points in $\Lambda_{1+}^\text{ER}$.
Also, for each point on this surface, we calculate the corresponding teleportation rate as:
\begingroup
    \setlength{\abovedisplayskip}{6pt}   
    \setlength{\belowdisplayskip}{6pt}   
    \setlength{\abovedisplayshortskip}{2pt}
    \setlength{\belowdisplayshortskip}{2pt}
    \begin{align}
    \label{eq:rate_min_fm_definition_metro_ER}
        & R_1^{\text{ER}}(\vec{\lambda}_{\text{m}'}) := R_{\text{m}}(p_{\text{m}}^0)~,~~  \vec{\lambda}_{\text{m}'} \in \tilde{\Lambda}_\text{1+}^\text{ER}~.
    \end{align}
\endgroup

For the qubit-ready (QR) case, the desired parameter space, the surface, set of optimal points, and the corresponding rate for the points on this surface are, respectively, given by
\begingroup
    \setlength{\abovedisplayskip}{6pt}   
    \setlength{\belowdisplayskip}{6pt}   
    \setlength{\abovedisplayshortskip}{2pt}
    \setlength{\belowdisplayshortskip}{2pt}
    \begin{align}
        & \Lambda_{1+}^\text{QR} := \{\vec\lambda_{\text{m}'} \in \Lambda_{\text{m}'} \!: \mathbb{E}\big(F_\text{m}^\text{QR} (\vec\lambda_{\text{m}'})\big) \geq f_\text{target} \}~, 
        \label{eq:desired_space_metro_QR} \\
        & \tilde{\Lambda}_{1+}^\text{QR} := \{(p_\text{m}^0,t_\text{coh},f_{\text{m}'}) \in \Lambda_{1+}^\text{QR} \!: f_{\text{m}'} = \min_{(p_\text{m}^0,t_\text{coh},z)\in \Lambda_{1+}^\text{QR}} z \}~,
        \label{eq:min_fm_definition_metro_QR} \\
        & \Lambda_{1*}^\text{QR} \!:= \!\Big\{\vec\lambda_{\text{m}'} \in \Lambda_{\text{m}',\text{base}} \!: c(\vec{\lambda}_{\text{m}'},\underline{\vec{\lambda}}_{\text{m}'},F_\text{m}^\text{QR}) = \! \min\limits_{\vec{\lambda}_{\text{m}'}' \in \Lambda_{\text{m}',\text{base}}}  c(\vec{\lambda}_{\text{m}'}',\underline{\vec{\lambda}}_{\text{m}'},F_\text{m}^\text{QR}) \Big\}~, 
        \label{eq:optimal_points_set_metro_QR} \\
        & R_1^{\text{QR}}(\vec{\lambda}_{\text{m}'}) := R_{\text{m}}(p_{\text{m}}^0)
        ~, ~~\vec{\lambda}_{\text{m}'} \in \tilde{\Lambda}_\text{1+}^\text{QR}~.
        \label{eq:rate_min_fm_definition_metro_QR}
    \end{align}
\endgroup

The reformulations of~\ref{Q:2a}--\ref{Q:2c} are carried out in a similar way, and we define them formally in~\ref{sec:reformulation_of_Q2-Q4}. 
For example, in the reformulation of~\ref{Q:2a} for ER teleportation, the desired parameter space, surface, set of optimal points, and the corresponding rate are denoted respectively as $\Lambda_{2+}^\text{ER}$, $\tilde{\Lambda}_{2+}^\text{ER}$, $\Lambda_{2*}^\text{ER}$, and $R_{2}^\text{ER}$; see see~\eqref{eq:desired_space_metro_int_ER}--\eqref{eq:rate_min_fm_definition_metro_int_ER}.

\looseness = -1
For evaluations, we need to compute the sets~\eqref{eq:desired_space_metro_QR}--\eqref{eq:optimal_points_set_metro_QR} and their counterparts for~\ref{Q:2a}--\ref{Q:2c} provided in~\ref{sec:reformulation_of_Q2-Q4} .
To that end, we first express the performance metrics, i.e., the rate and fidelity of teleportation as mentioned in Tab.~\ref{tab:quantities_and_parameters}, in terms of the hardware parameters and the cut-off, where applicable.
The derivations for the metrics for intercity teleportation are provided in Sec.~\ref{sec:derivation_rate_an_fidelity_IN}, while the corresponding expressions for teleportation within an MN are provided in~\ref{sec:metro_teleportation_rate} and~\ref{sec:metro_teleportation_fidelity}.
In the next section, we outline the procedure used to arrive at the baseline and optimistic values of the hardware parameters, considering trapped-ion nodes for the MNs and ensemble-based memories for the repeater chain in the backbone.
\vspace{-7pt}

\subsection{Baseline and Optimistic Parameter Values}
\label{sec:baseline_and_optimistic_parameters}
We take the parameter values for the MN components corresponding to the quantum communication experiments with trapped ions.
Our choice is motivated by the popularity of these physical systems in the context of quantum networking primitives.
Empirical demonstrations of the efficacy of such systems include long-lived quantum memories~\cite{Krutyanskiy2023TelecomWavelength, drmota2022long}, remote entanglement generation~\cite{Krutyanskiy2023Entanglement230Meters, Krutyanskiy2019Light, Krutyanskiy2024Entanglement101km}, the ability to perform high-fidelity single- and two-qubit quantum gates~\cite{ghadimi2019towards, inlek2017multispecies, higgins2017single}, and entanglement swapping~\cite{Krutyanskiy2023Entanglement230Meters}.
Note that we do not assume all parameters to have been demonstrated in the same experimental setup.

The baseline parameter values listed in Tab.~\ref{tab:no_imperfection_probability_baseline_and_optimistic} reflect the state-of-the-art experimental capabilities, while their optimistic counterparts represent projected near-term improvements.
Recent experimental work~\cite{Krutyanskiy2023Entanglement230Meters} has demonstrated entanglement generation over a trapped-ion network with average fidelities ${f}_{\text{m}'}$ up to $0.88$ relative to a maximally entangled state.
Furthermore, coherence times (${t}_\text{coh}$) as long as $62$ ms have been achieved~\cite{Krutyanskiy2023TelecomWavelength}.
Our baseline for the base efficiency ${p}_\text{m}^0$ for ion-ion entanglement generation, i.e., without considering the fibre loss, is $5.95\times 10^{-4}$.
We provide a comprehensive derivation of $p_\text{m}^0$ from experimental parameters in~\ref{sec:derivation_metro_p_m} and obtain the baseline value.
Following the supplementary material of~\cite{Krutyanskiy2023TelecomWavelength}, we take $t_\text{prep}$ as $175\,\mu\text{s}$.

To obtain the optimistic values of these parameters, we incorporate anticipated near-term hardware advancements as indicated by experimental physicists at the Institute for Quantum Optics and Quantum Information, University of Innsbruck~\cite{privatecommunication2025}.
These improvements include narrowing the photon-detection coincidence window and enhancing photon detectors to increase ion–ion entanglement fidelity.
We expect such advances to yield a metropolitan link fidelity ${f}_{\text{m}'}$ of $0.95$.
Moreover, employing decoherence-free subspaces~\cite{zwerger2017quantum, haffner2005robust} in combination with sympathetic cooling techniques~\cite{rosenband2007observation, home2009memory}, or using alternative ion isotopes with inherently longer coherence times~\cite{drmota2023robust}, is expected to extend the memory coherence time ${t}_\text{coh}$ to $4$ seconds.
With respect to the base efficiency ${p}_\text{m}^0$, the ion-photon entanglement-generation probability can be enhanced by employing smaller cavities to achieve better coupling with the trapped ions, using improved photon detectors, and frequency conversion techniques for the emitted photons.
These advancements yield an optimistic estimate of ${p}_\text{m}^0 = 1.43\times 10^{-2}$.
We provide the corresponding derivation in~\ref{sec:derivation_metro_p_m}.
Note that we do not know a priori whether the optimistic parameter values are sufficient to satisfy our requirements.
While these values are based on projections of current experimental capabilities and plausible future improvements, further improvements beyond these estimates may be possible.
In this work, however, we do not consider such potential advancements and instead assess whether the optimistic parameters are adequate.

The backbone network, which spans a distance of 450 km, has not yet been experimentally realised.
However, multimode quantum memories based on atomic ensembles have been shown to enable heralded entanglement generation when integrated with spontaneous parametric down conversion sources~\cite{simon2007quantum, sinclair2014spectral}.
Motivated by these advances and the potential for long storage times~\cite{ruskuc2022nuclear, zhong2015optically}, we base our parameter estimates on theoretical models of hybrid architectures that combine trapped-ion nodes with atomic ensemble-based repeaters~\cite{tissot2025hybrid, tissot2025single}.
For the baseline backbone parameters, we adopt conservative estimates consistent with current experimental capabilities.
Using the formalism of~\cite{tissot2025hybrid}, this yields an entanglement generation rate of approximately $1/1610.15~\text{sec}^{-1}$ at a target fidelity ${f}_\text{b}\!=\!0.6$ 
\footnote{\label{fn:benedikt} Based on the \textit{two-single-click protocol} of~\cite{tissot2025hybrid, tissot2025single}, and values obtained using the corresponding code repository.}.
Following the supplementary material of~\cite{Krutyanskiy2023TelecomWavelength}, we take $t_\text{prep}$ to be $175\,\mu\text{s}$.
Thus, using~\eqref{eq:backbone_rate_definition} and~\eqref{eq:t_b_total}, we obtain the baseline of backbone entanglement-generation probability $p_\text{b}\!=\!R_\text{b}\times t_b\!=\!1.51\times 10^{-6}$, where we used $t_\text{b}^\text{class}\!=\!450/c$, and $c\!=\!200,\!000\,\text{km/s}$ is the speed of light in optical fibre.

\begin{table}[htbp]
  \centering
  \caption{Mapping of physical parameters to corresponding no-imperfection probabilities ($p_{NI}$), which re-scale physical parameter values to the interval $[0,1]$. Baseline values reflect the state-of-the-art experimental capabilities, and optimistic values represent projected near-term improvements. We vary parameters over this range to evaluate the hardware requirements to perform teleportation. The metropolitan and backbone parameter values are representative of trapped-ion-based networks and ensemble-based quantum repeater chains, respectively.
  }
  \label{tab:no_imperfection_probability_baseline_and_optimistic}
  \begin{tabular}{llll} 
    \toprule
    \textbf{Parameter $\lambda^{(i)}$} & \textbf{$p_{NI}(\lambda^{(i)})$} & \textbf{Baseline $\underline{\lambda}^{(i)}$} & \textbf{Optimistic $\overline{\lambda}^{(i)}$} \\
    \midrule
    Metropolitan base efficiency $p_\text{m}^0$ & $p_\text{m}^0$ & $5.95 \times 10^{-4}$ & $1.43\times 10^{-2}$ \\ 
    &  & (see \ref{sec:derivation_metro_p_m}) & (see \ref{sec:derivation_metro_p_m}) \\ 
    Coherence time $t_\text{coh}$ & $e^{-1/t_\text{coh}}$ & $62$ ms~\cite{Krutyanskiy2023TelecomWavelength} & $4$ s~\cite{privatecommunication2025}   \\ 
    Metropolitan ent. fidelity $f_\text{m}$ & $\frac{1}{3}(4f_\text{m}-1)$ & $0.88$~\cite{Krutyanskiy2023Entanglement230Meters} & $0.95$~\cite{privatecommunication2025}  \\
    Backbone ent. gen. probability $p_\text{b}$ & $p_\text{b}$ & $1.51\times 10^{-6}$   & $4.18\times 10^{-3}$~\cite{tissot2025hybrid} \\ 
    Backbone ent. fidelity $f_\text{b}$ & $\frac{1}{3}(4f_\text{b}-1)$ & $0.60$   & $0.90$~\cite{tissot2025hybrid} \\
    \bottomrule
  \end{tabular}
  \vspace{-6pt}
\end{table}

To derive the corresponding optimistic values, we follow~\cite{tissot2025hybrid} to arrive at an entanglement generation rate $R_\text{b}$ of approximately $1/0.58~\text{sec}^{-1}$ with a target entanglement fidelity ${f}_\text{b}\!=\!0.9$.
This gives the optimistic backbone entanglement-generation probability ${p_\text{b}\!=\!R_\text{b}\times t_b\!=\!4.18\times 10^{-3}}$.
We emphasise that although the parameter values for our evaluations are based on quantum-networking experiments involving trapped ions and atomic ensemble-based repeaters, our modelling framework remains hardware agnostic.
\vspace{-10pt}

\section{Derivation of the Teleportation Rate and Expected Fidelity in the Intercity Network}
\label{sec:derivation_rate_an_fidelity_IN}
\everypar{\looseness=-1}
To answer~\ref{Q:2a}--\ref{Q:2c}, we now derive the rate and fidelity of teleportation in the IN, which are then plugged into the formulations of the desired sets in~\ref{sec:reformulation_of_Q2}-\ref{sec:reformulation_of_Q4}.

\subsection{Derivation of Teleportation Rate in the Intercity Network}
\label{sec:derivation_rate_IN}

In our model of the IN, the rate at which a qubit can be teleported between end nodes belonging to different MNs is determined by the time required to establish end-to-end entanglement and subsequent teleportation time.
Due to~\ref{assumption:instantaneous_qubit_preparation_local_operation}, the teleportation time includes the transmission time of the Pauli correction message from the sender node to the receiver, which is given by
${t_{\text{int}}^\text{class}}$; see~\eqref{eq:t_class_int}.
Further, let \( X_\text{e2e} \) be the random variable representing the time to establish an end-to-end entanglement successfully.
Since we assume that data qubit preparation is instantaneous (see~\ref{assumption:instantaneous_qubit_preparation_local_operation}), the total time required in both ER and QR teleportation is  \( {t_{\text{int}}^\text{class}} + X_\text{e2e}\). 
Thus, the rate of intercity teleportation is given by
\begin{equation}
\label{eq:def_rate_teleportation_int}
    \text{(Teleportation rate)} \qquad \qquad \qquad R_{\text{int}} = \frac{1}{\mathbb{E}(t_\text{int}^{\text{class}} \!+\! X_\text{e2e})} = \frac{1}{t_\text{int}^{\text{class}} \!+\! \mathbb{E}(X_\text{e2e})}~. \qquad \qquad \qquad
\end{equation}

We now focus on the derivation of $\mathbb{E}(X_{\text{e2e}})$. 
To that end, we introduce certain notations and rephrase relevant implications from assumptions~\ref{assumption:werner_state}--\ref{assumption:instantaneous_qubit_preparation_local_operation}.

\begin{enumerate}[label={\textbf{R\arabic*}},nolistsep,leftmargin=*]
\item \label{assumptionRate:individualDurations} \textbf{Individual link generation times
}:
Recall from~\ref{assumption:cutoff_time} that a cut-off time $t_\text{cut}$ is imposed to ensure a certain quality of entanglement.  
We denote by $X_1$, $X_2$, and $X_\text{b}$ the duration of \emph{successfully} generating entanglement between $\text{P}_1$ (or $\text{P}_2$) and $\text{J}_1$, $\text{P}_3$ (or $\text{P}_4$) and $\text{J}_2$, and $\text{J}_1$ and $\text{J}_2$, respectively.
We further define
$${X_{\text{max}} := \max\{X_1, X_2, X_b\}}~,~ {X_{\text{min}} := \min\{X_1, X_2, X_b\}}~.$$
As explained in~\ref{assumption:cutoff_time}, we consider an end-to-end entanglement generation round to be successful if and only if all links are produced within a \( t_\text{cut} \) time block where $t_\text{cut} \in T_\text{cut}$; see~\eqref{eq:t_cut_range}.
That is, we require  
\begin{equation}\label{define-kutoff}
    X_{\text{max}} - X_{\text{min}} < t_\text{cut} \iff X_{\text{max}} - X_{\text{min}} \leq t_\text{cut}' := t_\text{cut} - 1.
\end{equation}

\item \label{assumptionRate:auxiliaryVar} \textbf{Auxiliary variables}:
To denote the success of the $i$-th end-to-end entanglement generation round, we introduce a Bernoulli random variable \( Y^{(i)} \). 
That is, \( {Y^{(i)} \!=\! 1} \) if end-to-end entanglement is successfully created during the \( i \)-th attempt and \( {Y^{(i)} \!=\! 0} \) otherwise.  
Also, we denote by $N$ the number of rounds until successful entanglement generation and by $Z^{(i)}$ the duration of the $i$-th round.  
We also assume
\begingroup
    \setlength{\abovedisplayskip}{6pt}   
    \setlength{\belowdisplayskip}{6pt}   
    \setlength{\abovedisplayshortskip}{2pt}
    \setlength{\belowdisplayshortskip}{2pt}
    \begin{align}
        &Y^{(i)} \overset{\text{IID}}{\sim} Y\,,~Z^{(i)} \overset{\text{IID}}{\sim} Z\,, \label{eq:def_Y_Z} \\
        &\mathbb{P}(Y=1) = \mathbb{P}(X_\text{max} \!-\! X_\text{min}<t_\text{cut}) =: p~. \label{eq:def_p}
    \end{align}
\endgroup

\looseness = -1 
Recall from~\ref{assumption:entanglement_swap_asap} that entanglement swap in memory always succeeds.
However, in a successful entanglement generation round, after completion of the final swap, a classical message must be transmitted to both end nodes to signal whether the actual teleportation process can commence.
As mentioned in~\eqref{eq:t_msg} and illustrated in Fig.~\ref{fig:intercity_network_time}, the duration required for this communication is denoted by $t_\text{msg}$.
On the other hand, recall from~\ref{assumption:cutoff_time} that a round is considered unsuccessful if the final entangled link (out of three) is not generated within the cut-off time $t_\text{cut}$ since the generation of the first elementary link.
At time $X_\text{min}$, a classical message is sent across nodes to communicate successful entanglement generation in the relevant link.
The message takes at most $t_{\text{msg}}$ time to reach the farthest node.
Since we have ${t_\text{cut} \geq t_{\text{msg}}}$ (see~\eqref{eq:t_cut_range}), this allows all nodes to abandon existing elementary link generation attempts by time $X_\text{min} + t_\text{cut}$ and restart the round.
Thus, the duration of a successful ($Y=1$) and failed ($Y=0$) end-to-end entanglement generation round is given by
\begingroup
    \setlength{\abovedisplayskip}{6pt}   
    \setlength{\belowdisplayskip}{8pt}   
    \setlength{\abovedisplayshortskip}{2pt}
    \setlength{\belowdisplayshortskip}{2pt}
    \begin{align}
        Z = 
        \begin{cases}
            X_{\text{max}} + t_{\text{msg}}, & \text{if } Y = 1~, \\[6pt]
            X_{\text{min}} + t_{\text{cut}}, & \text{if } Y = 0~.
        \end{cases}
        \label{eq:def_Z}
    \end{align}
\endgroup

\item \label{assumptionRate:events} \textbf{Events concerning end-to-end link generation rounds}:
We define the following events to facilitate our analysis:
\begingroup
    \setlength{\abovedisplayskip}{6pt}   
    \setlength{\belowdisplayskip}{6pt}   
    \setlength{\abovedisplayshortskip}{2pt}
    \setlength{\belowdisplayshortskip}{2pt}
    \begin{align}\label{eq:Aij}
        A_i \!&:=\! \left\{ \omega \!:\! X_\text{max}(\omega)\!=\! X_i(\omega)\right\}~,\thickspace 
        A_i A_j \!:=\! A_i \cap A_j~, \thickspace A_1 A_2 A_b \!:=\! \bigcap\limits_{k \in \{1,2,b\}} A_k~, \thickspace i,j \in \{1,2,b\}~.
    \end{align}
\endgroup
Further, the following events correspond to $Y^{(i)}=1$, i.e., success in the $i$-th round:
\begingroup
    \setlength{\abovedisplayskip}{6pt}   
    \setlength{\belowdisplayskip}{6pt}   
    \setlength{\abovedisplayshortskip}{2pt}
    \setlength{\belowdisplayshortskip}{2pt}
    \begin{align}\label{eq:Aij+}
    \begin{aligned}
        A_i^+ \!&:=\! \left\{ \omega \!:\! X_\text{max}(\omega)\!=\! X_i(\omega), X_{\text{max}} - X_{\text{min}} \leq t_\text{cut}'\right\}~; \\
        A_{ij}^+ \!&:=\! \left\{ \omega \!:\! X_\text{max}(\omega)\!=\! X_i(\omega), X_\text{min}(\omega)\!=\! X_j(\omega), X_{\text{max}} - X_{\text{min}} \leq t_\text{cut}'\right\}~; \\ 
        A_i^+ A_j^+ \!&:=\! A_i^+ \cap A_j^+~, \thickspace i,j \in \{1,2,b\}~; \thickspace A_1^+ A_2^+ A_b^+ \!:=\! \bigcap\limits_{k \in \{1,2,b\}} A_k^+~.
    \end{aligned}
    \end{align}
\endgroup
\looseness = -1 Characterisation of these events in terms of $M_1, M_2$ and $M_b$ is straightforward.
For example, on $A_1^+ A_b^+$, $M_1, M_2$ and $M_b$ will assume values of the following form 
\begingroup
    \setlength{\abovedisplayskip}{6pt}   
    \setlength{\belowdisplayskip}{6pt}   
    \setlength{\abovedisplayshortskip}{2pt}
    \setlength{\belowdisplayshortskip}{2pt}
    \begin{align} 
    \label{eq:m^*_and_k}
    \begin{aligned}
        & M_1 = k m^*, ~M_b = k m^* \frac{t_\text{m}}{t_\text{b}}, ~\text{for} ~k\in \mathbb{N}, ~m^* \!= \frac{t_\text{b}}{\text{gcd}(t_\text{m}, t_\text{b})} ~\text{and} \!\ceil{M_1 \!-\! \frac{t_\text{cut}^\prime}{t_\text{m}}} \leq M_2 \leq M_1.
    \end{aligned}
    \end{align}
\endgroup
An exhaustive list of characterisations for the events in~\eqref{eq:Aij+} is shown in Tab.~\ref{tab:sample_space_subset_A+}.

\begin{table}[h!]
    \centering
    \caption{Reformulation of the sets from~\eqref{eq:Aij+}. Here, $i,i^\prime \!\in\! \{1,2\}$, and ${i^\prime\!=\!(i\!+\!1)~\text{mod}~2}$.}
    \label{tab:sample_space_subset_A+}
    \renewcommand{\arraystretch}{1.6} 
    \begin{tabular}{|c|c|c|}
    \hline
    \textbf{Set} & \textbf{Definition} & \textbf{Interpretation in terms of $M_1, M_2, M_b$} \\ \hline
    \parbox{1.8cm}{\centering $A_i^+$,
    $i\!\in\! \{1,2\}$} & 
    \parbox{5cm}{\centering ${X_i \!-\! t_\text{cut}^\prime \leq X_{i^\prime} \leq X_i}$,\\ ${X_i \!-\! t_\text{cut}^\prime \leq X_b \leq X_i}$ } &
    \parbox{8cm}{\centering \vspace{0.2cm} ${\max(1, \left\lceil M_i \!-\! t_\text{cut}^\prime/t_\text{m}\right\rceil)} \!\leq\! M_{i^\prime} \!\leq\! M_i $, ${ \max(1, \left\lceil(M_i t_\text{m} \!-\! t_\text{cut}^\prime)/t_\text{b}\right\rceil) \!\leq\! M_b \!\leq\! \left\lfloor M_i t_\text{m}/t_\text{b}\right\rfloor}$, $M_i \ge \ceil{t_\text{b}/t_\text{m}}$ \vspace{0.2cm}} \\ \hline
    $A_b^+$ & 
    \parbox{5cm}{\centering ${X_b\!-\!t_\text{cut}^\prime \leq X_i \leq X_b}$} & 
    \parbox{8cm}{\centering \vspace{0.2cm} ${\max(1, \left\lceil (M_b t_\text{b}\!-\!t_\text{cut}^\prime)/t_\text{m} \right\rceil) \!\leq\! M_i \!\leq\! \left\lfloor M_b t_\text{b}/t_\text{m} \right\rfloor}$, $M_b \in \mathbb{N}$\vspace{0.2cm}} \\ \hline
    $A_1^+ A_2^+$ & 
    \parbox{5cm}{\centering ${X_1 \!=\! X_2}$, \\ ${X_1 \!-\! t_\text{cut}^\prime \leq X_b \leq X_1}$} & 
    \parbox{8cm}{\centering \vspace{0.1cm} ${\max(1, \left\lceil(M_1 t_\text{m} \!-\! t_\text{cut}^\prime)/t_\text{b}\right\rceil) \leq M_b \leq \left\lfloor M_1 t_\text{m}/t_\text{b}\right\rfloor}$, ${M_2 = M_1}$, $M_1 \ge \ceil{t_\text{b}/t_\text{m}}$ \vspace{0.2 cm}} \\ \hline
    \parbox{1.8cm}{\centering $A_i^+ A_b^+$,\\
    $i \!\in\! \{1,2\}$} & 
    \parbox{5cm}{\centering ${X_i \!=\! X_b}$, \\ ${X_b\!-\! t_\text{cut}^\prime \leq X_{i^\prime} \leq X_b}$ }& 
    \parbox{8cm}{\centering \vspace{0.2cm} ${\left\lceil M_i \!-\! t_\text{cut}^\prime/t_\text{m}\right\rceil \!\leq\! M_{i'} \!\leq\! M_i}$,\\ ${M_i \!=\! k m^*}$, ${M_b \!=\! k m^* t_\text{m}/t_\text{b}}$ for $k\in \mathbb{N}$; see~\eqref{eq:m^*_and_k} \vspace{0.2cm}} \\ \hline
    $A_1^+ A_2^+ A_b^+$ & 
    \parbox{5cm}{\centering ${X_1 \!=\! X_2 \!=\! X_b}$} & 
    \parbox{8cm}{\centering \vspace{0.2cm} ${M_1 t_\text{m} \!=\! M_2 t_\text{m} \!=\! M_b t_\text{b} \!=\! k m^* t_\text{m}}$ for $k\in \mathbb{N}$; see~\eqref{eq:m^*_and_k}\vspace{0.2cm}} \\ \hline
    \end{tabular}
    \vspace{-6pt}
\end{table}

\begin{table}[h!]
    \centering
    \caption{Reformulation of the sets from~\eqref{eq:Aij-}. Here, $i,i^\prime \in \{1,2\}$, and ${i^\prime\!=\!(i\!+\!1)~\text{mod}~2}$.}
    \label{tab:sample_space_subset_A-}
    \renewcommand{\arraystretch}{1.6} 
    \begin{tabular}{|c|c|c|} 
    \hline
    \textbf{Set} & \textbf{Definition} & \textbf{Interpretation in terms of $M_1, M_2, M_b$} \\ \hline
    $A_1^- A_2^-$ & 
    \parbox{3.8cm}{\centering ${X_b \leq X_1 \!=\! X_2}$, \\ ${X_b\leq X_1\!-\!t_\text{cut}}$} & 
    \parbox{9cm}{\centering \vspace{0.2cm}${M_1=M_2}$, \\ ${M_1 \ge \ceil{(M_b t_\text{b} \!+\! t_\text{cut})/t_\text{m}}}$, ${M_b \in \mathbb{N}}$ \vspace{0.2cm}} \\ \hline
    \parbox{1.7cm}{\centering $A_i^- A_b^-$ } & 
    \parbox{3.8cm}{\centering ${X_{i^\prime} \leq X_i\!=\!X_b}$, \\ ${X_{i^\prime} \leq X_i\!-\!t_\text{cut}}$} & 
    \parbox{9cm}{\centering \vspace{0.2cm} ${M_i = k m^*}$, ${M_b = k m^* t_\text{m}/t_\text{b}}$, for ${k\in \left[\ceil{(M_{i^\prime} t_\text{m} \!+\! t_\text{cut})/(m^* t_\text{m})}, \infty\right) \cap \mathbb{N}}$, ${M_{i^\prime}\in\mathbb{N}}$; see~\eqref{eq:m^*_and_k} \vspace{0.2cm}} \\ \hline
    \parbox{1.7cm}{\centering $A_{i i^\prime}^-$ } & 
    \parbox{3.8cm}{\centering ${X_{i^\prime} \leq X_b \leq X_i}$, \\${X_{i^\prime}\leq X_i\!-\!t_\text{cut}}$ } & 
    \parbox{9cm}{\centering \vspace{0.2cm}${ \left\lceil M_{i^\prime} t_\text{m}/t_\text{b}\right\rceil \leq M_b \leq \left\lfloor M_i t_\text{m}/t_\text{b}\right\rfloor}$, \\ ${M_i \geq M_{i^\prime} \!+\! \left\lceil t_\text{cut}/t_\text{m}\right\rceil}$, ${M_{i^\prime} \in \mathbb{N}}$ \vspace{0.2cm}}\\ \hline
    \parbox{1.7cm}{\centering $A_{ib}^-$ } & 
    \parbox{3.8cm}{\centering \vspace{0.2cm}${X_b \leq X_{i^\prime} \leq X_i}$, \\ ${X_b\leq X_i\!-\!t_\text{cut}}$ \vspace{0.2cm}} & 
    \parbox{9cm}{\centering \vspace{0.2cm}${\left\lceil M_b t_\text{b}/t_\text{m} \right\rceil \leq M_{i^\prime} \leq M_i}$, \\ ${M_i \ge \left\lceil ( M_b t_\text{b} \!+\! t_\text{cut})/t_\text{m} \right\rceil }$, ${M_b \in \mathbb{N}}$\vspace{0.2cm}} \\ \hline
    $A_{ii^\prime}^- A_{ib}^-$ & 
    \parbox{3.8cm}{\centering ${X_{i^\prime} \!=\! X_b \leq X_i}$, \\ ${X_{i^\prime} \leq X_i \!-\!t_\text{cut}}$} & 
    \parbox{9cm}{\centering \vspace{0.2cm}${M_i \ge M_{i^\prime}\!+\!\left\lceil t_\text{cut}/t_\text{m}\right\rceil}$, \\ ${M_{i^\prime} \!=\! k m^*}$, ${M_b \!=\! k m^* t_\text{m}/t_\text{b}}$, for $k\!\in\! \mathbb{N}$; see ~\eqref{eq:m^*_and_k} \vspace{0.2cm}} \\ \hline
    \parbox{1.7cm}{\centering $A_{bi}^-$ } & 
    \parbox{3.8cm}{\centering \vspace{0.2cm}${X_i \leq X_{i^\prime} \leq X_b}$, \\ ${X_i \leq X_b\!-\! t_\text{cut}}$ \vspace{0.2cm}} & 
    \parbox{9cm}{\centering \vspace{0.2cm}${M_i \leq M_{i^\prime} \leq \floor{M_b t_\text{b}/t_\text{m}}}$ \\ ${M_b \ge \ceil{(M_i t_\text{m} \!+\! t_\text{cut})/t_\text{b}} }$, ${M_i \in \mathbb{N}}$ \vspace{0.2cm}} \\ \hline
    $A_{b1}^- A_{b2}^-$ & 
    \parbox{3.8cm}{\centering ${X_1 \!=\! X_2 \leq X_b}$, \\ ${X_1\leq X_b\!-\!t_\text{cut}}$} & 
    \parbox{9cm}{\centering \vspace{0.2cm} ${M_1 = M_2}$, \\ ${M_b \ge \ceil{(M_1 t_\text{m} \!+\! t_\text{cut})/t_\text{b}} }$, ${M_1 \in \mathbb{N}}$ \vspace{0.2cm}} \\ \hline
    \end{tabular}
    \vspace{-6pt}
\end{table}

For failed rounds (i.e., $Y^{(i)}\!=\!0$), the relevant events are defined as
\begingroup
    \setlength{\abovedisplayskip}{6pt}   
    \setlength{\belowdisplayskip}{8pt}   
    \setlength{\abovedisplayshortskip}{2pt}
    \setlength{\belowdisplayshortskip}{2pt}
    \begin{align}\label{eq:Aij-}
    \begin{aligned}
        A_i^- &:= A_i - A_i^+~, \quad i \in \{1,2,\text{b}\}~; \\
        A_i^- A_j^- &:= A_i^- \cap A_j^-, \quad i,j \in \{1,2,\text{b}\} ~; \\
        A_{ij}^- &:= \left\{ \omega \!:\! X_\text{max}(\omega)\!=\! X_i(\omega), X_\text{min}(\omega)\!=\! X_j(\omega), X_{\text{max}} \!-\! X_{\text{min}} \geq t_\text{cut}\right\}~; \\
        A_{ij}^- A_{kl}^- \!&:=\! A_{ij}^- \cap A_{kl}^-~, \quad i,j,k,l \in \{1,2,\text{b}\}~.
    \end{aligned}
    \end{align}
\endgroup
Characterisation of these events in terms of $M_1, M_2$, and $M_b$ is given in Tab.~\ref{tab:sample_space_subset_A-}.
\end{enumerate}

\noindent
\ref{assumptionRate:individualDurations}--\ref{assumptionRate:events} help us derive $\mathbb{E}(X_\text{e2e})$, which we describe in the following theorem.

\begin{theorem}
    The expected time to successfully establish an end-to-end link is given by
    \vspace{-4pt}
    \begin{align}
        \mathbb{E}(X_\mathrm{e2e}) = \frac{\mathbb{E}\left(Z \mathds{1}_{Y\!=0}\right) + \mathbb{E}\left(Z \mathds{1}_{Y\!=1}\right)}{p}\,,~\text{where}
    \label{eq:ent_gen_time_int}
    \end{align}
    \vspace{-2.0em} 
    \begin{align}
    \mathbb{E}\big(Z \mathds{1}_{Y=1} \big) 
    =&~ 2\mathbb{E}\big(Z \mathds{1}_{A_1^+} \big) 
    \!+\! \mathbb{E}\big(Z \mathds{1}_{A_b^+} \big) 
    \!-\! \mathbb{E}\big(Z \mathds{1}_{A_1^+ A_2^+} \big) 
    \!-\! 2\mathbb{E}\big(Z \mathds{1}_{A_1^+ A_b^+} \big) 
    \!+\! \mathbb{E}\big(Z \mathds{1}_{A_1^+A_2^+A_b^+} \big)~, 
    \label{eq:zy1} \\
    \mathbb{E}\big(Z \mathds{1}_{Y\!=0} \big) 
    =&~ 2\bigg(\! 
        \mathbb{E}\big(Z \mathds{1}_{A_{12}^-} \big) 
        \!+\! \mathbb{E}\big(Z \mathds{1}_{A_{1b}^-} \big) 
        \!+\! \mathbb{E}\big(Z \mathds{1}_{A_{b1}^-} \big) 
    \!\bigg) 
    \!-\! 2\mathbb{E}\big(Z \mathds{1}_{A_{12}^-A_{1b}^-} \big) 
    \!-\! \mathbb{E}\big(Z \mathds{1}_{A_{b1}^- A_{b2}^-} \big) \nonumber \\
    & \!-\! \mathbb{E}\big(Z \mathds{1}_{A_1^- A_2^-} \big) 
    \!-\! 2\mathbb{E}\big(Z \mathds{1}_{A_1^-A_b^-} \big)~. 
    \label{eq:inclusion-exclusion_part_2}
\end{align}
\end{theorem}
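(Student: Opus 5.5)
The plan has two parts: a renewal (Wald-type) argument gives the first formula, and inclusion–exclusion over the events of \ref{assumptionRate:events}, together with the symmetry between the two metropolitan links, gives the decompositions.

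\textbf{Renewal step.} Rounds are i.i.d. copies of $(Y,Z)$ by \eqref{eq:def_Y_Z}, since every round restarts all elementary links from scratch. The number of rounds $N$ until the first success is therefore geometric with parameter $p$, and
\begin{equation*}
X_\mathrm{e2e}=\sum_{i=1}^{N} Z^{(i)} .
\end{equation*}
Because $\{N\ge i\}$ depends only on $Y^{(1)},\dots,Y^{(i-1)}$, it is independent of $(Y^{(i)},Z^{(i)})$. Conditioning on the outcome of the first round gives
\begin{equation*}
\mathbb{E}(X_\mathrm{e2e})=\mathbb{E}(Z\mathds{1}_{Y=1})+\mathbb{E}(Z\mathds{1}_{Y=0})+(1-p)\,\mathbb{E}(X_\mathrm{e2e}).
\end{equation*}
Solving for $\mathbb{E}(X_\mathrm{e2e})$ yields \eqref{eq:ent_gen_time_int}. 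Finiteness follows because $Z\le X_\mathrm{max}+\max(t_\mathrm{msg},t_\mathrm{cut})$ and the $X$'s have geometric tails. This also needs $p>0$, which holds because $t_\mathrm{cut}>\max\{t_\mathrm{m},t_\mathrm{b}\}$, so there are outcomes with all three links inside the window.

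\textbf{Success part \eqref{eq:zy1}.} On $\{Y=1\}$ the event equals $A_1^+\cup A_2^+\cup A_b^+$, since the maximum is always attained by some link. Apply inclusion–exclusion to the indicator of this union and multiply by $Z$. The symmetry $(M_1,M_2)\leftrightarrow(M_2,M_1)$ holds because $M_1,M_2$ are i.i.d. $\mathrm{Geo}(p_\mathrm{m})$, independent of $M_b$, and $Z$ is symmetric in $X_1,X_2$. It gives $\mathbb{E}(Z\mathds{1}_{A_1^+})=\mathbb{E}(Z\mathds{1}_{A_2^+})$ and $\mathbb{E}(Z\mathds{1}_{A_1^+A_b^+})=\mathbb{E}(Z\mathds{1}_{A_2^+A_b^+})$, which produces the coefficients in \eqref{eq:zy1}.

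\textbf{Failure part \eqref{eq:inclusion-exclusion_part_2}.} Here I would partition $\{Y=0\}$ by the ordered pair (argmax, argmin). The events $A_{ij}^-$ for $i\neq j\in\{1,2,b\}$ cover $\{Y=0\}$. When $X_\mathrm{max}-X_\mathrm{min}\ge t_\mathrm{cut}>0$, the argmax and argmin sets are disjoint. Overlaps therefore arise only from ties, either in the max or in the min.
\begin{itemize}
\item A tie in the max between $i$ and $i'$ occurs exactly on $A_{i j}^-\cap A_{i' j}^-$. These intersections are $A_1^-A_2^-$ (with $j=b$) and $A_i^-A_b^-$ (with $j=i'$).
\item A tie in the min between $j$ and $j'$ occurs exactly on $A_{i j}^-\cap A_{i j'}^-$. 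These intersections are $A_{12}^-A_{1b}^-$, $A_{21}^-A_{2b}^-$, and $A_{b1}^-A_{b2}^-$.
\end{itemize}
Since no element can be both the max and the min, there are no triple intersections. So the union equals the sum of the six $A_{ij}^-$ minus the pairwise tie intersections. The same symmetry pairs $A_{12}^-\leftrightarrow A_{21}^-$, $A_{1b}^-\leftrightarrow A_{2b}^-$, and $A_{b1}^-\leftrightarrow A_{b2}^-$, as well as the two min-ties and the two $A_i^-A_b^-$ terms. This gives \eqref{eq:inclusion-exclusion_part_2}.

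\textbf{Main obstacle.} The most delicate point is confirming that the pairwise intersections listed are the only nonempty overlaps and that each is identified correctly with the sets of Tab.~\ref{tab:sample_space_subset_A-}. In particular, I must check that $A_1^-\cap A_2^-$ forces $X_b$ to be the strict minimum with the gap condition, which matches the table's definition. I would settle this by writing each event through $(M_1,M_2,M_b)$ as in Tabs.~\ref{tab:sample_space_subset_A+}–\ref{tab:sample_space_subset_A-} and checking the cases directly.
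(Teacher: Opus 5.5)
Your proposal is correct and follows essentially the same route as the paper: a renewal argument over i.i.d.\ rounds for the expression of $\mathbb{E}(X_\mathrm{e2e})$ (you use first-step analysis with Wald-type independence, where the paper conditions on $N\sim\mathrm{Geo}(p)$), and inclusion--exclusion plus the symmetry between the two metropolitan links for the two decompositions. The only difference is in the failure part: you apply one inclusion--exclusion over the six ordered (argmax, argmin) events $A_{ij}^-$ with their max-tie and min-tie intersections, whereas the paper first applies inclusion--exclusion over $A_1^-,A_2^-,A_b^-$ and then splits $A_1^-$ and $A_b^-$ by the argmin, and both yield exactly the same terms.
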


\begin{proof}
\looseness = -1
Since $N$ denotes the total number of rounds required for successful end-to-end entanglement generation,
\begingroup
    \setlength{\abovedisplayskip}{-4pt}   
    \setlength{\belowdisplayskip}{8pt}   
    \setlength{\abovedisplayshortskip}{2pt}
    \setlength{\belowdisplayshortskip}{2pt}
    \begin{align}
        \mathbb{E}(X_\text{e2e}) &=\mathbb{E}\Big(\sum_{i=1}^{N}Z^{(i)}\Big) \\
        &\overset{\text{(i)}}{=} \mathbb{E}\Big(\mathbb{E}\Big(\sum_{i=1}^{N}Z^{(i)}\mid N\Big)\Big)    \\
        &\overset{\text{(ii)}}{=} \sum_{n=1}^{\infty} p (1-p)^{n-1} \mathbb{E}\Big(\sum_{i=1}^{N}Z^{(i)}\mid N\!=\!n \Big) \\
        &\overset{\text{(iii)}}{=} \sum_{n=1}^{\infty} p (1-p)^{n-1} \Big(\sum_{i=1}^{n-1} \mathbb{E}\left(Z^{(i)}\mid Y^{(i)}\!=\!0 \right) + \mathbb{E}\left(Z^{(n)}\mid Y^{(n)}\!=\!1 \right) \Big) \\
        &\overset{\text{(iv)}}{=} \sum_{n=1}^{\infty} p (1-p)^{n-1} \big((n-1) \mathbb{E}\left(Z \mid Y\!\!=\!0 \right) + \mathbb{E}\left(Z\mid Y\!\!=\!1 \right) \big) \\
        &= \Big(\frac{1}{p} - 1\Big) \mathbb{E}\left(Z \mid Y\!=\!0 \right) + \mathbb{E}\left(Z \mid Y\!=\!1 \right) \\
        &\overset{\text{(v)}}{=} \Big(\frac{1}{p} - 1\Big) \frac{\mathbb{E}\left(Z \mathds{1}_{Y\!=0} \right)}{\mathbb{P}(Y\!=\!0)} + \frac{\mathbb{E}\left(Z \mathds{1}_{Y\!=1} \right)}{\mathbb{P}(Y\!=\!1)} \\
        &= \frac{\mathbb{E}\left(Z \mathds{1}_{Y\!=0}\right) + \mathbb{E}\left(Z \mathds{1}_{Y\!=1}\right)}{p}~,
    \end{align}
\endgroup
which establishes~\eqref{eq:ent_gen_time_int}.
Here, in~(i) and~(v), we have used the definition of conditional expectation. 
Further, (ii) uses the fact that $N \sim \text{Geo}(p)$, (iii) uses linearity of expectation and~(iv) follows from the fact that $Y^{(i)}\!\overset{\text{iid}}{\sim}\! Y$ and $Z^{(i)}\!\overset{\text{iid}}{\sim}\! Z$. 

To show~\eqref{eq:zy1} and~\eqref{eq:inclusion-exclusion_part_2}, we use the principle of inclusion and exclusion. 
Recalling the definitions of $A_i^+$'s and $A_i^+ A_j^+$'s from~\eqref{eq:Aij+},
\begingroup
    \setlength{\abovedisplayskip}{6pt}   
    \setlength{\belowdisplayskip}{6pt}   
    \setlength{\abovedisplayshortskip}{2pt}
    \setlength{\belowdisplayshortskip}{2pt}
    \begin{align}
        &~\mathbb{E}\big(Z \mathds{1}_{Y=1} \big) \nonumber \\
        =& ~\mathbb{E}\big(Z \mathds{1}_{A_1^+} \big) \!+\! \mathbb{E}\big(Z \mathds{1}_{A_2^+} \big) \!+\!\mathbb{E}\big(Z \mathds{1}_{A_b^+} \big) \!-\! \mathbb{E}\big(Z \mathds{1}_{A_1^+ A_2^+} \big) \!-\! \mathbb{E}\big(Z \mathds{1}_{A_1^+ A_b^+} \big) \!-\! \mathbb{E}\big(Z \mathds{1}_{A_2^+ A_b^+} \big) \!+\! \mathbb{E}\big(Z \mathds{1}_{A_1^+A_2^+A_b^+} \big) \label{eq:zy1-1} \\
        \overset{(\text{i})}{=} & ~2\mathbb{E}\big(Z \mathds{1}_{A_1^+} \big) \!+\!\mathbb{E}\big(Z \mathds{1}_{A_b^+} \big) \!-\! \mathbb{E}\big(Z \mathds{1}_{A_1^+ A_2^+} \big) \!-\! 2\mathbb{E}\big(Z \mathds{1}_{A_1^+ A_b^+} \big) \!+\! \mathbb{E}\big(Z \mathds{1}_{A_1^+A_2^+A_b^+} \big) ~, \nonumber 
    \end{align}
\endgroup
where~(i) follows from the symmetry of the metropolitan links. 

For $Y \!=\!0$, we similarly have
\begin{align}
    \mathbb{E}\big(Z \mathds{1}_{Y\!=0} \big) = & ~2\mathbb{E}\big(Z \mathds{1}_{A_1^-} \big) \!+\!\mathbb{E}\big(Z \mathds{1}_{A_b^-} \big) \!-\! \mathbb{E}\big(Z \mathds{1}_{A_1^- A_2^-} \big) \!-\! 2\mathbb{E}\big(Z \mathds{1}_{A_1^- A_b^-} \big) \!+\! \mathbb{E}\big(Z \mathds{1}_{A_1^- A_2^- A_b^-} \big) \label{eq:inclusion-exclusion_part1} \\
    \overset{\text{(i)}}{=} & ~2\bigg(\! \mathbb{E}\big(Z \mathds{1}_{A_{12}^-}\big) \!+\! \mathbb{E}\big(Z \mathds{1}_{A_{1b}^-}\big) \!-\! \mathbb{E}\big(Z \mathds{1}_{A_{12}^-A_{1b}^-} \big) \!\bigg) \!+\! 2\mathbb{E}\big(Z \mathds{1}_{A_{b1}^-}\big) \!-\! \mathbb{E}\big(Z \mathds{1}_{A_{b1}^- A_{b2}^-}\big) \nonumber \\
    & \!-\! \mathbb{E}\big(Z \mathds{1}_{A_1^- A_2^-}\big) \!-\! 2\mathbb{E}\big(Z \mathds{1}_{A_1^-A_b^-} \big)~.  \nonumber 
\end{align}
Note that \( \mathbb{E}\big(Z \mathds{1}_{A_1^- A_2^- A_b^-}\big) \!=\! 0 \), as the subset \( A_1^- A_2^- A_b^- \) is empty.
In (i), we again use the principle of inclusion and exclusion as the events of the form $A_i^-$ are determined via $X_{\text{max}}$, while ${Z|Y=0 \sim X_\text{min}+t_\text{cut}}$.
Specifically, we have
\begingroup
    \setlength{\abovedisplayskip}{6pt}   
    \setlength{\belowdisplayskip}{6pt}   
    \setlength{\abovedisplayshortskip}{2pt}
    \setlength{\belowdisplayshortskip}{2pt}
    \begin{align}
        \mathds{1}_{A_1^-} = \mathds{1}_{A_{12}^-} + \mathds{1}_{A_{1b}^-} - \mathds{1}_{A_{12}^- {A_{1b}^-}}~, \quad
        \mathds{1}_{A_b^-} = \mathds{1}_{A_{b1}^-} + \mathds{1}_{A_{b2}^-} - \mathds{1}_{A_{b1}^- {A_{b2}^-}}~.
        \label{eq:term_expansion_A_1-andA_b-}
    \end{align}
\endgroup
Also by symmetry, $\mathbb{E}\big(Z\mathds{1}_{A_{b1}^-}\big) = \mathbb{E}\big(Z\mathds{1}_{A_{b2}^-}\big)$.
We have thus established~\eqref{eq:zy1} and~\eqref{eq:inclusion-exclusion_part_2}.
\end{proof}

We calculate the individual terms of~\eqref{eq:zy1} in~\ref{sec:individual_terms_for_rate}, which are given by~\eqref{eq:finite_sum_A_1^+}, \eqref{eq:finite_sum_A_b^+}, \eqref{eq:finite_sum_A_1^+A_2^+}, \eqref{eq:finite_sum_A_1^+A_b^+}, and \eqref{eq:finite_sum_A_1^+A_2^+A_b^+}.
For~\eqref{eq:inclusion-exclusion_part_2}, the calculable expressions of the individual terms  are given in~\eqref{eq:finite_sum_A_12^-}, \eqref{eq:finite_sum_A_1b^-}, \eqref{eq:finite_sum_A_b1^-}, \eqref{eq:finite_sum_A_12^-A_1b^-}, \eqref{eq:finite_sum_A_b1^-A_b2^-}, \eqref{eq:finite_sum_A_1^-A_2^-}, and \eqref{eq:finite_sum_A_1^-A_b^-}.
The calculation of the end-to-end entanglement generation success probability $p$ follows directly from the fidelity calculation.
Hence, we present the derivation in Sec.~\ref{sec:derivation_fidelity_IN}.
In particular, $p\!=\! U_1(0)$; see~\eqref{eq:p_from_U_1}.
Plugging these in~\eqref{eq:zy1},~\eqref{eq:inclusion-exclusion_part_2}, and~\eqref{eq:ent_gen_time_int} and subsequently in~\eqref{eq:def_rate_teleportation_int}, we obtain the teleportation rate in the IN.
Next, we derive the expressions for the expected teleportation fidelity.
\vspace{-5pt}

\subsection{Derivation of Expected Teleportation Fidelity in the Intercity Network}
\label{sec:derivation_fidelity_IN}

In our model of IN, the fidelity of the teleported qubit for ER teleportation depends on (i) the fidelity of the end-to-end entangled link and (ii) the time taken to transmit the classical Pauli correction message from the sender node to the receiver $t_\text{int}^\text{class}$; see~\eqref{eq:t_class_int}.
Furthermore, for the QR teleportation, since the data qubit decoheres until the end-to-end link generation, the teleportation fidelity also depends on the time required to generate this link.
Hence, we first focus on the derivation of the expected fidelity of the end-to-end link in the next lemma.
\begin{lemma}
\label{lemma_fidelity_e2e_entanglement}
The expected fidelity of an end-to-end entangled link in the IN is given by
\begingroup
    \setlength{\abovedisplayskip}{6pt}   
    \setlength{\belowdisplayskip}{6pt}   
    \setlength{\abovedisplayshortskip}{2pt}
    \setlength{\belowdisplayshortskip}{2pt}
    \begin{align}
        F_\mathrm{e2e} &= \frac{1 + 3\mathbb{E}({w}_\mathrm{e2e})}{4}\,, ~\textbf{}\text{where} \quad 
        \mathbb{E}({w}_\mathrm{e2e}) = \frac{w_\mathrm{m}^2\, w_\mathrm{b}\, e^{-k t_\mathrm{msg}}}{p} \, U_1(k)\,, ~\text{with} \label{eq:fidelity_e2e_entanglement} \\ 
        U_1(k) &:= \mathbb{E}\big(
            e^{-k X_\mathrm{diff}} \, \mathds{1}_{Y=1}
        \big)~, \quad k := \frac{2}{t_\mathrm{coh}}\,, ~\text{and}
        \label{eq:def_U_1_k} \\
        X_\mathrm{diff} &:= 
        \begin{cases}
            X_\mathrm{max} - X_\mathrm{min}, & \text{for } A_1^+ \text{ and } A_2^+, \\
            2X_b - X_1 - X_2, & \text{for } A_b^+~.
        \end{cases} \label{eq:def_X_diff}
    \end{align}
\endgroup
\looseness=-1
Here, $w_\text{m}$ and $w_\text{b}$ are the Werner parameters of freshly generated links between an end node and its neighbouring border node, and between two border nodes across the backbone, respectively.
Furthermore,
\begingroup
    \setlength{\abovedisplayskip}{6pt}   
    \setlength{\belowdisplayskip}{6pt}   
    \setlength{\abovedisplayshortskip}{2pt}
    \setlength{\belowdisplayshortskip}{2pt}
    \begin{align}
        U_1(v) &=  2\bigg( \!\mathbb{E}\big(e^{-v \thinspace (X_1 - X_{2})} \mathds{1}_{A_{12}^+}\big) \!+\! \mathbb{E}\big(e^{-v \thinspace (X_1 - X_{b})} \mathds{1}_{A_{1b}^+}\big) \!-\! \mathbb{E}\big(e^{-v \thinspace (X_1 - X_{2})} \mathds{1}_{A_{12}^+A_{1b}^+}\big) \!\bigg) \!+\! \mathbb{E}\big(e^{-v \thinspace (2X_b - X_1 - X_2)} \mathds{1}_{A_{b}^+}\big) \nonumber \\
        & \quad \!-\! \mathbb{E}\big(e^{-v \thinspace (X_1 - X_b)} \mathds{1}_{A_1^+A_2^+}\big) \!-\! 2\mathbb{E}\big(e^{-v \thinspace (X_1 - X_{2})} \mathds{1}_{A_1^+A_b^+}\big) \!+\! \mathbb{E}\big(\mathds{1}_{A_1^+ A_2^+ A_b^+} \big)~.
        \label{eq:term_expansion_U_1}
    \end{align}
\endgroup
\end{lemma}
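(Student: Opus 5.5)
The plan is to compute the Werner parameter of the end-to-end link of a single \emph{successful} round as an explicit function of $(X_1,X_2,X_b)$, and then average it over the distribution of that round. Since the rounds are IID (see~\ref{assumptionRate:auxiliaryVar}), the final round has the law of $(X_1,X_2,X_b)$ conditioned on $Y=1$. Hence
\[
\mathbb{E}(w_\mathrm{e2e})=\frac{\mathbb{E}\big(w_\mathrm{e2e}\mathds{1}_{Y=1}\big)}{p}.
\]
Because the fidelity $(1+3w)/4$ is affine in $w$, the expected fidelity follows directly from $\mathbb{E}(w_\mathrm{e2e})$.

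\textbf{Step 1: single-qubit bookkeeping.} By~\ref{assumption:werner_state}, \ref{assumption:depolarising_noise} and~\ref{assumption:entanglement_swap_asap}, every state stays Werner, a noiseless swap multiplies Werner parameters, and each stored qubit multiplies the parameter by $e^{-t/t_\mathrm{coh}}$ for its storage time $t$. The Werner parameter is therefore $w_\mathrm{m}^2w_\mathrm{b}\,e^{-S/t_\mathrm{coh}}$, where $S$ is the total qubit-storage time summed over all qubits alive before the final swap, plus the post-swap waiting.

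\textbf{Step 2: case analysis of $S$ by which link is last.}
\begin{itemize}
\item On $A_1^+$ with, say, $X_b\le X_2\le X_1$, three qubits are stored before the final swap. $B_1$ waits $X_1-X_b$. $B_2$ waits $X_2-X_b$ until the swap at $B_2$. $P_4$ waits $X_1-X_2$. Summing gives $S=2(X_1-X_b)=2(X_\mathrm{max}-X_\mathrm{min})$. The other ordering $X_2\le X_b\le X_1$ gives the same $S$, so $e^{-S/t_\mathrm{coh}}=e^{-kX_\mathrm{diff}}$ with $k=2/t_\mathrm{coh}$. The case $A_2^+$ is symmetric.
\item On $A_b^+$, both swaps happen at $X_b$. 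Link $1$ contributes $2(X_b-X_1)$ and link $2$ contributes $2(X_b-X_2)$. This gives $S=2(2X_b-X_1-X_2)$, matching~\eqref{eq:def_X_diff}.
\end{itemize}
After the final swap, both end qubits wait for the notification for $t_\mathrm{msg}$ (see~\eqref{eq:t_msg}). This contributes the factor $e^{-2t_\mathrm{msg}/t_\mathrm{coh}}=e^{-kt_\mathrm{msg}}$. On tie events the two formulas agree, so $X_\mathrm{diff}$ is well defined. Together these yield~\eqref{eq:fidelity_e2e_entanglement}--\eqref{eq:def_X_diff}.

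\textbf{Step 3: expansion of $U_1$.} Write $\mathds{1}_{Y=1}=\mathds{1}_{A_1^+\cup A_2^+\cup A_b^+}$ and apply inclusion--exclusion exactly as in the proof of the preceding theorem. Then split further using
\[
\mathds{1}_{A_1^+}=\mathds{1}_{A_{12}^+}+\mathds{1}_{A_{1b}^+}-\mathds{1}_{A_{12}^+A_{1b}^+}.
\]
This is needed because on $A_{12}^+$ the exponent is $X_1-X_2$, while on $A_{1b}^+$ it is $X_1-X_b$. The metropolitan symmetry $X_1\leftrightarrow X_2$ supplies the factors of $2$. On each intersection one rewrites $X_\mathrm{diff}$ in the simplest form:
\begin{itemize}
\item on $A_1^+A_2^+$, $X_\mathrm{diff}=X_1-X_b$;
\item on $A_1^+A_b^+$, $X_\mathrm{diff}=X_1-X_2$;
\item on the triple intersection, $X_\mathrm{diff}=0$.
\end{itemize}
This gives~\eqref{eq:term_expansion_U_1}. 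As a by-product, $p=U_1(0)$.

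\textbf{Main obstacle.} I expect the storage-time accounting in Step~2 to be the delicate part. One must check that the swap-ASAP dynamics really collapse to the stated two-case exponent, including ties and both orderings inside $A_1^+$. One must also justify treating the post-swap notification as a uniform $t_\mathrm{msg}$ for both end qubits.
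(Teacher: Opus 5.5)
Your proposal is correct and follows essentially the same route as the paper. The paper tracks the Werner parameter link by link on $A_{12}^+$, $A_{1b}^+$, $A_{b1}^+$; your per-qubit storage-time sum $S$ is an equivalent bookkeeping of the same decay. It then conditions on $Y=1$ to get the division by $p$, and expands $U_1$ by inclusion--exclusion over $A_1^+\cup A_2^+\cup A_b^+$, the $X_1\leftrightarrow X_2$ symmetry, and the split $\mathds{1}_{A_1^+}=\mathds{1}_{A_{12}^+}+\mathds{1}_{A_{1b}^+}-\mathds{1}_{A_{12}^+A_{1b}^+}$, exactly as you do (your check that the two formulas for $X_\mathrm{diff}$ agree on $A_1^+A_b^+$ is left implicit in the paper but is what makes the single-integrand inclusion--exclusion valid).
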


\begin{proof}
    Recall from~\ref{assumptionRate:individualDurations} that $X_1$, $X_2$, and $X_b$ are the duration of successfully generating entanglement between $\text{P}_1$ (or $\text{P}_2$) and $\text{J}_1$, $\text{P}_3$ (or $\text{P}_4$) and $\text{J}_2$, and $\text{J}_1$ and $\text{J}_2$, respectively.
    Furthermore, recall from~\ref{assumption:werner_state} that an entanglement swap between two Werner states results in a Werner state with the corresponding Werner parameter being the product of those of the initial states.
    Since decoherence affects the states and the effect depends on the chronological order of the generation times, the end-to-end link fidelity differs across the events defined in~\eqref{eq:Aij+} and~\eqref{eq:Aij-}.
    We describe the evolution of the Werner parameter for each case in the Tab.~\ref{tab:evolution_e2e_werner_state}
    \begin{table}[htbp]
    \centering
    \caption{Derivation of the Werner parameter of the end-to-end entangled state. In the last column, we include the constant communication delay $t_\text{msg}$, which accounts for the time required to transmit the final swap message from border nodes to both end nodes. We set $k = 2/t_\text{coh}$.
    Note that due to symmetry of the metropolitan links, the contributions from the events $A_{21}^+$, $A_{2b}^+$, and $A_{b2}^+$ are identical to those from $A_{12}^+$, $A_{1b}^+$, and $A_{b1}^+$, respectively. The final column is precisely ${w_\text{m}^2 w_\text{b} e^{-k(X_\text{diff}+t_\text{msg})}}$; see~\eqref{eq:def_X_diff}.}
    \label{tab:evolution_e2e_werner_state}
    \renewcommand{\arraystretch}{1.2}
    \begin{tabular}{|c|l|l|l|l|}
        \hline
        \textbf{Set} & \textbf{$X_\text{min}$} & \textbf{$X_\text{mid}$} & \textbf{$X_\text{max}$} & \textbf{End-to-end link} \\
        \hline
        \multirow{3}{*}{$A_{12}^+$} 
        & \multirow{3}{*}{link 2: $w_\text{m}$} 
        & link 2: $w_\text{m} e^{-k(X_b-X_2)}$ 
        & link b2: $w_\text{m} w_\text{b} e^{-k(X_1-X_2)}$ 
        & \multirow{3}{*}{\(\begin{array}{c}
            w_\text{m}^2 w_\text{b} e^{-k(X_1-X_2)}\\
            \cdot \,e^{-k t_\text{msg}}
        \end{array}\)} \\
        & & link b: $w_\text{b}$ & link 1: $w_\text{m}$ & \\
        \cline{3-4}
        & & link b2: $w_\text{m} w_\text{b} e^{-k(X_b-X_2)}$ & link 1b2: $w_\text{m}^2w_\text{b}e^{-k(X_1-X_2)}$ & \\
        \hline
        \multirow{3}{*}{$A_{1b}^+$} 
        & \multirow{3}{*}{link b: $w_\text{b}$} 
        & link b: $w_\text{b} e^{-k(X_2-X_b)}$ 
        & link b2: $w_\text{m} w_\text{b} e^{-k(X_1-X_b)}$ 
        & \multirow{3}{*}{\(\begin{array}{c}
            w_\text{m}^2 w_\text{b} e^{-k(X_1-X_b)}\\
            \cdot \,e^{-k t_\text{msg}}
        \end{array}\)} \\
        & & link 2: $w_\text{m}$ & link 1: $w_\text{m}$ & \\
        \cline{3-4}
        & & link b2: $w_\text{m} w_\text{b} e^{-k(X_2-X_b)}$ & link 1b2: $w_\text{m}^2 w_\text{b} e^{-k(X_1-X_b)}$ & \\
        \hline
        \multirow{4}{*}{$A_{b1}^+$} 
        & \multirow{4}{*}{link 1: $w_\text{m}$} 
        & link 1: $w_\text{m} e^{-k(X_2-X_1)}$ 
        & link 1: $w_\text{m} e^{-k(X_b-X_1)}$ 
        & \multirow{4}{*}{\(\begin{array}{c}
            w_\text{m}^2 w_\text{b} e^{-k(2X_b-X_1-X_2)}\\
            \cdot \,e^{-k t_\text{msg}}
        \end{array}\)} \\
        & & link 2: $w_\text{m}$ & link 2: $w_\text{m} e^{-k(X_b-X_2)}$ & \\
        & &  & link b: $w_\text{b}$ & \\
        \cline{3-4}
        & & swap not possible & link 1b2: $w_\text{m}^2 w_\text{b} e^{-k(2X_b-X_1-X_2)}$ & \\
        \hline
    \end{tabular}
    \vspace{-5pt}
\end{table}

Since unsuccessful trials do not influence the fidelity of the final end-to-end entangled link, we restrict our analysis to the case ${Y \!=\! 1}$.
In this case, we use the shorthand $X_\text{diff}$ introduced in~\eqref{eq:def_X_diff} to succinctly express the contribution of different sub-events by 
\begingroup
    \setlength{\abovedisplayskip}{4pt}   
    \setlength{\belowdisplayskip}{4pt}   
    \setlength{\abovedisplayshortskip}{2pt}
    \setlength{\belowdisplayshortskip}{2pt}
    \begin{align}
    \label{eq:event_werner_parameter_e2e_entanglement}
        w_\text{e2e} = w_\text{m}^2 w_\text{b} e^{-k(X_\text{diff}+t_\text{msg})}~,
    \end{align}
\endgroup
\looseness = -1 see Tab.~\ref{tab:evolution_e2e_werner_state} for an explanation.
Thus, we obtain the expected Werner parameter of the end-to-end link as
\begingroup
    \setlength{\abovedisplayskip}{-8pt}   
    \setlength{\belowdisplayskip}{4pt}   
    \setlength{\abovedisplayshortskip}{2pt}
    \setlength{\belowdisplayshortskip}{2pt}
    \begin{align}
    \label{eq:expected_werner_parameter_e2e_entanglement}
      \mathbb{E}(w_\text{e2e}) = &~ \mathbb{E} (w_\text{m}^2 w_\text{b} e^{-k(X_\text{diff}+t_\text{msg})} \mid Y\!=\!1) \\
      = &~ w_\text{m}^2 w_\text{b} e^{-k t_\text{msg}} \, \mathbb{E}( 
        e^{-k X_\text{diff}} \mid Y \!=\! 1 
      ) \\
      = &~ w_\text{m}^2 w_\text{b} e^{-k t_\text{msg} } \,
        \frac{\mathbb{E}(
            e^{-{k X_\text{diff}}} \, \mathds{1}_{Y=1} 
          )}{\mathbb{P}(Y = 1)} \\
        \overset{\eqref{eq:def_p}, \eqref{eq:def_U_1_k}}{=} &~ \frac{w_\text{m}^2 w_\text{b} e^{-k t_\text{msg} }}{p}\, U_1(k)~, \nonumber
    \end{align}
\endgroup
and the corresponding expected fidelity is $(1+3\mathbb{E}({w}_\text{e2e}))/4$,
which establishes~\eqref{eq:fidelity_e2e_entanglement}.

To show~\eqref{eq:term_expansion_U_1}, we use the principle of inclusion and exclusion.
Recalling the definitions of $A_i^+$ and $A_{ij}^+$ from~\eqref{eq:Aij+}, we have
\begingroup
    \setlength{\abovedisplayskip}{4pt}   
    \setlength{\belowdisplayskip}{2pt}   
    \setlength{\abovedisplayshortskip}{2pt}
    \setlength{\belowdisplayshortskip}{2pt}
    \begin{align}
        U_1(v) =&~ \mathbb{E}\big( e^{-v X_\mathrm{diff}} \, \mathds{1}_{Y=1} \big)  \\
        =&~\mathbb{E}\big(e^{-v X_{\text{diff}}} \mathds{1}_{A_{1}^+}\big) \!+\! \mathbb{E}\big(e^{-v X_{\text{diff}}} \mathds{1}_{A_{2}^+}\big) \!+\! \mathbb{E}\big(e^{-v X_{\text{diff}}} \mathds{1}_{A_{b}^+}\big) \!-\! \mathbb{E}\big(e^{-v X_{\text{diff}}} \mathds{1}_{A_1^+A_2^+}\big) \!-\! \mathbb{E}\big(e^{-v X_{\text{diff}}} \mathds{1}_{A_1^+A_b^+}\big) \nonumber \\
        &~\!-\! \mathbb{E}\big(e^{-v X_{\text{diff}}} \mathds{1}_{A_2^+A_b^+}\big) 
         \!+\! \mathbb{E}\big(e^{-v X_{\text{diff}}}\mathds{1}_{A_1^+ A_2^+ A_b^+} \big) \\
        \overset{(\text{i})}{=}&~ 2\mathbb{E}\big(e^{-v X_{\text{diff}}} \mathds{1}_{A_{1}^+}\big) + \mathbb{E}\big(e^{-v X_{\text{diff}}} \mathds{1}_{A_{b}^+}\big) - \mathbb{E}\big(e^{-v X_{\text{diff}}} \mathds{1}_{A_1^+A_2^+}\big) - 2\mathbb{E}\big(e^{-v X_{\text{diff}}} \mathds{1}_{A_1^+A_b^+}\big) \nonumber \\
        &~ + \mathbb{E}\big(e^{-v X_{\text{diff}}}\mathds{1}_{A_1^+ A_2^+ A_b^+} \big) \\
        \overset{(\text{ii})}{=}&~ 2\bigg(\! \mathbb{E}\big(e^{-v \thinspace (X_1 - X_{2})} \mathds{1}_{A_{12}^+}\big) \!+\! \mathbb{E}\big(e^{-v \thinspace (X_1 - X_{b})} \mathds{1}_{A_{1b}^+}\big) \!-\! \mathbb{E}\big(e^{-v \thinspace (X_1 - X_{2})} \mathds{1}_{A_{12}^+A_{1b}^+}\big) \!\bigg) \!+\! \mathbb{E}\big(e^{-v \thinspace (2X_b - X_1 - X_2)} \mathds{1}_{A_{b}^+}\big) \nonumber \\
        &~ \!-\! \mathbb{E}\big(e^{-v \thinspace (X_1 - X_b)} \mathds{1}_{A_1^+A_2^+}\big) \!-\! 2\mathbb{E}\big(e^{-v \thinspace (X_1 - X_{2})} \mathds{1}_{A_1^+A_b^+}\big) \!+\! \mathbb{E}\big(\mathds{1}_{A_1^+ A_2^+ A_b^+} \big)~, \nonumber
    \end{align}
\endgroup
where (i) follows from the symmetry of the metropolitan links.
In (ii), we have used the principle of inclusion and exclusion for the event $A_1^+$ since $A_1^+$ is defined via $X_\text{max}$, while ${X_\text{diff}\!=\!X_\text{max} \!-\! X_\text{min}}$ for $A_1^+$ (see~\eqref{eq:def_X_diff}).
Specifically, we have
\begingroup
    \setlength{\abovedisplayskip}{6pt}   
    \setlength{\belowdisplayskip}{6pt}   
    \setlength{\abovedisplayshortskip}{2pt}
    \setlength{\belowdisplayshortskip}{2pt}
    \begin{align}
        \mathds{1}_{A_1^+} = \mathds{1}_{A_{12}^+} + \mathds{1}_{A_{1b}^+} - \mathds{1}_{A_{12}^+ {A_{1b}^+}}~,
        \label{eq:term_expansion_A_1+}
    \end{align}
\endgroup
Furthermore, we have substituted the value of $X_\text{diff}$ from~\eqref{eq:def_X_diff}.
Note that, for $A_1^+A_2^+A_b^+$, we have ${X_\text{diff} \!=\! 0}$ which follows from the definition~\eqref{eq:def_X_diff}.
Thus, we have established~\eqref{eq:term_expansion_U_1}.
\end{proof}

Note that the calculation of $U_1(v)$ facilitates the calculation of $p$.
Specifically, we have
\begingroup
    \setlength{\abovedisplayskip}{6pt}   
    \setlength{\belowdisplayskip}{6pt}   
    \setlength{\abovedisplayshortskip}{2pt}
    \setlength{\belowdisplayshortskip}{2pt}
    \begin{align}
    \label{eq:p_from_U_1}
        p = \mathbb{E}(\mathds{1}_{Y=1}) = U_1(0)~.
    \end{align}
\endgroup
We describe the computation procedure for $U_1(v)$ in~\ref{subsec:terms_U_alpha} and subsequently calculate $p$ as $U_1(0)$, which helps us calculate the rate and fidelity via~\eqref{eq:ent_gen_time_int}, \eqref{eq:fidelity_e2e_entanglement}, \eqref{eq:fidelity_ent_ready}, and~\eqref{eq:fidelity_qubit_ready}.

The fidelity of teleportation now follows straightforwardly from Lemma~\ref{lemma_fidelity_e2e_entanglement} as described in the next theorem.

\begin{theorem}
    The expected fidelity of a teleported qubit in the ER case is given by
    \begin{align}
    \label{eq:fidelity_ent_ready}
      \textnormal{(Expected fidelity ER)} \qquad F_\mathrm{int}^\mathrm{ER} \!=\! \frac{1 \!+\! \mathbb{E}(w_\mathrm{e2e})e^{-kt_{\mathrm{int}}^{\mathrm{class}}/2}}{2} \!=\! \frac{1}{2} \!+\! \frac{w_\mathrm{m}^2 w_\mathrm{b} e^{-k( t_\mathrm{msg} \!+\! t_{\mathrm{int}}^{\mathrm{class}}/2)}}{2p} \, U_1(k)~, \quad
    \end{align}
    where $U_1(\cdot)$ and $k$ is defined in~\eqref{eq:def_U_1_k}.
    $w_\text{m}$ and $w_\text{b}$ are the Werner parameters of freshly generated links between an end node and its neighbouring border node, and between two border nodes across the backbone, respectively. 
\end{theorem}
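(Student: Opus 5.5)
The plan is to condition on the realisation of the end-to-end link and then use linearity in the Werner parameter. Given a successful round, Lemma~\ref{lemma_fidelity_e2e_entanglement} and Tab.~\ref{tab:evolution_e2e_werner_state} show that the end-to-end state is a Werner state with random parameter $w_\mathrm{e2e}=w_\mathrm{m}^2w_\mathrm{b}e^{-k(X_\mathrm{diff}+t_\mathrm{msg})}$. This parameter already includes the decoherence up to the moment both end nodes learn of success.

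In the ER case the data qubit is prepared instantaneously at that moment, by \ref{assumption:instantaneous_qubit_preparation_local_operation}. The sender performs the BSM at once. Only the receiver's half of the link then waits for the Pauli correction message, which travels for $t_\mathrm{int}^\mathrm{class}$ by \eqref{eq:t_class_int}.

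First I would compute the fidelity for a fixed $w$. One-sided depolarisation $\mathcal{E}_t\otimes\mathrm{id}$ from \eqref{eq:depolarising_channel} acts on a Werner state with maximally mixed marginals. It yields a Werner state with parameter $w e^{-t/t_\mathrm{coh}}=w e^{-kt/2}$, since $k=2/t_\mathrm{coh}$.

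Applying the measurement and correction first and letting the receiver's qubit decohere afterwards gives the same result. The receiver's operations commute with the sender's measurement, and depolarising noise is covariant under Paulis. So either order gives the same Werner parameter $w e^{-kt_\mathrm{int}^\mathrm{class}/2}$.

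Next I would invoke the standard teleportation fact: teleporting a pure state through a Werner state of parameter $w'$ acts as a depolarising channel on the data qubit with parameter $w'$. This follows from twirl invariance, or from \cite{horodecki1999general}, $f=(2F_\mathrm{ent}+1)/3$ with $F_\mathrm{ent}=(1+3w')/4$. The resulting fidelity is $(1+w')/2$, independent of the input state.

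The conditional fidelity is therefore $\tfrac12(1+w_\mathrm{e2e}e^{-kt_\mathrm{int}^\mathrm{class}/2})$. It is affine in $w_\mathrm{e2e}$, and $t_\mathrm{int}^\mathrm{class}$ is deterministic. Taking expectations over successful rounds gives the first equality in \eqref{eq:fidelity_ent_ready}. Substituting $\mathbb{E}(w_\mathrm{e2e})=w_\mathrm{m}^2w_\mathrm{b}e^{-kt_\mathrm{msg}}U_1(k)/p$ from \eqref{eq:fidelity_e2e_entanglement} and collecting exponents gives the second equality.

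I expect the main obstacle to be justifying the timing bookkeeping: only one memory, the receiver's, decoheres during $t_\mathrm{int}^\mathrm{class}$, which gives the factor $1/2$ in the exponent. The other point to check is that the depolarisation rate composes multiplicatively with the pre-existing $e^{-kt_\mathrm{msg}}$ factor. I would settle both by noting that the sender's qubit is measured instantaneously, and that products of depolarising factors on Werner states multiply the Werner parameter.
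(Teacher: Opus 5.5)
Your proposal is correct and follows the paper's proof. You condition on the successful round's Werner parameter $w_\mathrm{e2e}$, use that the pure data qubit is teleported with fidelity $\frac{1}{2}\bigl(1+w_\mathrm{e2e}e^{-kt_\mathrm{int}^\mathrm{class}/2}\bigr)$ because only the receiver's memory waits for $t_\mathrm{int}^\mathrm{class}$, then apply linearity of expectation and substitute $\mathbb{E}(w_\mathrm{e2e})$ from Lemma~\ref{lemma_fidelity_e2e_entanglement}; the only difference is that the paper reads off the single-shot fidelity from its appendix lemma \eqref{eq:teleportation_fidelity_general_expression} with $p_d=1$, an explicit density-matrix computation, whereas you derive it from twirl invariance, the Horodecki relation, and the commutation of depolarising noise with Pauli corrections.
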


\begin{proof}
    In the ER scenario, only the qubit stored in the receiver node undergoes decoherence during the propagation time of the Pauli correction message between the end nodes $t_\text{int}^{\text{class}}$.
    Since the end-to-end link has Werner parameter $w_{e2e}$~\eqref{eq:event_werner_parameter_e2e_entanglement}, using~\eqref{eq:teleportation_fidelity_general_expression} we obtain the expected fidelity of the teleported qubit as
    \begin{align*}
    \mathbb{E}\Big(\frac{1+w_\text{e2e}e^{-kt_\text{int}^\text{class}/2}}{2} \Big) &= \frac{1+\mathbb{E}(w_\text{e2e})e^{-kt_\text{int}^\text{class}/2}}{2} \overset{\eqref{eq:fidelity_e2e_entanglement}}{=} \frac{1}{2} + \frac{w_\text{m}^2 w_\text{b} e^{-k( t_\text{msg} + t_\text{int}^{\text{class}}/2)}}{2p} 
         \, U_1(k)~,
    \end{align*}
    which establishes~\eqref{eq:fidelity_ent_ready}.
\end{proof}

\begin{theorem}
    The expected fidelity of a teleported qubit in the QR case is given by
    \begin{align}
      \textnormal{(Expected fidelity QR)} \qquad F_\mathrm{int}^\mathrm{QR} =& ~ \frac{1}{2} + \frac{1}{2}w_\mathrm{m}^2 w_\mathrm{b} e^{-k(3t_\mathrm{msg} + t_\mathrm{int}^{\mathrm{class}}/2)} \frac{e^{-k t_\mathrm{msg}/2} \, U_2(k)}{1 - e^{-k t_\mathrm{cut}/2} \, U_3(k)}\,, ~\text{where} \label{eq:fidelity_qubit_ready} \\
    U_2(v) :=&~ \mathbb{E}\big(
        e^{-v (X_\mathrm{diff} + X_\mathrm{max})/2} \, \mathds{1}_{Y=1}
    \big)~, \quad k = \frac{2}{t_\mathrm{coh}}~, \label{eq:def_U_2} \\
    U_3(v) :=&~ \mathbb{E}\big(
        e^{-v X_\mathrm{min}/2} \, \mathds{1}_{Y=0} \big)~.  \label{eq:def_U_3}
\end{align}
Here,  $t_\mathrm{coh}$ is the memory coherence time of an end or border node. $w_\mathrm{m}$ and $w_\mathrm{b}$ are the Werner parameters of freshly generated links between an end node and its neighbouring border node, and between two border nodes across the backbone, respectively.
Furthermore,
\begingroup
    \setlength{\abovedisplayskip}{6pt}   
    \setlength{\belowdisplayskip}{6pt}   
    \setlength{\abovedisplayshortskip}{2pt}
    \setlength{\belowdisplayshortskip}{2pt}
    \begin{align}
        U_2(v) =&~ 2 \bigg(\! \mathbb{E}\big(e^{-k\thinspace(3 X_1/2-X_{2})}\mathds{1}_{A_{12}^+}) + \mathbb{E}\big(e^{-k\thinspace(3 X_1/2-X_{b})}\mathds{1}_{A_{1b}^+}) - \mathbb{E}\big(e^{-k\thinspace(3 X_1/2-X_{2})}\mathds{1}_{A_{12}^+ A_{1b}^+}) \!\bigg) \nonumber \\
        &~ + \mathbb{E}(e^{-k\thinspace (5 X_b/2-X_1-X_2)}\mathds{1}_{A_b^+}) - \mathbb{E}(e^{-k\thinspace(3 X_1/2-X_b)}\mathds{1}_{A_1^+A_2^+}) - 2 \mathbb{E}(e^{-k\thinspace (3 X_1/2-X_2)}\mathds{1}_{A_1^+A_b^+}) \nonumber \\
        &~ + \mathbb{E}(e^{-k\thinspace X_{1}/2}\mathds{1}_{A_1^+ A_2^+ A_b^+}) ~,    \label{eq:term_expansion_U_2} \\
        U_3(v) =&~ 2\bigg(\! \mathbb{E}\big(e^{-kX_{2}/2} \thinspace\mathds{1}_{A_{12}^-}\big) + \mathbb{E}\big(e^{-kX_{b}/2} \thinspace\mathds{1}_{A_{1b}^-}\big) - \mathbb{E}\big(e^{-kX_{2}/2} \thinspace\mathds{1}_{A_{12}^- A_{1b}^-}\big)  \!\bigg) + 2\mathbb{E}\big(e^{-kX_{1}/2} \thinspace\mathds{1}_{A_{b1}^-}\big) \nonumber \\
        &  - \mathbb{E}\big(e^{-kX_{1}/2} \thinspace\mathds{1}_{A_{b1}^- A_{b2}^-}\big) - \mathbb{E}\big(e^{-kX_{b}/2} \thinspace\mathds{1}_{A_1^- A_2^-}\big) - 2 \mathbb{E}\big(e^{-kX_{2}/2} \thinspace\mathds{1}_{A_1^- A_b^-}\big)~. \label{eq:term_expansion_U_3}
    \end{align}
\endgroup
\end{theorem}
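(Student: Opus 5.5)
The plan is to write the teleportation fidelity conditionally on the full history of rounds, and then sum the resulting geometric series over the number of rounds $N$. This mirrors the proof of the rate theorem, which handled $\mathbb{E}(X_\mathrm{e2e})$ by conditioning on $N$.

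\textbf{Step 1 (storage time of the data qubit).} In the QR scenario the data qubit is prepared at time $0$, i.e. at the start of the first round. It is stored under the single-qubit depolarising channel~\eqref{eq:depolarising_channel}, with shrinking factor $e^{-t/t_\mathrm{coh}}=e^{-kt/2}$, until the Bell measurement at the sender. By~\eqref{eq:def_Z}, its total storage time is
\[
T_\mathrm{data}=\sum_{i=1}^{N-1}\big(X_\mathrm{min}^{(i)}+t_\mathrm{cut}\big)+X_\mathrm{max}^{(N)}+t_\mathrm{msg}.
\]
The receiver's half of the end-to-end link additionally waits $t_\mathrm{int}^\mathrm{class}$ for the Pauli correction, as in the ER theorem. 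I will invoke~\eqref{eq:teleportation_fidelity_general_expression} to get the fidelity of the teleported qubit, given the history, as $\tfrac12+\tfrac12\,e^{-kT_\mathrm{data}/2}\,w_\mathrm{e2e}\,e^{-kt_\mathrm{int}^\mathrm{class}/2}$. This step uses that a depolarised input qubit and a Werner resource contribute multiplicatively to the Bloch-vector shrinking. Here $w_\mathrm{e2e}$ is given by~\eqref{eq:event_werner_parameter_e2e_entanglement} evaluated on the successful round $N$.

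\textbf{Step 2 (renewal / geometric sum).} I will take expectations using the i.i.d. structure~\eqref{eq:def_Y_Z}. The event $\{N=n\}$ is $\{Y^{(1)}=\dots=Y^{(n-1)}=0,\,Y^{(n)}=1\}$, and the rounds are independent. Hence the expectation factorises as
\[
\sum_{n\ge1}\Big(e^{-kt_\mathrm{cut}/2}\,\mathbb{E}\big(e^{-kX_\mathrm{min}/2}\mathds{1}_{Y=0}\big)\Big)^{n-1}\,\mathbb{E}\big(e^{-k(X_\mathrm{max}+t_\mathrm{msg})/2}\,w_\mathrm{e2e}\,\mathds{1}_{Y=1}\big).
\]
The first factor in the bracket is $e^{-kt_\mathrm{cut}/2}U_3(k)$, which is at most $1-p<1$. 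Summing the series therefore gives the denominator $1-e^{-kt_\mathrm{cut}/2}U_3(k)$. Note that no division by $p$ appears here, unlike in Lemma~\ref{lemma_fidelity_e2e_entanglement}, because the indicator $\mathds{1}_{Y=1}$ already carries the success probability.

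For the last factor I substitute $w_\mathrm{e2e}=w_\mathrm{m}^2w_\mathrm{b}e^{-k(X_\mathrm{diff}+t_\mathrm{msg})}$ and pull out the constants. What remains is an expectation of $e^{-k(\cdot)}\mathds{1}_{Y=1}$ that combines $X_\mathrm{diff}$ with $X_\mathrm{max}/2$. The random part of the exponent should then be matched with the definition~\eqref{eq:def_U_2} of $U_2$, and the constant $t_\mathrm{msg}$ and $t_\mathrm{int}^\mathrm{class}$ terms collected into the prefactor displayed in~\eqref{eq:fidelity_qubit_ready}.

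\textbf{Step 3 (expansions of $U_2$ and $U_3$).} These repeat the inclusion--exclusion of the proofs of the rate theorem and of Lemma~\ref{lemma_fidelity_e2e_entanglement}. For $U_2$, I split $\mathds{1}_{Y=1}$ over $A_1^+,A_2^+,A_b^+$ and their intersections, use the symmetry between $X_1$ and $X_2$, and decompose $\mathds{1}_{A_1^+}$ via~\eqref{eq:term_expansion_A_1+}. On each piece I then read off $X_\mathrm{diff}+X_\mathrm{max}/2$:
\begin{itemize}
\item on $A_{12}^+$ it equals $3X_1/2-X_2$;
\item on $A_b^+$ it equals $5X_b/2-X_1-X_2$;
\item on $A_1^+A_2^+A_b^+$ it equals $X_1/2$.
\end{itemize}
For $U_3$, I split $\mathds{1}_{Y=0}$ exactly as in~\eqref{eq:inclusion-exclusion_part1} and~\eqref{eq:term_expansion_A_1-andA_b-}, using that $A_1^-A_2^-A_b^-$ is empty, and identify $X_\mathrm{min}$ on each event. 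For example, $X_\mathrm{min}=X_2$ on $A_{12}^-$, $X_\mathrm{min}=X_b$ on $A_{1b}^-$, and $X_\mathrm{min}=X_1$ on $A_{b1}^-$.

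\textbf{Main obstacle.} I expect the conceptual difficulty to be Step 2. It requires justifying that the data-qubit decay factors across rounds are independent of the final-round Werner parameter, so that the expectation factorises and $N$ is handled jointly with the round variables, not by conditioning on $N$ alone. The error-prone part is the exponent bookkeeping: the constants $t_\mathrm{msg}$ and $t_\mathrm{int}^\mathrm{class}$, and the factor-$1/2$ difference between single-qubit decay ($k/2$) and two-qubit Werner decay ($k$). I will check these against the prefactor of~\eqref{eq:fidelity_qubit_ready} term by term.
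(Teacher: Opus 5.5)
Your argument is the same as the paper's. The paper also writes the data-qubit decay as $e^{-k(\sum_{i<N}Z_0^{(i)}+Z_1)/2}$ and uses the i.i.d.\ rounds to sum a geometric series in $N$. It arrives at $\mathbb{E}\big(e^{-k(X_\text{diff}+Z/2)}\mathds{1}_{Y=1}\big)/\big(1-\mathbb{E}(e^{-kZ/2}\mathds{1}_{Y=0})\big)$ and then expands $U_2$ and $U_3$ by the same inclusion--exclusion. Your joint-indicator Step~2 is a cleaner version of the paper's conditioning on $N$ followed by multiplying back by $p$ and $1-p$.

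The step that will not go through as written is the final matching you deferred. Done correctly, your Steps~1--2 give
\[
\mathbb{E}(F_\text{int}^\text{QR})=\frac12+\frac12\,w_\text{m}^2w_\text{b}\,e^{-k(t_\text{msg}+t_\text{int}^\text{class}/2)}\,\frac{e^{-kt_\text{msg}/2}\,\mathbb{E}\big(e^{-k(X_\text{diff}+X_\text{max}/2)}\mathds{1}_{Y=1}\big)}{1-e^{-kt_\text{cut}/2}\,U_3(k)}~.
\]
The constant in front is the ER one, $e^{-k(t_\text{msg}+t_\text{int}^\text{class}/2)}$. It combines $e^{-kt_\text{msg}}$ from the two link qubits and $e^{-kt_\text{int}^\text{class}/2}$ from the receiver. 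The data qubit then contributes a single further $e^{-kt_\text{msg}/2}$, for a total of $e^{-k(3t_\text{msg}+t_\text{int}^\text{class})/2}$. The displayed $e^{-k(3t_\text{msg}+t_\text{int}^\text{class}/2)}$, multiplied by $e^{-kt_\text{msg}/2}$, over-counts $t_\text{msg}$, and no bookkeeping will reproduce it.

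Likewise, the random exponent is $k(X_\text{diff}+X_\text{max}/2)$, i.e.\ $U(k,5/2)$. That is what your Step~3 values ($3X_1/2-X_2$, $5X_b/2-X_1-X_2$, $X_1/2$) and the stated expansion of $U_2$ use. It is not $k(X_\text{diff}+X_\text{max})/2$, as in the displayed definition of $U_2$.

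These are slips in the statement itself. The paper's own chain places $e^{-k(3t_\text{msg}+t_\text{int}^\text{class})/2}$ in front of $U'$, while $U'$ still carries the $e^{-kt_\text{msg}/2}$ from $Z_1=X_\text{max}+t_\text{msg}$. So state and prove the formula above, reading $U_2(k)$ as $\mathbb{E}\big(e^{-k(X_\text{diff}+X_\text{max}/2)}\mathds{1}_{Y=1}\big)$. Do not try to collect the constants into the displayed prefactor.
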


\begin{proof}
    In the QR scenario, the data qubit is first prepared at the sender node in a pure state, and the end-to-end link generation process begins immediately.
    During the failed rounds in end-to-end link generation, the data qubit experiences decoherence for a duration of
    \begingroup
    \setlength{\abovedisplayskip}{4pt}   
    \setlength{\belowdisplayskip}{4pt}   
    \setlength{\abovedisplayshortskip}{2pt}
    \setlength{\belowdisplayshortskip}{2pt}
        \begin{align}
            \sum_{i=1}^{N-1} Z^{(i)}_0~,\quad 
        \text{where}~~ {Z^{(i)}_0 \overset{\text{iid}}{\sim} Z \mid Y\!=\!0}~, \nonumber
        \end{align}
    \endgroup
    and $N$ denotes the number of required rounds until success; see~\eqref{eq:def_Z}.
    During the successful round, the data qubit decoheres further for a duration of 
    \begingroup
    \setlength{\abovedisplayskip}{4pt}   
    \setlength{\belowdisplayskip}{4pt}   
    \setlength{\abovedisplayshortskip}{2pt}
    \setlength{\belowdisplayshortskip}{2pt}
        \begin{align}
            Z_1 \sim Z\mid Y\!=\!1~; \nonumber
        \end{align}
    \endgroup
    see~\eqref{eq:def_Z}.
    Therefore, at the beginning of teleportation, the data qubit has decohered to the state 
\begingroup
    \setlength{\abovedisplayskip}{4pt}   
    \setlength{\belowdisplayskip}{4pt}   
    \setlength{\abovedisplayshortskip}{2pt}
    \setlength{\belowdisplayshortskip}{2pt}
    \begin{align}
    \label{eq:p_d_in_QR}
        p_d \ket{\phi}\bra{\phi} + (1 - p_d) \frac{\mathbb{I}}{2}~,~ \text{where}~ p_d := e^{-(\sum_{i=1}^{N-1} Z^{(i)}_0 + Z_1) / t_{\text{coh}}}
        = e^{-k(\sum_{i=1}^{N-1} Z^{(i)}_0 + Z_1) / 2}~,
        ~k = 2 / t_\text{coh}~.
    \end{align}
\endgroup

After the BSM is performed on the data qubit and one-half of the end-to-end link, the qubit stored at the receiver node memory undergoes decoherence during the propagation time of the Pauli correction message between end nodes $t_\text{int}^{\text{class}}$.
Therefore, using~\eqref{eq:teleportation_fidelity_general_expression}, we obtain the expected fidelity of the teleported qubit as  
\begingroup
    \setlength{\abovedisplayskip}{4pt}   
    \setlength{\belowdisplayskip}{4pt}   
    \setlength{\abovedisplayshortskip}{2pt}
    \setlength{\belowdisplayshortskip}{2pt}
    \begin{align}
        &\mathbb{E}\Big(\frac{1+w_\text{e2e} e^{-k(\sum_{i=1}^{N-1} Z^{(i)}_0 + Z_1) / 2} e^{-kt_{\mathrm{int}}^{\mathrm{class}}/2}}{2}\Big) \nonumber \\
        \overset{\eqref{eq:fidelity_e2e_entanglement}}{=} &\frac{1}{2} + \frac{1}{2}w_\text{m}^2 w_\text{b} e^{-k(3t_{\text{msg}}+t_\text{int}^{\text{class}})/2} \underbrace{\mathbb{E}\big(e^{-k\sum_{i=1}^{N-1} Z_0^{(i)} / 2) }\big) \thinspace \mathbb{E} \big(e^{-k(X_{\text{diff}} + Z_1/2)} \big)}_{=:\,U'} \label{eq:U_prime_in_werner_paramater} ~.
    \end{align}
\endgroup
Recall from~\eqref{eq:def_p} that $p = \mathbb{P}(Y\!=\!1)$.
Now the term $U^\prime$ in~\eqref{eq:U_prime_in_werner_paramater} can be simplified further as follows: 
\begingroup
    \setlength{\abovedisplayskip}{-8pt}   
    \setlength{\belowdisplayskip}{4pt}   
    \setlength{\abovedisplayshortskip}{2pt}
    \setlength{\belowdisplayshortskip}{2pt}
    \begin{align}
        U^\prime = & ~\mathbb{E}\big(e^{-k\sum_{i=1}^{N-1} Z_0^{(i)} / 2) }\big) \thinspace \mathbb{E} \big(e^{-k(X_{\text{diff}} + Z_1/2)} \big) \\
        = & ~\sum_{n=1}^\infty p(1-p)^{n-1} \Big(\mathbb{E}\big(e^{-kZ_0/2}\big)\Big)^{n-1} \thinspace \mathbb{E}\big(e^{-k(X_{\text{diff}} + Z_1/2)}\big) \\
        = & ~\frac{p \thickspace \mathbb{E}\big(e^{-k(X_{\text{diff}} + Z_1/2)} \big)}{1 - (1-p) \thickspace \mathbb{E}\big(e^{-kZ_0/2} \big)} \\
        = & ~\frac{\mathbb{E}\big(e^{-k(X_{\text{diff}} + Z/2)} \thinspace\mathds{1}_{Y=1} \big)}{1 - \mathbb{E}\big(e^{-kZ/2} \thinspace\mathds{1}_{Y=0} \big)} \\
        \overset{\eqref{eq:def_Z}}{=} 
        &~\frac{e^{-k t_{\text{msg}} / 2} \thinspace \mathbb{E}\!\left(e^{-k\left( X_{\text{diff}} + X_{\text{max}}/2 \right)} \thinspace\mathds{1}_{Y=1} \right)}
        {1 - e^{-k t_\text{cut} / 2} \thinspace \mathbb{E}\!\left(e^{-k X_{\text{min}}/2} \thinspace\mathds{1}_{Y=0} \right)} \\
        \overset{\eqref{eq:def_U_2},\eqref{eq:def_U_3}}{=} 
        &~\frac{e^{-k t_{\text{msg}} / 2} \, U_2(k) }
        {1 - e^{-k t_\text{cut} / 2} \, U_3(k) } ~.
        \label{eq:expected_term_qubit_ready}
    \end{align}
\endgroup
Substituting~\eqref{eq:expected_term_qubit_ready} in~\eqref{eq:U_prime_in_werner_paramater} establishes~\eqref{eq:fidelity_qubit_ready}.
To show~\eqref{eq:term_expansion_U_2} and~\eqref{eq:term_expansion_U_3}, we use the principle of inclusion and exclusion.
Recalling the definitions of $A_i^+$ and $A_{ij}^+$ from~\eqref{eq:Aij+}, we have
\begingroup
    \setlength{\abovedisplayskip}{4pt}   
    \setlength{\belowdisplayskip}{4pt}   
    \setlength{\abovedisplayshortskip}{2pt}
    \setlength{\belowdisplayshortskip}{2pt}
    \begin{align}
        U_2(v) =&~ \mathbb{E}\big(e^{-v( X_{\text{diff}}+X_{\text{max}} / 2)} \thinspace\mathds{1}_{Y=1} \big)  \\
        =&~ \mathbb{E}\big(e^{-v (X_{\text{diff}}+X_{\text{max}} / 2)}\mathds{1}_{A_{1}^+}) \!+\! \mathbb{E}\big(e^{-v(X_{\text{diff}}+X_{\text{max}} / 2)}\mathds{1}_{A_{2}^+}) \!+\! \mathbb{E}\big(e^{-v(X_{\text{diff}}+X_{\text{max}} / 2)}\mathds{1}_{A_{b}^+}) \nonumber \\
        &~ \!-\! \mathbb{E}(e^{-v (X_{\text{diff}}+X_{\text{max}} / 2)}\mathds{1}_{A_1^+A_2^+}) \!-\! \mathbb{E}(e^{-v (X_{\text{diff}}+X_{\text{max}} / 2)}\mathds{1}_{A_1^+A_b^+}) \nonumber \\
        &~  \!-\! \mathbb{E}\big(e^{-v (X_{\text{diff}}+X_{\text{max}} / 2)}\mathds{1}_{A_{2}^+ A_{b}^+}) \!+\! \mathbb{E}(e^{-v(X_{\text{diff}}+X_{\text{max}} / 2)}\mathds{1}_{A_1^+ A_2^+ A_b^+})  \\
        \overset{(\text{i})}{=}&~ 2\mathbb{E}\big(e^{-v (X_{\text{diff}}+X_{\text{max}} / 2)}\mathds{1}_{A_{1}^+}) \!+\! \mathbb{E}\big(e^{-v(X_{\text{diff}}+X_{\text{max}} / 2)}\mathds{1}_{A_{b}^+}) \!-\! \mathbb{E}(e^{-v (X_{\text{diff}}+X_{\text{max}} / 2)}\mathds{1}_{A_1^+A_2^+}) \nonumber \\
        &~  \!-\! 2\mathbb{E}(e^{-v (X_{\text{diff}}+X_{\text{max}} / 2)}\mathds{1}_{A_1^+A_b^+}) \!+\! \mathbb{E}(e^{-v(X_{\text{diff}}+X_{\text{max}} / 2)}\mathds{1}_{A_1^+ A_2^+ A_b^+}) \\
        \overset{(\text{ii})}{=}&~ 2 \bigg(\! \mathbb{E}\big(e^{-v\thinspace(3 X_1/2-X_{2})}\mathds{1}_{A_{12}^+}) \!+\! \mathbb{E}\big(e^{-v\thinspace(3 X_1/2-X_{b})}\mathds{1}_{A_{1b}^+}) \!-\! \mathbb{E}\big(e^{-v\thinspace(3 X_1/2-X_{2})}\mathds{1}_{A_{12}^+ A_{1b}^+}) \!\bigg) \nonumber \\
        &~ \!+\! \mathbb{E}(e^{-v\thinspace (5 X_b/2-X_1-X_2)}\mathds{1}_{A_b^+}) \!-\! \mathbb{E}(e^{-v\thinspace(3 X_1/2-X_b)}\mathds{1}_{A_1^+A_2^+}) - 2 \mathbb{E}(e^{-v\thinspace (3 X_1/2-X_2)}\mathds{1}_{A_1^+A_b^+}) \nonumber \\
        &~ \!+\! \mathbb{E}(e^{-v\thinspace X_{1}/2}\mathds{1}_{A_1^+ A_2^+ A_b^+}) ~, \nonumber
    \end{align}
\endgroup
where (i) follows from the symmetry of the metropolitan links.
In (ii), we have used the principle of inclusion and exclusion on the event $A_1^+$ using~\eqref{eq:term_expansion_A_1+} and expanded the shorthand $X_\text{diff}$ from~\eqref{eq:def_X_diff}.
Thus, we established~\eqref{eq:term_expansion_U_2}.
Similarly, we can expand $U_3(v)$ as
\begingroup
    \setlength{\abovedisplayskip}{4pt}   
    \setlength{\belowdisplayskip}{4pt}   
    \setlength{\abovedisplayshortskip}{2pt}
    \setlength{\belowdisplayshortskip}{2pt}
    \begin{align}
        U_3(v) =&~ \mathbb{E}\big(
            e^{-v X_\mathrm{min}/2} \, \mathds{1}_{Y=0} \big) \\
        =&~ \mathbb{E}\big(
            e^{-v X_\mathrm{min}/2} \, \mathds{1}_{A_1^-} \big) \!+\! \mathbb{E}\big(
            e^{-v X_\mathrm{min}/2} \, \mathds{1}_{A_2^-} \big) \!+\! \mathbb{E}\big(
            e^{-v X_\mathrm{min}/2} \, \mathds{1}_{A_b^-} \big) \!-\! \mathbb{E}\big(
            e^{-v X_\mathrm{min}/2} \, \mathds{1}_{A_1^-A_2^-} \big) \nonumber \\
            &~ \!-\! \mathbb{E}\big(
            e^{-v X_\mathrm{min}/2} \, \mathds{1}_{A_1^-A_b^-} \big) \!-\! \mathbb{E}\big(
            e^{-v X_\mathrm{min}/2} \, \mathds{1}_{A_2^-A_b^-} \big) \!+\! \mathbb{E}\big(
            e^{-v X_\mathrm{min}/2} \, \mathds{1}_{A_1^-A_2^-A_b^-} \big) \\
        \overset{(\text{i})}{=}&~ 2\mathbb{E}\big(
            e^{-v X_\mathrm{min}/2} \, \mathds{1}_{A_1^-} \big) \!+\! \mathbb{E}\big(
            e^{-v X_\mathrm{min}/2} \, \mathds{1}_{A_b^-} \big) \!-\! \mathbb{E}\big(
            e^{-v X_\mathrm{min}/2} \, \mathds{1}_{A_1^-A_2^-} \big) \!-\! 2\mathbb{E}\big(
            e^{-v X_\mathrm{min}/2} \, \mathds{1}_{A_1^-A_b^-} \big)~ \\
        \overset{(\text{ii})}{=}&~ 2\bigg(\! \mathbb{E}\big(
            e^{-v X_\mathrm{min}/2} \, \mathds{1}_{A_{12}^-} \big) \!+\! \mathbb{E}\big(
            e^{-v X_\mathrm{min}/2} \, \mathds{1}_{A_{1b}^-} \big) \!-\! \mathbb{E}\big(
            e^{-v X_\mathrm{min}/2} \, \mathds{1}_{A_{12}^-A_{1b}^-} \big) \!\bigg) \nonumber \\
            &~ \!+\! \bigg(\! \mathbb{E}\big(
            e^{-v X_\mathrm{min}/2} \, \mathds{1}_{A_{b1}^-} \big) \!+\! \mathbb{E}\big(
            e^{-v X_\mathrm{min}/2} \, \mathds{1}_{A_{b2}^-} \big) \!-\! \mathbb{E}\big(
            e^{-v X_\mathrm{min}/2} \, \mathds{1}_{A_{b1}^-A_{b2}^-} \big) \!\bigg)  \nonumber \\
            &~ \!-\! \mathbb{E}\big(
            e^{-v X_\mathrm{min}/2} \, \mathds{1}_{A_1^-A_2^-} \big) \!-\! 2\mathbb{E}\big(
            e^{-v X_\mathrm{min}/2} \, \mathds{1}_{A_1^-A_b^-} \big) \\
            =&~ 2\bigg(\!\mathbb{E}\big(e^{-kX_{2}/2} \thinspace\mathds{1}_{A_{12}^-}\big) \!+\! \mathbb{E}\big(e^{-kX_{b}/2} \thinspace\mathds{1}_{A_{1b}^-}\big) \!-\! \mathbb{E}\big(e^{-kX_{2}/2} \thinspace\mathds{1}_{A_{12}^- A_{1b}^-}\big) \!\bigg) \nonumber \\
            &~ \!+\! 2\mathbb{E}\big(e^{-kX_{1}/2} \thinspace\mathds{1}_{A_{b1}^-}\big) \!-\! \mathbb{E}\big(e^{-kX_{1}/2} \thinspace\mathds{1}_{A_{b1}^- A_{b2}^-}\big) \!-\! \mathbb{E}\big(e^{-kX_{b}/2} \thinspace\mathds{1}_{A_1^- A_2^-}\big) \!-\! 2 \mathbb{E}\big(e^{-kX_{2}/2} \thinspace\mathds{1}_{A_1^- A_b^-}\big)~, \nonumber
    \end{align}
\endgroup
where (i) follows from the symmetry of the metropolitan links.
Note that ${\mathbb{E}\big( e^{-v X_\mathrm{min}/2} \, \mathds{1}_{A_1^-A_2^-A_b^-} \big) \!=\! 0}$, as the subset $A_1^- A_2^- A_b^-$ is empty.
In (ii), we have again used the principle of inclusion and exclusion using~\eqref{eq:term_expansion_A_1-andA_b-} as the events of the form $A_i^-$ are determined via $X_\text{max}$, while here we need $X_\text{min}$.
Furthermore, by symmetry, $\mathbb{E}\big(Z\mathds{1}_{A_{b1}^-}\big) = \mathbb{E}\big(Z\mathds{1}_{A_{b2}^-}\big)$.
Thus, we established~\eqref{eq:term_expansion_U_3} by plugging in the value of $X_\text{min}$.

\end{proof}

\vspace{-4pt}
Observe that instead of evaluating $U_1(v)$ and $U_2(v)$ separately, we can compute
\begingroup
    \setlength{\abovedisplayskip}{2pt}   
    \setlength{\belowdisplayskip}{4pt}   
    \setlength{\abovedisplayshortskip}{2pt}
    \setlength{\belowdisplayshortskip}{2pt}
    \begin{align}
        U(v,\alpha) =&~ 2\bigg(\!\mathbb{E}(e^{-k((\alpha-1)X_1 - X_2)} \mathds{1}_{A_{12}^+}) + \mathbb{E}(e^{-k((\alpha-1)X_1 - X_b)} \mathds{1}_{A_{1b}^+}) - \mathbb{E}(e^{-k((\alpha-1)X_1 - X_{2})} \mathds{1}_{A_{12}^+ A_{1b}^+}) \!\bigg) \nonumber \\
        & + \mathbb{E}(e^{-k(\alpha X_b - X_1 - X_2)} \mathds{1}_{A_{b}^+}) - \mathbb{E}(e^{-k((\alpha-1)X_1-X_b)}\mathds{1}_{A_1^+ A_2^+}) - 2 \mathbb{E}(e^{-k((\alpha-1)X_1-X_2)}\mathds{1}_{A_1^+ A_b^+}) \nonumber \\
        & + \mathbb{E}(e^{-k(\alpha-2) X_{1}} \mathds{1}_{A_1^+ A_2^+ A_b^+})~, \label{eq:def_U_alpha}
    \end{align}
\endgroup

\noindent which gives
\begingroup
    \setlength{\abovedisplayskip}{3pt}   
    \setlength{\belowdisplayskip}{6pt}   
    \setlength{\abovedisplayshortskip}{2pt}
    \setlength{\belowdisplayshortskip}{2pt}
    \begin{align}
        U_1(v) = U(v,2)~, \quad U_2(v) = U(v,5/2)~.
        \label{eq:U_1_andU_2_to_U_alpha}
    \end{align}
\endgroup
We evaluate the individual terms of~\eqref{eq:def_U_alpha} in~\ref{subsec:terms_U_alpha}, which are given by~\eqref{eq:indiv_term_U_alpha_A_12+}, \eqref{eq:indiv_term_U_alpha_A_1b+}, \eqref{eq:indiv_term_U_alpha_A_12+A_1b+}, \eqref{eq:indiv_term_U_alpha_A_b+}, \eqref{eq:indiv_term_U_alpha_A_1+A_2+}, \eqref{eq:indiv_term_U_alpha_A_1+A_b+}, and~\eqref{eq:indiv_term_U_alpha_A_1+A_2+A_b+}.
Plugging in these computations in~\eqref{eq:def_U_alpha}, we can compute $U_1$ using ~\eqref{eq:U_1_andU_2_to_U_alpha}, which gives the expected teleportation fidelity in the ER case via~\eqref{eq:fidelity_ent_ready}.
Note that we can obtain $U_2$ in the same way.
For $U_3$ described in~\eqref{eq:term_expansion_U_3}, the calculable expressions of the individual terms are given in~\eqref{eq:indiv_term_U_3_A_12-}, \eqref{eq:indiv_term_U_3_A_1b-}, \eqref{eq:indiv_term_U_3_A_b1-}, \eqref{eq:indiv_term_U_3_A_12-A_1b-}, \eqref{eq:indiv_term_U_3_A_b1-A_b2-}, \eqref{eq:indiv_term_U_3_A_1-A_2-}, and~\eqref{eq:indiv_term_U_3_A_1-A_b-} in~\ref{sec:calculation_indiv_terms_U_3}. 
Having obtained $U_2$ and $U_3$, we can compute the expected fidelity in the QR case via~\eqref{eq:fidelity_qubit_ready}.
The analytical expressions derived in this section help us analyse the behaviour of the performance metrics as functions of the hardware parameters without running extensive simulations, which we take up in the next section.
\vspace{-7pt}

\section{Evaluations}
\label{sec:results}
\everypar{\looseness=-1}

In this section, we specify exact empirical requirements corresponding to~\ref{Q:1}--\ref{Q:2c} for the network from Fig.~\ref{fig:intercity_network_diagram}, when the baseline and optimistic parameters are set according to  Tab.~\ref{tab:no_imperfection_probability_baseline_and_optimistic}.
As already mentioned, the questions~\ref{Q:1}--\ref{Q:2c} involve the performance metrics teleportation rate and fidelity, which we derive analytically.
This lets us answer the questions without having to run extensive simulations.
In the following, we present our findings for teleportation in the MN and the IN separately.
We also provide empirical validation of our analytical results for the performance metrics before we delve into~\ref{Q:1}--\ref{Q:2c}.

\subsection{Teleportation in a Metropolitan Network}
\label{sec:results_teleportation_in_metropolitan_network}

Recall that in~\ref{Q:1}, we investigate the hardware requirements for teleportation in an MN of Fig.~\ref{fig:intercity_network_diagram}.
Here, the end nodes are separated by $50\,\text{km}$.
The key hardware parameters that influence the performance metrics, i.e., the expected teleportation fidelity and rate, are the base efficiency $p_\text{m}^0$, memory coherence time $t_\text{coh}$, and metropolitan link fidelity $f_{\text{m}'}$.
We explain in~\ref{sec:metro_teleportation_fidelity} that the teleportation fidelity in the ER case is deterministic while the rate equals that of the QR case.
Thus, we compare the analytically derived expected teleportation fidelity and rate with their empirical values only in the QR case.
The empirical values are obtained by simulating~\cite{git_repo_teleportation} in NetSquid, and we present the comparison in Fig.~\ref{fig:fidelity_metro_analytical_vs_simulation_both_cases}.

To simulate average fidelities, we fix the memory coherence time $t_\text{coh}$ and the metropolitan link fidelity $f_{\text{m}'}$ at their baseline and optimistic values, i.e. $(62\,\text{ms}, 4 \,\text{s})$, and $(0.88, 0.95)$, respectively, while we let the base efficiency $p_\text{m}^0$ vary between $10^{-4}$ and $1$.
We keep $t_\text{prep}\!=\!175\,\text{ms}$, following the supplementary material of~\cite{Krutyanskiy2023TelecomWavelength}.
For each observation point, we run the experiment for $100$ batches with each batch comprising $100$ independent runs.
Each batch produces a single observation for the average teleportation fidelity. 
We also compute the $5$th and the $95$th percentiles of the (batch) average fidelities.
Further, the analytical values of the expected teleportation fidelity are derived using~\eqref{eq:metro_fidelity_QR}.
We observe close agreement between the analytical and simulation results in Fig.~\ref{fig:fidelity_metro_analytical_vs_simulationQubitReady}.
Since the teleportation rate only depends on the base efficiency $p_\text{m}^0$, we calculate the simulated rate for each batch as a function of $p_\text{m}^0$.
The number of batches and runs per batch remains the same as in Fig.~\ref{fig:fidelity_metro_analytical_vs_simulationQubitReady}.
Further, the analytical value of the rate is derived using~\eqref{eq:metro_tel_rate}.
We observe in Fig.~\ref{fig:rate_metro_analytical_vs_simulation} that the analytical and simulation results are consistent in this case as well.

We now address~\ref{Q:1}, i.e., identify the hardware requirements necessary to attain teleportation in the MN with the target teleportation fidelity $f_\text{target}$.
As mentioned, the baseline hardware parameter values are set as per Tab.~\ref{tab:no_imperfection_probability_baseline_and_optimistic}.
Since the hardware parameter space is three-dimensional, we plot the surface corresponding to the minimum required link fidelity $f_{\text{m}'}$ among the parameter space that achieves the target fidelity of teleportation.
Recall that the surface $\tilde{\Lambda}_{1+}^\text{ER}$ (resp. $\tilde{\Lambda}_{1+}^\text{QR}$) in the ER (resp. QR) case is given by \eqref{eq:min_fm_definition_metro_ER} (resp. \eqref{eq:min_fm_definition_metro_QR}), while the region above this surface corresponds to the desired parameter space $\Lambda_{1+}^\text{ER}$ (resp. $\Lambda_{1+}^\text{QR}$) as defined in~\eqref{eq:desired_space_metro_ER} (resp. \eqref{eq:desired_space_metro_QR}).

\begin{figure}[t]
    \centering
    \begin{subfigure}[b]{0.48\textwidth}
        \includegraphics[width=\textwidth, trim=5pt 8pt 5pt 5pt, clip]{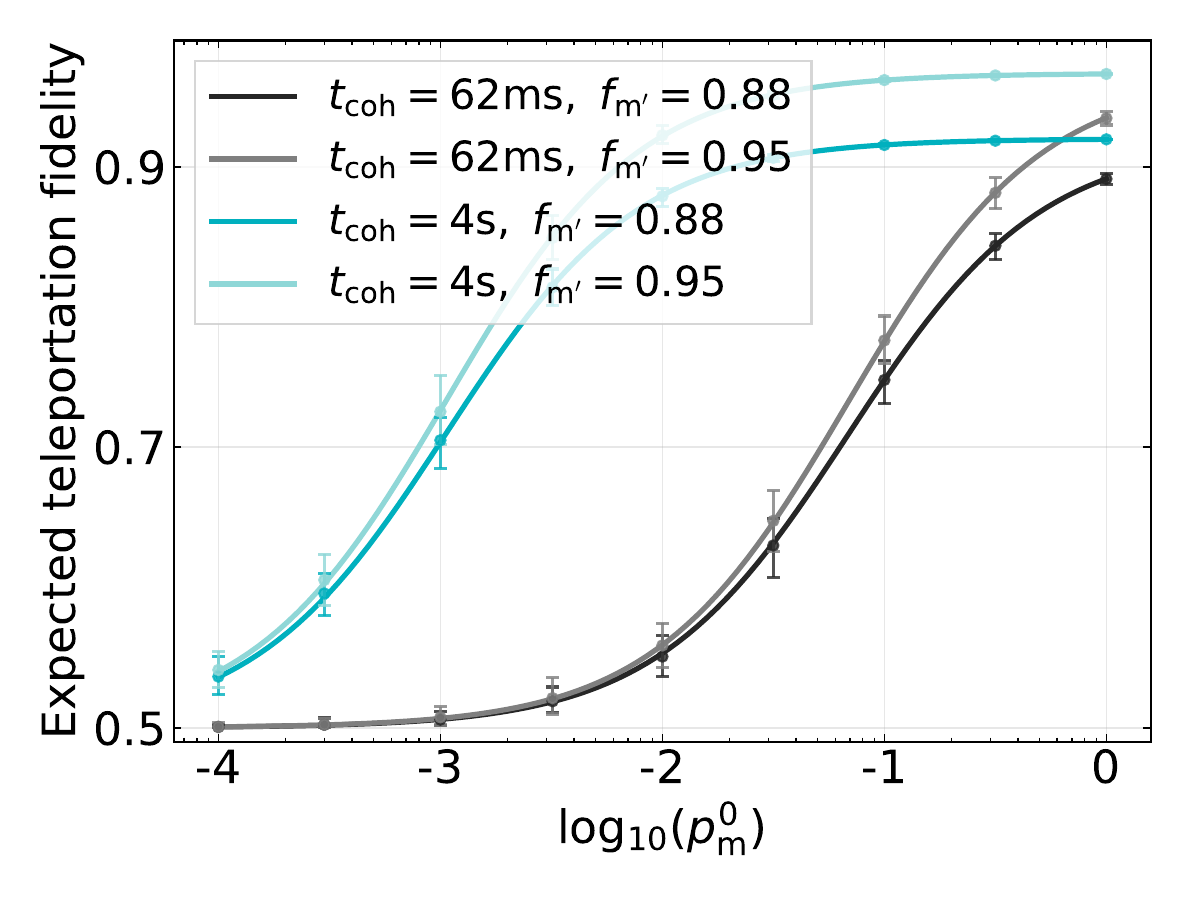}
        \caption{Expected teleportation fidelity}
        \label{fig:fidelity_metro_analytical_vs_simulationQubitReady}
    \end{subfigure}
    \hfill 
    \begin{subfigure}[b]{0.48\textwidth}
        \includegraphics[width=\textwidth, trim=5pt 8pt 5pt 5pt, clip]{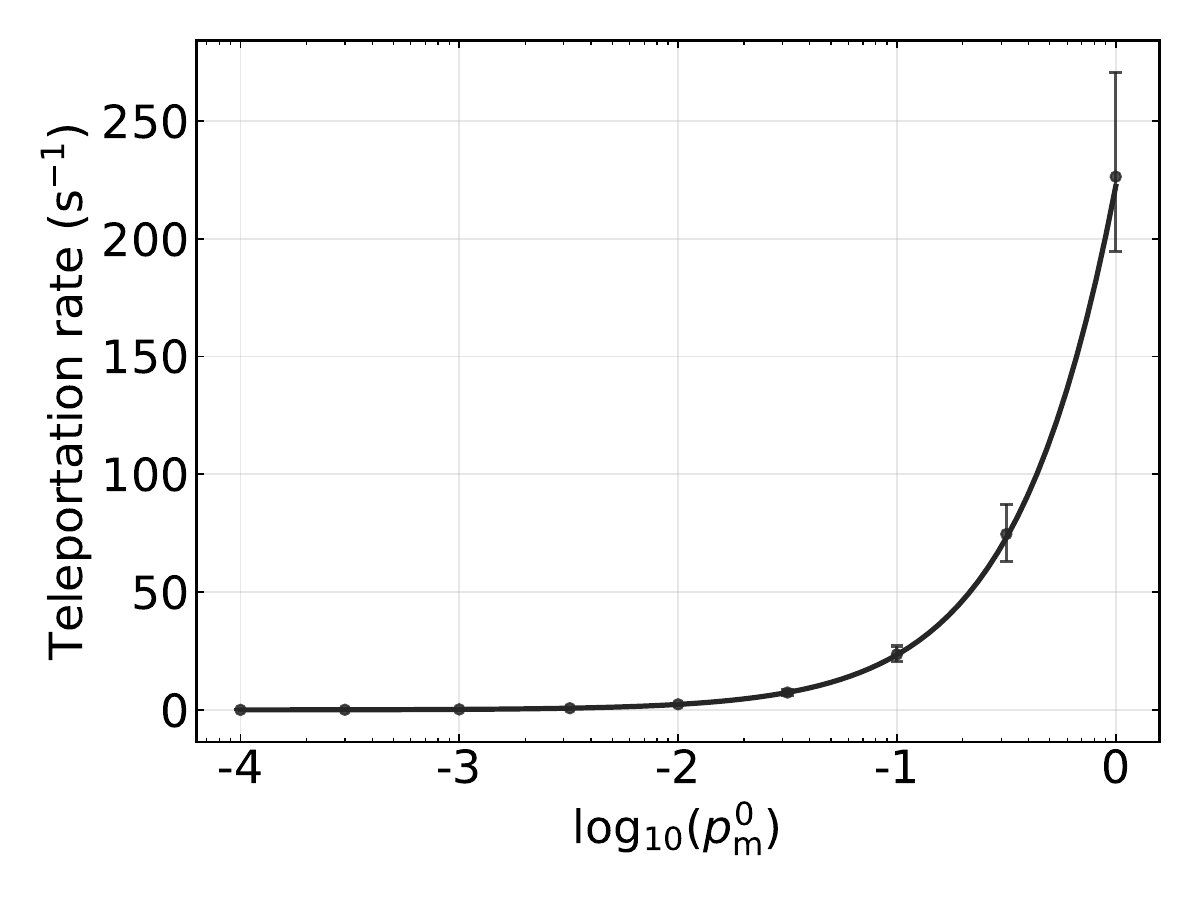}
        \caption{Teleportation rate}
        \label{fig:rate_metro_analytical_vs_simulation}
    \end{subfigure}
    \caption{Comparison of analytical (lines) and simulation results (dots) for qubit-ready teleportation in an MN. The mean, along with the 5th and 95th percentiles of the performance metrics, is shown as a function of the base efficiency $p_\text{m}^0$. (a) The expected teleportation fidelity is evaluated for both baseline and optimistic values of memory coherence time $t_\text{coh}$ and metropolitan link fidelity $f_{\text{m}'}$, while (b) the teleportation rate depends solely on $p_\text{m}^0$. Simulation results closely match the analytical values.}
    \vspace{-15pt}
    \label{fig:fidelity_metro_analytical_vs_simulation_both_cases}
\end{figure}

We plot the surface $\tilde{\Lambda}_{1+}^\text{ER}$ in Fig.~\ref{fig:fidelity_surface_teleportEntanglementReady} and its colour represents the rate of teleportation $R_1^\text{ER}$ as defined in~\eqref{eq:rate_min_fm_definition_metro_ER}.
As expected, the rate only depends on the base efficiency $p_\text{m}^0$.
We also observe that the baseline values of the parameters $(p_\text{m}^0,t_\text{coh},f_{\text{m}'})$ denoted by $\text{B}:\!({5.95\times 10^{-4}, 62\,\text{ms}, 0.88})$ lie above the surface, which implies that we can already achieve teleportation in the MN with target fidelity based on state-of-the-art hardware.
Specifically, the baseline yields an expected teleportation fidelity of $0.92$ and a corresponding teleportation rate of $0.14\,\text{s}^{-1}$.

We show the surface $\tilde{\Lambda}_{1+}^\text{QR}$ in Fig.~\ref{fig:fidelity_surface_teleportQubitReady} where the colour of the surface corresponds to the teleportation rate $R_1^\text{QR}$.
In contrast with Fig.~\ref{fig:fidelity_surface_teleportEntanglementReady}, the baseline (B) in this case lies below the surface, indicating that the current hardware capabilities are insufficient to meet the target for expected teleportation fidelity.
Thus, we aim to numerically calculate an optimal point in the set $\Lambda_{1*}^\text{QR}$, defined in~\eqref{eq:optimal_points_set_metro_QR}.
Recall that the optimal points refer to hardware parameter configurations which achieve the target fidelity and are easiest to achieve from the baseline values in terms of the cost function $c$ from~\eqref{eq:defition_total_cost_function2}.
For the numerical optimisation, we employ a global optimisation heuristic from~\cite{Prielinger2024}.
To improve its performance, we run the optimiser $50$ times and select the solution with the lowest hardware cost among all runs.
The resulting optimal point O: \!(${1.43\times 10^{-2}, 196\,\text{ms}, 0.88}$) is shown in Fig.~\ref{fig:fidelity_surface_teleportQubitReady}, which yields the target expected teleportation fidelity $2/3$ at a rate of $3.36\,\text{s}^{-1}$.

\subsection{Teleportation in Intercity Network}
\label{sec:results_teleportation_in_intercity_network}

As discussed in~\ref{Q:2a}--\ref{Q:2c}, we investigate the requirements for teleportation in an IN of Fig.~\ref{fig:intercity_network_diagram}.
Specifically, the scenario involves teleporting a data qubit across $500\,\text{km}$ between two nodes located in separate MNs.
Here, each end node is located $25\,\text{km}$ from its respective metropolitan hubs, and the hubs are connected by a $450\,\text{km}$ backbone.
The key hardware parameters influencing the teleportation fidelity and rate include the metropolitan base efficiency $p_\text{m}^0$, memory coherence time $t_\text{coh}$, metropolitan link fidelity $f_\text{m}$, entanglement-generation probability in backbone $p_\text{b}$, and backbone fidelity $f_\text{b}$.
The non-hardware parameter cut-off time ($t_\text{cut}$) plays a role in shaping the performance metrics as well.
As in Sec.~\ref{sec:results_teleportation_in_metropolitan_network}, we rely on analytical expressions for the teleportation rate and expected fidelity derived in~\ref{sec:cutOff_rate} and~\ref{sec:cutOff_fidelity} to answer~\ref{Q:2a}--\ref{Q:2c} as it obviates the need to run extensive simulations.
We also provide a comparison of the analytical values of the performance metrics with corresponding empirical estimates from NetSquid-based simulations~\cite{git_repo_teleportation} to show their accuracy.

\begin{figure}[t]
    \centering
    \begin{subfigure}[b]{0.49\textwidth}
        \includegraphics[width=\textwidth, trim=90pt 10pt 70pt 30pt, clip]{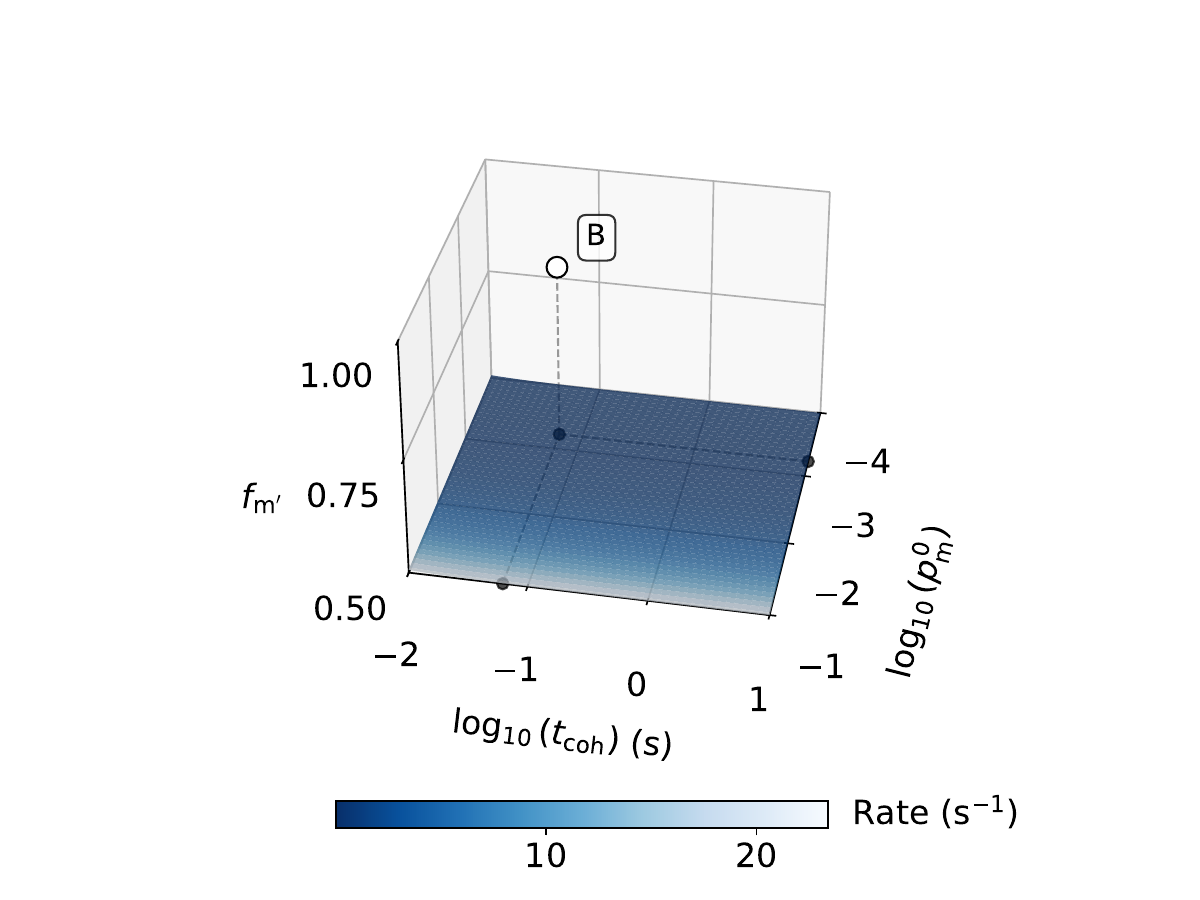}
        \caption{Entanglement-ready}
        \label{fig:fidelity_surface_teleportEntanglementReady}
    \end{subfigure}
    \hfill
    \begin{subfigure}[b]{0.49\textwidth}
        \includegraphics[width=\textwidth, trim=90pt 10pt 70pt 30pt, clip]{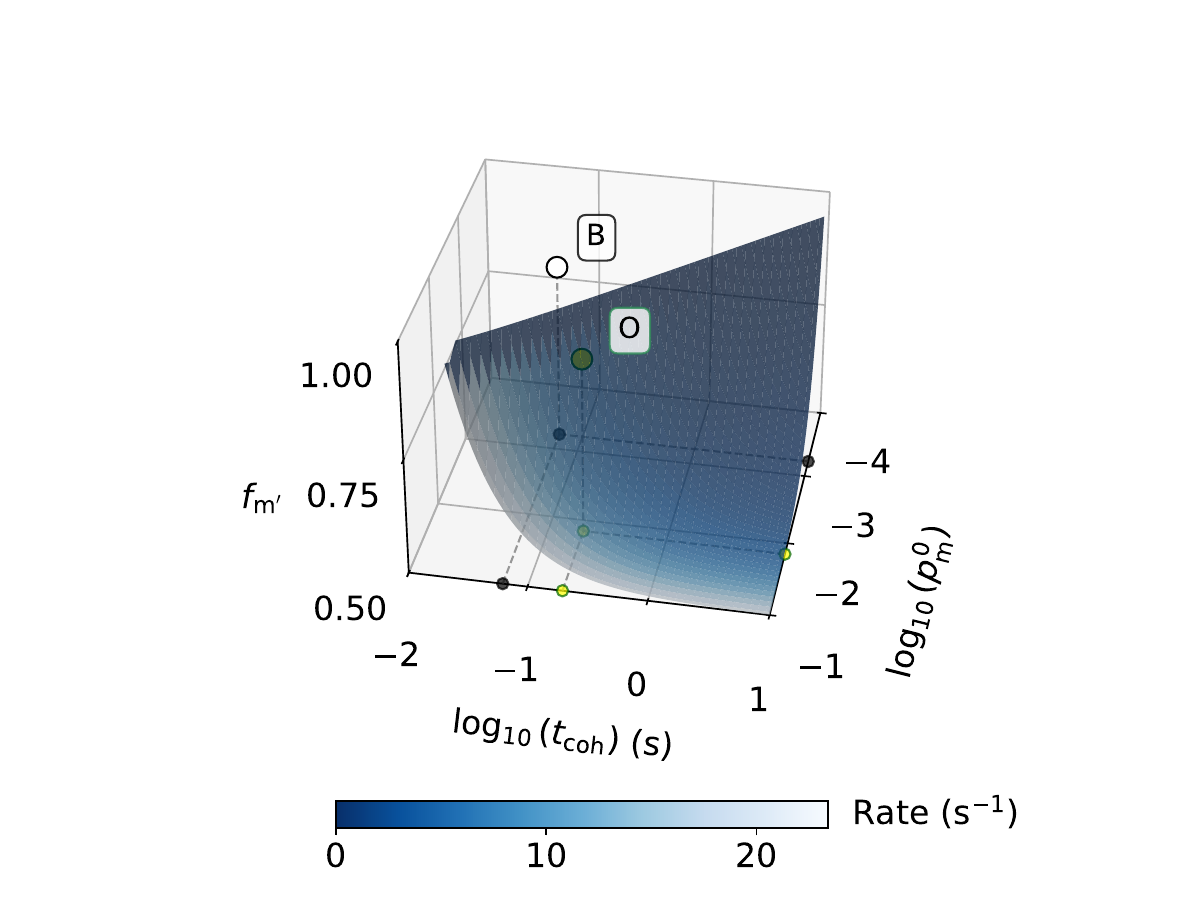}
        \caption{Qubit-ready}
        \label{fig:fidelity_surface_teleportQubitReady}
    \end{subfigure}
    \caption{Requirements to achieve the target teleportation fidelity of $2/3$ in an MN.
    The surfaces represent the minimum required metro link fidelity $f_{\text{m}'}$ as a function of $p_\text{m}^0$ and $t_\text{coh}$ for (a) entanglement-ready and (b) qubit-ready teleportation.
    The colour of the surface shows the corresponding teleportation rate.
    The baseline (denoted B) in (a) already achieves the target fidelity of teleportation, while an optimal (denoted O) parameter configuration subject to hardware cost $c$ is shown in (b).
    The $(p_\text{m}^0, t_\text{coh}, f_{\text{m}'})$ coordinates of the points in (b) are B: ${(5.95\times 10^{-4}, 62\,\text{ms}, 0.88)}$ and O: ${(1.43\times 10^{-2}, 196\,\text{ms}, 0.88)}$. 
    }
    \vspace{-15pt}
    \label{fig:fidelity_surface_metro_teleportation}
\end{figure}

\begin{figure}[t]
    \centering
    \begin{subfigure}[b]{0.48\textwidth}
        \includegraphics[width=\textwidth, trim=5pt 8pt 5pt 5pt, clip]{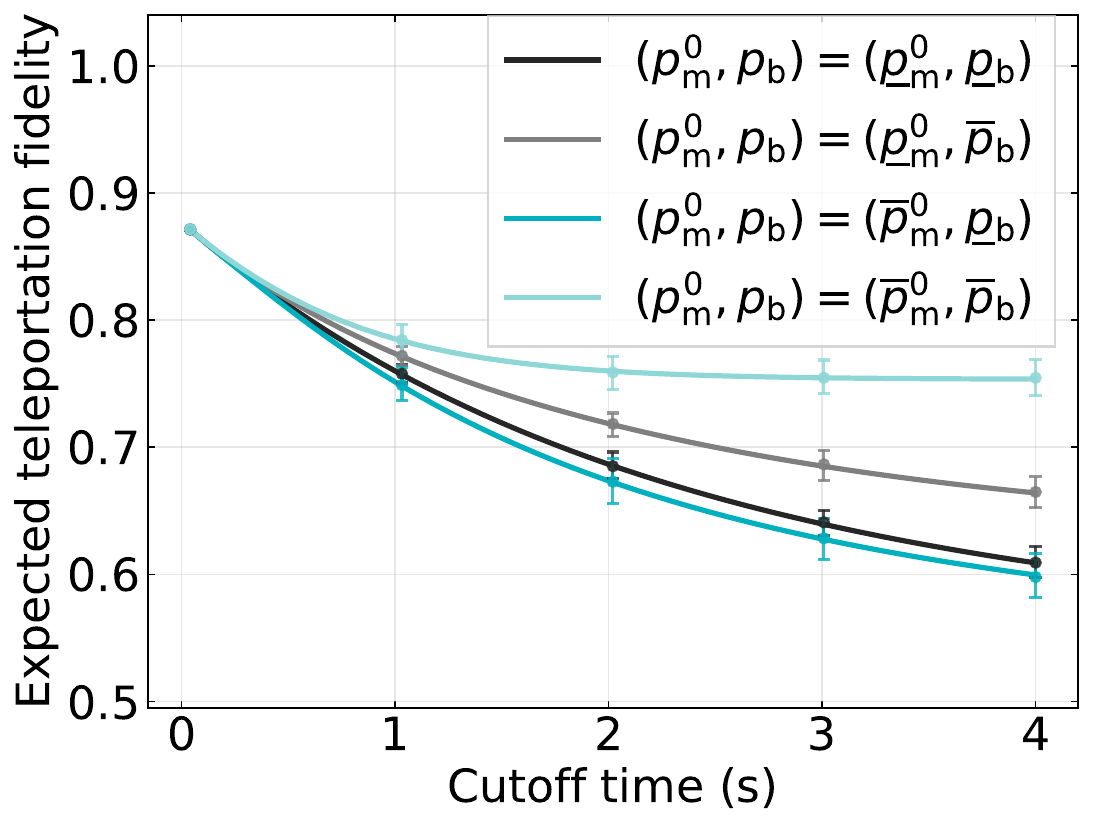}
        \caption{Entanglement-ready}
        \label{fig:fidelity_analytical_vs_simulationEntanglementReady}
    \end{subfigure}
    \hfill 
    \begin{subfigure}[b]{0.48\textwidth}
        \includegraphics[width=\textwidth, trim=5pt 8pt 5pt 5pt, clip]{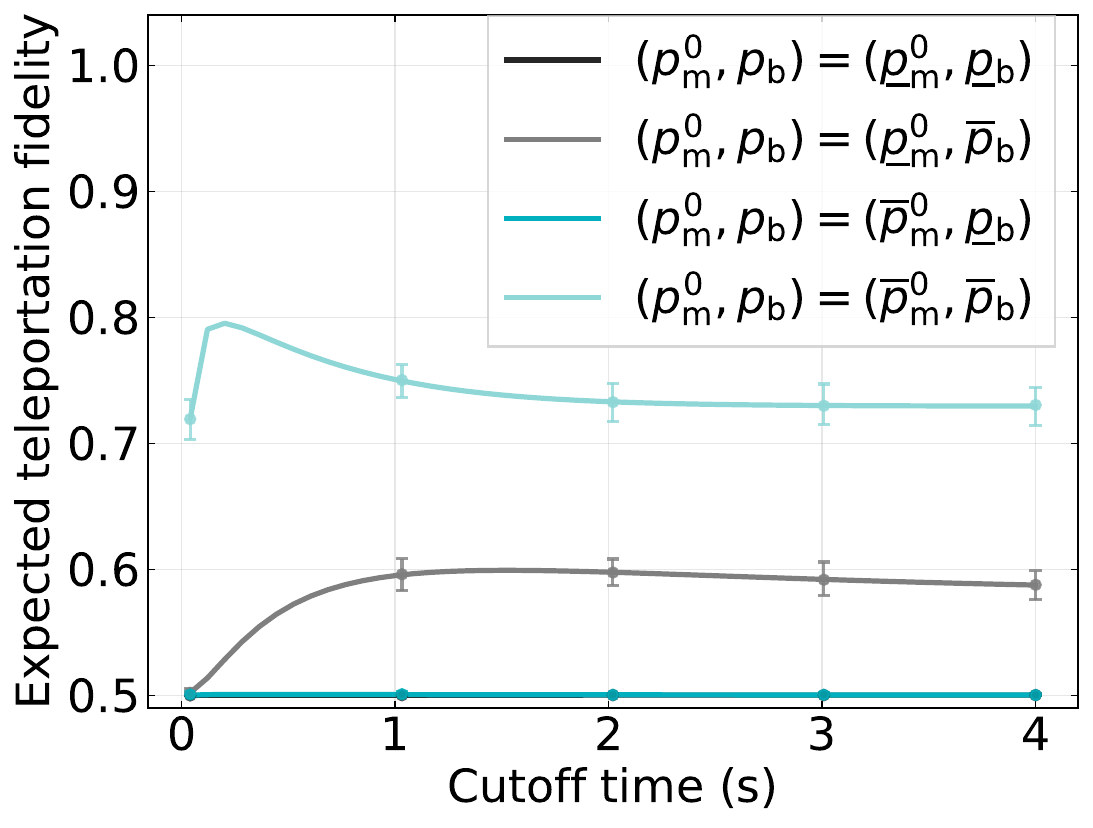}
        \caption{Qubit-ready}
        \label{fig:fidelity_analytical_vs_simulationQubitReady}
    \end{subfigure}
    \caption{\looseness = -1 Comparison of analytical (lines) and simulation results (dots) for the expected teleportation fidelity as a function of cut-off time $t_\text{cut}$ for (a) entanglement-ready and (b) qubit-ready teleportation.
    Each simulation point corresponds to the mean fidelity of batch average values, with error bars denoting the $5$th and $95$th percentiles across batch averages.
    The entanglement generation probabilities $p_\text{m}^0$, and $p_\text{b}$ are set at their baseline ($\underline{p}_\text{m}^0\!=\! 5.95\times 10^{-4}$, $\underline{p}_\text{b}\!=\!1.51\times 10^{-6}$) and optimistic values ($\overline{p}_\text{m}^0\!=\!1.43\times 10^{-2}$, $\overline{p}_\text{b}\!=\!4.18\times 10^{-3}$), as indicated in the legend.
    All remaining parameters are fixed at their respective optimistic values, i.e., $\overline{f}_\text{m}\!=\!0.95$, $\overline{t}_\text{coh}\!=\!4\,\text{s}$, and $\overline{f}_\text{b}\!=\!0.90$.
    The analytical expected teleportation fidelity matches closely with the empirical mean.
    }
    \vspace{-15pt}
    \label{fig:fidelity_analytical_vs_simulation_both_cases}
\end{figure}

\looseness = -1 To simulate average fidelities and rate, we fix the base efficiency $p_\text{m}^0$ and the backbone entanglement-generation probability $p_\text{b}$ at their baseline and optimistic values, i.e., $(5.95\times 10^{-4}, 1.43\times 10^{-2})$ and $(1.51\times 10^{-6}, 4.18\times 10^{-3})$, respectively, while we vary the cut-off time $t_\text{coh}$ between $0.04\,\text{s}$ and $4\,\text{s}$.
Note that the other parameters, namely, the memory coherence time $t_\text{coh}$, metropolitan link fidelity $f_\text{m}$, and backbone link fidelity $f_\text{b}$, are kept at their optimistic values, i.e., $4\,\text{s}$, $0.95$, and $0.9$, respectively.
Similar to Sec.~\ref{sec:results_teleportation_in_metropolitan_network}, we derive empirical estimates of average teleportation fidelity and rate by running simulations for $100$ batches, each batch consisting of $100$ independent runs.
Each batch yields an estimate of the average fidelity and rate, and we subsequently derive the mean and the $5$th and $95$th percentiles of the batch averages.
We compare the analytical values of the expected teleportation fidelity with their empirical estimates in the entanglement-ready (ER) and qubit-ready (QR) cases in Fig.~\ref{fig:fidelity_analytical_vs_simulationEntanglementReady} and Fig.~\ref{fig:fidelity_analytical_vs_simulationQubitReady}, respectively.
Since the rate of teleportation is the same for both cases, we do not make any distinction.
The corresponding comparison is shown in Fig.~\ref{fig:rate_analytical_vs_simulationQubitReady}.
The close agreement between analytical and simulation results in Fig.~\ref{fig:fidelity_analytical_vs_simulation_both_cases} and ~\ref{fig:rate_analytical_vs_simulationQubitReady} demonstrates the accuracy of our analytical results.

\looseness = -1 In the ER case (Fig.~\ref{fig:fidelity_analytical_vs_simulationEntanglementReady}), the expected fidelity decreases monotonically with increasing cut-off time.
This occurs because longer cut-off times allow noisier elementary links to contribute, degrading the overall end-to-end fidelity.
In contrast, the QR case (Fig.~\ref{fig:fidelity_analytical_vs_simulationQubitReady}) exhibits a non-monotonic behaviour with respect to the cut-off time $t_\text{cut}$.
Here, short cut-off times result in frequent link discards, leading to prolonged end-to-end link generation times and causing increased decoherence for the data qubit.
Although this yields high fidelity for the end-to-end link, the overall teleportation fidelity suffers due to data qubit degradation.
For higher cut-off times, the waiting time decreases on average but we allow elementary links with lower fidelity to form the end-to-end link.
This naturally leads to the formation of low fidelity end-to-end link and, subsequently, low teleportation fidelity. 
In contrast, observe from Fig.~\ref{fig:rate_analytical_vs_simulationQubitReady} that the teleportation rate, which is the inverse of the expected end-to-end link generation time, always increases with $t_\text{cut}$, as expected.

\begin{figure}[t]
    \centering
    \includegraphics[width=0.5\linewidth]{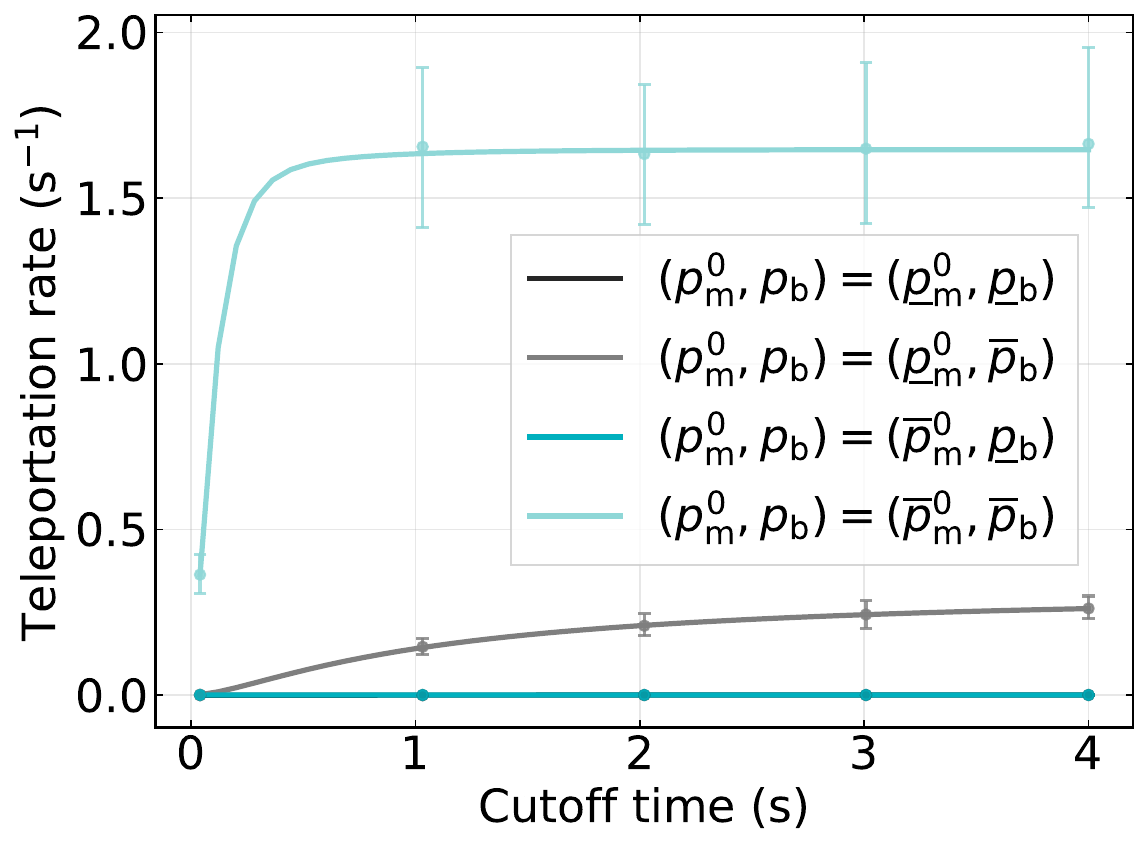}
    \caption{Comparison of analytical (lines) and simulation results (dots) for the teleportation rate as a function of cut-off time $t_\text{cut}$.
    Each simulation point represents the mean of rates computed across batches, and error bars indicate the $5$th and $95$th percentiles of batch averages.
    The entanglement generation probabilities $p_\text{m}^0$, and $p_\text{b}$ are set at their baseline ($\underline{p}_\text{m}^0\!=\! 5.95\times 10^{-4}$, $\underline{p}_\text{b}\!=\!1.51\times 10^{-6}$) and optimistic values ($\overline{p}_\text{m}^0\!=\!1.43\times 10^{-2}$, $\overline{p}_\text{b}\!=\!4.18\times 10^{-3}$), as indicated in the legend.
    The coherence time $t_\text{coh}$ is fixed at its optimistic value $4\,\text{s}$.
    The strong agreement between simulation results and analytical predictions confirms the validity of the analytical model.
    }
    \vspace{-15pt}
    \label{fig:rate_analytical_vs_simulationQubitReady}
\end{figure}

Recall that in~\ref{Q:2a}, we aim to find the minimal metropolitan hardware requirements needed to reach $f_\text{target}$ while the backbone parameters are fixed at their optimistic values: $\overline{p}_\text{b}\!=\!4.18\times 10^{-3}, \overline{f}_\text{b} 
\!=\!0.90$; see Tab.~\ref{tab:no_imperfection_probability_baseline_and_optimistic}.
In Fig.~\ref{fig:fixed_backbone_3D_EntanglementReady},
we identify the surface $\tilde{\Lambda}_{2+}^\text{ER}$  given by ~\eqref{eq:min_fm_definition_metro_int_ER} representing the minimum required metropolitan link fidelity $f_\text{m}$ among the metropolitan hardware parameter space that achieves the target teleportation fidelity in the ER case.
The region above this surface corresponds to the desired parameter space $\Lambda_{2+}^\text{ER}$ defined in~\eqref{eq:desired_space_metro_int_ER}. 
The colour of the surface $\tilde{\Lambda}_{2+}^\text{ER}$ in Fig.~\ref{fig:fixed_backbone_3D_EntanglementReady} represents the maximum achievable teleportation rate $R_2^\text{ER}$ defined in~\eqref{eq:rate_min_fm_definition_metro_int_ER}.
We observe that for a fixed coherence time $t_\text{coh}$, the required link fidelity $f_\text{m}$ does not change much with varying base efficiency $p_\text{m}^0$.
Moreover, for $t_\text{coh}\!\leq\!0.02\,\text{s}$, the required $f_\text{m}$ increases sharply and eventually renders the target fidelity unachievable.
Moreover, since we vary the cut-off time $t_\text{cut}$ in the range given by~\eqref{eq:t_cut_range}, the rate is implicitly influenced by $t_\text{coh}$ via $t_\text{cut}$.
As a result, for a fixed $p_\text{m}^0$, increasing $t_\text{coh}$ allows for longer cut-off times, leading to higher rates.
Also, as expected, the rate increases monotonically with increasing $p_\text{m}^0$.
This is more prominent for higher values of $t_\text{coh}$ and $p_\text{m}^0$ in the plot.
A crucial observation from Fig.~\ref{fig:fixed_backbone_3D_EntanglementReady} is that the baseline value of the parameters B: \!(${5.95\times 10^{-4}, 62\,\text{ms}, 0.88}$) (shown in white) lies above the surface $\tilde{\Lambda}_{2+}^\text{ER}$, indicating that we can already achieve ER teleportation in the IN under state-of-the-art metropolitan hardware and optimistic backbone parameter estimates.
The corresponding maximum rate achievable under this configuration is $4.00\times 10^{-4}\,\text{s}^{-1}$, associated with an expected fidelity $2/3$.

We plot the analogous surface $\tilde{\Lambda}_{2+}^\text{QR}$ in the QR scenario in Fig.~\ref{fig:fixed_backbone_3D_QubitReady}. Here, the baseline point B (shown in white) falls below the surface $\tilde{\Lambda}_{2+}^\text{QR}$, indicating that the current hardware cannot achieve the fidelity threshold $f_\text{target}$.
Thus, we perform numerical optimisation over the parameter space to identify an optimal point in the set $\Lambda_{2*}^\text{QR}$ defined in~\eqref{eq:optimal_points_set_metro_int_QR}.
Similar to Sec.~\ref{sec:results_teleportation_in_metropolitan_network}, we run the optimiser $50$ times and select the solution with the lowest hardware cost among all runs.
The resulting optimal point O: \!(${1.40\times 10^{-2}, 1095\,\text{ms}, 0.94}$) is shown in yellow in Fig.~\ref{fig:fixed_backbone_3D_QubitReady}, which yields the target fidelity at a rate of $1.32\,\text{s}^{-1}$.

\begin{figure}[t]
    \centering
    \begin{subfigure}[b]{0.49\textwidth}
        \includegraphics[width=\textwidth, trim=110pt 15pt 70pt 40pt, clip]{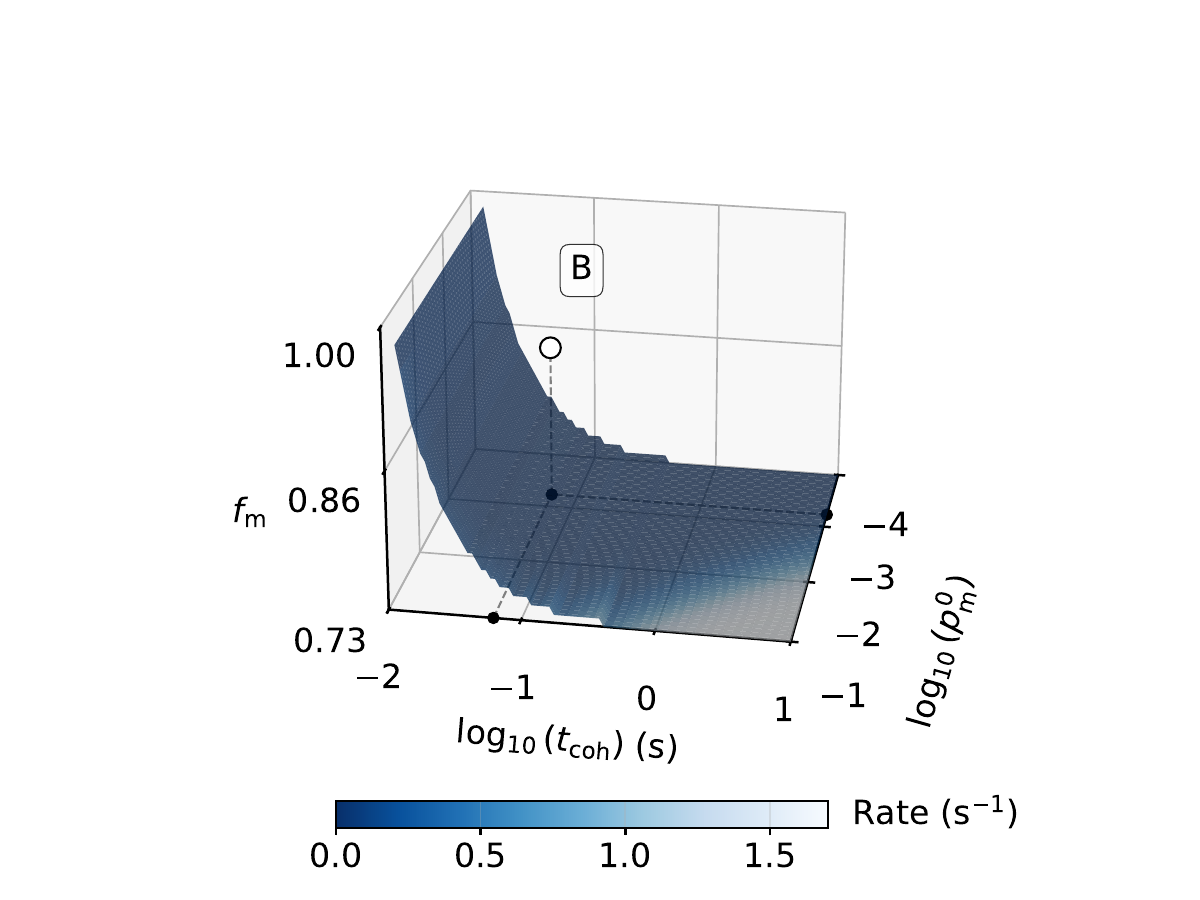}
        \caption{Entanglement-ready}
        \label{fig:fixed_backbone_3D_EntanglementReady}
    \end{subfigure}
    \hfill
    \begin{subfigure}[b]{0.49\textwidth}
        \includegraphics[width=\textwidth, trim=110pt 15pt 70pt 40pt, clip]{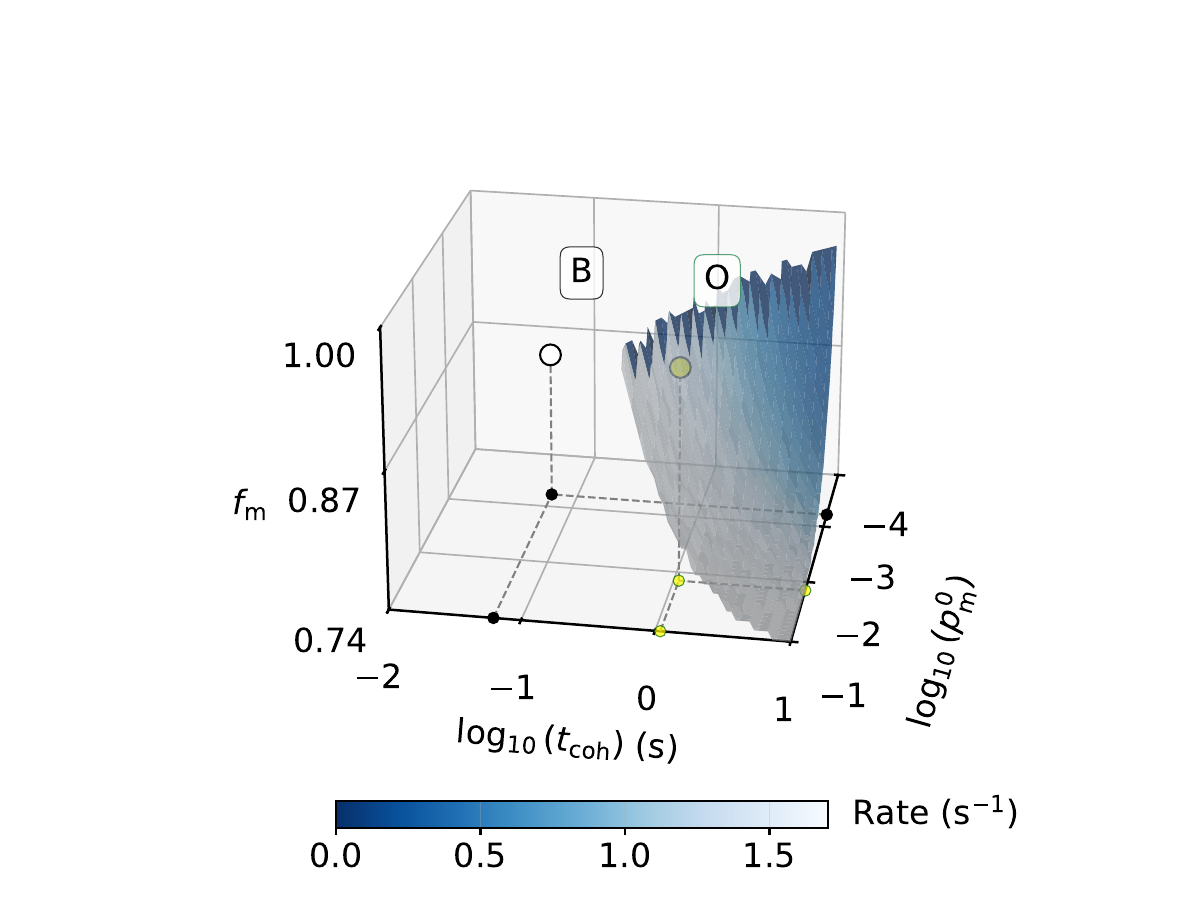}
        \caption{Qubit-ready}
        \label{fig:fixed_backbone_3D_QubitReady}
    \end{subfigure}
    \caption{Requirements to achieve the target teleportation fidelity of $2/3$ in an IN with the backbone parameters fixed at optimistic values: $\overline{p}_\text{b} \!=\! 4.18\times 10^{-3}, \overline{f}_\text{b} \!=\! 0.9$.
    The surface represents the minimum required link fidelity $f_\text{m}$ as a function of the base efficiency $p_\text{m}^0$ and the memory coherence time $t_\text{coh}$ for (a) entanglement-ready and (b) qubit-ready teleportation.
    The colour of the surface encodes the maximum achievable teleportation rate, subject to meeting the target teleportation fidelity.
    The baseline (denoted B) in (a) already achieves the target teleportation fidelity, while an optimal (denoted O) parameter configuration subject to minimising the hardware cost $c$ is shown in (b).
    The $(p_\text{m}^0, t_\text{coh}, f_\text{m})$ coordinates of the points are B: \!${(5.95\times 10^{-4}, 62\,\text{ms}, 0.88)}$, and O: \!${(1.40\times 10^{-2}, 1095\,\text{ms}, 0.94)}$. 
    }
    \vspace{-15pt}
    \label{fig:both_images}
\end{figure}

We now address~\ref{Q:2b}, where we fix the metropolitan parameters at their respective optimistic values $\overline{p}_\text{m}^0\!=\!1.43\times 10^{-2},\overline{t}_\text{coh}\!=\!4\,\text{s},\overline{f}_\text{m}\!=\!0.95$ (see Tab.~\ref{tab:no_imperfection_probability_baseline_and_optimistic}) and identify the backbone requirements necessary to achieve target teleportation fidelity.
Since the relevant hardware parameter space for the backbone is two-dimensional, comprising the entanglement-generation probability $p_\text{b}$ and link fidelity $f_\text{b}$, we plot the desired parameter space $\Lambda_{3+}^\text{ER}$ (resp. $\Lambda_{3+}^\text{QR}$) described in~\eqref{eq:desired_space_backbone_int_ER} (resp.~\eqref{eq:desired_space_backbone_int_QR}) that achieves the target expected teleportation fidelity in the ER (resp. QR) case.
The colour of the space $\Lambda_{3+}^\text{ER}$ in Fig.~\ref{fig:fixed_metro_EntanglementReady_mono_blue} represents the corresponding maximum achievable teleportation rate $R_{3}^\text{ER}$, defined in~\eqref{eq:rate_backbone_fixed_metro_ER}.
The contour lines indicate the parameter combinations that yield equal rates in units of $\text{s}^{-1}$.
For a fixed $p_\text{b}$, lower values of $f_\text{b}$ require shorter cut-off times to meet the target expected fidelity, which in turn reduces the teleportation rate.
As $f_\text{b}$ increases, the configuration achieves the target teleportation fidelity with a much higher cut-off, leading to a higher rate.
However, beyond a certain threshold, further increasing $f_\text{b}$ offers no additional benefit, as the rate becomes limited by $p_\text{b}$ within the allowed range of cut-off time given in~\eqref{eq:t_cut_range}.
This effect is reflected in the shape of the contour lines: they are curved at low $f_\text{b}$ and tend to become vertical as $f_\text{b}$ increases.
Furthermore, we observe that the baseline value of the backbone parameters ($p_\text{b},\,f_\text{b}$), denoted by B: \!($1.51\times 10^{-6},0.60$) lies within the $\Lambda_{3+}^\text{ER}$, indicating that this configuration achieves the target fidelity of teleportation.
The baseline yields a maximum teleportation rate $6.16\times10^{-4}~\text{s}^{-1}$ with an expected teleportation fidelity of $f_\text{target}$.

\begin{figure}[t]
    \centering
    \begin{subfigure}[b]{0.48\textwidth}
        \includegraphics[width=\textwidth]{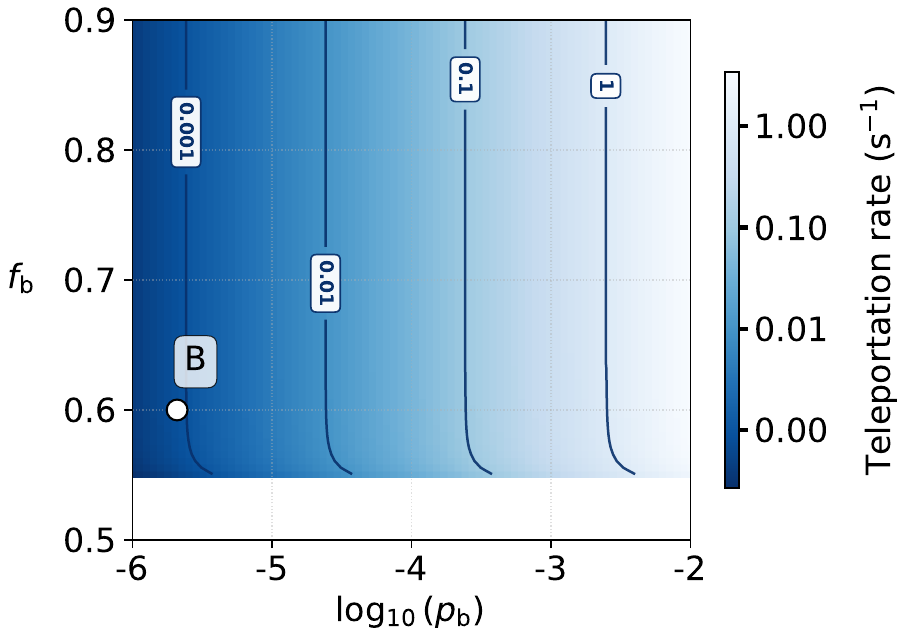}
        \caption{Entanglement-ready}
        \label{fig:fixed_metro_EntanglementReady_mono_blue}
    \end{subfigure}
    \hfill
    \begin{subfigure}[b]{0.48\textwidth}
        \includegraphics[width=\textwidth]{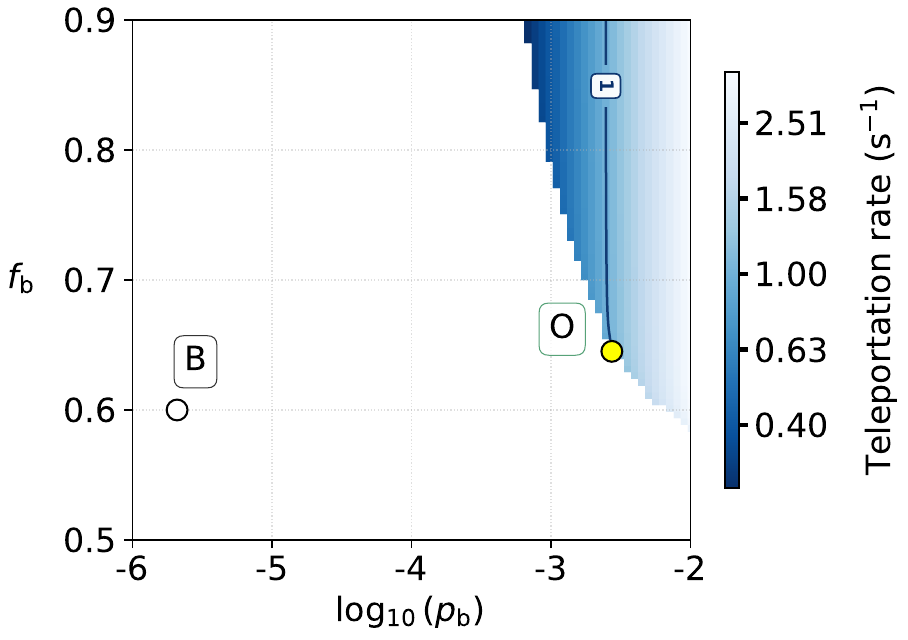}
        \caption{Qubit-ready}
        \label{fig:fixed_metro_QubitReady_mono_blue}
    \end{subfigure}
    \caption{Requirements to achieve the target teleportation fidelity of $2/3$ in an IN where the metropolitan parameters are fixed at their respective optimistic values ${\overline{p}_\text{m}^0\!=\!1.43\times 10^{-2}}$, ${\overline{t}_\text{coh}\!=\!4\,\text{s}}$, ${\overline{f}_\text{m}\!=\!0.95}$.
    The coloured region represents the desired parameter space $\Lambda_{3+}^\text{ER}$ (a) and $\Lambda_{3+}^\text{QR}$ (b) in the entanglement-ready and qubit-ready case, respectively. 
    The colour encodes the maximum achievable teleportation rate $R_{3}^\text{ER}$ (a) and $R_{3}^\text{QR}$ (b) subject to meeting the fidelity threshold.
    The baseline (denoted B) in (a) already achieves the target teleportation fidelity, while the optimal (denoted O) parameter configuration subject to hardware cost $c$ is shown in (b).
    The $(p_\text{b}, f_\text{b})$ coordinates of the points are B: \!${(1.51\times 10^{-6},0.60)}$ and O: \!${(2.73\times 10^{-3}, 0.64)}$.
    The contour lines indicate parameter combinations that yield equal rates.
    }
    \label{fig:fixed_metro_both_images}
    \vspace{-15pt}
\end{figure}

\looseness = -1 For the QR case, we plot the desired space $\Lambda_{3+}^\text{QR}$ defined in~\eqref{eq:desired_space_backbone_int_QR} in Fig.~\ref{fig:fixed_metro_QubitReady_mono_blue}.
The colour of the space represents the maximum achievable rate $R_{3}^\text{QR}$ defined in~\eqref{eq:rate_backbone_fixed_metro_QR}.
In contrast to the ER case, the baseline parameter value falls outside the desired region, indicating the need for improved hardware.
Thus, we numerically optimise over the backbone parameter space to identify an optimal point in the set of minimal hardware requirements $\Lambda_{3*}^\text{QR}$ enabling teleportation with the target fidelity as defined in~\eqref{eq:optimal_points_set_backbone_int_QR}.
Using a similar optimisation procedure as before, our optimiser produces the optimal point O: \!($2.73\times 10^{-3}, 0.64$) shown in yellow in Fig.~\ref{fig:fixed_metro_QubitReady_mono_blue}.
The optimal point achieves an expected teleportation fidelity of $f_\text{target}$ with a maximum achievable rate of $0.92\,\text{s}^{-1}$.

Finally, in~\ref{Q:2c}, we aim to determine the minimal hardware improvements over the baseline values of both metropolitan and backbone parameters necessary to reach the target teleportation fidelity.
In this setting, all five hardware parameters, namely the metropolitan base efficiency $p_\text{m}^0$, memory coherence time $t_\text{coh}$, metropolitan link fidelity $f_\text{m}$, backbone entanglement-generation probability $p_\text{b}$, and backbone link fidelity $f_\text{b}$ are allowed to vary, in addition to the non-hardware parameter, cut-off time $t_\text{cut}$.
Using the baseline values for these parameters, the maximum achievable expected teleportation fidelities are $0.61$ in the ER case and $0.50$ in the QR case, both below the target threshold.
This indicates that hardware improvements are necessary in both scenarios.
Due to the high dimensionality of the parameter space, we do not visualise the full desired space and optimisation landscape.
Instead, we numerically identify an optimal parameter configuration in the optimal set $\Lambda_{4*}^\text{ER}$ for the ER case and in $\Lambda_{4*}^\text{QR}$ in the QR case; see~\eqref{eq:optimal_points_set_int_ER} and~\eqref{eq:optimal_points_set_int_QR}, respectively, for the definition of the optimal set.
Similar to Sec.~\ref{sec:results_teleportation_in_metropolitan_network}, we conduct $50$ independent optimisation runs and select the output corresponding to the minimum hardware cost.
For the ER case, the ${(p_\text{m}^0, t_\text{coh}, f_\text{m}, p_\text{b}, f_\text{b})}$ coordinates of the resulting optimal point is ${(6.45\times10^{-4}, 64\,\text{ms}, 0.89, 1.57\times 10^{-6}, 0.67)}$, which achieves the expected teleportation fidelity $2/3$ with a corresponding rate $2.58\times 10^{-9}\,\text{s}^{-1}$.
Note that the backbone link fidelity $f_\text{b}$ stands out as the dominant bottleneck, while improvements in the other parameters remain modest.
In the QR case, the coordinates of the corresponding optimal point are given by ${(1.41\times10^{-2}, 1128\,\text{ms}, 0.95, 4.16\times 10^{-3}, 0.87)}$, which achieve the target expected fidelity $2/3$ with corresponding rate $1.24\,\text{s}^{-1}$.
As expected, the improvements are significantly more demanding than in the ER case, with the coherence time $t_\text{coh}$ requiring the most substantial enhancement.

\begin{figure}
    \centering
    \includegraphics[width=0.7\linewidth]{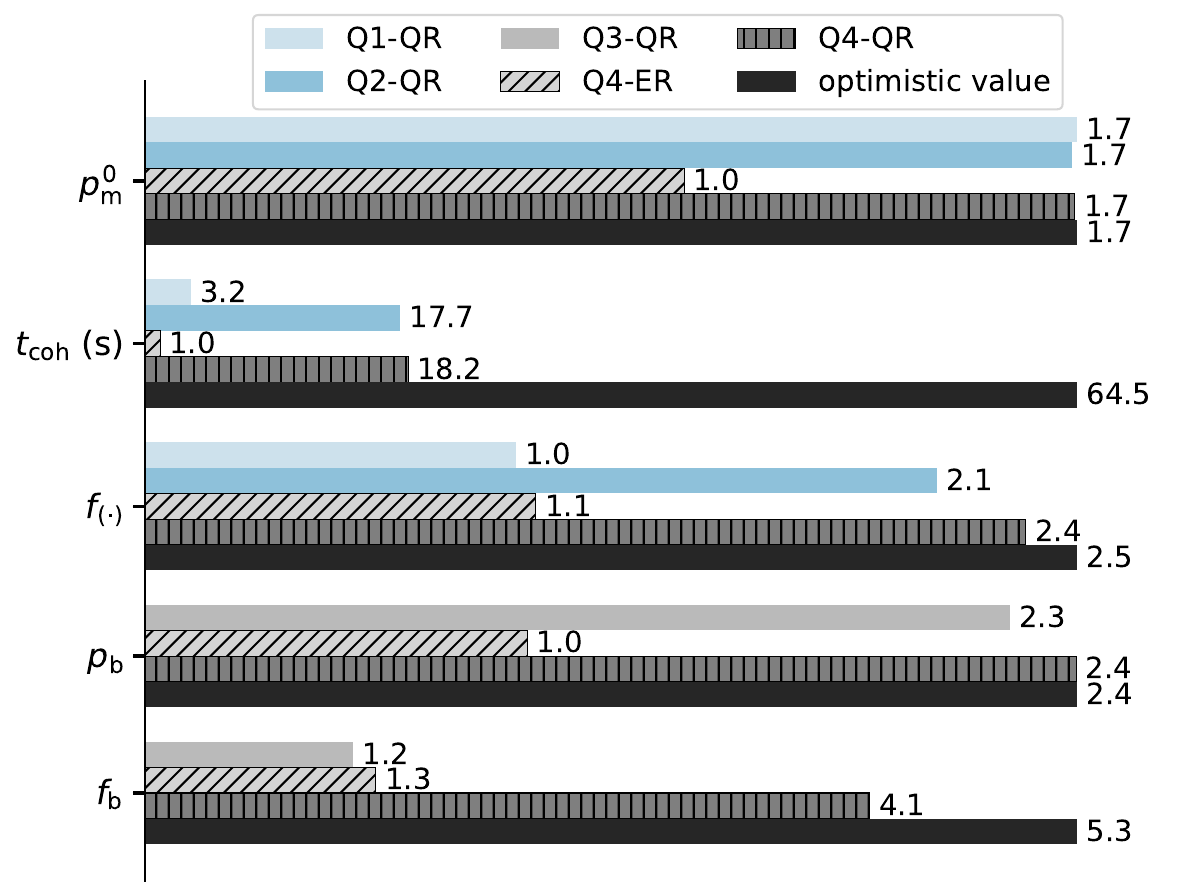}
    \caption{Required improvements for individual hardware parameters for enabling teleportation with target fidelity for~\ref{Q:1}--\ref{Q:2c}.
    Improvement factors quantify the extent to which a given parameter must be enhanced relative to its baseline value to meet the fidelity threshold.
    A factor of $1$ indicates no required improvement, while larger values reflect increasingly stringent requirements.
    The length of each bar represents the improvement factor corresponding to enhancing a parameter, and is normalised relative to the corresponding optimistic values shown in black.
    Annotations beside the bars indicate the actual improvement factor for the relevant question.
    Here $f_{(\cdot)}$ refers to $f_{\text{m}'}$ for Q1-QR while for all other cases, $f_{(\cdot)}$ corresponds to $f_\text{m}$.
    We represent them together as they share the same baseline and the same optimistic values. 
    Note that not all hardware parameters are relevant for every question considered.
    }
    \vspace{-15pt}
    \label{fig:bar_plot_all_requirements}
\end{figure}

In Fig.~\ref{fig:bar_plot_all_requirements}, we show the improvement factors corresponding to the optimal hardware parameter values, with corresponding legends indicating ER or QR cases of relevant questions.
The length of each bar is normalised relative to the improvement factor associated with the optimistic value of the corresponding parameter, which is shown in solid black.
Annotations next to the bars indicate the actual improvement factors determined by the optimisation procedure.
Recall from~\eqref{eq:improvement_factor} that the baseline value of a parameter corresponds to the improvement factor of $1$.
Along the vertical axis, we represent different hardware parameters.
Here $f_{(\cdot)}$ refers to $f_{\text{m}'}$ for Q1-QR, and $f_\text{m}$ elsewhere.
We represented $f_{\text{m}'}$ and $f_\text{m}$ together, as they share the same baseline and the same optimistic values.
Note that across all instances of QR teleportation, the optimal value of the metropolitan base efficiency $p_\text{m}^0$ reaches its optimistic value due to the fact that the room for improvement for $p_\text{m}^0$ in terms of IF is low compared to other parameters.
In Q1-QR, for instance, metropolitan link fidelity $f_{\text{m}'}$ remains near its baseline value, i.e. corresponding to an improvement factor of $1$, 
indicating that enhancing $p_\text{m}^0$ and $t_\text{coh}$ is relatively more beneficial from a cost-performance perspective.
In contrast, for more demanding scenarios such as Q2-QR and Q4-QR, $f_\text{m}$ is increased close to its optimistic value while $t_\text{coh}$ is moderately improved, suggesting its effectiveness over $t_\text{coh}$ from the perspective of cost-performance trade-off under these scenarios.
Regarding the backbone parameters, optimisation in Q3-QR returns a substantial improvement in its entanglement-generation probability $p_\text{b}$, compared to the corresponding link fidelity $f_\text{b}$, reflecting the advantage of the former in terms of cost-performance trade-off.
Further, in Q4-ER, all parameters require only modest improvements from their baselines, except for $f_\text{b}$, which stands out as the key bottleneck.
In contrast, Q4-QR demands near-optimistic improvements across all parameters, except for $t_\text{coh}$ and $f_\text{b}$, with moderate improvement in $f_\text{b}$, indicating that enhancing the remaining parameters is more efficient under the considered cost function.
\vspace{-8pt}

\section{Conclusion and Future Work}
\label{sec:conclusion}
\vspace{-7pt}
\everypar{\looseness=-1}
In this work, we present an analysis to identify the hardware requirements for enabling quantum teleportation over metropolitan and intercity-scale distances.
To this end, we derive closed-form analytical expressions for the expected fidelities and rates of end-to-end entanglement and teleported qubit as functions of relevant hardware parameters.
Throughout, we adopt a simple yet realistic noise model that accounts for memory decoherence and a finite cut-off time for entanglement generation rounds.
Determining the minimal hardware requirements is formulated as an optimisation problem over the parameters, with the analytical expressions enabling efficient exploration of this parameter space without resorting to computationally intensive simulations.
Using this approach, we obtain the minimal parameter values required to achieve the target teleportation fidelity of $2/3$.
We perform the analysis in a hardware-agnostic manner, in which network performance is fully characterised by parameters describing elementary link generation probabilities, entanglement fidelities, and memory coherence time. 

While the analysis remains hardware-agnostic, we also provide a concrete case study based on realistic parameter values from trapped-ion and ensemble-based memories. 
Within this framework, we address four central questions: (i) the feasibility of teleportation over metropolitan-scale distances with present-day hardware; (ii) the parameter improvements required within the metropolitan network with fixed backbone performance; 
(iii) the parameter improvements required within the backbone with fixed metropolitan network performance; and (iv) the minimal joint improvements necessary when both the metropolitan network and backbone operate near state-of-the-art experimental performance to enable teleportation over intercity-scale distances.

Our results indicate that near-term advances in trapped-ion and ensemble-based platforms, under optimistic parameter regimes, could enable quantum teleportation across metropolitan and intercity distances.
Achieving these optimistic parameters would require improvements in hardware performance, which is feasible according to the current projections of hardware capabilities in the near term.
Collectively, these findings establish clear design principles for making experimental progress: the elementary link generation probabilities in both metropolitan and backbone networks emerge as the most practical parameters to optimise, as the heuristic consistently favoured their improvement and drove them to optimistic values when necessary.
This behaviour arises because the range of feasible improvements, from baseline to optimistic values, is smaller for the link generation probability parameters compared to others.
Furthermore, in scenarios (ii) and (iii), entanglement-ready teleportation was feasible provided that at least one of the networks, either metropolitan or backbone, is operating at the optimistic parameter regime.
However, it is important to emphasise that these optimistic estimates are obtained for individual parameters and do not always account for potential interdependencies among them.
For instance, repeated photon generation attempts in trapped-ion systems can induce additional decoherence, thereby reducing the effective coherence time.
Consequently, achieving all optimistic parameters simultaneously within a single device or experiment poses a significant challenge, and such analysis is left as future work.
Also, building on this work, one can further investigate the performance of longer repeater chains, identify hardware requirements for more involved applications such as blind quantum computation, and determine the conditions for simultaneously meeting target thresholds for teleportation rate and fidelity by appropriately modifying the cost function.
\vspace{5pt}
\section*{Data and Code Availability}
\label{sec:data_availabililty}
\vspace{-5pt}
All source code used for data generation, processing, and plots is available at~\cite{git_repo_teleportation}. 
The datasets supporting the findings of this study are openly available at \url{https://doi.org/10.4121/6b7d42d3-ff78-4d64-bc9a-449480cc8550}.

\vspace{-7pt}
\section*{Acknowledgement}
\label{sec:acknowledgement}
\vspace{-5pt}
SM, GA, and SW acknowledge funding from the Quantum Internet Alliance (QIA).
QIA received funding from the European Union’s Horizon Europe research and innovation programme under grant agreement No.~101102140 and project name `QIA-Phase 1’.
SW also acknowledge funding from NWO VICI.
GA acknowledges funding by NSF CQN Grant No. 1941583, NSF Grant No. 2346089, and NSF Grant No. 2402861.
We thank Anders S S{\o}rensen, Benedikt Tissot, Hugues de Riedmatten, Tracy E Northup, and Victor Krutyanskiy for fruitful discussions.
HR, TN, and VK provided parameter values in the state-of-the-art and optimistic regime.
We also thank Francisco Ferreira da Silva for critical feedback on the content of this manuscript and Michal van Hooft for advice and assistance with NetSquid.
\vspace{-7pt}
\section*{Author Contribution}
\label{sec:contribution}
\vspace{-5pt}
SM led the development and execution of the project.
SM and SK worked on the analytics and preparation of the manuscript.
SM and GA worked on the simulations using NetSquid.
SW conceived and supervised the project.

\bibliographystyle{IEEEtran}
\section*{References}
\bibliography{references}

@article{ekert1991quantum,
  title = {Quantum cryptography based on Bell's theorem},
  author = {Ekert, Artur K.},
  journal = {Phys. Rev. Lett.},
  volume = {67},
  issue = {6},
  pages = {661--663},
  numpages = {0},
  year = {1991},
  month = {Aug},
  publisher = {American Physical Society},
  doi = {10.1103/PhysRevLett.67.661},
}

@article{arrighi2006blind,
  title={Blind quantum computation},
  author={Arrighi, Pablo and Salvail, Louis},
  journal={International Journal of Quantum Information},
  volume={4},
  number={05},
  pages={883--898},
  year={2006},
  publisher={World Scientific}
}

@inproceedings{broadbent2009universal,
  title={Universal blind quantum computation},
  author={Broadbent, Anne and Fitzsimons, Joseph and Kashefi, Elham},
  booktitle={2009 50th annual IEEE symposium on foundations of computer science},
  pages={517--526},
  year={2009},
  organization={IEEE}
}

@article{giovannetti2004quantum,
  title={Quantum-enhanced measurements: beating the standard quantum limit},
  author={Giovannetti, Vittorio and Lloyd, Seth and Maccone, Lorenzo},
  journal={Science},
  volume={306},
  number={5700},
  pages={1330--1336},
  year={2004},
  publisher={American Association for the Advancement of Science}
}

@article{jozsa2000quantum,
  title={Quantum clock synchronization based on shared prior entanglement},
  author={Jozsa, Richard and Abrams, Daniel S and Dowling, Jonathan P and Williams, Colin P},
  journal={Physical Review Letters},
  volume={85},
  number={9},
  pages={2010},
  year={2000},
  publisher={APS}
}

@article{mylavarapu2024teleportation,
  title={Teleportation fidelity of quantum repeater networks},
  author={Mylavarapu, Ganesh and Ghosh, Subrata and Hens, Chittaranjan and Chakrabarty, Indranil and Mitra, Subhadip},
  journal={Physical Review A},
  volume={112},
  number={3},
  pages={032618},
  year={2025},
  publisher={APS}
}

@inproceedings{chia2012quantum,
  title={Quantum blind computation with teleportation-based computation},
  author={Chia, Nai-Hui and Chien, Chia-Hung and Chung, Wei-Ho and Kuo, Sy-Yen},
  booktitle={2012 Ninth International Conference on Information Technology-New Generations},
  pages={769--774},
  year={2012},
  organization={IEEE}
}

@article{da2024requirements,
  title={Requirements for upgrading trusted nodes to a repeater chain over 900 km of optical fiber},
  author={da Silva, Francisco F. and Avis, Guus and Slater, Joshua A and Wehner, Stephanie},
  journal={Quantum Science and Technology},
  volume={9},
  number={4},
  pages={045041},
  year={2024},
  publisher={IOP Publishing}
}

@article{rosenband2007observation,
  title={Observation of the S 0 1→ P 0 3 clock transition in Al+ 27},
  author={Rosenband, Till and Schmidt, Piet O and Hume, David B and Itano, Wayne M and Fortier, Tara M and Stalnaker, Jason E and Kim, K and Diddams, Scott A and Koelemeij, JCJ and Bergquist, JC and others},
  journal={Physical review letters},
  volume={98},
  number={22},
  pages={220801},
  year={2007},
  publisher={APS}
}

@article{home2009memory,
  title={Memory coherence of a sympathetically cooled trapped-ion qubit},
  author={Home, Jonathan P and McDonnell, MJ and Szwer, DJ and Keitch, BC and Lucas, DM and Stacey, DN and Steane, AM},
  journal={Physical Review A—Atomic, Molecular, and Optical Physics},
  volume={79},
  number={5},
  pages={050305},
  year={2009},
  publisher={APS}
}

@article{lago2021telecom,
  title={Telecom-heralded entanglement between multimode solid-state quantum memories},
  author={Lago-Rivera, Dario and Grandi, Samuele and Rakonjac, Jelena V and Seri, Alessandro and de Riedmatten, Hugues},
  journal={Nature},
  volume={594},
  number={7861},
  pages={37--40},
  year={2021},
  publisher={Nature Publishing Group UK London}
}

@article{liu2021heralded,
  title={Heralded entanglement distribution between two absorptive quantum memories},
  author={Liu, Xiao and Hu, Jun and Li, Zong-Feng and Li, Xue and Li, Pei-Yun and Liang, Peng-Jun and Zhou, Zong-Quan and Li, Chuan-Feng and Guo, Guang-Can},
  journal={Nature},
  volume={594},
  number={7861},
  pages={41--45},
  year={2021},
  publisher={Nature Publishing Group UK London}
}

@article{ruskuc2025multiplexed,
  title={Multiplexed entanglement of multi-emitter quantum network nodes},
  author={Ruskuc, A and Wu, C-J and Green, E and Hermans, SLN and Pajak, W and Choi, J and Faraon, A},
  journal={Nature},
  pages={1--6},
  year={2025},
  publisher={Nature Publishing Group UK London}
}

@article{wehner2018quantum,
  title={Quantum internet: A vision for the road ahead},
  author={Wehner, Stephanie and Elkouss, David and Hanson, Ronald},
  journal={Science},
  volume={362},
  number={6412},
  pages={eaam9288},
  year={2018},
  publisher={American Association for the Advancement of Science}
}

@article{wu2020near,
  title={Near-term performance of quantum repeaters with imperfect ensemble-based quantum memories},
  author={Wu, Yufeng and Liu, Jianlong and Simon, Christoph},
  journal={Physical Review A},
  volume={101},
  number={4},
  pages={042301},
  year={2020},
  publisher={APS}
}

@inproceedings{drmota2022long,
  title={Long-lived entanglement memory in a trapped-ion quantum network node},
  author={Drmota, Peter and Nadlinger, David and Nichol, Bethan and Main, Dougal and Ainley, Ellis and Araneda, Gabriel and Srinivas, Raghavendra and Ballance, Chris and Lucas, David},
  booktitle={APS Division of Atomic, Molecular and Optical Physics Meeting Abstracts},
  volume={2022},
  pages={Q03--004},
  year={2022}
}

@article{drmota2023robust,
  title={Robust quantum memory in a trapped-ion quantum network node},
  author={Drmota, Peter and Main, Dougal and Nadlinger, DP and Nichol, BC and Weber, Marius Alfons and Ainley, EM and Agrawal, Ayush and Srinivas, Raghavendra and Araneda, Gabriel and Ballance, CJ and others},
  journal={Physical Review Letters},
  volume={130},
  number={9},
  pages={090803},
  year={2023},
  publisher={APS}
}

@article{inlek2017multispecies,
  title={Multispecies trapped-ion node for quantum networking},
  author={Inlek, Ismail Volkan and Crocker, Clayton and Lichtman, Martin and Sosnova, Ksenia and Monroe, Christopher},
  journal={Physical review letters},
  volume={118},
  number={25},
  pages={250502},
  year={2017},
  publisher={APS}
}

@article{bennett1993teleporting,
  title={Teleporting an unknown quantum state via dual classical and Einstein-Podolsky-Rosen channels},
  author={Bennett, Charles H and Brassard, Gilles and Cr{\'e}peau, Claude and Jozsa, Richard and Peres, Asher and Wootters, William K},
  journal={Physical review letters},
  volume={70},
  number={13},
  pages={1895},
  year={1993},
  publisher={APS}
}

@article{bennett1996mixed,
  title={Mixed-state entanglement and quantum error correction},
  author={Bennett, Charles H and DiVincenzo, David P and Smolin, John A and Wootters, William K},
  journal={Physical Review A},
  volume={54},
  number={5},
  pages={3824},
  year={1996},
  publisher={APS}
}

@article{brand2020efficient,
  title={Efficient computation of the waiting time and fidelity in quantum repeater chains},
  author={Brand, Sebastiaan and Coopmans, Tim and Elkouss, David},
  journal={IEEE Journal on Selected Areas in Communications},
  volume={38},
  number={3},
  pages={619--639},
  year={2020},
  publisher={IEEE}
}

@article{jiang2007fast,
  title={Fast and robust approach to long-distance quantum communication with atomic ensembles},
  author={Jiang, L and Taylor, JM and Lukin, MD},
  journal={Physical Review A—Atomic, Molecular, and Optical Physics},
  volume={76},
  number={1},
  pages={012301},
  year={2007},
  publisher={APS}
}

@article{kamin2023exact,
  title={Exact rate analysis for quantum repeaters with imperfect memories and entanglement swapping as soon as possible},
  author={Kamin, Lars and Shchukin, Evgeny and Schmidt, Frank and van Loock, Peter},
  journal={Physical Review Research},
  volume={5},
  number={2},
  pages={023086},
  year={2023},
  publisher={APS}
}

@article{goodenough2025noise,
  title={On noise in swap ASAP repeater chains: exact analytics, distributions and tight approximations},
  author={Goodenough, Kenneth and Coopmans, Tim and Towsley, Don},
  journal={Quantum},
  volume={9},
  pages={1744},
  year={2025},
  publisher={Verein zur F{\"o}rderung des Open Access Publizierens in den Quantenwissenschaften}
}

@article{rozpkedek2019near,
  title={Near-term quantum-repeater experiments with nitrogen-vacancy centers: Overcoming the limitations of direct transmission},
  author={Rozp{\k{e}}dek, Filip and Yehia, Raja and Goodenough, Kenneth and Ruf, Maximilian and Humphreys, Peter C and Hanson, Ronald and Wehner, Stephanie and Elkouss, David},
  journal={Physical Review A},
  volume={99},
  number={5},
  pages={052330},
  year={2019},
  publisher={APS}
}

@article{rozpkedek2018parameter,
  title={Parameter regimes for a single sequential quantum repeater},
  author={Rozp{\k{e}}dek, Filip and Goodenough, Kenneth and Ribeiro, Jeremy and Kalb, Norbert and Vivoli, V Caprara and Reiserer, Andreas and Hanson, Ronald and Wehner, Stephanie and Elkouss, David},
  journal={Quantum Science and Technology},
  volume={3},
  number={3},
  pages={034002},
  year={2018},
  publisher={IOP Publishing}
}

@article{collins2007multiplexed,
  title={Multiplexed memory-insensitive quantum repeaters},
  author={Collins, OA and Jenkins, SD and Kuzmich, A and Kennedy, TAB},
  journal={Physical review letters},
  volume={98},
  number={6},
  pages={060502},
  year={2007},
  publisher={APS}
}

@article{khatri2019practical,
  title={Practical figures of merit and thresholds for entanglement distribution in quantum networks},
  author={Khatri, Sumeet and Matyas, Corey T and Siddiqui, Aliza U and Dowling, Jonathan P},
  journal={Physical Review Research},
  volume={1},
  number={2},
  pages={023032},
  year={2019},
  publisher={APS}
}

@article{li2021efficient,
  title={Efficient optimization of cutoffs in quantum repeater chains},
  author={Li, Boxi and Coopmans, Tim and Elkouss, David},
  journal={IEEE Transactions on Quantum Engineering},
  volume={2},
  pages={1--15},
  year={2021},
  publisher={IEEE}
}

@article{shor2000simple,
  title={Simple proof of security of the BB84 quantum key distribution protocol},
  author={Shor, Peter W and Preskill, John},
  journal={Physical review letters},
  volume={85},
  number={2},
  pages={441},
  year={2000},
  publisher={APS}
}

@INPROCEEDINGS{chandra2022scheduling,
  author={Chandra, Aparimit and Dai, Wenhan and Towsley, Don},
  booktitle={2022 IEEE International Conference on Quantum Computing and Engineering (QCE)}, 
  title={Scheduling Quantum Teleportation with Noisy Memories}, 
  year={2022},
  volume={},
  number={},
  pages={437-446},
  doi={10.1109/QCE53715.2022.00065}}

@article{pascoletti1984scalarizing,
  title={Scalarizing vector optimization problems},
  author={Pascoletti, Adriano and Serafini, Paolo},
  journal={Journal of Optimization Theory and Applications},
  volume={42},
  number={4},
  pages={499--524},
  year={1984},
  publisher={Springer}
}

@techreport{schaffer1985some,
  title={Some experiments in machine learning using vector evaluated genetic algorithms},
  author={Schaffer, James David},
  year={1985},
  institution={Vanderbilt Univ., Nashville, TN (USA)}
}

@article{zwerger2017quantum,
  title={Quantum repeaters based on trapped ions with decoherence-free subspace encoding},
  author={Zwerger, Michael and Lanyon, Ben P and Northup, Tracy E and Muschik, Christine A and D{\"u}r, Wolfgang and Sangouard, Nicolas},
  journal={Quantum Science and Technology},
  volume={2},
  number={4},
  pages={044001},
  year={2017},
  publisher={IOP Publishing}
}

@article{haffner2005robust,
  title={Robust entanglement},
  author={H{\"a}ffner, H and Schmidt-Kaler, F and H{\"a}nsel, W and Roos, CF and K{\"o}rber, T and Chwalla, M and Riebe, M and Benhelm, J and Rapol, UD and Becher, C and others},
  journal={Applied Physics B},
  volume={81},
  number={2},
  pages={151--153},
  year={2005},
  publisher={Springer}
}

@article{higgins2017single,
  title={Single strontium Rydberg ion confined in a Paul trap},
  author={Higgins, Gerard and Li, Weibin and Pokorny, Fabian and Zhang, Chi and Kress, Florian and Maier, Christine and Haag, Johannes and Bodart, Quentin and Lesanovsky, Igor and Hennrich, Markus},
  journal={Physical Review X},
  volume={7},
  number={2},
  pages={021038},
  year={2017},
  publisher={APS}
}

@inproceedings{ghadimi2019towards,
  title={Towards long-distance quantum communication using trapped ions and frequency qubits},
  author={Ghadimi, M and Connell, S and Scarabel, J and Shimizu, K and Streed, E and Lobino, M},
  booktitle={AOS Australian Conference on Optical Fibre Technology (ACOFT) and Australian Conference on Optics, Lasers, and Spectroscopy (ACOLS) 2019},
  volume={11200},
  pages={69--70},
  year={2019},
  organization={SPIE}
}

@article{zhong2015optically,
  title={Optically addressable nuclear spins in a solid with a six-hour coherence time},
  author={Zhong, Manjin and Hedges, Morgan P and Ahlefeldt, Rose L and Bartholomew, John G and Beavan, Sarah E and Wittig, Sven M and Longdell, Jevon J and Sellars, Matthew J},
  journal={Nature},
  volume={517},
  number={7533},
  pages={177--180},
  year={2015},
  publisher={Nature Publishing Group UK London}
}

@article{herbauts2013demonstration,
  title={Demonstration of active routing of entanglement in a multi-user network},
  author={Herbauts, Isabelle and Blauensteiner, Bibiane and Poppe, Andreas and Jennewein, Thomas and Huebel, Hannes},
  journal={Optics express},
  volume={21},
  number={23},
  pages={29013--29024},
  year={2013},
  publisher={Optical Society of America}
}

@article{ursin2007entanglement,
  title={Entanglement-based quantum communication over 144 km},
  author={Ursin, Rupert and Tiefenbacher, Felix and Schmitt-Manderbach, Tobias and Weier, Henning and Scheidl, Thomas and Lindenthal, Michael and Blauensteiner, Bibiane and Jennewein, Thomas and Perdigues, Josep and Trojek, Pavel and others},
  journal={Nature physics},
  volume={3},
  number={7},
  pages={481--486},
  year={2007},
  publisher={Nature Publishing Group UK London}
}

@article{yin2017satellite,
  title={Satellite-based entanglement distribution over 1200 kilometers},
  author={Yin, Juan and Cao, Yuan and Li, Yu-Huai and Liao, Sheng-Kai and Zhang, Liang and Ren, Ji-Gang and Cai, Wen-Qi and Liu, Wei-Yue and Li, Bo and Dai, Hui and others},
  journal={Science},
  volume={356},
  number={6343},
  pages={1140--1144},
  year={2017},
  publisher={American Association for the Advancement of Science}
}

@article{Bruzewicz2019,
    author = {Bruzewicz, Colin D. and Chiaverini, John and McConnell, Robert and Sage, Jeremy M.},
    title = {Trapped-ion quantum computing: Progress and challenges},
    journal = {Applied Physics Reviews},
    volume = {6},
    number = {2},
    pages = {021314},
    year = {2019},
    month = {05},
    issn = {1931-9401},
    doi = {10.1063/1.5088164},
}

@article{Avis2023DelftEindhoven,
  title={Requirements for a processing-node quantum repeater on a real-world fiber grid},
  author={Avis, Guus and Ferreira da Silva, Francisco and Coopmans, Tim and Dahlberg, Axel and Jirovsk{\'a}, Hana and Maier, David and Rabbie, Julian and Torres-Knoop, Ariana and Wehner, Stephanie},
  journal={npj Quantum Information},
  volume={9},
  number={1},
  pages={100},
  year={2023},
  publisher={Nature Publishing Group},
  doi={10.1038/s41534-023-00765-x},
}

@article{Schupp2021InterfaceBetweenTrapped-Ion,
  title = {Interface between Trapped-Ion Qubits and Traveling Photons with Close-to-Optimal Efficiency},
  author = {Schupp, J. and Krcmarsky, V. and Krutyanskiy, V. and Meraner, M. and Northup, T.E. and Lanyon, B.P.},
  journal = {PRX Quantum},
  volume = {2},
  issue = {2},
  pages = {020331},
  numpages = {16},
  year = {2021},
  month = {Jun},
  publisher = {American Physical Society},
  doi = {10.1103/PRXQuantum.2.020331},
}

@article{Krutyanskiy2019Light,
  title={Light-matter entanglement over 50 km of optical fibre},
  author={Krutyanskiy, Viktor and Meraner, Martin and Schupp, Josef and Krcmarsky, Vojtech and Hainzer, Helene and Lanyon, Ben P},
  journal={npj Quantum Information},
  volume={5},
  number={1},
  pages={72},
  year={2019},
  publisher={Nature Publishing Group UK London}
}

@ARTICLE{Munro2015,
  author={Munro, William J. and Azuma, Koji and Tamaki, Kiyoshi and Nemoto, Kae},
  journal={IEEE Journal of Selected Topics in Quantum Electronics}, 
  title={Inside Quantum Repeaters}, 
  year={2015},
  volume={21},
  number={3},
  pages={78-90},
  doi={10.1109/JSTQE.2015.2392076}}

@article{Sangouard2011,
  title = {Quantum repeaters based on atomic ensembles and linear optics},
  author = {Sangouard, Nicolas and Simon, Christoph and de Riedmatten, Hugues and Gisin, Nicolas},
  journal = {Rev. Mod. Phys.},
  volume = {83},
  issue = {1},
  pages = {33--80},
  numpages = {0},
  year = {2011},
  month = {Mar},
  publisher = {American Physical Society},
  doi = {10.1103/RevModPhys.83.33},
}

@misc{privatecommunication2025,
  author = {Northup, Tracy E. and Lanyon, Benjamin P. and Krutyanskiy, Victor},
  note = {private communications, 2025},
}

@article{borregaard2025testing,
  title={Testing quantum theory on curved spacetime with quantum networks},
  author={Borregaard, Johannes and Pikovski, Igor},
  journal={Physical Review Research},
  volume={7},
  number={2},
  pages={023192},
  year={2025},
  publisher={APS}
}

@article{ruskuc2022nuclear,
  title={Nuclear spin-wave quantum register for a solid-state qubit},
  author={Ruskuc, Andrei and Wu, Chun-Ju and Rochman, Jake and Choi, Joonhee and Faraon, Andrei},
  journal={Nature},
  volume={602},
  number={7897},
  pages={408--413},
  year={2022},
  publisher={Nature Publishing Group UK London}
}

@article{Prielinger2024,
  title={Surrogate-guided optimization in quantum networks},
  author={Prielinger, Luise and I{\~n}esta, {\'A}lvaro G and Vardoyan, Gayane},
  journal={npj Quantum Information},
  volume={11},
  number={1},
  pages={89},
  year={2025},
  publisher={Nature Publishing Group UK London}
}

@article{childress2006fault,
  title={Fault-tolerant quantum communication based on solid-state photon emitters},
  author={Childress, Lilian and Taylor, JM and S{\o}rensen, Anders S{\o}ndberg and Lukin, MD},
  journal={Physical review letters},
  volume={96},
  number={7},
  pages={070504},
  year={2006},
  publisher={APS}
}

@article{yin2012quantum,
  title={Quantum teleportation and entanglement distribution over 100-kilometre free-space channels},
  author={Yin, Juan and Ren, Ji-Gang and Lu, He and Cao, Yuan and Yong, Hai-Lin and Wu, Yu-Ping and Liu, Chang and Liao, Sheng-Kai and Zhou, Fei and Jiang, Yan and others},
  journal={Nature},
  volume={488},
  number={7410},
  pages={185--188},
  year={2012},
  publisher={Nature Publishing Group UK London}
}

@misc{beauchamp2025modularquantumnetworkarchitecture,
      title={A Modular Quantum Network Architecture for Integrating Network Scheduling with Local Program Execution}, 
      author={Thomas R. Beauchamp and Hana Jirovská and Scarlett Gauthier and Stephanie Wehner},
      year={2025},
      eprint={2503.12582},
      archivePrefix={arXiv},
      primaryClass={quant-ph},
      url={https://arxiv.org/abs/2503.12582}, 
}

@article{Ruf2021quantum,
  title={Quantum networks based on color centers in diamond},
  author={Ruf, Maximilian and Wan, Noel H and Choi, Hyeongrak and Englund, Dirk and Hanson, Ronald},
  journal={Journal of Applied Physics},
  volume={130},
  number={7},
  year={2021},
  publisher={AIP Publishing}
}

@article{pompili2021realization,
  title={Realization of a multinode quantum network of remote solid-state qubits},
  author={Pompili, Matteo and Hermans, Sophie LN and Baier, Simon and Beukers, Hans KC and Humphreys, Peter C and Schouten, Raymond N and Vermeulen, Raymond FL and Tiggelman, Marijn J and dos Santos Martins, Laura and Dirkse, Bas and others},
  journal={Science},
  volume={372},
  number={6539},
  pages={259--264},
  year={2021},
  publisher={American Association for the Advancement of Science}
}

@article{maunz2007quantum,
  title={Quantum interference of photon pairs from two remote trapped atomic ions},
  author={Maunz, Peter and Moehring, DL and Olmschenk, Steven and Younge, Kelly Cooper and Matsukevich, DN and Monroe, Christopher},
  journal={Nature Physics},
  volume={3},
  number={8},
  pages={538--541},
  year={2007},
  publisher={Nature Publishing Group UK London}
}

@article{covey2023quantum,
  title={Quantum networks with neutral atom processing nodes},
  author={Covey, Jacob P and Weinfurter, Harald and Bernien, Hannes},
  journal={npj Quantum Information},
  volume={9},
  number={1},
  pages={90},
  year={2023},
  publisher={Nature Publishing Group UK London}
}

@article{uphoff2016integrated,
  title={An integrated quantum repeater at telecom wavelength with single atoms in optical fiber cavities},
  author={Uphoff, Manuel and Brekenfeld, Manuel and Rempe, Gerhard and Ritter, Stephan},
  journal={Applied Physics B},
  volume={122},
  pages={1--15},
  year={2016},
  publisher={Springer}
}

@article{stephenson2020high,
  title={High-rate, high-fidelity entanglement of qubits across an elementary quantum network},
  author={Stephenson, LJ and Nadlinger, DP and Nichol, BC and An, Shuoming and Drmota, P and Ballance, Timothy G and Thirumalai, K and Goodwin, Joseph Francis and Lucas, David M and Ballance, CJ},
  journal={Physical review letters},
  volume={124},
  number={11},
  pages={110501},
  year={2020},
  publisher={APS}
}

@article{bernien2013heralded,
  title={Heralded entanglement between solid-state qubits separated by three metres},
  author={Bernien, Hannes and Hensen, Bas and Pfaff, Wolfgang and Koolstra, Gerwin and Blok, Machiel S and Robledo, Lucio and Taminiau, Tim H and Markham, Matthew and Twitchen, Daniel J and Childress, Lilian and others},
  journal={Nature},
  volume={497},
  number={7447},
  pages={86--90},
  year={2013},
  publisher={Nature Publishing Group UK London}
}

@article{gottesman1999demonstrating,
  title={Demonstrating the viability of universal quantum computation using teleportation and single-qubit operations},
  author={Gottesman, Daniel and Chuang, Isaac L},
  journal={Nature},
  volume={402},
  number={6760},
  pages={390--393},
  year={1999},
  publisher={Nature Publishing Group},
  doi={10.1038/46503}
}

@article{hensen2015loophole,
  title={Loophole-free Bell inequality violation using electron spins separated by 1.3 kilometres},
  author={Hensen, Bas and Bernien, Hannes and Dr{\'e}au, Ana{\"\i}s E and Reiserer, Andreas and Kalb, Norbert and Blok, Machiel S and Ruitenberg, Just and Vermeulen, Raymond FL and Schouten, Raymond N and Abell{\'a}n, Carlos and others},
  journal={Nature},
  volume={526},
  number={7575},
  pages={682--686},
  year={2015},
  publisher={Nature Publishing Group UK London}
}

@article{bell1964einstein,
  title={On the einstein podolsky rosen paradox},
  author={Bell, John S},
  journal={Physics Physique Fizika},
  volume={1},
  number={3},
  pages={195},
  year={1964},
  publisher={APS}
}

@article{chou2007functional,
  title={Functional quantum nodes for entanglement distribution over scalable quantum networks},
  author={Chou, Chin-Wen and Laurat, Julien and Deng, Hui and Choi, Kyung Soo and De Riedmatten, Hugues and Felinto, Daniel and Kimble, H Jeff},
  journal={Science},
  volume={316},
  number={5829},
  pages={1316--1320},
  year={2007},
  publisher={American Association for the Advancement of Science}
}

@article{sinclair2014spectral,
  title={Spectral multiplexing for scalable quantum photonics using an atomic frequency comb quantum memory and feed-forward control},
  author={Sinclair, Neil and Saglamyurek, Erhan and Mallahzadeh, Hassan and Slater, Joshua A and George, Mathew and Ricken, Raimund and Hedges, Morgan P and Oblak, Daniel and Simon, Christoph and Sohler, Wolfgang and others},
  journal={Physical review letters},
  volume={113},
  number={5},
  pages={053603},
  year={2014},
  publisher={APS}
}

@article{gauthier2023architecture,
  title={An architecture for control of entanglement generation switches in quantum networks},
  author={Gauthier, Scarlett and Vardoyan, Gayane and Wehner, Stephanie},
  journal={IEEE Transactions on Quantum Engineering},
  volume={4},
  pages={1--17},
  year={2023},
  publisher={IEEE}
}

@article{duan2001long,
  title={Long-distance quantum communication with atomic ensembles and linear optics},
  author={Duan, L-M and Lukin, Mikhail D and Cirac, J Ignacio and Zoller, Peter},
  journal={Nature},
  volume={414},
  number={6862},
  pages={413--418},
  year={2001},
  publisher={Nature Publishing Group UK London}
}

@article{simon2007quantum,
  title={Quantum repeaters with photon pair sources and multimode memories},
  author={Simon, Christoph and De Riedmatten, Hugues and Afzelius, Mikael and Sangouard, Nicolas and Zbinden, Hugo and Gisin, Nicolas},
  journal={Physical review letters},
  volume={98},
  number={19},
  pages={190503},
  year={2007},
  publisher={APS}
}

@article{werner1989quantum,
  title={Quantum states with Einstein-Podolsky-Rosen correlations admitting a hidden-variable model},
  author={Werner, Reinhard F},
  journal={Physical Review A},
  volume={40},
  number={8},
  pages={4277},
  year={1989},
  publisher={APS}
}

@article{Barrett2005DoubleClick,
  title = {Efficient high-fidelity quantum computation using matter qubits and linear optics},
  author = {Barrett, Sean D. and Kok, Pieter},
  journal = {Phys. Rev. A},
  volume = {71},
  issue = {6},
  pages = {060310},
  numpages = {4},
  year = {2005},
  month = {Jun},
  publisher = {American Physical Society},
  doi = {10.1103/PhysRevA.71.060310},
}

@article{Krutyanskiy2024Entanglement101km,
  title = {Multimode Ion-Photon Entanglement over 101 Kilometers},
  author = {Krutyanskiy, V. and Canteri, M. and Meraner, M. and Krcmarsky, V. and Lanyon, B.P.},
  journal = {PRX Quantum},
  volume = {5},
  issue = {2},
  pages = {020308},
  numpages = {11},
  year = {2024},
  month = {Apr},
  publisher = {American Physical Society},
  doi = {10.1103/PRXQuantum.5.020308},
}

@article{azuma2015all,
  title={All-photonic quantum repeaters},
  author={Azuma, Koji and Tamaki, Kiyoshi and Lo, Hoi-Kwong},
  journal={Nature communications},
  volume={6},
  number={1},
  pages={6787},
  year={2015},
  publisher={Nature Publishing Group UK London}
}

@article{liorni2021quantum,
  title={Quantum repeaters in space},
  author={Liorni, Carlo and Kampermann, Hermann and Bru{\ss}, Dagmar},
  journal={New Journal of Physics},
  volume={23},
  number={5},
  pages={053021},
  year={2021},
  publisher={IOP Publishing}
}

@article{de2024analysis,
  title={On the analysis of quantum repeater chains with sequential swaps},
  author={de Andrade, Matheus Guedes and Van Milligen, Emily A and Bacciottini, Leonardo and Chandra, Aparimit and Pouryousef, Shahrooz and Panigrahy, Nitish K and Vardoyan, Gayane and Towsley, Don},
  journal={arXiv preprint arXiv:2405.18252},
  year={2024},
  url={https://arxiv.org/abs/2405.18252}
}

@article{Ferreira_da_Silva_2021,
   title={Optimizing entanglement generation and distribution using genetic algorithms},
   volume={6},
   ISSN={2058-9565},
   DOI={10.1088/2058-9565/abfc93},
   number={3},
   journal={Quantum Science and Technology},
   publisher={IOP Publishing},
   author={F. da Silva, Francisco and Torres-Knoop, Ariana and Coopmans, Tim and Maier, David and Wehner, Stephanie},
   year={2021},
   month=jun, pages={035007} 
}

@article{wootters1982single,
  title={A single quantum cannot be cloned},
  author={Wootters, William K and Zurek, Wojciech H},
  journal={Nature},
  volume={299},
  number={5886},
  pages={802--803},
  year={1982},
  publisher={Nature Publishing Group UK London}
}

@article{asadi2018quantum,
  title={Quantum repeaters with individual rare-earth ions at telecommunication wavelengths},
  author={Asadi, F Kimiaee and Lauk, Nikolai and Wein, S and Sinclair, Neil and O'Brien, Chris and Simon, Christoph},
  journal={Quantum},
  volume={2},
  pages={93},
  year={2018},
  publisher={Verein zur F{\"o}rderung des Open Access Publizierens in den Quantenwissenschaften}
}

@article{bernardes2011rate,
  title={Rate analysis for a hybrid quantum repeater},
  author={Bernardes, Nadja K and Praxmeyer, Ludmi{\l}a and van Loock, Peter},
  journal={Physical Review A—Atomic, Molecular, and Optical Physics},
  volume={83},
  number={1},
  pages={012323},
  year={2011},
  publisher={APS}
}

@article{Coopmans2021Netsquid,
  title = {NetSquid, a NETwork Simulator for QUantum Information using Discrete events},
  author = {Coopmans, Tim and Knegjens, Robert and Dahlberg, Axel and Maier, David and Nijsten, Loek and de Oliveira Filho, Julio and Papendrecht, Martijn and Rabbie, Julian and Rozp{\k{e}}dek, Filip and Skrzypczyk, Matthew and Wubben, Leon and de Jong, Walter and Podareanu, Damian and Torres-Knoop, Ariana and Elkouss, David and Wehner, Stephanie},
  journal = {Communications Physics},
  volume = {4},
  number = {1},
  pages = {164},
  year = {2021},
  publisher = {Nature Publishing Group},
  doi = {10.1038/s42005-021-00647-8},
}

@article{vanDam2024hardware,
  title={Hardware requirements for trapped-ion-based verifiable blind quantum computing with a measurement-only client},
  author={van Dam, Janice and Avis, Guus and Propp, Tz B and da Silva, F Ferreira and Slater, Joshua A and Northup, Tracy E and Wehner, Stephanie},
  journal={Quantum Science and Technology},
  volume={9},
  number={4},
  pages={045031},
  year={2024},
  publisher={IOP Publishing}
}

@article{Krutyanskiy2023TelecomWavelength,
  title = {Telecom-Wavelength Quantum Repeater Node Based on a Trapped-Ion Processor},
  author = {Krutyanskiy, V. and Canteri, M. and Meraner, M. and Bate, J. and Krcmarsky, V. and Schupp, J. and Sangouard, N. and Lanyon, B. P.},
  journal = {Phys. Rev. Lett.},
  volume = {130},
  issue = {21},
  pages = {213601},
  numpages = {7},
  year = {2023},
  month = {May},
  publisher = {American Physical Society},
  doi = {10.1103/PhysRevLett.130.213601},
}

@article{Krutyanskiy2023Entanglement230Meters,
  title = {Entanglement of Trapped-Ion Qubits Separated by 230 Meters},
  author = {Krutyanskiy, V. and Galli, M. and Krcmarsky, V. and Baier, S. and Fioretto, D. A. and Pu, Y. and Mazloom, A. and Sekatski, P. and Canteri, M. and Teller, M. and Schupp, J. and Bate, J. and Meraner, M. and Sangouard, N. and Lanyon, B. P. and Northup, T. E.},
  journal = {Phys. Rev. Lett.},
  volume = {130},
  issue = {5},
  pages = {050803},
  numpages = {7},
  year = {2023},
  month = {Feb},
  publisher = {American Physical Society},
  doi = {10.1103/PhysRevLett.130.050803},
}

@article{pfaff2014unconditional,
  title={Unconditional quantum teleportation between distant solid-state quantum bits},
  author={Pfaff, Wolfgang and Hensen, Bas J and Bernien, Hannes and van Dam, Suzanne B and Blok, Machiel S and Taminiau, Tim H and Tiggelman, Marijn J and Schouten, Raymond N and Markham, Matthew and Twitchen, Daniel J and others},
  journal={Science},
  volume={345},
  number={6196},
  pages={532--535},
  year={2014},
  publisher={American Association for the Advancement of Science}
}

@article{briegel1998quantum,
  title={Quantum repeaters: the role of imperfect local operations in quantum communication},
  author={Briegel, H-J and D{\"u}r, Wolfgang and Cirac, Juan I and Zoller, Peter},
  journal={Physical Review Letters},
  volume={81},
  number={26},
  pages={5932},
  year={1998},
  publisher={APS}
}

@article{hartmann2007role,
  title={Role of memory errors in quantum repeaters},
  author={Hartmann, Lorenz and Kraus, Barbara and Briegel, H-J and D{\"u}r, W},
  journal={Physical Review A—Atomic, Molecular, and Optical Physics},
  volume={75},
  number={3},
  pages={032310},
  year={2007},
  publisher={APS}
}

@article{humphreys2018deterministic,
  title={Deterministic delivery of remote entanglement on a quantum network},
  author={Humphreys, Peter C and Kalb, Norbert and Morits, Jaco PJ and Schouten, Raymond N and Vermeulen, Raymond FL and Twitchen, Daniel J and Markham, Matthew and Hanson, Ronald},
  journal={Nature},
  volume={558},
  number={7709},
  pages={268--273},
  year={2018},
  publisher={Nature Publishing Group UK London}
}

@article{yuan2008experimental,
  title={Experimental demonstration of a BDCZ quantum repeater node},
  author={Yuan, Zhen-Sheng and Chen, Yu-Ao and Zhao, Bo and Chen, Shuai and Schmiedmayer, J{\"o}rg and Pan, Jian-Wei},
  journal={Nature},
  volume={454},
  number={7208},
  pages={1098--1101},
  year={2008},
  publisher={Nature Publishing Group UK London}
}

@article{yu2020entanglement,
  title={Entanglement of two quantum memories via fibres over dozens of kilometres},
  author={Yu, Yong and Ma, Fei and Luo, Xi-Yu and Jing, Bo and Sun, Peng-Fei and Fang, Ren-Zhou and Yang, Chao-Wei and Liu, Hui and Zheng, Ming-Yang and Xie, Xiu-Ping and others},
  journal={Nature},
  volume={578},
  number={7794},
  pages={240--245},
  year={2020},
  publisher={Nature Publishing Group UK London}
}

@article{taminiau2014universal,
  title={Universal control and error correction in multi-qubit spin registers in diamond},
  author={Taminiau, Tim Hugo and Cramer, Julia and van der Sar, Toeno and Dobrovitski, Viatcheslav V and Hanson, Ronald},
  journal={Nature nanotechnology},
  volume={9},
  number={3},
  pages={171--176},
  year={2014},
  publisher={Nature Publishing Group UK London}
}

@article{reiserer2016robust,
  title={Robust quantum-network memory using decoherence-protected subspaces of nuclear spins},
  author={Reiserer, Andreas and Kalb, Norbert and Blok, Machiel S and van Bemmelen, Koen JM and Taminiau, Tim H and Hanson, Ronald and Twitchen, Daniel J and Markham, Matthew},
  journal={Physical Review X},
  volume={6},
  number={2},
  pages={021040},
  year={2016},
  publisher={APS}
}

@article{kalb2017entanglement,
  title={Entanglement distillation between solid-state quantum network nodes},
  author={Kalb, Norbert and Reiserer, Andreas A and Humphreys, Peter C and Bakermans, Jacob JW and Kamerling, Sten J and Nickerson, Naomi H and Benjamin, Simon C and Twitchen, Daniel J and Markham, Matthew and Hanson, Ronald},
  journal={Science},
  volume={356},
  number={6341},
  pages={928--932},
  year={2017},
  publisher={American Association for the Advancement of Science}
}

@article{sangouard2009quantum,
  title={Quantum repeaters based on single trapped ions},
  author={Sangouard, Nicolas and Dubessy, Romain and Simon, Christoph},
  journal={Physical Review A—Atomic, Molecular, and Optical Physics},
  volume={79},
  number={4},
  pages={042340},
  year={2009},
  publisher={APS}
}

@article{lloyd2001long,
  title={Long distance, unconditional teleportation of atomic states via complete Bell state measurements},
  author={Lloyd, S and Shahriar, MS and Shapiro, JH and Hemmer, PR},
  journal={Physical Review Letters},
  volume={87},
  number={16},
  pages={167903},
  year={2001},
  publisher={APS}
}

@article{razavi2006long,
  title={Long-distance quantum communication with neutral atoms},
  author={Razavi, Mohsen and Shapiro, Jeffrey H},
  journal={Physical Review A—Atomic, Molecular, and Optical Physics},
  volume={73},
  number={4},
  pages={042303},
  year={2006},
  publisher={APS}
}

@article{horodecki1999general,
  title={General teleportation channel, singlet fraction, and quasidistillation},
  author={Horodecki, Micha{\l} and Horodecki, Pawe{\l} and Horodecki, Ryszard},
  journal={Physical Review A},
  volume={60},
  number={3},
  pages={1888},
  year={1999},
  publisher={APS}
}

@article{bennett1992quantum,
  title={Quantum cryptography without Bell’s theorem},
  author={Bennett, Charles H and Brassard, Gilles and Mermin, N David},
  journal={Physical review letters},
  volume={68},
  number={5},
  pages={557},
  year={1992},
  publisher={APS}
}

@article{tissot2025hybrid,
  title={Hybrid Single-Ion Atomic-Ensemble Node for High-Rate Remote Entanglement Generation},
  author={Tissot, Benedikt and Maiti, Soubhadra and Hellebek, Emil R and S{\o}rensen, Anders S{\o}ndberg},
  journal={arXiv preprint arXiv:2511.04488},
  year={2025},
  url={https://arxiv.org/abs/2511.04488}
}

@article{tissot2025single,
  title={Single and Double-click High-Rate Entanglement Generation Between Distant Ions Using Multiplexed Atomic Ensembles},
  author={Tissot, Benedikt and Maiti, Soubhadra and Hellebek, Emil and S{\o}rensen, Anders S{\o}ndberg},
  journal={arXiv preprint arXiv:2511.04987},
  year={2025},
  url={https://arxiv.org/abs/2511.04987}
}

@article{sun2025hybrid,
  title={Hybrid Quantum Repeater Chains with Atom-based Quantum Processing Units and Quantum Memory Multiplexers},
  author={Sun, Shin and Bhatti, Daniel and Gao, Shaobo and Elkouss, David and Takahashi, Hiroki},
  journal={arXiv preprint arXiv:2512.21655},
  year={2025},
  url={https://arxiv.org/abs/2512.21655}
}

@misc{git_repo_teleportation,
  author = {Maiti, Soubhadra},
  title = {Code underlying the paper: Requirements for Teleportation in an Intercity Quantum Network},
  url = {https://gitlab.tudelft.nl/wehner-research/teleportation_requirements_paper.git},
  version = {paper_version},
  year = {2026},
  publisher = {GitLab (TU Delft)}
}
\appendix
\section{Derivation of Teleportation Rate and Expected Fidelity in the Metropolitan Network}
\label{sec:metro_rate_and_fidelity}
\everypar{\looseness=-1}

In our model, teleportation in an MN proceeds by first generating entanglement between two end nodes in the same MN via HEG, followed by the teleportation step.
We characterise the network by the expected entanglement fidelity, corresponding rate, and the memory coherence time of the end nodes.
The teleportation rate is a function of the entanglement-generation probability between end nodes at zero separation, i.e., the base efficiency, and for the base and optimistic values, we use experimental and projected values, respectively, from trapped-ion platforms.
In Sec.~\ref{sec:derivation_metro_p_m}, we derive the dependence of the base efficiency on experimentally relevant parameters.
Next, we derive the expressions for the teleportation rate and fidelity in~\ref{sec:metro_teleportation_rate} and~\ref{sec:metro_teleportation_fidelity}, respectively.
Note that we do not model the entanglement fidelity as a function of the experimental parameters but rather use the corresponding baseline value as the one obtained in state-of-the-art experiments~\cite{Krutyanskiy2023Entanglement230Meters}.
We apply the same for the coherence time which we model as a decoherence channel. 

\subsection{Relating Metropolitan Entanglement-Generation Probability to Trapped-Ion Experiments}
\label{sec:derivation_metro_p_m}

In this section, we provide a detailed explanation of the hardware parameters and entanglement generation between remote nodes using trapped-ion devices~\cite{Krutyanskiy2023Entanglement230Meters, Krutyanskiy2023TelecomWavelength, Krutyanskiy2019Light, Krutyanskiy2024Entanglement101km}.
Note that, in this work, we model the entanglement generation and teleportation scheme under the assumptions~\ref{assumption:werner_state}--\ref{assumption:instantaneous_qubit_preparation_local_operation}.
The purpose of this is to provide a simple and general model for a platform-agnostic setup.
The parameters required to obtain the form of the entanglement-generation probability are
\begin{itemize}
    \item \textbf{Efficiency for collecting a photon emitted from the ion} ($\eta_{\text{ion}}$):
        The probability of successfully collecting a photon emitted from the ion upon excitation, into the fibre, excluding fibre transmission losses.
        If you push a button to get the photon out, this is the probability that it comes out. This includes every loss except detector and fibre transmission losses.
    \item \textbf{Detection efficiency for photons at ion frequency} ($\eta_{\text{det}}^\text{ion-freq}$):
        The efficiency of detectors optimised for the natural emission frequency of Ca$^+$ ions.
    \item \textbf{Frequency conversion efficiency, ion to telecom} ($\eta_{\text{FC}}$):
    The fraction of photons successfully converted from the Ca$^+$ emission frequency to telecom frequency, including additional losses in fibre.

    \item \textbf{Detector efficiency at telecom frequency} ($\eta_\text{det}^\text{telecom}$):
    The probability of detecting a photon incident on the detector, without any other inefficiencies being included here.
    
    \item \textbf{Efficiency of using a `truncated detection window'} ($\eta_{\text{penalty}}$):
    In the two-click scheme experiment, a wider detection window allows for more photons but may include more noise, while a narrower window improves fidelity but reduces the detection rate.
    Hence, selecting a narrow coincidence window helps improve ion-ion entanglement fidelity, reducing the total success probability of the protocol.

    \item \textbf{Average preparation time between shots} ($t_\text{m}^\text{prep}$): 
        The average time required for sequential photon generation via laser excitation, including experimental cycle delays such as initialisation, cooling, optical pumping, etc.
\end{itemize}

\looseness=-1 Recall from~\eqref{eq:t_m_metro} that since two end nodes in a metropolitan network are separated by a distance $2d_\text{m}'$ with the hub being at the midpoint (see Fig.~\ref{fig:intercity_network_diagram}), the average time per entanglement generation attempt is given by ${t_\text{m} \!=\! t_\text{m}^\text{prep} \!+\! 2t_\text{m}^{\text{class}}}$, where $t_\text{m}^\text{class} \!=\! d_{\text{m}'}/c$ and $c$ is the speed of light in fibre.

\looseness=-1 Recall from~\ref{assumption:entangling_metro} that in our model, we adopt the sequential double-click HEG protocol~\cite{Barrett2005DoubleClick}
Note that we consider a simple model with parameters explained above and do not consider other factors such as imperfect indistinguishability, non-photon-number-resolving detectors, detector dark counts~\cite{Avis2023DelftEindhoven}, etc.
First, both end nodes, here trapped-ion devices, generate matter-photon entanglement and send the photon towards the central hub.
The photons emitted by the ions with efficiency $\eta_\text{ion}$ are frequency converted to telecom frequency with an efficiency $\eta_\text{FC}$.
Node-to-node entanglement is heralded by the BSM at the hub.
Specifically, this corresponds to the detection of two photons after interference, with a detection efficiency $\eta_\text{det}^\text{telecom}$.
Thus, using detectors optimised for telecom frequencies and using a truncated detection window, the base efficiency $p_\text{m}^0$ is given by
\begingroup
    \setlength{\abovedisplayskip}{6pt}   
    \setlength{\belowdisplayskip}{6pt}   
    \setlength{\abovedisplayshortskip}{2pt}
    \setlength{\belowdisplayshortskip}{2pt}
    \begin{equation}
    \label{eq:BaseEfficiency_IonIon_TruncatedWindow}
        p_\text{m}^0 = \frac{1}{2} \thinspace \eta_\text{penalty} \big(\eta_{\text{ion}} \thinspace\eta_{\text{FC}} \thinspace \eta_\text{det}^{\text{telecom}}\big)^2~.
    \end{equation}
\endgroup
Substituting the parameter values from Tab.~\ref{tab:baseline_and_optimistic_paramvals_trapped_ion} into~\eqref{eq:BaseEfficiency_IonIon_TruncatedWindow} yields the baseline and optimistic values of $p_\text{m}^0$ as $5.95\times 10^{-4}$ and $1.43\times 10^{-2}$, respectively.
Consequently, recall from~\eqref{eq:p_m_metro} that the entanglement-generation probability when the two nodes are separated by a distance $2d_{\text{m}'}$ (km) is given by
\begingroup
    \setlength{\abovedisplayskip}{0pt}   
    \setlength{\belowdisplayskip}{8pt}   
    \setlength{\abovedisplayshortskip}{2pt}
    \setlength{\belowdisplayshortskip}{2pt}
    \begin{equation}
        p_{\text{m}'} = p_\text{m}^0 10^{-2\alpha d_{\text{m}'}/10}~. \nonumber
    \end{equation}
\endgroup
Next, we derive the teleportation rate and expected fidelity for entanglement-ready and qubit-ready modes of teleportation in a metropolitan network.

\begin{table}[htbp]
  \centering
  \caption{Baseline and optimistic parameter values from trapped-ion experiments. The optimistic parameter values represent projected technological advancements anticipated within the next 5-10 years, based on current development trajectories and experimental progress rates, and obtained in consultation with experimental physicists at the University of Innsbruck. 
  The value for $\eta_\text{penalty} =0.12$ is obtained from \cite{Krutyanskiy2023Entanglement230Meters} by accepting 3.5 clicks per minute with a truncated detection window, instead of 0.49 clicks per second with an almost-whole detection window.
  We set both the baseline and optimistic value of $t^\text{prep}$ to be $175\,\mu\text{s}$, following the supplementary material of~\cite{Krutyanskiy2023TelecomWavelength}.
  }
  \label{tab:baseline_and_optimistic_paramvals_trapped_ion}
  \begin{tabular}{lll}
    \toprule
    \textbf{Parameter} & \textbf{Baseline value} & \textbf{Optimistic value} \\
    \midrule
    $\eta_\text{ion}$ & $0.462/0.87$~\cite{Schupp2021InterfaceBetweenTrapped-Ion} & $0.5/0.87$~\cite{privatecommunication2025}   \\
    $\eta_\text{det}^\text{ion-freq}$ & $0.87$~\cite{Schupp2021InterfaceBetweenTrapped-Ion} & $0.87$~\cite{privatecommunication2025}   \\
    $\eta_{\text{FC}}$ & $0.25$~\cite{Krutyanskiy2019Light} & $0.70$~\cite{privatecommunication2025}   \\
    $\eta_\text{penalty}$ & $0.12$~\cite{Krutyanskiy2023Entanglement230Meters} & $0.20$~\cite{privatecommunication2025}   \\
    $\eta_\text{det}^\text{telecom}$ & $0.75$~\cite{Krutyanskiy2024Entanglement101km} & $0.94$~\cite{privatecommunication2025}   \\
    $t_\text{m}^\text{prep}$ & $175\,\mu$s~\cite{privatecommunication2025} & $175\,\mu$s~\cite{privatecommunication2025}   \\
    \bottomrule
  \end{tabular}
\end{table}

\subsection{Teleportation Rate in the Metropolitan Network}
\label{sec:metro_teleportation_rate}

In our model, the rate at which a qubit can be teleported between end nodes in an MN is determined by the time required to establish end-to-end entanglement and subsequent teleportation time.
Due to~\ref{assumption:instantaneous_qubit_preparation_local_operation}, the teleportation time includes the transmission time of the Pauli correction message from the sender node to the receiver, given by ${t_{\text{m}}^\text{class}}$.
Further, let the random variable \( X_{\text{m}'} \) represent the time to establish a node-to-node entanglement successfully.
Since we assume that data qubit preparation is instantaneous (see~\ref{assumption:instantaneous_qubit_preparation_local_operation}), the total time required in both ER and QR teleportation is given by \( t_\text{m}^{\text{class}} + X_{\text{m}'}\). 
Thus, the teleportation rate in both cases is given by
\begingroup
    \setlength{\abovedisplayskip}{2pt}   
    \setlength{\belowdisplayskip}{8pt}   
    \setlength{\abovedisplayshortskip}{2pt}
    \setlength{\belowdisplayshortskip}{2pt}
    \begin{equation}
    \label{eq:teleportation_rate_metro}
        R_{\text{m}} = \frac{1}{\mathbb{E}(t_{\text{class}} \!+\! X_{\text{m}'})} = \frac{1}{t_{\text{class}} \!+\! \mathbb{E}(X_{\text{m}'})}~.    
    \end{equation}
\endgroup
Let $M'$ be the number of attempts to create an entangled pair in the MN and $M' \sim \text{Geo}(p_{\text{m}'})$~\ref{eq:attempts_unitl_success_metro}, such that $X_{\text{m}'} \!=\! t_{\text{m}'} M'$.
Hence, we have
\begingroup
    \setlength{\abovedisplayskip}{6pt}   
    \setlength{\belowdisplayskip}{8pt}   
    \setlength{\abovedisplayshortskip}{2pt}
    \setlength{\belowdisplayshortskip}{2pt}
    \begin{align}
        R_\text{m} = \frac{1}{2t_\text{m}^\text{class} \!+\! \mathbb{E}(t_{\text{m}'} M')} \overset{(\text{i})}{=} \frac{1}{2t_\text{m}^\text{class} \!+\! t_{\text{m}'}/p_{\text{m}'}} = \frac{p_{\text{m}'}}{2p_{\text{m}'} t_\text{m}^\text{class} \!+\! t_{\text{m}'}}~, \label{eq:metro_tel_rate}
    \end{align}
\endgroup
where in $(\text{i})$, we used $\mathbb{E}(M') \!=\! 1 / p_{\text{m}'}$, for geometric distribution.
Since the heralded entanglement generation process and hardware parameter values are related to trapped-ion experiments~\cite{Krutyanskiy2023Entanglement230Meters, Krutyanskiy2023TelecomWavelength, Krutyanskiy2019Light, Krutyanskiy2024Entanglement101km}, the derivation of the entanglement-generation probability $p_{\text{m}'}$ is given in~\ref{sec:derivation_metro_p_m}, and the state-of-the-art and optimistic values are presented in Tab.~\ref{tab:baseline_and_optimistic_paramvals_trapped_ion}.

\subsection{Expected Teleportation Fidelity in the Metropolitan Network}
\label{sec:metro_teleportation_fidelity}

\looseness=-1
In the ER teleportation, we teleport a pure state once we have successfully prepared an entangled link with fidelity $f_{\text{m}'}$, or corresponding Werner parameter ${w_{\text{m}'} \!=\!(4f_{\text{m}'} \!-\! 1)/3}$.
Observe that in the ER case, the link fidelity is deterministic.
Thus, using~\eqref{eq:teleportation_fidelity_general_expression}, the expected teleportation fidelity is given by
\begingroup
    \setlength{\abovedisplayskip}{2pt}   
    \setlength{\belowdisplayskip}{6pt}   
    \setlength{\abovedisplayshortskip}{2pt}
    \setlength{\belowdisplayshortskip}{2pt}
    \begin{align}
    \label{eq:metro_fidelity_ER}
        \mathbb{E}(F_\text{m}^\text{ER}) = F_\text{m}^\text{ER} = \frac{1+w_\text{m}e^{-t_\text{m}^\text{class} / t_\text{coh}}}{2}~.
    \end{align}
\endgroup

For the QR case, the data qubit is initially prepared at the sender node and stored in memory, where it undergoes decoherence until an entangled link is established in the network.
This waiting time for one teleportation round is given by $X_{\text{m}'} \!=\! t_{\text{m}'} M'$, where $M'$ is the number of entanglement generation attempts until success.
Thus, from~\eqref{eq:teleportation_fidelity_general_expression}, we obtain the expected fidelity in QR teleportation as
\begingroup
    \setlength{\abovedisplayskip}{4pt}   
    \setlength{\belowdisplayskip}{4pt}   
    \setlength{\abovedisplayshortskip}{2pt}
    \setlength{\belowdisplayshortskip}{2pt}
    \begin{align}
        \mathbb{E}(F_\text{m}^\text{QR}) &= \mathbb{E}\Big(\frac{1+w_\text{m} e^{-X_{\text{m}'}/t_\text{coh}} e^{-t_\text{m}^\text{class} / t_\text{coh}}}{2} \Big) \\
        &= \frac{1}{2} + \frac{1}{2} w_{\text{m}'} e^{-t_\text{m}^\text{class} / t_\text{coh}} \mathbb{E}(e^{-t_{\text{m}'} M'/t_\text{coh}})  \\
        &= \frac{1}{2} + \frac{1}{2} w_{\text{m}'} e^{-t_\text{m}^\text{class} / t_\text{coh}} \frac{p_{\text{m}'}}{e^{t_{\text{m}'}/t_\text{coh}} + p_{\text{m}'\textbf{}} - 1}~. \label{eq:metro_fidelity_QR}
    \end{align}
\endgroup

\section{Reformulation of~\ref{Q:2a}--\ref{Q:2c}}
\label{sec:reformulation_of_Q2-Q4}
\everypar{\looseness=-1}

Recall that in~\ref{Q:2a}--\ref{Q:2c}, we consider teleportation across nodes in an IN, e.g., between nodes $P_1$ and $P_3$ in Fig.~\ref{fig:intercity_network_diagram}.
To identify the desired parameter space for the backbone, we introduce the following shorthands for the relevant hardware parameters and the parameter space:
\begingroup
    \setlength{\abovedisplayskip}{6pt}   
    \setlength{\belowdisplayskip}{8pt}   
    \setlength{\abovedisplayshortskip}{2pt}
    \setlength{\belowdisplayshortskip}{2pt}
    \begin{align}
    \label{eq:paraemter_space_intercity_start}
        & \vec\lambda_\text{m} := (p_\text{m}^0,t_\text{coh},f_\text{m})\,, ~~\vec{\underaccent{=}{\lambda}}_\text{m} := (\underaccent{=}{p}_\text{m}^0, \underaccent{=}{t}_\text{coh}, \underaccent{=}{f}_\text{m})\,, ~~\vec{\underline{\lambda}}_\text{m} := (\underline{p}_\text{m}^0, \underline{t}_\text{coh}, \underline{f}_\text{m})\,, ~~\vec{\overline{\lambda}}_\text{m} := (\overline{p}_\text{m}^0, \overline{t}_\text{coh}, \overline{f}_\text{m})\,, \\
        & \Lambda_\text{m} := [\underaccent{=}{p}_\text{m}^0,\overline{p}_\text{m}^0] \times [\underaccent{=}{t}_\text{coh},\overline{t}_\text{coh}] \times [\underaccent{=}{f}_\text{m},\overline{f}_\text{m}]\,, ~~\Lambda_{\text{m},\text{base}} := [\underline{p}_\text{m}^0,\overline{p}_\text{m}^0] \times [\underline{t}_\text{coh},\overline{t}_\text{coh}] \times [\underline{f}_\text{m},\overline{f}_\text{m}]\,, \\
        & \vec\lambda_\text{b} := (p_\text{b},f_\text{b})\,, ~~\vec{\underaccent{=}{\lambda}}_\text{b} := (\underaccent{=}{p}_\text{b},\underaccent{=}{f}_\text{b})\,, ~~\underline{\vec{\lambda}}_\text{b} := (\underline{p}_\text{b},\underline{f}_\text{b})\,, ~~\vec{\overline{\lambda}}_\text{b} := (\overline{p}_\text{b},\overline{f}_\text{b})\,, \\
        & {\Lambda}_\text{b} := [\underaccent{=}{p}_\text{b},\overline{p}_\text{b}] \times [\underaccent{=}{f}_\text{b},\overline{f}_\text{b}]\,, \quad \quad \!\Lambda_{\text{b},\text{base}} := [\underline{p}_\text{b},\overline{p}_\text{b}] \times [\underline{f}_\text{b},\overline{f}_\text{b}]~. \label{eq:paraemter_space_intercity_end}
    \end{align}
\endgroup
Given that we have two sets of parameters for the MN ($\vec{\lambda}_\text{m}$) and the backbone ($\vec{\lambda}_\text{b}$),~\ref{Q:2a}-\ref{Q:2b} investigate the minimal hardware improvement necessary in the following sense.
We fix parameters of one segment (e.g., $\vec{\lambda}_\text{b}$) to its optimistic values and vary the other (e.g., $\vec{\lambda}_\text{m}$) along with the cut-off time to determine whether we can achieve the target performance.
If this condition is satisfied in both cases, i.e., by fixing either $\vec{\lambda}_\text{m} = \vec{\overline{\lambda}}_\text{m}$ or $\vec{\lambda}_\text{b} = \vec{\overline{\lambda}}_\text{b}$ and without exceeding the respective optimistic bounds subsequently, then this indicates that near-term experimental realisation is feasible using the considered hardware platforms.
Furthermore, this approach provides a lower bound on the minimal improvements required for each segment.
Conversely, if the target cannot be achieved under these constraints, alternative hardware platforms would be required to support intercity teleportation with the desired fidelity. 
We first reformulate~\ref{Q:2a}--\ref{Q:2b} below.
\vspace{10pt}

\subsection{Reformulation of~\ref{Q:2a}}
\label{sec:reformulation_of_Q2}

\looseness=-1 
In~\ref{Q:2a}, we investigate the achievable performance of intercity teleportation while keeping the backbone hardware parameters fixed at their optimistic values $\vec{\overline{\lambda}}_\text{b}$.
Thus, for the ER teleportation in the IN, we obtain the desired hardware parameter space, the surface for visualisation, and the set of optimal points (if the baseline does not belong to the desired parameter space), respectively, as follows
\vspace{-5pt}
\begin{align}
    & \Lambda_{2+}^\text{ER} := \{ \vec\lambda_\text{m} \in \Lambda_\text{m} \!: \max\limits_{t_\text{cut} \in T_\text{cut}} \mathbb{E}\big(F_\text{int}^\text{ER}
    (\vec\lambda_\text{m}, \vec{\overline{\lambda}}_\text{b} \mid t_\text{cut})\big) \geq f_\text{target} \}~, 
    \label{eq:desired_space_metro_int_ER} \\
    & \tilde{\Lambda}_{2+}^\text{ER} := \{(p_\text{m}^0,t_\text{coh},f_\text{m}) \in \Lambda_{2+}^\text{ER} \!: f_\text{m} = \min_{(p_\text{m}^0,t_\text{coh}, z)\in\Lambda_{2+}^\text{ER}\!, \,t_\text{cut}\in T_\text{cut}} z \}~, 
    \label{eq:min_fm_definition_metro_int_ER} \\
    & \Lambda_{2*}^\text{ER} \!:= \!\Big\{\vec\lambda_\text{m} \in \Lambda_{\text{m},\text{base}} \!:   c\big((\vec{\lambda}_\text{m},\vec{\overline{\lambda}}_\text{b}),(\vec{\underline{\lambda}}_\text{m},\vec{\overline{\lambda}}_\text{b}),F_\text{int}^\text{ER} \mid t_\text{cut}) \nonumber \\
    & \hspace{5cm} = \! \min\limits_{\vec{\lambda}_\text{m}'\in\Lambda_{\text{m},\text{base}}, \,t'_\text{cut}\in T_\text{cut})}  c\big((\vec{\lambda}_\text{m}',\vec{\overline{\lambda}}_\text{b}),(\underline{\vec{\lambda}}_\text{m}, \vec{\overline\lambda}_\text{b}),F_\text{int}^\text{ER} \mid t'_\text{cut}\big) \Big\}~.
    \label{eq:optimal_points_set_metro_int_ER}
\end{align}
For each set of points on the surface $\tilde{\Lambda}_{2+}^\text{ER}$, we define the set of cut-off times that achieves the fidelity threshold as
\begingroup
    \setlength{\abovedisplayskip}{6pt}   
    \setlength{\belowdisplayskip}{6pt}   
    \setlength{\abovedisplayshortskip}{2pt}
    \setlength{\belowdisplayshortskip}{2pt}
    \begin{align}
    \label{eq:cutoff_time_2_ER}
         T_2^\text{ER}(\vec{\lambda}_\text{m}) = \{ t_\text{cut}\in T_\text{cut}\!: \mathbb{E}\big(F_\text{int}^\text{ER}(\vec{\lambda}_\text{m},\vec{\overline{\lambda}}_\text{b}\mid t_\text{cut})\big) \geq f_\text{target} \}~.
    \end{align}
\endgroup
Since the teleportation rate is a non-decreasing function of the cut-off time, we define the rate for $\vec{\lambda}_\text{m} \in \tilde{\Lambda}_{2+}^\text{ER}$ as the best achievable rate when $t_\text{cut}\in T_2^\text{ER}(\vec{\lambda}_\text{m})$:
\begingroup
    \setlength{\abovedisplayskip}{6pt}   
    \setlength{\belowdisplayskip}{6pt}   
    \setlength{\abovedisplayshortskip}{2pt}
    \setlength{\belowdisplayshortskip}{2pt}
    \begin{align}
    \label{eq:rate_min_fm_definition_metro_int_ER}
        R_{2}^\text{ER}(\vec{\lambda}_\text{m}) = R_\text{int}\big(p_\text{m}^0,t_\text{coh},\overline{p}_\text{b}\mid \max (T_2^\text{ER}(\vec{\lambda}_\text{m}) ) \big)~,
    \end{align}
\endgroup
where $R_\text{int}$ denotes the teleportation rate in the IN as mentioned in Tab.~\ref{tab:quantities_and_parameters}.

Similarly, for QR teleportation, the required sets are given as 
\begingroup
    \setlength{\abovedisplayskip}{6pt}   
    \setlength{\belowdisplayskip}{6pt}   
    \setlength{\abovedisplayshortskip}{2pt}
    \setlength{\belowdisplayshortskip}{2pt}
    \begin{align}
        & \Lambda_{2+}^\text{QR} := \{ \vec\lambda_\text{m} \in \Lambda_\text{m} \!: \max\limits_{t'_\text{cut}\in T_\text{cut}} \mathbb{E}\big(F_\text{int}^\text{QR}
        (\vec\lambda_\text{m}, \vec{\overline{\lambda}}_\text{b} \mid t'_\text{cut})\big) \geq f_\text{target} \}~, 
        \label{eq:desired_space_metro_int_QR} \\
        & \tilde{\Lambda}_{2+}^\text{QR} := \{(p_\text{m}^0,t_\text{coh},f_\text{m}) \in \Lambda_{2+}^\text{QR} \!: f_\text{m} = \min_{(p_\text{m}^0,t_\text{coh}, z)\in\Lambda_{2+}^\text{QR}\!, \,t_\text{cut}\in T_\text{cut}} z \}~, 
        \label{eq:min_fm_definition_metro_int_QR} \\
        & \Lambda_{2*}^\text{QR} \!:= \!\Big\{\vec\lambda_\text{m} \in \Lambda_{\text{m},\text{base}} \!:   c\big((\vec{\lambda}_\text{m},\vec{\overline{\lambda}}_\text{b}),(\vec{\underline{\lambda}}_\text{m},\vec{\overline{\lambda}}_\text{b}),F_\text{int}^\text{QR} \mid t_\text{cut}) \nonumber \\
        & \hspace{5cm} = \! \min\limits_{\vec{\lambda}_\text{m}'\in\Lambda_{\text{m},\text{base}}, \,t'_\text{cut}\in T_\text{cut})}  c\big((\vec{\lambda}_\text{m}',\vec{\overline{\lambda}}_\text{b}),(\underline{\vec{\lambda}}_\text{m}, \vec{\overline\lambda}_\text{b}),F_\text{int}^\text{QR} \mid t'_\text{cut}\big) \Big\}~.
        \label{eq:optimal_points_set_metro_int_QR}
    \end{align}
\endgroup
Further, the best achievable teleportation rate for $\vec{\lambda}_\text{m} \in \tilde{\Lambda}_{2+}^\text{QR}$ is given by
\begingroup
    \setlength{\abovedisplayskip}{6pt}   
    \setlength{\belowdisplayskip}{6pt}   
    \setlength{\abovedisplayshortskip}{2pt}
    \setlength{\belowdisplayshortskip}{2pt}
    \begin{align}
        & R_{2}^\text{QR}(\vec{\lambda}_\text{m}) = R_\text{int}\big(p_\text{m}^0,t_\text{coh},\overline{p}_\text{b}\mid \max  (T_2^\text{QR}(\vec{\lambda}_\text{m}) )\big)\,, ~\text{where}
        \label{eq:rate_min_fm_definition_metro_int_QR}  \\
        &  T_2^\text{QR}(\vec{\lambda}_\text{m}) = \{ t_\text{cut}\in T_\text{cut}\!: \mathbb{E}\big(F_\text{int}^\text{QR}(\vec{\lambda}_\text{m},\vec{\overline{\lambda}}_\text{b}\mid t_\text{cut})\big) \geq f_\text{target}\}~.
        \label{eq:cutoff_time_2_QR}
    \end{align}
\endgroup

\subsection{Reformulation of~\ref{Q:2b}}
\label{sec:reformulation_of_Q3}

In contrast with~\ref{Q:2a}, we fix the metropolitan parameters at their optimistic values $\vec{\overline{\lambda}}_\text{m}$ in~\ref{Q:2b}.
In this setting, we aim to identify the desired parameter space and, if necessary, the set of optimal hardware parameters for the backbone to achieve the target fidelity of intercity teleportation.
We characterise these sets for ER teleportation, respectively, as follows
\begingroup
    \setlength{\abovedisplayskip}{6pt}   
    \setlength{\belowdisplayskip}{6pt}   
    \setlength{\abovedisplayshortskip}{2pt}
    \setlength{\belowdisplayshortskip}{2pt}
    \begin{align}
        & \Lambda_{3+}^\text{ER} := \{ \vec\lambda_\text{b} \in \Lambda_\text{b} \!: \max\limits_{t'_\text{cut}\in T_\text{cut}} \mathbb{E}\big(F_\text{int}^\text{ER}
        (\vec{\overline{\lambda}}_\text{m}, \vec{\lambda}_\text{b} \mid t'_\text{cut})\big) \geq f_\text{target} \}~, 
        \label{eq:desired_space_backbone_int_ER} \\
        & \Lambda_{3*}^\text{ER} \!:= \!\Big\{\vec\lambda_\text{b} \in \Lambda_{\text{b},\text{base}} \!:   c\big((\vec{\overline{\lambda}}_\text{m},\vec{{\lambda}}_\text{b}),(\vec{\overline{\lambda}}_\text{m},\underline{\vec{\lambda}}_\text{b}),F_\text{int}^\text{ER} \mid t_\text{cut}\big) \nonumber \\
        & \hspace{5cm} = \! \min\limits_{\vec{\lambda}_\text{b}'\in\Lambda_{\text{b},\text{base}}, \,t'_\text{cut}\in T_\text{cut}}  c\big((\vec{\overline{\lambda}}_\text{m},\vec{{\lambda'}}_\text{b}),(\vec{\overline{\lambda}}_\text{m},\underline{\vec{\lambda}}_\text{b}),F_\text{int}^\text{ER} \mid t'_\text{cut} \big) \Big\}~.
        \label{eq:optimal_points_set_backbone_int_ER}
    \end{align}
\endgroup
Since the parameter space of interest $\Lambda_\text{b}$ is already two-dimensional, we do not plot the surface showing the minimum required backbone fidelity.
Also, for $\vec\lambda_\text{b} \in \Lambda_{3*}^\text{ER}$, the maximum teleportation rate is given by
\begingroup
    \setlength{\abovedisplayskip}{6pt}   
    \setlength{\belowdisplayskip}{6pt}   
    \setlength{\abovedisplayshortskip}{2pt}
    \setlength{\belowdisplayshortskip}{2pt}
    \begin{align}
        & R_{3}^\text{ER}(\vec{\lambda}_\text{b}) := R_\text{int}\big(\overline{p}_\text{m}^0,\overline{t}_\text{coh},{p}_\text{b}\mid \max  (T_3^\text{ER}(\vec{\lambda}_\text{b}) )\big)\,, ~\text{where}
        \label{eq:rate_backbone_fixed_metro_ER} \\
        &  T_3^\text{ER}(\vec{\lambda}_\text{b}) := \{t_\text{cut} \in T_\text{cut} \!: \mathbb{E}\big(F_\text{int}^\text{ER}(\vec{\overline{\lambda}}_\text{m},\vec{\lambda}_\text{b} \mid t_\text{cut}) \big) \geq f_\text{target} \}~.
        \label{eq:cutoff_time_3_ER}
        \end{align}
\endgroup
Further, for the QR case, the corresponding sets are respectively given by
\begingroup
    \setlength{\abovedisplayskip}{6pt}   
    \setlength{\belowdisplayskip}{4pt}   
    \setlength{\abovedisplayshortskip}{2pt}
    \setlength{\belowdisplayshortskip}{2pt}
    \begin{align}
        & \Lambda_{3+}^\text{QR} := \{ \vec\lambda_\text{b} \in \Lambda_\text{b} \!: \max\limits_{t'_\text{cut}\in T_\text{cut}} \mathbb{E}\big(F_\text{int}^\text{QR}
        (\vec{\overline{\lambda}}_\text{m}, \vec{\lambda}_\text{b}' \mid t'_\text{cut})\big) \geq f_\text{target} \}~, 
        \label{eq:desired_space_backbone_int_QR} \\
        & \Lambda_{3*}^\text{QR} \!:= \!\Big\{\vec\lambda_\text{b} \in \Lambda_{\text{b},\text{base}} \!:   c\big((\vec{\overline{\lambda}}_\text{m},\vec{{\lambda}}_\text{b}),(\vec{\overline{\lambda}}_\text{m},\underline{\vec{\lambda}}_\text{b}),F_\text{int}^\text{QR} \mid t_\text{cut}\big) \nonumber \\
        & \hspace{5cm} = \! \min\limits_{\vec{\lambda}_\text{b}'\in\Lambda_{\text{b},\text{base}}, \,t'_\text{cut}\in T_\text{cut}}  c\big((\vec{\overline{\lambda}}_\text{m},\vec{{\lambda'}}_\text{b}),(\vec{\overline{\lambda}}_\text{m},\underline{\vec{\lambda}}_\text{b}),F_\text{int}^\text{QR} \mid t'_\text{cut} \big) \Big\}~. 
        \label{eq:optimal_points_set_backbone_int_QR}
    \end{align}
\endgroup
For $\vec{\lambda}_\text{b}\in \Lambda_{3+}^\text{QR}$, the maximum teleportation rate is given by
\begingroup
    \setlength{\abovedisplayskip}{6pt}   
    \setlength{\belowdisplayskip}{2pt}   
    \setlength{\abovedisplayshortskip}{2pt}
    \setlength{\belowdisplayshortskip}{2pt}
    \begin{align}
        & R_{3}^\text{QR}(\vec{\lambda}_\text{b}) := R_\text{int}\big(\overline{p}_\text{m}^0,\overline{t}_\text{coh},{p}_\text{b}\mid \max  (T_3^\text{QR}(\vec{\lambda}_\text{b}) ) \big)\,, ~\text{where} 
        \label{eq:rate_backbone_fixed_metro_QR} \\
        &  T_3^\text{QR}(\vec{\lambda}_\text{b}) := \{t_\text{cut} \in T_\text{cut} \!: \mathbb{E}\big(F_\text{int}^\text{ER}(\vec{\overline{\lambda}}_\text{m},\vec{\lambda}_\text{b} \mid t_\text{cut}) \big) \geq f_\text{target}  \}~.
        \label{eq:cutoff_time_3_QR}
    \end{align}
\endgroup

\subsection{Reformulation of~\ref{Q:2c}}
\label{sec:reformulation_of_Q4}
In~\ref{Q:2a} and~\ref{Q:2b}, we assess whether teleportation is feasible in the IN by fixing the hardware parameter values of each segment (metropolitan or backbone) at their respective optimistic values.
If feasibility is established, we proceed to~\ref{Q:2c}, which focuses on identifying the minimal hardware improvements necessary over the baseline values of both segments to achieve the target teleportation fidelity.
The desired space, optimal configurations enabling ER teleportation, and the corresponding rate in this case are, respectively, given by
\begingroup
    \setlength{\abovedisplayskip}{6pt}   
    \setlength{\belowdisplayskip}{6pt}   
    \setlength{\abovedisplayshortskip}{2pt}
    \setlength{\belowdisplayshortskip}{2pt}
    \begin{align}
    \label{eq:int_opt_point_int_ER}
        & \Lambda_{4+}^\text{ER} := \{ (\vec{\lambda}_\text{m},\vec{\lambda}_\text{b}) \in \Lambda_\text{m} \times \Lambda_\text{b} \!: \max\limits_{t'_\text{cut}\in T_\text{cut}} \mathbb{E}\big(F_\text{int}^\text{ER}
        (\vec{\lambda}_\text{m},\vec{\lambda}_\text{b}\mid t'_\text{cut})\big) \geq f_\text{target} \}~,
        \\ 
        & \Lambda_{4*}^\text{ER}
        \!:= \!\Big\{(\vec{\lambda}_\text{m},\vec{\lambda}_\text{b}) \in \Lambda_{\text{m},\text{base}}\times\Lambda_{\text{b},\text{base}} \!: c\big((\vec{\lambda}_\text{m},\vec{\lambda}_\text{b}), (\underline{\vec{\lambda}}_\text{m},\underline{\vec{\lambda}}_\text{b}),F_\text{int}^\text{ER} \mid t_\text{cut}\big) \nonumber \\
        & \hspace{3cm} = \! \min\limits_{\vec{\lambda}_\text{m}'\in \Lambda_{\text{m},\text{base}},\,\vec{\lambda}_\text{b}'\in \Lambda_{\text{b},\text{base}}, \,t'_\text{cut}\in T_\text{cut}}  c\big((\vec{\lambda}_\text{m}',\vec{\lambda}_\text{b}'), (\underline{\vec{\lambda}}_\text{m},\underline{\vec{\lambda}}_\text{b}),F_\text{int}^\text{ER} \mid t'_\text{cut}\big) \Big\}~.
        \label{eq:optimal_points_set_int_ER}
    \end{align}
\endgroup
For $(\vec{\lambda}_\text{m},\vec{\lambda}_\text{b})\in \Lambda_\text{4*}^\text{ER}$, we further define the maximum achievable teleportation rate as
\begingroup
    \setlength{\abovedisplayskip}{6pt}   
    \setlength{\belowdisplayskip}{6pt}   
    \setlength{\abovedisplayshortskip}{2pt}
    \setlength{\belowdisplayshortskip}{2pt}
    \begin{align}
        & R_{4}^\text{ER}(\vec{\lambda}_\text{m},\vec{\lambda}_\text{b}) := R_\text{int}\big({p}_\text{m}^0,{t}_\text{coh},{p}_\text{b}\mid \max  \big(T_4^\text{ER}(\vec{\lambda}_\text{m},\vec{\lambda}_\text{b}) \big)\big)\,, ~\text{where} \label{eq:rate_4_ER} \\
        &  T_4^\text{ER}(\vec{\lambda}_\text{m},\vec{\lambda}_\text{b}) := \{t_\text{cut} \in T_\text{cut} \!: \mathbb{E}\big(F_\text{int}^\text{ER}(\vec{{\lambda}}_\text{m},\vec{\lambda}_\text{b} \mid t_\text{cut}) \big) \geq f_\text{target} \}~.
        \label{eq:cutoff_time_4_ER}
    \end{align}
\endgroup
For QR teleportation, these required sets are given by
\begingroup
    \setlength{\abovedisplayskip}{6pt}   
    \setlength{\belowdisplayskip}{6pt}   
    \setlength{\abovedisplayshortskip}{2pt}
    \setlength{\belowdisplayshortskip}{2pt}
    \begin{align}
    \label{eq:int_opt_point_int_QR}
        & \Lambda_{4+}^\text{QR} := \{ (\vec{\lambda}_\text{m},\vec{\lambda}_\text{b}) \in \Lambda_\text{m} \times \Lambda_\text{b} \!: \max\limits_{t'_\text{cut}\in T_\text{cut}} \mathbb{E}\big(F_\text{int}^\text{QR}
        (\vec{\lambda}_\text{m},\vec{\lambda}_\text{b}\mid t'_\text{cut})\big) \geq f_\text{target} \}~,
        \\ 
        & \Lambda_{4*}^\text{QR}
        \!:= \!\Big\{(\vec{\lambda}_\text{m},\vec{\lambda}_\text{b}) \in \Lambda_{\text{m},\text{base}}\times\Lambda_{\text{b},\text{base}} \!: c\big((\vec{\lambda}_\text{m},\vec{\lambda}_\text{b}), ( \underline{\vec{\lambda}}_\text{m},\underline{\vec{\lambda}}_\text{b}),F_\text{int}^\text{QR} \mid t_\text{cut}\big) \nonumber \\
        & \hspace{3cm} = \! \min\limits_{\vec{\lambda}_\text{m}'\in \Lambda_{\text{m},\text{base}},\,\vec{\lambda}_\text{b}'\in \Lambda_{\text{b},\text{base}}, \,t'_\text{cut}\in T_\text{cut}}  c\big((\vec{\lambda}_\text{m}',\vec{\lambda}_\text{b}'), (\underline{\vec{\lambda}}_\text{m},\underline{\vec{\lambda}}_\text{b}),F_\text{int}^\text{QR} \mid t'_\text{cut}\big) \Big\}~,
        \label{eq:optimal_points_set_int_QR}
    \end{align}
\endgroup
and the maximum allowed rate for a representative point from the optimisation algorithm, $(\vec{\lambda}_\text{m},\vec{\lambda}_\text{b}) \in \Lambda_{4*}^\text{QR}$, is given by
\begingroup
    \setlength{\abovedisplayskip}{6pt}   
    \setlength{\belowdisplayskip}{6pt}   
    \setlength{\abovedisplayshortskip}{2pt}
    \setlength{\belowdisplayshortskip}{2pt}
    \begin{align}
        & R_{4}^\text{QR}(\vec{\lambda}_\text{m},\vec{\lambda}_\text{b}) := R_\text{int}\big({p}_\text{m}^0,{t}_\text{coh},{p}_\text{b}\mid \max  \big(T_4^\text{QR}(\vec{\lambda}_\text{m},\vec{\lambda}_\text{b}) \big)\big)\,, ~\text{where} \\
        &  T_4^\text{QR}(\vec{\lambda}_\text{m},\vec{\lambda}_\text{b}) := \{t_\text{cut} \in T_\text{cut} \!: \mathbb{E}\big(F_\text{int}^\text{ER}(\vec{{\lambda}}_\text{m},\vec{\lambda}_\text{b} \mid t_\text{cut}) \big) \geq f_\text{target} \}~.
        \label{eq:cutoff_time_4_QR}
    \end{align}
\endgroup

\newpage
\section{Calculation of Individual Terms in the Expression of Teleportation Rate in the Intercity Network}
\label{sec:cutOff_rate}
\everypar{\looseness=-1}

In this appendix, we calculate the individual terms of~\eqref{eq:ent_gen_time_int} in the expression of teleportation rate given by~\eqref{eq:def_rate_teleportation_int}.
Specifically, we calculate the terms defined in~\eqref{eq:zy1} and~\eqref{eq:inclusion-exclusion_part_2}.
\

\subsection{Calculation of Individual Terms}
\label{sec:individual_terms_for_rate}

To calculate the individual expectations in~\eqref{eq:zy1}, we use the ranges of $M_1$, $M_2$, and $m_\text{b}$ from Tab.~\ref{tab:sample_space_subset_A+}.
We also introduce the following shorthand for $x, y, s \in [0,1)$, ${z \in \mathbb{N}\cup \{\infty\}}$, ${q, r \in \mathbb{Q}}$, and ${\alpha,\sigma,\kappa \in \mathbb{R}}$:
\begingroup
    \setlength{\abovedisplayskip}{6pt}   
    \setlength{\belowdisplayskip}{6pt}   
    \setlength{\abovedisplayshortskip}{2pt}
    \setlength{\belowdisplayshortskip}{2pt}
    \begin{align}
        \Pi_c(x,y,q,\alpha,l,u) &:= \sum_{i=l}^{u} x^i y^{\ceil{iq-\alpha}}~,
        \quad\quad\quad\quad\thickspace \Pi_f(x,y,q,\alpha,l,u) := \sum_{i=l}^{u} x^i y^{\floor{iq-\alpha}}~, \label{eq:def_Pi} \\
        \Pi_{cc}(x,y,q,\alpha,\sigma,l,u) &:=\! \sum_{i=l}^{u} x^i y^{\ceil{iq-\alpha} + \ceil{iq-\sigma}}~, \thickspace \Pi_{cf}(x,y,q,\alpha,\sigma,l,u) :=\! \sum_{i=l}^{u} x^i y^{\ceil{iq-\alpha} + \floor{iq-\sigma}}  ~,\label{eq:def_Pi_cc, Pi_cf} \\
        \Pi_{ff}(x,y,q,\alpha,\sigma,l,u) &:= \sum_{i=l}^{u} x^i y^{\floor{iq-\alpha} + \floor{iq-\sigma}} ~,\label{eq:def_Pi_ff} \\
        \Theta_c(x,y,q,\alpha,l,u) &:= \sum_{i=l}^{u} i x^i y^{\ceil{iq-\alpha}}~, \quad\quad\quad\quad \Theta_f(x,y,q,\alpha,l,u) := \sum_{i=l}^{u} i x^i y^{\floor{iq-\alpha}}~. \label{eq:def_Theta} \\
        \Theta_{cc}(x,y,q,\alpha,\sigma,l,u) &:=\! \sum_{i=l}^{u} i x^i y^{\ceil{iq-\alpha} + \ceil{iq-\sigma}}, \thickspace\Theta_{cf}(x,y,q,\alpha,\sigma,l,u) :=\! \sum_{i=l}^{u} i x^i y^{\ceil{iq-\alpha} + \floor{iq-\sigma}}, \label{eq:def_Theta_cc, Theta_cf} \\
        \Theta_{ff}(x,y,q,\alpha,\sigma,l,u) &:= \sum_{i=l}^{u} \floor{iq\!-\!\alpha} x^i y^{\floor{iq-\alpha} + \floor{iq-\sigma}}~, \label{eq:def_Theta_ff} \\
        \Gamma(x,y,s,q,\alpha,l,u) &:= \sum_{i=l}^{u} x^i y^{\floor{iq}} s^{\floor{iq-\alpha}}~, \label{eq:def_Gamma} \\
        \Delta(x,y,s,r,\kappa,q,l,u) &:= \sum_{i=l}^{u} x^i \!\sum_{j=\ceil{ir-\kappa}}^{\floor{ir}} \! y^j s^{\ceil{jq}}~. \label{eq:def_Delta}
    \end{align}
\endgroup
For $u=\infty$, the infinite sums admit the following closed-form expressions:
\begin{align}
    \Pi_c(x,y,q,\alpha,l,\infty) &= \frac{\Pi_c(x,y,q,\alpha,l,l\!+\!z^*\!-\!1)}{1 - (xy^q)^{z^*}}~, \label{eq:Pi_c_finite}\\
    \Pi_f(x,y,q,\alpha,l,\infty) &= \frac{\Pi_f(x,y,q,\alpha,l,l\!+\!z^*\!-\!1)}{1 - (xy^q)^{z^*}}~, \label{eq:Pi_f_finite}\\
    \Pi_{cc}(x,y,q,\alpha,\sigma,l,\infty) &= \frac{\Pi_{cc}(x,y,q,\alpha,\sigma,l,l\!+\!z^*\!-\!1)}{1 - (xy^{2q})^{z^*}}~, \\
    \Pi_{cf}(x,y,q,\alpha,\sigma,l,\infty) &= \frac{\Pi_{cf}(x,y,q,\alpha,\sigma,l,l\!+\!z^*\!-\!1)}{1 - (xy^{2q})^{z^*}}~, \\
    \Pi_{ff}(x,y,q,\alpha,\sigma,l,\infty) &= \frac{\Pi_{ff}(x,y,q,\alpha,\sigma,l,l\!+\!z^*\!-\!1)}{1 - (xy^{2q})^{z^*}}~, \\
    \Theta_c(x,y,q,\alpha,l,\infty) &= \frac{\Theta_c(x,y,q,\alpha,l,l\!+\!z^*\!-\!1)}{1 - (xy^q)^{z^*}} + \frac{z^* (xy^q)^{z^*} \Pi_c(x,y,q,\alpha,l,l\!+\!z^*\!-\!1)}{(1 - (xy^q)^{z^*})^2}~, \label{eq:Theta_c_finite}\\
    \Theta_f(x,y,q,\alpha,l,\infty) &= \frac{\Theta_f(x,y,q,\alpha,l,l\!+\!z^*\!-\!1)}{1 - (xy^q)^{z^*}} + \frac{z^* (xy^q)^{z^*} \Pi_f(x,y,q,\alpha,l,l\!+\!z^*\!-\!1)}{(1 - (xy^q)^{z^*})^2}~, \label{eq:Theta_f_finite}
\end{align}
\begingroup
    \setlength{\abovedisplayskip}{0pt}   
    \setlength{\belowdisplayskip}{2pt}   
    \setlength{\abovedisplayshortskip}{2pt}
    \setlength{\belowdisplayshortskip}{2pt}
\begin{align}
    \Theta_{cc}(x,y,q,\alpha,\sigma,l,\infty) \!&=\! \frac{\Theta_{cc}(x,y,q,\alpha,\sigma,l,l\!+\!z^*\!-\!1)}{1 - (xy^{2q})^{z^*}} \!+\! \frac{z^* (xy^{2q})^{z^*} \Pi_{cc}(x,y,q,\alpha,\sigma,l,l\!+\!z^*\!-\!1)}{(1 - (xy^{2q})^{z^*})^2}~, \\
    \Theta_{cf}(x,y,q,\alpha,\sigma,l,\infty) \!&=\! \frac{\Theta_{cf}(x,y,q,\alpha,\sigma,l,l\!+\!z^*\!-\!1)}{1 - (xy^{2q})^{z^*}} \!+\! \frac{z^* (xy^{2q})^{z^*} \Pi_{cf}(x,z,q,\alpha,\sigma,l,l\!+\!z^*\!-\!\!1)}{(1 - (xy^{2q})^{z^*})^2}~, \\
    \Theta_{ff}(x,y,q,\alpha,\sigma,l,\infty) &= \frac{\Theta_{ff}(x,y,q,\alpha,\sigma,l,l\!+\!z^*\!-\!1)}{1 - (xy^{2q})^{z^*}} + \frac{z^* q (xy^{2q})^{z^*} \Pi_{ff}(x,y,q,\alpha,\sigma,l,l\!+\!z^*\!-\!1)}{(1 - (xy^{2q})^{z^*})^2}~, \\
    \Gamma(x,y,s,q,\alpha,l,\infty) &= \frac{\Gamma(x,y,s,q,\alpha,l,l\!+\!z^*\!-\!1)}{1\!-\!(xy^qs^q)^{z^*}}  \\
    \Delta(x,y,s,r,\kappa,q,l,\infty) &= \frac{\Delta(x,y,s,r,\kappa,q,l,l\!+\!\Bar{z}\!-\!1)}{1\!-\!(x y^r s^{rq})^{\Bar{z}}} ~,
\end{align}
\endgroup
with
\begingroup
    \setlength{\abovedisplayskip}{-2pt}   
    \setlength{\belowdisplayskip}{4pt}   
    \setlength{\abovedisplayshortskip}{2pt}
    \setlength{\belowdisplayshortskip}{2pt}
    \begin{align}
        &z^* = z^*(q) := \min\{z \in \mathbb{N} : zq \in \mathbb{N}\}~. \label{eq:Define_z*} \\
        &\Bar{z} = \Bar{z}(r,q) := \min \big\{z\in\mathbb{N} : zr, zrq \in \mathbb{N} \big\}~. \label{eq:Define_z_bar}
    \end{align}
\endgroup
For derivations, see~\ref{subsec:InfiniteToFiniteSum}.
We also use the following identity in our calculations:
\begingroup
    \setlength{\abovedisplayskip}{6pt}   
    \setlength{\belowdisplayskip}{6pt}   
    \setlength{\abovedisplayshortskip}{2pt}
    \setlength{\belowdisplayshortskip}{2pt}
    \begin{align}
    \label{eq:sum_i_x^i}
        \sum_{i=1}^{n} i x^i &= \frac{x \!-\! (n+1)x^{n+1} \!+\! n x^{n+2}}{(1-x)^2}, \quad \text{for } x \neq 1~.
    \end{align}
\endgroup
Now,
\begingroup
    \setlength{\abovedisplayskip}{6pt}   
    \setlength{\belowdisplayskip}{6pt}   
    \setlength{\abovedisplayshortskip}{2pt}
    \setlength{\belowdisplayshortskip}{2pt}
    \begin{align}
        & ~\mathbb{E}\big(Z \mathds{1}_{A_1^+}\big) \nonumber \\
        = & ~\mathbb{E}\big((X_{\text{max}} \!+\!t_{\text{msg}}) \mathds{1}_{A_1^+}\big) \\ 
        = & \!\!\!\!\sum_{m_1=\left\lceil t_\text{b} / t_\text{m}\right\rceil}^{\infty} \thinspace \sum_{\substack{m_2=\text{max}(1,\\ \left\lceil m_1-t_\text{cut}^{\prime}/t_\text{m}\right\rceil)}}^{m_1} \thinspace \sum_{\substack{m_\text{b}= \text{max}(1,\\ \left\lceil (m_1 t_\text{m} - t_\text{cut}^\prime)/t_\text{b}\right\rceil)}}^{\left\lfloor m_1 t_\text{m}/t_\text{b}\right\rfloor} \!\!\!\!\!\!\!(m_1 t_\text{m}\!+\!t_{\text{msg}}) \thinspace\mathbb{P}(X_1\!=\!m_1 t_\text{m}) \thinspace\mathbb{P}(X_2\!=\!m_2 t_\text{m}) \thinspace\mathbb{P}(X_b\!=\!m_\text{b} t_\text{b}) \\
        =& \!\!\!\sum_{m_1=\left\lceil t_\text{b} / t_\text{m}\right\rceil}^{\infty} \thinspace \sum_{\substack{m_2=\text{max}(1,\\ \left\lceil m_1-t_\text{cut}^{\prime}/t_\text{m}\right\rceil)}}^{m_1} \thinspace \sum_{\substack{m_\text{b}= \text{max}(1,\\ \left\lceil (m_1 t_\text{m} - t_\text{cut}^\prime)/t_\text{b}\right\rceil)}}^{\left\lfloor m_1 t_\text{m}/t_\text{b}\right\rfloor} \!\!\!\!\!\!\!(m_1 t_\text{m}\!+\!t_{\text{msg}}) p_\text{m}(\underbrace{1\!-\!p_\text{m}}_{=:q_\text{m}})^{m_1-1} ~p_\text{m}(\underbrace{1\!-\!p_\text{m}}_{q_\text{m}})^{m_2-1} ~p_\text{b}(\underbrace{1\!-\!p_\text{b}}_{=:q_\text{b}})^{m_\text{b}-1} \\
        = & ~ \underbrace{\Big(\frac{p_\text{m}^2 p_\text{b}}{q_\text{m}^2q_\text{b}}\Big)}_{=:\thinspace \delta}
        \thinspace\sum_{m_1=\left\lceil t_\text{b} / t_\text{m}\right\rceil}^{\infty} \!\!\!\left(m_1 t_\text{m} \!+\!t_{\text{msg}}\right) {q_\text{m}}^{m_1}
        \!\!\!\! \sum_{\substack{m_2= \text{max}(1,\\ \left\lceil m_1-t_\text{cut}^{\prime}/t_\text{m}\right\rceil)}}^{m_1} \!\!\! q_\text{m}^{m_2} 
        \!\!\!\!\!\!\sum_{\substack{m_\text{b}= \text{max}(1,\\ \left\lceil (m_1 t_\text{m} - t_\text{cut}^\prime)/t_\text{b}\right\rceil)}}^{\left\lfloor m_1 t_\text{m}/t_\text{b}\right\rfloor} \!\!\!q_\text{b}^{m_\text{b}} \\
        \overset{\text{(i)}}{=} & ~\delta\bigg( \sum_{m_1=\left\lceil t_\text{b} / t_\text{m}\right\rceil}^{1+\floor{t_\text{cut}^\prime/t_\text{m}}} \!\!\!\!(m_1 t_\text{m} \!+\!t_{\text{msg}}) q_\text{m}^{m_1} \sum_{m_2=1}^{m_1} q_\text{m}^{m_2} \sum_{m_\text{b}=1}^{\left\lfloor m_1 t_\text{m}/t_\text{b}\right\rfloor} \!\!\!q_\text{b}^{m_\text{b}} \nonumber \\
        &~ + \sum_{m_1=2+\floor{t_\text{cut}^\prime/t_\text{m}}}^{\floor{(t_\text{b}+t_\text{cut}^\prime)/t_\text{m}}} \!\!\!\!\!\!(m_1 t_\text{m} \!+\!t_{\text{msg}}) q_\text{m}^{m_1} \!\!\!\!\!\sum_{m_2=\ceil{m_1-t_\text{cut}^\prime/t_\text{m}}}^{m_1} \!\!\!\!\!\!\!\!q_\text{m}^{m_2} \sum_{m_\text{b}=1}^{\left\lfloor m_1 t_\text{m}/t_\text{b}\right\rfloor} \!\!\!\!q_\text{b}^{m_\text{b}} \nonumber \\
        &~ + \sum_{m_1=1+\floor{(t_\text{b}+t_\text{cut}^\prime)/t_\text{m}}}^\infty \!\!\!\!(m_1 t_\text{m} \!+\!t_{\text{msg}}) q_\text{m}^{m_1} \!\!\!\!\sum_{m_2=\ceil{m_1-t_\text{cut}^\prime/t_\text{m}}}^{m_1} \!\!\!\!\!\!\!\!q_\text{m}^{m_2} \sum_{m_\text{b}=\ceil{(m_1 t_\text{m}-t_\text{cut}^\prime)/t_\text{b})}}^{\left\lfloor m_1 t_\text{m}/t_\text{b}\right\rfloor} \!\!\!\!\!\!\!\!q_\text{b}^{m_\text{b}}
        \bigg) \label{eq:sum_with_pieces_without_Gamma_A_1^+}
    \end{align}
\endgroup
\begin{align}
    = &~ \frac{\delta}{p_\text{m} p_\text{b}} \bigg( \underbrace{\sum_{m_1=\left\lceil t_\text{b} / t_\text{m}\right\rceil}^{1+\floor{t_\text{cut}^\prime/t_\text{m}}} \!\!\!\!(m_1 t_\text{m} \!+\!t_{\text{msg}}) q_\text{m}^{m_1} \big(q_\text{m} \!-\! q_\text{m}^{m_1+1}\big) \big(q_\text{b} \!-\! q_\text{b}^{\floor{m_1t_\text{m}/t_\text{b}}+1} \big)}_{=:T_{1^+}^{(1)}} \nonumber \\
    &~ + \underbrace{\sum_{m_1=2+\floor{t_\text{cut}^\prime/t_\text{m}}}^{\floor{(t_\text{b}+t_\text{cut}^\prime)/t_\text{m}}} \!\!\!\!\!\!(m_1 t_\text{m} \!+\!t_{\text{msg}}) q_\text{m}^{m_1} \big(q_\text{m}^{\ceil{m_1-t_\text{cut}^\prime/t_\text{m}}} \!-\!q_\text{m}^{m_1+1} \big) \big(q_\text{b} \!-\! q_\text{b}^{\floor{m_1t_\text{m}/t_\text{b}}+1}\big)}_{=:T_{1^+}^{(2)}} \nonumber \\
    &~ + \!\!\!\!\!\! \sum_{m_1=1+\floor{(t_\text{b}+t_\text{cut}^\prime)/t_\text{m})}}^{\infty} \!(m_1 t_\text{m} \!+\!t_{\text{msg}}) q_\text{m}^{m_1} \!\Big(q_\text{m}^{\left\lceil m_1-t_\text{cut}^{\prime} / t_\text{m}\right\rceil}\!-\!q_\text{m}^{m_1+1}\Big)\!\Big(q_\text{b}^{\left\lceil\left(m_1 t_\text{m}-t_\text{cut}^{\prime}\right)/ t_\text{b}\right\rceil}\!-\!q_\text{b}^{\left\lfloor m_1 t_\text{m} / t_\text{b}\right\rfloor+1} \Big) \label{eq:infinite_sum_A_1^+} \\
    = ~& \frac{\delta}{p_\text{m} p_\text{b}}\big(T_{1^+}^{(1)} \!+\! \textbf{}T_{1^+}^{(2)} \big) + \frac{\delta \big(q_\text{m}^{-\floor{t_\text{cut}^\prime/t_\text{m}}} - q_\text{m} \big)}{p_\text{m} p_\text{b}}\bigg(t_\text{m} \!\!\!\!\!\! \sum_{m_1=1+\floor{(t_\text{b}+t_\text{cut}^\prime)/t_\text{m})}}^\infty \!\!\!\!\!\! m_1 q_\text{m}^{2m_1} q_\text{b}^{\ceil{m_1t_\text{m}/t_\text{b}-t_\text{cut}^\prime/t_\text{b}}} \nonumber \\
    & \quad - t_\text{m} q_\text{b} \!\!\!\!\!\! \sum_{m_1=1+\floor{(t_\text{b}+t_\text{cut}^\prime)/t_\text{m})}}^\infty \!\!\!\!\!\! m_1 q_\text{m}^{2m_1} q_\text{b}^{\floor{m_1t_\text{m}/t_\text{b}}} + t_{\text{msg}} \!\!\!\!\!\! \sum_{m_1=1+\floor{(t_\text{b}+t_\text{cut}^\prime)/t_\text{m})}}^\infty \!\!\!\!\!\!\!\! q_\text{m}^{2m_1} q_\text{b}^{\ceil{m_1t_\text{m}/t_\text{b}-t_\text{cut}^\prime/t_\text{b}}} \nonumber \\
    & \quad - t_{\text{msg}} q_\text{b} \sum_{m_1=1}^\infty q_\text{m}^{2m_1} q_\text{b}^{\floor{m_1t_\text{m}/t_\text{b}}}\!\bigg) \\
    \overset{\text{(ii)}}{=} & \frac{\delta}{p_\text{m} p_\text{b}}\big(T_{1^+}^{(1)} \!+\! T_{1^+}^{(2)} \big) + \frac{\delta\big(q_\text{m}^{-\floor{t_\text{cut}^\prime/t_\text{m}}} - q_\text{m} \big)}{p_\text{m} p_\text{b}}\Big(\!t_\text{m}\Theta_c\big(q_\text{m}^2,q_\text{b},\frac{t_\text{m}}{t_\text{b}},\frac{t_\text{cut}^\prime}{t_\text{b}},1\!+\!\floor{\frac{t_\text{b}\!+\!t_\text{cut}^\prime}{t_\text{m}}},\infty\big)  \nonumber \\
    & \quad \! - t_\text{m} q_\text{b} \thinspace\Theta_f\big(q_\text{m}^2,q_\text{b},\frac{t_\text{m}}{t_\text{b}},0,1\!+\!\floor{\frac{t_\text{b}\!+\!t_\text{cut}^\prime}{t_\text{m}}},\infty\big) \!+\! t_{\text{msg}}\Pi_c\big(q_\text{m}^2,q_\text{b},\frac{t_\text{m}}{t_\text{b}},\frac{t_\text{cut}^\prime}{t_\text{b}},1\!+\!\floor{\frac{t_\text{b}\!+\!t_\text{cut}^\prime}{t_\text{m}}},\infty \big) \nonumber \\
    & \quad - t_{\text{msg}}q_\text{b} \thinspace\Pi_f\big(q_\text{m}^2,q_\text{b},\frac{t_\text{m}}{t_\text{b}},0,1\!+\!\floor{\frac{t_\text{b}\!+\!t_\text{cut}^\prime}{t_\text{m}}},\infty\big)\!\Big) \label{eq:PiThetaWithZ=infty_A_1^+} \\
    \overset{\text{(iii)}}{=} & \frac{\delta}{p_\text{m} p_\text{b}}\big(T_{1^+}^{(1)} \!+\! T_{1^+}^{(2)} \big) \nonumber \\
    & \quad + \frac{\delta \big(q_\text{m}^{-\floor{t_\text{cut}^\prime/t_\text{m}}} \!-\! q_\text{m} \big)}{p_\text{m} p_\text{b}}\bigg(t_\text{m} \Big(\frac{1}{1\!-\!(q_\text{m}^2q_\text{b}^{t_\text{m}/t_\text{b}})^{m^*}}{\Theta_c\big(q_\text{m}^2,q_\text{b},\frac{t_\text{m}}{t_\text{b}},\frac{t_\text{cut}^\prime}{t_\text{b}}, 1\!+\!\floor{\frac{t_\text{b}+t_\text{cut}^\prime}{t_\text{m}}}\!,\floor{\frac{t_\text{b}+t_\text{cut}^\prime}{t_\text{m}}}\!+\!m^*\big)} \nonumber \\
    & \quad + \frac{m^* (q_\text{m}^2q_\text{b}^{t_\text{m}/t_\text{b}})^{m^*}}{\big(1\!-\!(q_\text{m}^2q_\text{b}^{t_\text{m}/t_\text{b}})^{m^*}\big)^2} { \Pi_c(q_\text{m}^2,q_\text{b},\frac{t_\text{m}}{t_\text{b}},\frac{t_\text{cut}^\prime}{t_\text{b}},1\!+\!\floor{\frac{t_\text{b}+t_\text{cut}^\prime}{t_\text{m}}}\!,\floor{\frac{t_\text{b}+t_\text{cut}^\prime}{t_\text{m}}}\!+\!m^*)} \Big) \nonumber \\
    & \quad - t_\text{m} q_\text{b} \Big(\frac{1}{1\!-\!(q_\text{m}^2q_\text{b}^{t_\text{m}/t_\text{b}})^{m^*}} {\Theta_f(q_\text{m}^2,q_\text{b},\frac{t_\text{m}}{t_\text{b}},0,1\!+\!\floor{\frac{t_\text{b}+t_\text{cut}^\prime}{t_\text{m}}}\!,\floor{\frac{t_\text{b}+t_\text{cut}^\prime}{t_\text{m}}}\!+\!m^*)} \nonumber
     \\
     & \quad + \frac{m^* (q_\text{m}^2q_\text{b}^{t_\text{m}/t_\text{b}})^{m^*}}{\big(1\!-\!(q_\text{m}^2q_\text{b}^{t_\text{m}/t_\text{b}})^{m^*}\big)^2} { \Pi_f(q_\text{m}^2,q_\text{b},\frac{t_\text{m}}{t_\text{b}},0,1\!+\!\floor{\frac{t_\text{b}+t_\text{cut}^\prime}{t_\text{m}}}\!,\floor{\frac{t_\text{b}+t_\text{cut}^\prime}{t_\text{m}}}\!+\!m^*)} \Big) \nonumber \\
    & \quad + \frac{t_{\text{msg}}}{1\!-\!(q_\text{m}^2q_\text{b}^{t_\text{m}/t_\text{b}})^{m^*}} \Pi_c\big(q_\text{m}^2, q_\text{b},\frac{t_\text{m}}{t_\text{b}},\frac{t_\text{cut}^\prime}{t_\text{b}},1\!+\!\floor{\frac{t_\text{b}+t_\text{cut}^\prime}{t_\text{m}}}\!,\floor{\frac{t_\text{b}+t_\text{cut}^\prime}{t_\text{m}}}\!+\!m^*\big) \nonumber \\
    & \quad - \frac{t_{\text{msg}} q_\text{b}}{1\!-\!(q_\text{m}^2q_\text{b}^{t_\text{m}/t_\text{b}})^{m^*}} {\Pi_f\big(q_\text{m}^2, q_\text{b},\frac{t_\text{m}}{t_\text{b}},0,1\!+\!\floor{\frac{t_\text{b}+t_\text{cut}^\prime}{t_\text{m}}}\!,\floor{\frac{t_\text{b}+t_\text{cut}^\prime}{t_\text{m}}}\!+\!m^*\big)} \bigg) ~. \label{eq:finite_sum_A_1^+}
\end{align}

Note that in (i), we have split the range of $m_1$ into three sets: 
$\ceil{t_\text{b}/t_\text{m}} \!\leq\! m_1 \!\leq\! 1 \!+\! \floor{t_\text{cut}^\prime/t_\text{m}}$, 
$2 \!+\! \floor{t_\text{cut}^\prime/t_\text{m}} \!\leq\! m_1 \!\leq\! \floor{(t_\text{b} \!+\! t_\text{cut}^\prime)/t_\text{m}}$, 
and $m_1 \!\geq\! 1 \!+\! \floor{(t_\text{b} \!+\! t_\text{cut}^\prime)/t_\text{m}}$, which simplifies the corresponding ranges of $m_2$ and $m_\text{b}$.
In (ii), we have used the definitions of $\Pi_c$, $\Pi_f$, $\Theta_c$, and $\Theta_f$ from~\eqref{eq:def_Pi} and~\eqref{eq:def_Theta}, and $z^*({t_\text{m}}/{t_\text{b}}) \!=\! t_\text{b}/\text{gcd}(t_\text{m},t_\text{b}) \!=\! m^*$.
Lastly, in (iii), we use the closed-form expressions for the infinite sums over $m_1$, which exploits the geometric nature of these sums.
The derivation is provided in~\ref{subsec:InfiniteToFiniteSum}.

Here onwards, by using the definitions of $\Pi_x$, and $\Theta_x$, where $x \in \{ c,f,cc,cf,ff\}$ from~\eqref{eq:def_Pi}, ~\eqref{eq:def_Pi_cc, Pi_cf}, ~\eqref{eq:def_Pi_ff}, ~\eqref{eq:def_Theta}, ~\eqref{eq:def_Theta_cc, Theta_cf} and ~\eqref{eq:def_Theta_ff}, we evaluate the following terms.
On $A_b^+$, we have 
\begingroup
    \setlength{\abovedisplayskip}{8pt}   
    \setlength{\belowdisplayskip}{6pt}   
    \setlength{\abovedisplayshortskip}{2pt}
    \setlength{\belowdisplayshortskip}{2pt}
    \begin{align}
        & ~\mathbb{E}\big(Z \mathds{1}_{A_b^+}\big) \nonumber \\
        = & ~\mathbb{E}\big((X_{\text{max}} \!+\!t_{\text{msg}}) \mathds{1}_{A_b^+}\big)  \\ 
        = & ~ \underbrace{\bigg(\frac{p_\text{m}^2 p_\text{b}}{(1\!-\!p_\text{m})^2(1\!-\!p_\text{b})}\bigg)}_{\delta}
        \thinspace\sum_{m_\text{b}=1}^{\infty} (m_\text{b}t_\text{b} \!+\!t_{\text{msg}}) {(\underbrace{1\!-\!p_\text{b}}_{q_\text{b}})}^{m_\text{b}}
        \! \Big( \!\!\sum_{\substack{m_1= \text{max}(1,\\ \left\lceil (m_\text{b}t_\text{b}-t_\text{cut}^{\prime})/t_\text{m}\right\rceil)}}^{\floor{m_\text{b}t_\text{b}/t_\text{m}}} \!\!\!{(\underbrace{1\!-\!p_\text{m}}_{q_\text{m}})}^{m_1} \Big)^2 \\
        = & ~\delta \sum_{m_\text{b}=1}^{\infty} \left(m_\text{b} t_\text{b} \!+\!t_{\text{msg}}\right) q_\text{b}^{m_\text{b}} \bigg(\sum_{\substack{m_1=\text{max}(1, \\ \left\lceil(m_\text{b} t_\text{b}-t_\text{cut}^{\prime}) / t_\text{m}\right\rceil)}}^{\left\lfloor m_\text{b} t_\text{b} / t_\text{m}\right\rfloor}  \!\!\!\!\!\! q_\text{m}^{m_1}\bigg)^2 \\
        = & ~ \underbrace{\frac{\delta}{p_\text{m}^2}\sum_{m_\text{b}=1}^{\floor{(t_\text{m}+t_\text{cut}^\prime)/t_\text{b}}} \!\!\!\!\!\!(m_\text{b} t_\text{b} + t_{\text{msg}}) q_\text{b}^{m_\text{b}} \left( q_\text{m} \!-\! q_\text{m}^{\floor{m_\text{b} t_\text{b}/t_\text{m}}+1}\right)^{\!2}}_{=:\thinspace T_{b^+} } \nonumber \\
        & \quad + \frac{\delta}{p_\text{m}^2} \sum_{\substack{m_\text{b}=1+ \floor{(t_\text{m}+t_\text{cut}^\prime)/t_\text{b}}}}^\infty \!\!\!\!\!\!\!\! (m_\text{b}t_\text{b}\!+\!t_{\text{msg}}) q_\text{b}^{m_\text{b}} \Big(q_\text{m}^{\ceil{(m_\text{b}t_\text{b}-t_\text{cut}^\prime)/t_\text{m}}} \!-\! q_\text{m}^{\floor{m_\text{b}t_\text{b}/t_\text{m}}+1} \Big)^{\!2} \\
        = & ~\! T_{b^+} + \frac{\delta t_\text{b}}{p_\text{m}^2} \Big(\sum_{\substack{m_\text{b}=1+ \floor{(t_\text{m}+t_\text{cut}^\prime)/t_\text{b}}}}^\infty \!\!\!\!\!\!\!\! m_\text{b} q_\text{b}^{m_\text{b}} q_\text{m}^{2\ceil{(m_\text{b}t_\text{b}-t_\text{cut}^\prime)/t_\text{m}}} + q_\text{m}^2 \!\!\!\!\!\!\!\! \sum_{\substack{m_\text{b}=1+ \floor{(t_\text{m}+t_\text{cut}^\prime)/t_\text{b}}}}^\infty \!\!\!\!\!\!\!\! m_\text{b} q_\text{b}^{m_\text{b}} q_\text{m}^{2\floor{m_\text{b}t_\text{b}/t_\text{m}}}  \nonumber \\
        & \quad - 2q_\text{m} \!\!\!\!\!\!\!\! \sum_{\substack{m_\text{b}=1+ \floor{(t_\text{m}+t_\text{cut}^\prime)/t_\text{b}}}}^\infty m_\text{b} q_\text{b}^{m_\text{b}} q_\text{m}^{\ceil{(m_\text{b}t_\text{b}-t_\text{cut}^\prime)/t_\text{m}} + \floor{m_\text{b}t_\text{b}/t_\text{b}}} \Big)  \nonumber  \\
        & \quad + \frac{\delta t_{\text{msg}}}{p_\text{m}^2} \Big( \sum_{\substack{m_\text{b}=1+ \floor{(t_\text{m}+t_\text{cut}^\prime)/t_\text{b}}}}^\infty \!\!\!\!\!\!\!\! q_\text{b}^{m_\text{b}} q_\text{m}^{2\ceil{(m_\text{b}t_\text{b}-t_\text{cut}^\prime)/t_\text{m}}} + q_\text{m}^2 \!\!\!\!\!\!\!\! \sum_{\substack{m_\text{b}=1+ \floor{(t_\text{m}+t_\text{cut}^\prime)/t_\text{b}}}}^\infty \!\!\!\!\!\!\!\! q_\text{b}^{m_\text{b}} q_\text{m}^{2\floor{m_\text{b}t_\text{b}/t_\text{m}}}  \nonumber \\
        & \quad - 2q_\text{m} \!\!\!\!\!\!\!\! \sum_{\substack{m_\text{b}=1+ \floor{(t_\text{m}+t_\text{cut}^\prime)/t_\text{b}}}}^\infty q_\text{b}^{m_\text{b}} q_\text{m}^{\ceil{(m_\text{b}t_\text{b}-t_\text{cut}^\prime)/t_\text{m}} + \floor{m_\text{b}t_\text{b}/t_\text{b}}}    \Big) \\
        = & ~\! T_{b^+} + \frac{\delta t_\text{b}}{p_\text{m}^2} \bigg(\Theta_c\big(q_\text{b},q_\text{m}^2,\frac{t_\text{b}}{t_\text{m}},\frac{t_\text{cut}^\prime}{t_\text{m}}, 1\!+\!\floor{\frac{t_\text{m}\!+\!t_\text{cut}^\prime}{t_\text{b}}}\!,\infty \big) + q_\text{m}^2\thinspace\Theta_f\big(q_\text{b},q_\text{m}^2,\frac{t_\text{b}}{t_\text{m}},0, 1\!+\!\floor{\frac{t_\text{m}\!+\!t_\text{cut}^\prime}{t_\text{b}}}\!,\infty\big) \nonumber \\
        & \quad - 2 q_\text{m}\thinspace \Theta_{cf}\big(q_\text{b},q_\text{m},\frac{t_\text{b}}{t_\text{m}},\frac{t_\text{cut}^\prime}{t_\text{m}},0, 1\!+\!\floor{\frac{t_\text{m}\!+\!t_\text{cut}^\prime}{t_\text{b}}}\!,\infty \big)\! \bigg) \nonumber \\
        & \quad + \frac{\delta t_{\text{msg}}}{p_\text{m}^2} \bigg(\! \Pi_c\big(q_\text{b},q_\text{m}^2,\frac{t_\text{b}}{t_\text{m}},\frac{t_\text{cut}^\prime}{t_\text{m}}, 1\!+\!\floor{\frac{t_\text{m}\!+\!t_\text{cut}^\prime}{t_\text{b}}}\!,\infty\big) + q_\text{m}^2\thinspace\Pi_f\big(q_\text{b},q_\text{m}^2,\frac{t_\text{b}}{t_\text{m}},0, 1\!+\!\floor{\frac{t_\text{m}\!+\!t_\text{cut}^\prime}{t_\text{b}}}\!,\infty\big) \nonumber \\
        & \quad  - 2 q_\text{m}\thinspace \Pi_{cf}\big(q_\text{b},q_\text{m},\frac{t_\text{b}}{t_\text{m}},\frac{t_\text{cut}^\prime}{t_\text{m}},0, 1\!+\!\floor{\frac{t_\text{m}\!+\!t_\text{cut}^\prime}{t_\text{b}}}\!,\infty\big)\! \bigg)
    \end{align}
\endgroup
\begingroup
    \setlength{\abovedisplayskip}{-4pt}   
    \setlength{\belowdisplayskip}{6pt}   
    \setlength{\abovedisplayshortskip}{2pt}
    \setlength{\belowdisplayshortskip}{2pt}
    \begin{align}
        \overset{(\text{i})}{=} & ~\! T_{b^+} + \frac{\delta t_\text{b}}{p_\text{m}^2} \bigg(\frac{1}{1\!-\!\big(q_\text{b}q_\text{m}^{2t_\text{b}/t_\text{m}}\big)^{m^*t_\text{m}/t_\text{b}}}\Theta_c\big(q_\text{b},q_\text{m}^2,\frac{t_\text{b}}{t_\text{m}},\frac{t_\text{cut}^\prime}{t_\text{m}}, 1\!+\!\floor{\frac{t_\text{m}\!+\!t_\text{cut}^\prime}{t_\text{b}}}\!, \floor{\frac{t_\text{m}\!+\!t_\text{cut}^\prime}{t_\text{b}}} \!+\! \frac{m^*t_\text{m}}{t_\text{b}}\big) \nonumber \\
        & \quad + \frac{m^*t_\text{m}}{t_\text{b}} \frac{\big(q_\text{b}q_\text{m}^{2t_\text{b}/t_\text{m}}\big)^{m^*t_\text{m}/t_\text{b}}}{\big(1\!-\!\big(q_\text{b}q_\text{m}^{2t_\text{b}/t_\text{m}}\big)^{m^*t_\text{m}/t_\text{b}}\big)^2} \Pi_c\big(q_\text{b},q_\text{m}^2,\frac{t_\text{b}}{t_\text{m}},\frac{t_\text{cut}^\prime}{t_\text{m}}, 1\!+\!\floor{\frac{t_\text{m}\!+\!t_\text{cut}^\prime}{t_\text{b}}}\!, \floor{\frac{t_\text{m}\!+\!t_\text{cut}^\prime}{t_\text{b}}} \!+\! \frac{m^*t_\text{m}}{t_\text{b}}\big)   \nonumber \\
        & \quad + q_\text{m}^2\Big(\frac{1}{1\!-\!\big(q_\text{b}q_\text{m}^{2t_\text{b}/t_\text{m}}\big)^{m^*t_\text{m}/t_\text{b}}} \Theta_f\big(q_\text{b},q_\text{m}^2,\frac{t_\text{b}}{t_\text{m}},0, 1\!+\!\floor{\frac{t_\text{m}\!+\!t_\text{cut}^\prime}{t_\text{b}}}\!,\floor{\frac{t_\text{m}\!+\!t_\text{cut}^\prime}{t_\text{b}}} \!+\!\frac{m^*t_\text{m}}{t_\text{b}}\big) \nonumber \\
        & \quad + \frac{m^*t_\text{m}}{t_\text{b}}\frac{q_\text{b}q_\text{m}^{2t_\text{b}/t_\text{m}}\big)^{m^*t_\text{m}/t_\text{b}}}{\big(1\!-\!\big(q_\text{b}q_\text{m}^{2t_\text{b}/t_\text{m}}\big)^{m^*t_\text{m}/t_\text{b}}\big)^2} \Pi_f\big(q_\text{b},q_\text{m}^2,\frac{t_\text{b}}{t_\text{m}},0, 1\!+\!\floor{\frac{t_\text{m}\!+\!t_\text{cut}^\prime}{t_\text{b}}}\!,\floor{\frac{t_\text{m}\!+\!t_\text{cut}^\prime}{t_\text{b}}} \!+\!\frac{m^*t_\text{m}}{t_\text{b}}\big)    \Big) \nonumber \\
        & \quad - 2 q_\text{m}\Big( \frac{1}{1\!-\!\big(q_\text{b}q_\text{m}^{2t_\text{b}/t_\text{m}}\big)^{m^*t_\text{m}/t_\text{b}}} \Theta_{cf}\big(q_\text{b},q_\text{m},\frac{t_\text{b}}{t_\text{m}},\frac{t_\text{cut}^\prime}{t_\text{m}},0, 1\!+\!\floor{\frac{t_\text{m}\!+\!t_\text{cut}^\prime}{t_\text{b}}}\!, \floor{\frac{t_\text{m}\!+\!t_\text{cut}^\prime}{t_\text{b}}} \!+\! \frac{m^*t_\text{m}}{t_\text{b}} \big) \nonumber \\
        & \quad + \frac{m^*t_\text{m}}{t_\text{b}} \frac{\big(q_\text{b}q_\text{m}^{2t_\text{b}/t_\text{m}}\big)^{m^*t_\text{m}/t_\text{b}}}{\big(1\!-\!\big(q_\text{b}q_\text{m}^{2t_\text{b}/t_\text{m}}\big)^{m^*t_\text{m}/t_\text{b}}\big)^2} \Pi_{cf}\big(q_\text{b},q_\text{m},\frac{t_\text{b}}{t_\text{m}},\frac{t_\text{cut}^\prime}{t_\text{m}},0, 1\!+\!\floor{\frac{t_\text{m}\!+\!t_\text{cut}^\prime}{t_\text{b}}}\!,\floor{\frac{t_\text{m}\!+\!t_\text{cut}^\prime}{t_\text{b}}} \!+\! \frac{m^*t_\text{m}}{t_\text{b}} \big)  \Big)  \bigg)  \nonumber \\
        & \quad + \frac{\delta t_{\text{msg}}}{p_\text{m}^2} \bigg(\frac{1}{1\!-\!\big(q_\text{b}q_\text{m}^{2t_\text{b}/t_\text{m}}\big)^{m^*t_\text{m}/t_\text{b}}} \Pi_c\big(q_\text{b},q_\text{m}^2,\frac{t_\text{b}}{t_\text{m}},\frac{t_\text{cut}^\prime}{t_\text{m}}, 1\!+\!\floor{\frac{t_\text{m}\!+\!t_\text{cut}^\prime}{t_\text{b}}}\!,\floor{\frac{t_\text{m}\!+\!t_\text{cut}^\prime}{t_\text{b}}} \!+\! \frac{m^*t_\text{m}}{t_\text{b}} \big) \nonumber \\
        & \quad + q_\text{m}^2 \frac{1}{1\!-\!\big(q_\text{b}q_\text{m}^{2t_\text{b}/t_\text{m}}\big)^{m^*t_\text{m}/t_\text{b}}} \Pi_f\big(q_\text{b},q_\text{m}^2,\frac{t_\text{b}}{t_\text{m}},0,
        \! 1\!+\!\floor{\frac{t_\text{m}\!+\!t_\text{cut}^\prime}{t_\text{b}}}\!,\floor{\frac{t_\text{m}\!+\!t_\text{cut}^\prime}{t_\text{b}}} \!+\! \frac{m^*t_\text{m}}{t_\text{b}} \big) \nonumber \\
        & \quad \!\! - 2 q_\text{m} \frac{1}{1\!-\!\big(q_\text{b}q_\text{m}^{2t_\text{b}/t_\text{m}}\big)^{m^*t_\text{m}/t_\text{b}}} \Pi_{cf}\big(q_\text{b},q_\text{m},\frac{t_\text{b}}{t_\text{m}},\frac{t_\text{cut}^\prime}{t_\text{m}},0, 1\!+\!\floor{\frac{t_\text{m}\!+\!t_\text{cut}^\prime}{t_\text{b}}}\!\!, \!\floor{\frac{t_\text{m}\!+\!t_\text{cut}^\prime}{t_\text{b}}} \!+\! \frac{m^*t_\text{m}}{t_\text{b}}\big)\! \bigg).
        \label{eq:finite_sum_A_b^+}
    \end{align}
\endgroup
Note that in (i), we used $z^*(t_\text{b}/t_\text{m}) \!=\! t_\text{m}/\text{gcd}(t_\text{m},t_\text{b}) \!=\! m^*t_\text{m}/t_\text{b}$.

On $A_1^+ A_2^+$, we observe that
\begingroup
    \setlength{\abovedisplayskip}{6pt}   
    \setlength{\belowdisplayskip}{6pt}   
    \setlength{\abovedisplayshortskip}{2pt}
    \setlength{\belowdisplayshortskip}{2pt}
    \begin{align}
        & ~\mathbb{E}\big(Z \mathds{1}_{A_1^+ A_2^+}\big) \nonumber \\
        = & ~ \underbrace{\bigg(\frac{p_\text{m}^2 p_\text{b}}{(1\!-\!p_\text{m})^2(1\!-\!p_\text{b})}\bigg)}_{\delta}
        \thinspace\sum_{m_1=\ceil{t_\text{b}/t_\text{m}}}^{\infty} \!\!\!(m_1t_\text{m} \!+\!t_{\text{msg}}) {(\underbrace{1\!-\!p_\text{m}}_{q_\text{m}})}^{m_1} {(\underbrace{1\!-\!p_\text{m}}_{q_\text{m}})}^{m_1}
        \!\!\!\!\!\!\sum_{\substack{m_\text{b}=\text{max}(1,\\ \left\lceil(m_1 t_\text{m}-t_\text{cut}^{\prime})/t_\text{b}\right\rceil)}}^{\floor{m_1t_\text{m}/t_\text{b}}} \!\!\!{(\underbrace{1\!-\!p_\text{b}}_{q_\text{b}})}^{m_\text{b}} \\
        = & ~\delta \!\!\!\sum_{m_1=\left\lceil t_\text{b}/t_\text{m}\right\rceil}^{\infty}\!\!\! (m_1 t_\text{m} \!+\!t_{\text{msg}}) q_\text{m}^{2 m_1} \!\!\!\sum_{\substack{m_\text{b}=\text{max}(1,\\ \left\lceil(m_1 t_\text{m}-t_\text{cut}^{\prime})/t_\text{b}\right\rceil)}}^{\left\lfloor m_1 t_\text{m} / t_\text{b}\right\rfloor} \!\!\! q_\text{b}^{m_\text{b}} \\
         = &~ \underbrace{\frac{\delta}{p_\text{b}}\sum_{m_1=\ceil{t_\text{b}/t_\text{m}}}^{\floor{(t_\text{b}+t_\text{cut}^\prime)/t_\text{m}}} \!\!\!\!\!\!(m_1 t_\text{m} \!+\! t_{\text{msg}}) q_\text{m}^{2 m_1}\big(q_\text{b} \!-\! q_\text{b}^{\left\lfloor m_1 t_\text{m} / t_\text{b}\right\rfloor+1}\big)}_{=:\thinspace T_{1^+ 2^+}} \nonumber \\
         & \quad + \frac{\delta}{p_\text{b}}\sum_{\substack{m_1=1+ \floor{(t_\text{b}+t_\text{cut}^\prime)/t_\text{m}}}}^\infty \!\!\!\!\!\!(m_1 t_\text{m} \!+\!t_{\text{msg}}) q_\text{m}^{2m_1} \big( q_\text{b}^{\ceil{(m_1 t_\text{m}-t_\text{cut}^\prime)/t_\text{b}}} \!-\! q_\text{b}^{\floor{m_1t_\text{m}/t_\text{b}}+1}  \big) \\
         = &~ T_{1^+ 2^+} + \frac{\delta t_\text{m}}{p_\text{b}}\bigg(\! \Theta_c\big(q_\text{m}^2,q_\text{b},\frac{t_\text{m}}{t_\text{b}},\frac{t_\text{cut}^\prime}{t_\text{b}},1\!+\!\floor{\frac{t_\text{b}\!+\!t_\text{cut}^\prime}{t_\text{m}}},\infty\big) - q_\text{b}\thinspace\Theta_f\big(q_\text{m}^2,q_\text{b},\frac{t_\text{m}}{t_\text{b}},0,1\!+\!\floor{\frac{t_\text{b}\!+\!t_\text{cut}^\prime}{t_\text{m}}},\infty \big)\!\bigg) \nonumber \\
        & \quad + \frac{\delta t_{\text{msg}}}{p_\text{b}} \bigg(\! \Pi_c\big(q_\text{m}^2,q_\text{b},\frac{t_\text{m}}{t_\text{b}},\frac{t_\text{cut}^\prime}{t_\text{b}},1\!+\!\floor{\frac{t_\text{b}\!+\!t_\text{cut}^\prime}{t_\text{m}}},\infty \big) - q_\text{b}\thinspace\Pi_f\big(q_\text{m}^2,q_\text{b},\frac{t_\text{m}}{t_\text{b}},0,1\!+\!\floor{\frac{t_\text{b}\!+\!t_\text{cut}^\prime}{t_\text{m}}},\infty \big)\!\bigg)
    \end{align}
\endgroup

\begin{align}
    \overset{(\text{i})}{=} &~ T_{1^+ 2^+} + \frac{\delta t_\text{m}}{p_\text{b}}\bigg(\frac{1}{1\!-\!\big(q_\text{m}^{2}q_\text{b}^{t_\text{m}/t_\text{b}}\big)^{m^*}} \Theta_c\big(q_\text{m}^2,q_\text{b},\frac{t_\text{m}}{t_\text{b}},\frac{t_\text{cut}^\prime}{t_\text{b}},1\!+\!\floor{\frac{t_\text{b}\!+\!t_\text{cut}^\prime}{t_\text{m}}}\!,\floor{\frac{t_\text{b}\!+\!t_\text{cut}^\prime}{t_\text{m}}} \!+\! m^*\big) \nonumber \\
    & \quad + \frac{m^*}{\big(1\!-\!\big(q_\text{m}^{2}q_\text{b}^{t_\text{m}/t_\text{b}}\big)^{m^*} \big)^2} \Pi_c\big(q_\text{m}^2,q_\text{b},\frac{t_\text{m}}{t_\text{b}},\frac{t_\text{cut}^\prime}{t_\text{b}},1\!+\!\floor{\frac{t_\text{b}\!+\!t_\text{cut}^\prime}{t_\text{m}}}\!,\floor{\frac{t_\text{b}\!+\!t_\text{cut}^\prime}{t_\text{m}}} \!+\! m^*\big) \nonumber \\
    & \quad - q_\text{b}\Big( \frac{1}{1\!-\!\big(q_\text{m}^2q_\text{b}^{t_\text{m}/t_\text{b}}\big)^{m^*}} \Theta_f\big(q_\text{m}^2,q_\text{b},\frac{t_\text{m}}{t_\text{b}},0,1\!+\!\floor{\frac{t_\text{b}\!+\!t_\text{cut}^\prime}{t_\text{m}}}\!,\floor{\frac{t_\text{b}\!+\!t_\text{cut}^\prime}{t_\text{m}}} \!+\! m^* \big) \nonumber \\
    & \quad + \frac{m^*}{\big(1\!-\!\big(q_\text{m}^{2}q_\text{b}^{t_\text{m}/t_\text{b}}\big)^{m^*} \big)^2} \Pi_f\big(q_\text{m}^2,q_\text{b},\frac{t_\text{m}}{t_\text{b}},0,1\!+\!\floor{\frac{t_\text{b}\!+\!t_\text{cut}^\prime}{t_\text{m}}}\!,\floor{\frac{t_\text{b}\!+\!t_\text{cut}^\prime}{t_\text{m}}} \!+\! m^* \big) \Big) \bigg) \nonumber \\
    & \quad + \frac{\delta t_{\text{msg}}}{p_\text{b}} \bigg(\frac{1}{1\!-\!\big(q_\text{m}^2q_\text{b}^{t_\text{m}/t_\text{b}}\big)^{m^*}}\Pi_c\big(q_\text{m}^2,q_\text{b},\frac{t_\text{m}}{t_\text{b}},\frac{t_\text{cut}^\prime}{t_\text{b}},1\!+\!\floor{\frac{t_\text{b}\!+\!t_\text{cut}^\prime}{t_\text{m}}}\!, \floor{\frac{t_\text{b}\!+\!t_\text{cut}^\prime}{t_\text{m}}}\!+\! m^* \big) \nonumber \\
    &\quad - q_\text{b} \frac{1}{1\!-\!\big(q_\text{m}^2q_\text{b}^{t_\text{m}/t_\text{b}}\big)^{m^*}} \Pi_f\big(q_\text{m}^2,q_\text{b},\frac{t_\text{m}}{t_\text{b}},0,1\!+\!\floor{\frac{t_\text{b}\!+\!t_\text{cut}^\prime}{t_\text{m}}}\!,\floor{\frac{t_\text{b}\!+\!t_\text{cut}^\prime}{t_\text{m}}} \!+\! m^* \big)\!\bigg) ~,\qquad \qquad \quad
    \label{eq:finite_sum_A_1^+A_2^+}
\end{align}
where, in (i), we have used $z^*(t_\text{m}/t_\text{b}) \!=\! m^*$.

On $A_1^+ A_b^+$, it follows that
\begingroup
    \setlength{\abovedisplayskip}{6pt}   
    \setlength{\belowdisplayskip}{6pt}   
    \setlength{\abovedisplayshortskip}{2pt}
    \setlength{\belowdisplayshortskip}{2pt}
    \begin{align}
        & ~\mathbb{E}\big(Z \mathds{1}_{A_1^+ A_b^+}\big) \nonumber \\
        = & ~ \underbrace{\bigg(\frac{p_\text{m}^2 p_\text{b}}{(1\!-\!p_\text{m})^2(1\!-\!p_\text{b})}\bigg)}_{\delta}
        \thinspace\sum_{k=1}^{\infty} (km^*t_\text{m} \!+\!t_{\text{msg}}) {(\underbrace{1\!-\!p_\text{m}}_{q_\text{m}})}^{km^*} {(\underbrace{1\!-\!p_\text{b}}_{q_\text{b}})}^{k m^*t_\text{m}/t_\text{b}}
        \!\!\!\!\!\!\sum_{\substack{m_2=\text{max}(1,\\ \ceil{km^*-t_\text{cut}^{\prime}/t_\text{m}})}}^{km^*} \!\!\!\!{(\underbrace{1\!-\!p_\text{m}}_{q_\text{m}})}^{m_2} \\
        = & ~\delta ~\sum_{k=1}^{\infty} (k m^* t_\text{m} \!+\!t_{\text{msg}}) ~(\underbrace{q_\text{m}^{m^*} q_\text{b}^{m^* t_\text{m}/t_\text{b}}}_{=:\gamma})^k \sum_{\substack{m_2=\text{max}(1,\\ \ceil{km^*-t_\text{cut}^{\prime}/t_\text{m}})}}^{k m^*} \!\!\!\! q_\text{m}^{m_2} \\
        = & ~\frac{\delta}{p_\text{m}} \Big(\sum_{k=1}^{\floor{(t_\text{m}+t_\text{cut}^\prime)/(m^* t_\text{m})}} \!\!\!\!\!\!\!\!(k m^* t_\text{m} \!+\! t_{\text{msg}}) \gamma^{k} (q_\text{m} \!-\! q_\text{m}^{k m^*+1}) \nonumber \\
        & \quad +  \!\!\!\!\! \sum_{\substack{k=1 + \floor{(t_\text{m}+t_\text{cut}^\prime)/(m^* t_\text{m})}}}^\infty \!\!\!\!\!\!\!\!\!\!\! (km^*t_\text{m} + t_{\text{msg}})\gamma^k (q_\text{m}^{\ceil{km^*-t_\text{cut}^\prime/t_\text{m}}} \!-\!q_\text{m}^{km^*+1})  \Big) \\
        \overset{(\text{i})}{=} & ~\frac{\delta q_\text{m}}{p_\text{m}}\Big( m^*t_\text{m}\!\sum_{k=1}^{\xi} k\gamma^{k} \!+\! t_{\text{msg}}\!\sum_{k=1}^\xi \gamma^k \!-\! m^*t_\text{m}\!\sum_{k=1}^{\xi} k(\underbrace{\gamma q_\text{m}^{m^*}}_{=:\thinspace \beta})^{k} \!-\! t_{\text{msg}}\!\sum_{k=1}^\xi (\underbrace{\gamma q_\text{m}^{m^*}}_{\beta})^{k} \Big) \nonumber \\
        & \quad + \frac{\delta\big(q_\text{m}^{-\floor{t_\text{cut}^\prime/t_\text{m}}} \!-\!q_\text{m} \big)}{p_\text{m}} \Big( m^*t_\text{m} \!\sum_{k=\xi+1}^\infty \!\!k(\underbrace{\gamma q_\text{m}^{m^*}}_{\beta})^k + t_{\text{msg}} \!\sum_{k=\xi+1}^\infty \!(\underbrace{\gamma q_\text{m}^{m^*}}_{\beta})^k \Big) \\
        =&~ \frac{\delta q_\text{m} m^* t_\text{m}}{p_\text{m}} \Big(\frac{\gamma(1 \!-\!(\xi\!+\!1)\gamma^\xi \!+\! \xi \gamma^{\xi+1})}{(1\!-\!\gamma)^2} -\frac{\beta(1 \!-\!(\xi\!+\!1)\beta^\xi \!+\! \xi \beta^{\xi+1})}{(1\!-\!\beta)^2} \Big) \nonumber \\
        & \quad +\frac{\delta q_\text{m} t_{\text{msg}}}{p_\text{m}} \Big(\frac{\gamma(1\!-\!\gamma^\xi)}{1\!-\!\gamma} - \frac{\beta(1\!-\!\beta^\xi)}{1\!-\!\beta}\Big) + \frac{\delta\big(q_\text{m}^{-\floor{t_\text{cut}^\prime/t_\text{m}}} \!-\! q_\text{m} \big) \beta^{\xi+1}}{p_\text{m} (1\!-\!\beta)} \Big( m^* t_\text{m} \frac{\xi\!+\!1\!-\!\xi\beta}{1\!-\!\beta} \!+\! t_{\text{msg}}\Big) ~,
        \label{eq:finite_sum_A_1^+A_b^+}
    \end{align}
\endgroup
where in (i), we have used the notation ${\xi \!:=\! \floor{(t_\text{m}+t_\text{cut}^\prime)/(m^* t_\text{m})}}$.

Lastly, on $A_1^+ A_2^+ A_b^+$, we have
\begingroup
    \setlength{\abovedisplayskip}{6pt}   
    \setlength{\belowdisplayskip}{6pt}   
    \setlength{\abovedisplayshortskip}{2pt}
    \setlength{\belowdisplayshortskip}{2pt}
    \begin{align}
        &~\mathbb{E}\big(Z \mathds{1}_{A_1^+ A_2^+ A_b^+}\big) \nonumber \\
        = & ~ \underbrace{\bigg(\frac{p_\text{m}^2 p_\text{b}}{(1\!-\!p_\text{m})^2(1\!-\!p_\text{b})}\bigg)}_{\delta}
        \thinspace\sum_{k=1}^{\infty} (km^*t_\text{m} \!+\!t_{\text{msg}}) {(\underbrace{1\!-\!p_\text{m}}_{q_\text{m}})}^{m^*k} {(\underbrace{1\!-\!p_\text{m}}_{q_\text{m}})}^{m^*k} {(\underbrace{1\!-\!p_\text{b}}_{q_\text{b}})}^{m^*kt_\text{m}/t_\text{b}} \\
        = & ~\delta \sum_{k=1}^{\infty} (k m^* t_\text{m} \!+\!t_{\text{msg}}) {(\underbrace{q_\text{m}^{2m^*} q_\text{b}^{m^*t_\text{m}/t_\text{b}}}_{\beta})}^{k} \\
        =&~ \delta \Big( m^* t_\text{m} \sum_{k=1}^\infty k\beta^k + t_{\text{msg}} \sum_{k=1}^\infty \beta^k \Big) \\
        =& ~ \frac{\delta\beta}{1\!-\!\beta}\Big( \frac{m^* t_\text{m}}{1\!-\!\beta} + t_{\text{msg}} \Big) ~.
        \label{eq:finite_sum_A_1^+A_2^+A_b^+}
    \end{align}
\endgroup

By plugging ~\eqref{eq:finite_sum_A_1^+}, \eqref{eq:finite_sum_A_b^+}, \eqref{eq:finite_sum_A_1^+A_2^+}, \eqref{eq:finite_sum_A_1^+A_b^+}, and \eqref{eq:finite_sum_A_1^+A_2^+A_b^+} in~\eqref{eq:zy1}, we obtain the value of $\mathbb{E}(Z\mathds{1}_{Y=1})$.
Next, to calculate the individual expectations of~\eqref{eq:inclusion-exclusion_part_2}, we observe the ranges of \( M_1 \), \( M_2 \), and \( m_\text{b} \) on \( A_i^- \) as summarised in Table~\ref{tab:sample_space_subset_A-}.
On \( A_{12}^- \), we have
\begingroup
    \setlength{\abovedisplayskip}{3pt}   
    \setlength{\belowdisplayskip}{2pt}   
    \setlength{\abovedisplayshortskip}{2pt}
    \setlength{\belowdisplayshortskip}{2pt}
    \begin{align}
        & ~\mathbb{E}\big(Z \mathds{1}_{A_{12}^-}\big) \nonumber \\
        = & ~\mathbb{E}\big((X_{\text{min}}\!+\!t_\text{cut}) \mathds{1}_{A_{12}^-}\big) \\
        = & \!\underbrace{\bigg(\frac{p_\text{m}^2 p_\text{b}}{(1\!-\!p_\text{m})^2(1\!-\!p_\text{b})}\bigg)}_{\delta}\sum_{m_2=1}^{\infty} \! \sum_{\substack{m_1=(m_2 \\ + \ceil{t_\text{cut} / t_\text{m}})}}^{\infty} \sum_{m_\text{b}=\left\lceil m_2 t_\text{m} / t_\text{b}\right\rceil}^{\left\lfloor m_1 t_\text{m}/ t_\text{b}\right\rfloor} \!\!\!\!(m_2 t_\text{m}\!+\!t_\text{cut}) (\underbrace{1\!-\!p_\text{m}}_{q_\text{m}})^{m_1} (\underbrace{1\!-\!p_\text{m}}_{q_\text{m}})^{m_2} (\underbrace{1\!-\!p_\text{b}}_{q_\text{b}})^{m_\text{b}}   \\
        = & ~\delta \sum_{m_2=1}^{\infty} \left(m_2 t_\text{m}\!+\!t_\text{cut}\right)q_\text{m}^{m_2} \!\!\!\!\!\! \sum_{m_1=m_2+\left\lceil t_\text{cut} / t_\text{m}\right\rceil}^{\infty} \!\!\!\!\!\!\!\! q_\text{m}^{m_1} \sum_{m_\text{b}=\left\lceil m_2 t_\text{m} / t_\text{b}\right\rceil}^{\left\lfloor m_1 t_\text{m}/ t_\text{b}\right\rfloor} \!\!\!\!\!\! q_\text{b}^{m_\text{b}} \\
        = & ~\frac{\delta}{p_\text{b}} \sum_{m_2=1}^{\infty} \left(m_2 t_\text{m}\!+\!t_\text{cut}\right)q_\text{m}^{m_2} \!\!\!\!\!\! \sum_{m_1=m_2+\left\lceil t_\text{cut} / t_\text{m}\right\rceil}^{\infty} \!\!\!\!\!\!\!\! q_\text{m}^{m_1} \big( q_\text{b}^{\ceil{m_2t_\text{m}/t_\text{b}}} \!-\! q_\text{b}^{\floor{m_1t_\text{m}/t_\text{b}}+1}\big)        \\
        = &~\frac{\delta q_\text{m}^{\ceil{t_\text{cut}/t_\text{m}}}}{p_\text{m} p_\text{b}}\sum_{m_2=1}^{\infty}(m_2t_\text{m}\!+\!t_\text{cut})q_\text{m}^{2m_2} q_\text{b}^{\ceil{m_2t_\text{m}/t_\text{b}}} - \frac{\delta q_\text{b}}{p_\text{b}}\sum_{m_2=1}^\infty (m_2 t_\text{m}\!+\!t_\text{cut})  q_\text{m}^{m_2}\!\!\!\!\!\!\!\!\!\sum_{m_1=m_2+\ceil{t_\text{cut}/t_\text{m}}}^\infty\!\!\!\!\!\! q_\text{m}^{m_1} q_\text{b}^{\floor{m_1t_\text{m}/t_\text{b}}} \\
        \overset{\text{(i)}}{=} & ~\frac{\delta q_\text{m}^{\ceil{t_\text{cut}/t_\text{m}}}}{p_\text{m} p_\text{b}}\Big(t_\text{m}\Theta_c\big(q_\text{m}^2,q_\text{b},\frac{t_\text{m}}{t_\text{b}},0,1,\infty \big) + t_\text{cut}\Pi_c\big(q_\text{m}^2,q_\text{b},\frac{t_\text{m}}{t_\text{b}},0,1,\infty \big) \Big) \nonumber \\
        & \quad - \frac{\delta q_\text{b}}{p_\text{b}} \sum_{m_1=1+\ceil{t_\text{cut}/t_\text{m}}}^\infty \!\!\!\!\!\!\!\!q_\text{m}^{m_1} q_\text{b}^{\floor{m_1t_\text{m}/t_\text{b}}} \sum_{m_2=1}^{m_1-\ceil{t_\text{cut}/t_\text{m}}} \!\!\!\!(t_\text{cut}\!+\!m_2t_\text{m}) q_\text{m}^{m_2} \\
        \overset{\text{(ii)}}{=} &~ \frac{\delta q_\text{m}^{\ceil{t_\text{cut}/t_\text{m}}}}{p_\text{m} p_\text{b}} \bigg(t_\text{m} \Big( \frac{\Theta_c(q_\text{m}^2,q_\text{b},{t_\text{m}}/{t_\text{b}},0,1,m^*)}{1-\big(q_\text{m}^2 q_\text{b}^{t_\text{m}/t_\text{b}}\big)^{m^*}} + m^* \big(q_\text{m}^2q_\text{b}^{t_\text{m}/t_\text{b}}\big)^{m^*} \frac{\Pi_c(q_\text{m}^2,q_\text{b},{t_\text{m}}/{t_\text{b}},0,1,m^*)}{\big(1-\big(q_\text{m}^2 q_\text{b}^{t_\text{m}/t_\text{b}}\big)^{m^*}\big)^2} \Big) \nonumber \\
        & \quad + t_\text{cut} \frac{\Pi_c(q_\text{m}^2,q_\text{b},{t_\text{m}}/{t_\text{b}},0,1,m^*)}{1-\big(q_\text{m}^2 q_\text{b}^{t_\text{m}/t_\text{b}}\big)^{m^*}} \bigg) - \frac{\delta q_\text{b}}{p_\text{b}} \sum_{m_1=1+\ceil{t_\text{cut}/t_\text{m}}}^\infty \!\!\!\!\!\!\!\! q_\text{m}^{m_1}q_\text{b}^{\floor{m_1t_\text{m}/t_\text{b}}} \bigg( t_\text{cut}\frac{q_\text{m} \!-\!q_\text{m}^{m_1-\ceil{t_\text{cut}/t_\text{m}}+1}}{1\!-\!q_\text{m}} \nonumber \\
        & \quad + t_\text{m}\frac{q_\text{m} \!-\! (m_1\!-\!\ceil{t_\text{cut}/t_\text{m}}\!+\!1) q_\text{m}^{m_1-\ceil{t_\text{cut}/t_\text{m}}+1} + (m_1\!-\!\ceil{t_\text{cut}/t_\text{m}})q_\text{m}^{m_1-\ceil{t_\text{cut}/t_\text{m}}+2}}{(1\!-\!q_\text{m})^2} \bigg)
    \end{align}
\endgroup
\begingroup
    \setlength{\abovedisplayskip}{3pt}   
    \setlength{\belowdisplayskip}{6pt}   
    \setlength{\abovedisplayshortskip}{2pt}
    \setlength{\belowdisplayshortskip}{2pt}
    \begin{align}
        = & \underbrace{\frac{\delta q_\text{m}^{\ceil{t_\text{cut}/t_\text{m}}}}{p_\text{m} p_\text{b}(1\!-\!\beta)}\Big(t_\text{m}\Theta_c\big(q_\text{m}^2,q_\text{b},m^*,\frac{t_\text{m}}{t_\text{b}},0\big) + \Big(\frac{\beta m^*t_\text{m}}{1\!-\!\beta} + t_\text{cut}\Big)\Pi_c\big(q_\text{m}^2,q_\text{b},m^*,\frac{t_\text{m}}{t_\text{b}},0\big) \Big)}_{=:\thinspace T_{12^-} } \nonumber \\
        & \quad - \frac{\delta q_\text{b}}{p_\text{b}}\Bigg( \frac{q_\text{m}}{p_\text{m}}\Big(\frac{t_\text{m}}{p_\text{m}}\!+\!t_\text{cut} \Big)\!\!\!\! \sum_{m_1=1+\ceil{t_\text{cut}/t_\text{m}}}^\infty\!\!\!\!\!\! q_\text{m}^{m_1} q_\text{b}^{\floor{m_1t_\text{m}/t_\text{b}}} \nonumber \\
        & \quad -\! \frac{q_\text{m}^{1-\ceil{t_\text{cut}/t_\text{m}}}}{p_\text{m}}\Big(\frac{t_\text{m}}{p_\text{m}}\Big(1\!-\!\ceil{\frac{t_\text{cut}}{t_\text{m}}} \!+\!q_\text{m}\ceil{\frac{t_\text{cut}}{t_\text{m}}}\Big) \!+\!t_\text{cut} \Big) \!\!\!\! \sum_{m_1=1+\ceil{t_\text{cut}/t_\text{m}}}^\infty\!\!\!\!\!\! q_\text{m}^{2m_1} q_\text{b}^{\floor{m_1t_\text{m}/t_\text{b}}} \nonumber \\
        & \quad -\! \frac{q_\text{m}^{1-\ceil{t_\text{cut}/t_\text{m}}}}{p_\text{m}^2} ~t_\text{m}(1\!-\!q_\text{m})\!\!\!\! \sum_{m_1=1+\ceil{t_\text{cut}/t_\text{m}}}^\infty\!\!\!\!\!\! m_1 q_\text{m}^{2m_1} q_\text{b}^{\floor{m_1t_\text{m}/t_\text{b}}} \Bigg) \\
        = & ~T_{12^-} - \frac{\delta q_\text{m}q_\text{b}}{p_\text{m}p_\text{b}}\Bigg(\Big(\frac{t_\text{m}}{p_\text{m}}\!+\!t_\text{cut}\Big)\Pi_f\big(q_\text{m},q_\text{b},\frac{t_\text{m}}{t_\text{b}},0, 1\!+\!\ceil{\frac{t_\text{cut}}{t_\text{m}}},\infty\big) \nonumber \\
        & \quad - { q_\text{m}^{-\ceil{t_\text{cut}/t_\text{m}}}}\Big(\frac{t_\text{m}}{p_\text{m}}\!-\!t_\text{m}\ceil{\frac{t_\text{cut}}{t_\text{m}}}\!+\!t_\text{cut}\Big) \Pi_f\big(q_\text{m}^2,q_\text{b},\frac{t_\text{m}}{t_\text{b}},0, 1\!+\!\ceil{\frac{t_\text{cut}}{t_\text{m}}},\infty\big) \nonumber \\
        & \quad - { q_\text{m}^{-\ceil{t_\text{cut}/t_\text{m}}} t_\text{m}}\Theta_f\big(q_\text{m}^2,q_\text{b},\frac{t_\text{m}}{t_\text{b}},0, 1\!+\!\ceil{\frac{t_\text{cut}}{t_\text{m}}},\infty\big) \Bigg) \\
        \overset{(\text{iii})}{=} & ~T_{12^-} - \frac{\delta q_\text{m}q_\text{b}}{p_\text{m}p_\text{b}}\Bigg(\Big(\frac{t_\text{m}}{p_\text{m}}\!+\!t_\text{cut}\Big) \frac{1}{1\!-\!\big(q_\text{m} q_\text{b}^{t_\text{m}/t_\text{b}} \big)^{m^*}} \Pi_f\big(q_\text{m},q_\text{b},\frac{t_\text{m}}{t_\text{b}},0, 1\!+\!\ceil{\frac{t_\text{cut}}{t_\text{m}}}\!,\!\ceil{\frac{t_\text{cut}}{t_\text{m}}} \!+\! m^*\big) \nonumber \\
        & \quad - { q_\text{m}^{-\ceil{t_\text{cut}/t_\text{m}}}}\Big(\frac{t_\text{m}}{p_\text{m}}\!-\!t_\text{m}\ceil{\frac{t_\text{cut}}{t_\text{m}}}\!+\!t_\text{cut}\Big) \frac{1}{1\!-\!\big(q_\text{m}^2 q_\text{b}^{t_\text{m}/t_\text{b}} \big)^{m^*}} \Pi_f\big(q_\text{m}^2,q_\text{b},\frac{t_\text{m}}{t_\text{b}},0, 1\!+\!\ceil{\frac{t_\text{cut}}{t_\text{m}}}\!,\!\ceil{\frac{t_\text{cut}}{t_\text{m}}} \!+\! m^*\big) \nonumber \\
        & \quad - { q_\text{m}^{-\ceil{t_\text{cut}/t_\text{m}}} t_\text{m}} \Big( \frac{1}{1\!-\!\big(q_\text{m} q_\text{b}^{t_\text{m}/t_\text{b}} \big)^{m^*}}\Theta_f\big(q_\text{m}^2,q_\text{b},\frac{t_\text{m}}{t_\text{b}},0, 1\!+\!\ceil{\frac{t_\text{cut}}{t_\text{m}}}\!,\ceil{\frac{t_\text{cut}}{t_\text{m}}} \!+\!m^*\big) \nonumber \\
        & \quad + \frac{m^*}{\big(1\!-\!\big(q_\text{m} q_\text{b}^{t_\text{m}/t_\text{b}} \big)^{m^*}\big)^2}\Pi_f\big(q_\text{m}^2,q_\text{b},\frac{t_\text{m}}{t_\text{b}},0, 1\!+\!\ceil{\frac{t_\text{cut}}{t_\text{m}}}\!,\ceil{\frac{t_\text{cut}}{t_\text{m}}} \!+\!m^*\big)\Big) \Bigg)~,
        \label{eq:finite_sum_A_12^-}
    \end{align}
\endgroup
where we have used the definition of $\Pi_c$, $\Pi_f$, $\Theta_c$, and $\Theta_f$ from~\eqref{eq:def_Pi} and~\eqref{eq:def_Theta}.
Also, note that we have changed the order of summation in (i).
In (ii) and (iii), we recalled that ${z^*(t_\text{m}/t_\text{b}) \!=\! m^*}$, and applied the identity~\eqref{eq:sum_i_x^i}.

On $A_{1b}^-$, we observe that
\begingroup
    \setlength{\abovedisplayskip}{2pt}   
    \setlength{\belowdisplayskip}{2pt}   
    \setlength{\abovedisplayshortskip}{2pt}
    \setlength{\belowdisplayshortskip}{2pt}
    \begin{align}
        & ~\mathbb{E}\big(Z \mathds{1}_{A_{1b}^-}\big) \nonumber \\
        = & ~\mathbb{E}\big((X_{\text{min}}\!+\!t_\text{cut}) \mathds{1}_{A_{1b}^-}\big) \\
        = & \underbrace{\bigg(\frac{p_\text{m}^2 p_\text{b}}{(1\!-\!p_\text{m})^2(1\!-\!p_\text{b})}\bigg)}_{\delta} \sum_{m_\text{b}=1}^{\infty}(m_\text{b} t_\text{b}\!+\!t_\text{cut}) (\underbrace{1\!-\!p_\text{b}}_{q_\text{b}})^{m_\text{b}} \!\!\!\!\!\! \sum_{\substack{m_1=\\\left\lceil(m_\text{b} t_\text{b}+t_\text{cut}) / t_\text{m} \right\rceil } }^{\infty} \!\!\!\! (\underbrace{1\!-\!p_\text{m}}_{q_\text{m}})^{m_1} \!\! \sum_{m_2=\left\lceil m_\text{b} t_\text{b} / t_\text{m}\right\rceil}^{m_1} \!\!\!\! (\underbrace{1\!-\!p_\text{m}}_{q_\text{m}})^{m_2} \\
        = & ~\frac{\delta}{p_\text{m}} \sum_{m_\text{b}=1}^{\infty}(m_\text{b} t_\text{b}\!+\!t_\text{cut}) q_\text{b}^{m_\text{b}} \!\!\!\!\!\!\!\! \sum_{m_1=\left\lceil(m_\text{b} t_\text{b}+t_\text{cut}) / t_\text{m} \right\rceil}^{\infty} \!\!\!\!\!\!\!\! q_\text{m}^{m_1} \big( {q_\text{m}^{\left\lceil m_\text{b} t_\text{b} / t_\text{m}\right\rceil} - q_\text{m}^{m_1+1}} \big) \\
        = & ~\frac{\delta}{p_\text{m}^2}\sum_{m_\text{b}=1}^\infty(m_\text{b} t_\text{b}+t_\text{cut}) q_\text{b}^{m_\text{b}} q_\text{m}^{\ceil{{(m_\text{b}t_\text{b}+t_\text{cut})}/{t_\text{m}}}\!+\!\ceil{{m_\text{b}t_\text{b}}/{t_\text{m}}}} - \frac{\delta q_\text{m}}{p_\text{m}(1\!-\!q_\text{m}^2)}\sum_{m_\text{b}=1}^\infty(m_\text{b} t_\text{b}+t_\text{cut}) q_\text{b}^{m_\text{b}} q_\text{m}^{2\ceil{{(m_\text{b}t_\text{b}+t_\text{cut})}/{t_\text{m}}}}
    \end{align}
\endgroup
\begingroup
    \setlength{\abovedisplayskip}{6pt}   
    \setlength{\belowdisplayskip}{6pt}   
    \setlength{\abovedisplayshortskip}{2pt}
    \setlength{\belowdisplayshortskip}{2pt}
    \begin{align}
        = & ~\frac{\delta}{p_\text{m}^2}\bigg(t_\text{b}\Theta_{cc} \big(q_\text{b},q_\text{m},\frac{t_\text{b}}{t_\text{m}},-\frac{t_\text{cut}}{t_\text{m}},0,1,\infty\big) + t_\text{cut} \Pi_{cc} \big(q_\text{b},q_\text{m},\frac{t_\text{b}}{t_\text{m}},-\frac{t_\text{cut}}{t_\text{m}},0,1,\infty\big) \!\bigg) \nonumber \\
        & \quad - \frac{\delta q_\text{m}}{p_\text{m}(1\!-\!q_\text{m}^2)}\bigg(t_\text{b}\Theta_c\big(q_\text{b},q_\text{m}^2,\frac{t_\text{b}}{t_\text{m}},-\frac{t_\text{cut}}{t_\text{m}},1,\infty\big) + t_\text{cut}\Pi_c\big(q_\text{b},q_\text{m}^2,\frac{t_\text{b}}{t_\text{m}},-\frac{t_\text{cut}}{t_\text{m}},1,\infty\big) \!\bigg) \qquad \quad \\
        \overset{(\text{i})}{=} & ~\frac{\delta}{p_\text{m}^2}\bigg(t_\text{b}\Big(\frac{\Theta_{cc} (q_\text{b},q_\text{m},{t_\text{b}}/{t_\text{m}},-{t_\text{cut}}/{t_\text{m}},0,1,{m^*t_\text{m}}/{t_\text{b}})}{1\!-\!\big(q_\text{b}q_\text{m}^{2t_\text{b}/t_\text{m}}\big)^{m^*t_\text{m}/t_\text{b}}} \nonumber \\
        & \quad + \frac{m^*t_\text{m}}{t_\text{b}}\big(q_\text{b} q_\text{m}^{2t_\text{b}/t_\text{m}}\big)^{m^*t_\text{m}/t_\text{b}} \frac{\Pi_{cc} (q_\text{b},q_\text{m},{t_\text{b}}/{t_\text{m}},-{t_\text{cut}}/{t_\text{m}},0,1,{m^*t_\text{m}}/{t_\text{b}})}{\big(1\!-\!\big(q_\text{b}q_\text{m}^{2t_\text{b}/t_\text{m}}\big)^{m^*t_\text{m}/t_\text{b}}\big)^2} \Big) \nonumber \\
        & \quad + t_\text{cut} \frac{\Pi_{cc} (q_\text{b},q_\text{m},{t_\text{b}}/{t_\text{m}},-{t_\text{cut}}/{t_\text{m}},0,1,{m^*t_\text{m}}/{t_\text{b}} )}{1\!-\!\big(q_\text{b}q_\text{m}^{2t_\text{b}/t_\text{m}}\big)^{m^*t_\text{m}/t_\text{b}}} \!\bigg) \nonumber \\
        & \quad - \frac{\delta q_\text{m}}{p_\text{m}(1\!-\!q_\text{m}^2)} \bigg(t_\text{b}\Big(\frac{\Theta_c(q_\text{b},q_\text{m}^2,{t_\text{b}}/{t_\text{m}},-{t_\text{cut}}/{t_\text{m}},1,{m^*t_\text{m}}/{t_\text{b}})}{1\!-\!\big(q_\text{b}q_\text{m}^{2t_\text{b}/t_\text{m}}\big)^{m^*t_\text{m}/t_\text{b}}} \nonumber \\
        & \quad + \frac{m^* t_\text{m}}{t_\text{b}}\big(q_\text{b} q_\text{m}^{2t_\text{b}/t_\text{m}}\big)^{m^*t_\text{m}/t_\text{b}} \frac{\Pi_c(q_\text{b},q_\text{m}^2,{t_\text{b}}/{t_\text{m}},-{t_\text{cut}}/{t_\text{m}},1,{m^*t_\text{m}}/{t_\text{b}})}{\big(1\!-\!\big(q_\text{b}q_\text{m}^{2t_\text{b}/t_\text{m}}\big)^{m^*t_\text{m}/t_\text{b}}\big)^2} \Big) \nonumber \\
        & \quad + t_\text{cut} \frac{\Pi_c(q_\text{b},q_\text{m}^2,{t_\text{b}}/{t_\text{m}},-{t_\text{cut}}/{t_\text{m}},1,{m^*t_\text{m}}/{t_\text{b}})}{1\!-\!\big(q_\text{b}q_\text{m}^{2t_\text{b}/t_\text{m}} \big)^{m^*t_\text{m}/t_\text{b}}} \!\bigg) ~,
        \label{eq:finite_sum_A_1b^-}
    \end{align}
\endgroup
where we recalled, in (i), that ${z^*(t_\text{b}/t_\text{m}) \!=\! m^*t_\text{m}/t_\text{b}}$.

On $A_{b1}^-$, it follows that
\begingroup
    \setlength{\abovedisplayskip}{6pt}   
    \setlength{\belowdisplayskip}{6pt}   
    \setlength{\abovedisplayshortskip}{2pt}
    \setlength{\belowdisplayshortskip}{2pt}
    \begin{align}
        & ~\mathbb{E}\left(Z \mathds{1}_{A_{b1}^-}\right) \nonumber \\
        = & ~\mathbb{E}\big((X_{\text{min}}\!+\!t_\text{cut}) \mathds{1}_{A_{b1}^-}\big) \\
        = & \underbrace{\!\bigg(\frac{p_\text{m}^2 p_\text{b}}{(1\!-\!p_\text{m})^2(1\!-\!p_\text{b})}\bigg)\!}_{\delta} \sum_{m_1=1}^{\infty}\!(m_1 t_\text{m}\!+\!t_\text{cut}) (\underbrace{1\!-\!p_\text{m}}_{q_\text{m}})^{m_1} \!\!\!\!\!\!\!\sum_{m_\text{b}=\left\lceil (m_1 t_\text{m}+t_\text{cut})/{t_\text{b}}\right\rceil}^{\infty} \!\! (\underbrace{1\!-\!p_\text{b}}_{q_\text{b}})^{m_\text{b}} \!\! \sum_{m_2=m_1}^{\left\lfloor m_\text{b} t_\text{b} / t_\text{m}\right\rfloor} \!\!(\underbrace{1\!-\!p_\text{m}}_{q_\text{m}})^{m_2} \\
        =&~ \frac{\delta}{p_\text{m}} \sum_{m_1=1}^{\infty}(m_1 t_\text{m}\!+\!t_\text{cut}) q_\text{m}^{m_1} \!\!\! \sum_{m_\text{b}=\left\lceil (m_1 t_\text{m}+t_\text{cut})/{t_\text{b}}\right\rceil}^{\infty} \!\!\!\!\!\!\!\!\! q_\text{b}^{m_\text{b}} \big(q_\text{m}^{m_1} - q_\text{m}^{\floor{m_\text{b}t_\text{b}/t_\text{m}}+1} \big) \\
        = & ~\frac{\delta}{p_\text{m} p_\text{b}} \sum_{m_1=1}^{\infty}(m_1 t_\text{m}\!+\!t_\text{cut}) q_\text{m}^{2m_1}q_\text{b}^{\ceil{(m_1t_\text{m}+t_\text{cut})/t_\text{b}}} - \frac{\delta q_\text{m}}{p_\text{m}} \sum_{m_1=1}^{\infty}(m_1 t_\text{m}\!+\!t_\text{cut}) q_\text{m}^{m_1} \!\!\!\!\!\!\!\!\! \sum_{m_\text{b}=\ceil{(m_1t_\text{m}+t_\text{cut})/t_\text{b}}}^{\infty} \!\!\!\!\!\!\!\!\!\!\! q_\text{b}^{m_\text{b}} q_\text{m}^{\floor{m_\text{b}t_\text{b}/t_\text{m}}} \\
        \overset{\text{(i)}}{=} &~ \frac{\delta}{p_\text{m} p_\text{b}}\Big(t_\text{m}\Theta_c\big(q_\text{m}^2,q_\text{b},\frac{t_\text{m}}{t_\text{b}},-\frac{t_\text{cut}}{t_\text{b}},1,\infty\big) + t_\text{cut}\Pi_c\big(q_\text{m}^2,q_\text{b},\frac{t_\text{m}}{t_\text{b}},-\frac{t_\text{cut}}{t_\text{b}},1,\infty\big) \Big) \nonumber \\
        & \quad - \frac{\delta q_\text{m}}{p_\text{m}} \!\! \sum_{m_\text{b}=\ceil{(t_\text{m}+t_\text{cut})/t_\text{b}}}^{\infty} \!\!\!\!\!\!\!\!\!\! q_\text{b}^{m_\text{b}} q_\text{m}^{\floor{m_\text{b}t_\text{b}/t_\text{m}}} \!\!\!\!\!\!\sum_{m_1=1}^{\floor{(m_\text{b}t_\text{b}-t_\text{cut})/t_\text{m}}} \!\!\!\!\!\!\!\!(m_1t_\text{m}\!+\!t_\text{cut})q_\text{m}^{m_1} 
    \end{align}
\endgroup
\begin{align}
    \overset{\text{(ii)}}{=} &~ \frac{\delta}{p_\text{m} p_\text{b}}\bigg( t_\text{m} \Big( \frac{\Theta_c(q_\text{m}^2,q_\text{b},m^*,{t_\text{m}}/{t_\text{b}},-{t_\text{cut}}/{t_\text{b}},1,m^*)}{1\!-\!\big(q_\text{m}^2q_\text{b}^{t_\text{m}/t_\text{b}}\big)^{m^*}} + m^* \underbrace{\big(q_\text{m}^2 q_\text{b}^{t_\text{m}/t_\text{b}} \big)^{m^*}}_{\beta} \frac{\Pi_c(q_\text{m}^2,q_\text{b},{t_\text{m}}/{t_\text{b}},-{t_\text{cut}}/{t_\text{b}},1,m^*)}{\big(1\!-\!\big(q_\text{m}^2q_\text{b}^{t_\text{m}/t_\text{b}}\big)^{m^*}\big)^2}\Big) \nonumber \\
    & \quad + t_\text{cut}\frac{\Pi_c(q_\text{m}^2,q_\text{b},{t_\text{m}}/{t_\text{b}},-{t_\text{cut}}/{t_\text{b}},1,m^*)}{1\!-\!\big(q_\text{m}^2q_\text{b}^{t_\text{m}/t_\text{b}}\big)^{m^*}}  \bigg) - \frac{\delta q_\text{m}}{p_\text{m}} \sum_{m_\text{b}=\ceil{(t_\text{m}+t_\text{cut})/t_\text{b}}}^\infty \!\!\!\!\!\! q_\text{b}^{m_\text{b}} q_\text{m}^{\floor{m_\text{b}t_\text{m}/t_\text{m}}} \Bigg( \frac{t_\text{m}}{(\underbrace{1\!-\!q_\text{m}}_{p_\text{m}})^2} \bigg(q_\text{m} \nonumber \\
    & \quad -\!\Big(\floor{\frac{m_\text{b}t_\text{b}\!-\!t_\text{cut}}{t_\text{m}}}+1\Big) q_\text{m}^{\floor{(m_\text{b}t_\text{b}-t_\text{cut})/t_\text{m}}+1} \!+\! {\floor{\frac{m_\text{b}t_\text{b}\!-\!t_\text{cut}}{t_\text{m}}}}q_\text{m}^{\floor{(m_\text{b}t_\text{b}-t_\text{cut})/t_\text{m}}+2} \bigg) \nonumber \\
    &\quad + \frac{t_\text{cut}}{\underbrace{1\!-\!q_\text{m}}_{p_\text{m}}} \Big(q_\text{m} \!-\! q_\text{m}^{\floor{(m_\text{b}t_\text{b}-t_\text{cut})/t_\text{m}}+1}\Big) \Bigg)  \\
    = & \underbrace{\frac{\delta}{p_\text{m} p_\text{b}(1\!-\!\beta)}\bigg(t_\text{m}\Theta_c\big(q_\text{m}^2,q_\text{b},\frac{t_\text{m}}{t_\text{b}},-\frac{t_\text{cut}}{t_\text{b}},1,m^*\big) + \Big(\frac{m^* t_\text{m}\beta}{1\!-\! \beta}+t_\text{cut}\Big)\Pi_c\big(q_\text{m}^2,q_\text{b},\frac{t_\text{m}}{t_\text{b}},-\frac{t_\text{cut}}{t_\text{b}},1,m^*\big) \!\bigg)}_{=:\thinspace T_{b1^-}} \nonumber \\
    & \quad - \frac{\delta q_\text{m}}{p_\text{m}}\bigg( \Big(\frac{t_\text{m} q_\text{m}}{p_\text{m}^2}\!+\!\frac{t_\text{cut}q_\text{m}}{p_\text{m}}\Big) \!\!\! \sum_{m_\text{b}=\ceil{(t_\text{m}+t_\text{cut})/t_\text{b}}}^{\infty} \!\!\!\!\!\!\!\!\!\! q_\text{b}^{m_\text{b}} q_\text{m}^{\floor{m_\text{b}t_\text{b}/t_\text{m}}} \nonumber \\
    & \quad - \Big(\frac{t_\text{m} q_\text{m}}{p_\text{m}^2}\!+\!\frac{t_\text{cut}q_\text{m}}{p_\text{m}}\Big)\!\!\sum_{m_\text{b}=\ceil{(t_\text{m}+t_\text{cut})/t_\text{b}}}^{\infty} \!\!\!\!\!\!\!\!\!\! q_\text{b}^{m_\text{b}} q_\text{m}^{\floor{m_\text{b}t_\text{b}/t_\text{m}}+\floor{(m_\text{b}t_\text{b}-t_\text{cut})/t_\text{m}}} \nonumber \\
    & \quad - \frac{t_\text{m} q_\text{m}}{p_\text{m}^2}(\underbrace{1\!-\!q_\text{m}}_{p_\text{m}})\!\!\!\! \sum_{m_\text{b}=\ceil{(t_\text{m}+t_\text{cut})/t_\text{b}}}^{\infty} \!\!\floor{\frac{m_\text{b}t_\text{b}\!-\!t_\text{cut}}{t_\text{m}}} q_\text{b}^{m_\text{b}} q_\text{m}^{\floor{m_\text{b}t_\text{b}/t_\text{m}}+\floor{(m_\text{b}t_\text{b}-t_\text{cut})/t_\text{m}}} \bigg) \\
    = & ~T_{b1^-} - \frac{\delta q_\text{m}^2}{p_\text{m}^2}\bigg(\Big(\frac{t_\text{m}}{p_\text{m}}\!+\!t_\text{cut}\Big) \Big( \!\!\! \sum_{m_\text{b}=\ceil{(t_\text{m}+t_\text{cut})/t_\text{b}}}^{\infty} \!\!\!\!\!\!\!\!\!\! q_\text{b}^{m_\text{b}} q_\text{m}^{\floor{m_\text{b}t_\text{b}/t_\text{m}}}~ - \!\!\sum_{m_\text{b}=\ceil{(t_\text{m}+t_\text{cut})/t_\text{b}}}^{\infty} \!\!\!\!\!\!\!\!\!\! q_\text{b}^{m_\text{b}} q_\text{m}^{\floor{m_\text{b}t_\text{b}/t_\text{m}}+\floor{(m_\text{b}t_\text{b}-t_\text{cut})/t_\text{m}}} \Big) \nonumber \\
    & \quad - t_\text{m}\!\!\!\! \sum_{m_\text{b}=\ceil{(t_\text{m}+t_\text{cut})/t_\text{b}}}^{\infty} \!\!\floor{\frac{m_\text{b}t_\text{b}\!-\!t_\text{cut}}{t_\text{m}}} q_\text{b}^{m_\text{b}} q_\text{m}^{\floor{m_\text{b}t_\text{b}/t_\text{m}}+\floor{(m_\text{b}t_\text{b}-t_\text{cut})/t_\text{m}}} \bigg) \\
    = & ~T_{b1^-} \!-\! \frac{\delta q_\text{m}^2}{p_\text{m}^2} \bigg( \!\Big(\frac{t_\text{m}}{p_\text{m}}\!+\!t_\text{cut}\Big) \Big(\Pi_f\big(q_\text{b},q_\text{m},\frac{t_\text{b}}{t_\text{m}},0,\ceil{\frac{t_\text{m}\!+\!t_\text{cut}}{t_\text{b}}},\infty\big) \nonumber \\
    & \quad \! - \!\Pi_{ff}\big(q_\text{b},q_\text{m},\frac{t_\text{b}}{t_\text{m}},0,\frac{t_\text{cut}}{t_\text{m}},\!\ceil{\frac{t_\text{m}\!+\!t_\text{cut}}{t_\text{b}}}\!,\infty\big)\! \Big)  -\! {t_\text{m}} \Theta_{ff}\big(q_\text{b},q_\text{m},\frac{t_\text{b}}{t_\text{m}},0,\frac{t_\text{cut}}{t_\text{m}}, \!\ceil{\frac{t_\text{m}\!+\!t_\text{cut}}{t_\text{b}}}\!,\infty\big) \!\bigg)
\end{align}
\begin{align}
    \overset{(\text{iii})}{=} & ~T_{b1^-}^{(1)} \!-\! \frac{\delta q_\text{m}^2}{p_\text{m}^2} \bigg( \Big(\frac{t_\text{m}}{p_\text{m}}\!+\!t_\text{cut}\Big) \Big(\frac{1}{1 \!-\! \big(q_\text{b} q_\text{m}^{t_\text{b}/t_\text{m}}\big)^{m^*t_\text{m}/t_\text{b}} } \Pi_f\big(q_\text{b},q_\text{m},\frac{t_\text{b}}{t_\text{m}},0,\ceil{\frac{t_\text{m}\!+\!t_\text{cut}}{t_\text{b}}},\ceil{\frac{t_\text{m}\!+\!t_\text{cut}}{t_\text{b}}} \!+\!\frac{m^*t_\text{m}}{t_\text{b}}\!-\!1\big) \nonumber \\
    & \quad - \frac{1}{1 \!-\! \big(q_\text{b} q_\text{m}^{2t_\text{b}/t_\text{m}}\big)^{m^*t_\text{m}/t_\text{b}}} \Pi_{ff}\big(q_\text{b},q_\text{m},\frac{t_\text{b}}{t_\text{m}},0,\frac{t_\text{cut}}{t_\text{m}},\ceil{\frac{t_\text{m}\!+\!t_\text{cut}}{t_\text{b}}},\ceil{\frac{t_\text{m}\!+\!t_\text{cut}}{t_\text{b}}}\!+\!\frac{m^*t_\text{m}}{t_\text{b}}\!-\!1\big) \Big) \nonumber \\
    & \quad - {t_\text{m}} \Big(\frac{1}{1 \!-\! \big(q_\text{b} q_\text{m}^{2t_\text{b}/t_\text{m}}\big)^{m^*t_\text{m}/t_\text{b}}} \Theta_{ff}\big(q_\text{b},q_\text{m},\frac{t_\text{b}}{t_\text{m}},0,\frac{t_\text{cut}}{t_\text{m}},\ceil{\frac{t_\text{m}\!+\!t_\text{cut}}{t_\text{b}}},\ceil{\frac{t_\text{m}\!+\!t_\text{cut}}{t_\text{b}}}\!+\!\frac{m^*t_\text{m}}{t_\text{b}}\!-\!1\big) \nonumber \\
    & \quad + \frac{m^* t_\text{m}}{t_\text{b}}\frac{t_\text{b}}{t_\text{m}}\frac{\big(q_\text{b} q_\text{m}^{2t_\text{b}/t_\text{m}}\big)^{m^*t_\text{m}/t_\text{b}}}{\big(1 \!-\! \big(q_\text{b} q_\text{m}^{2t_\text{b}/t_\text{m}}\big)^{m^*t_\text{m}/t_\text{b}}\big)^2} \Pi_{ff}\big(q_\text{b},q_\text{m},\frac{t_\text{b}}{t_\text{m}},0,\frac{t_\text{cut}}{t_\text{m}},\ceil{\frac{t_\text{m}\!+\!t_\text{cut}}{t_\text{b}}},\ceil{\frac{t_\text{m}\!+\!t_\text{cut}}{t_\text{b}}}\!+\!\frac{m^*t_\text{m}}{t_\text{b}}\!-\!1 \big) \Big)\bigg) ~.
    \label{eq:finite_sum_A_b1^-}
\end{align}
\looseness=-1 
Note that we have changed the order of summation in (i).
We used in (ii) and (iii) that ${z^*(t_\text{m}/t_\text{b})\!=\!m^*}$ and ${z^*(t_\text{b}/t_\text{m})\!=\!m^*t_\text{m}/t_\text{b}}$, respectively, and applied the identity~\eqref{eq:sum_i_x^i}.

On $A_{12}^- A_{1b}^-$, we have
\begingroup
    \setlength{\abovedisplayskip}{6pt}   
    \setlength{\belowdisplayskip}{10pt}   
    \setlength{\abovedisplayshortskip}{2pt}
    \setlength{\belowdisplayshortskip}{2pt}
    \begin{align}
        & ~\mathbb{E}\big(Z \mathds{1}_{A_{12}^- A_{1b}^-}\big) \nonumber \\
        = & ~\mathbb{E}\big((X_{\text{min}}\!+\!t_\text{cut}) \mathds{1}_{A_{12}^-A_{1b}^-}\big) \\
        = & ~\underbrace{\bigg(\frac{p_\text{m}^2 p_\text{b}}{(1\!-\!p_\text{m})^2(1\!-\!p_\text{b})}\bigg)}_{\delta} \sum_{k=1}^{\infty} (km^* t_\text{m}\!+\!t_\text{cut}) (\underbrace{1\!-\!p_\text{m}}_{q_\text{m}})^{km^*} (\underbrace{1\!-\!p_\text{b}}_{q_\text{b}})^{k m^*t_\text{m}/t_\text{b}} \!\!\!\!\!\! \sum_{m_1=k m^*+\left\lceil t_\text{cut} / t_\text{m}\right\rceil}^{\infty}\!\!\! (\underbrace{1\!-\!p_\text{m}}_{q_\text{m}})^{m_1} \\
        = & ~\delta \sum_{k=1}^{\infty} (k m^* t_\text{m}\!+\!t_\text{cut}) \big(q_\text{m}^{m^*} q_\text{b}^{m^* t_\text{m} / t_\text{b}}\big)^k ~\frac{q_\text{m}^{km^*+\left\lceil t_\text{cut} / t_\text{m}\right\rceil}}{1\!-\!q_\text{m}} \\
        = & ~\frac{\delta q_\text{m}^{\ceil{t_\text{cut}/t_\text{m}}}}{p_\text{m}} \sum_{k=1}^\infty (k m^* t_\text{m} + t_\text{cut}) ~\big( \underbrace{q_\text{m}^{2m^*}q_\text{b}^{m^*t_\text{m}/t_\text{b}}}_{\beta}\big)^k        \\
        =&~ \frac{\delta q_\text{m}^{\ceil{t_\text{cut}/t_\text{m}}}}{p_\text{m}} \Big( m^*t_\text{m}\sum_{k=1}^{\infty} k\beta^k + t_\text{cut} \sum_{k=1}^{\infty} \beta^k \Big)  \\
        = & ~\frac{\delta q_\text{m}^{\ceil{t_\text{cut}/t_\text{m}}}\beta}{p_\text{m}(1\!-\!\beta)}\Big(\frac{m^* t_\text{m}}{1\!-\!\beta} + t_\text{cut}\Big)~
        \label{eq:finite_sum_A_12^-A_1b^-}
    \end{align}
\endgroup
Further, on $A_{b1}^- A_{b2}^-$,
\begingroup
    \setlength{\abovedisplayskip}{6pt}   
    \setlength{\belowdisplayskip}{6pt}   
    \setlength{\abovedisplayshortskip}{2pt}
    \setlength{\belowdisplayshortskip}{2pt}
    \begin{align}
        & ~\mathbb{E}\big(Z \mathds{1}_{A_{b1}^- A_{b2}^-}\big) \nonumber \\
        = & ~\mathbb{E}\big((X_{\text{min}}\!+\!t_\text{cut}) \mathds{1}_{A_{b1}^-A_{b2}^-}\big) \\
        = & ~\underbrace{\bigg(\frac{p_\text{m}^2 p_\text{b}}{(1\!-\!p_\text{m})^2(1\!-\!p_\text{b})}\bigg)}_{\delta} \sum_{m_1=1}^{\infty}(m_1 t_\text{m}\!+\!t_\text{cut}) \big( \underbrace{1\!-\!p_\text{m}}_{q_\text{m}}\big)^{2 m_1} \!\!\!\!\!\!\! \sum_{m_\text{b}=\left\lceil(m_1 t_\text{m}+t_\text{cut})/t_\text{b} \right\rceil}^{\infty} \!\!\!\! \big( \underbrace{1\!-\!p_\text{b}}_{q_\text{b}}\big)^{m_\text{b}} \\
        = & ~\frac{\delta}{p_\text{b}}\sum_{m_1=1}^\infty (m_1 t_\text{m}\!+\!t_\text{cut}) q_\text{m}^{2m_1} q_\text{b}^{\ceil{(m_1 t_\text{m}+t_\text{cut})/t_\text{b}}}  \\
        = & ~\frac{\delta}{p_\text{b}} \Big( t_\text{m}\Theta_c\big(q_\text{m}^2,q_\text{b},\frac{t_\text{m}}{t_\text{b}},-\frac{t_\text{cut}}{t_\text{b}},1,\infty\big) + t_\text{cut}\Pi_c\big(q_\text{m}^2,q_\text{b},\frac{t_\text{m}}{t_\text{b}},-\frac{t_\text{cut}}{t_\text{b}},1,\infty\big) \Big) \\
        \overset{(\text{i})}{=} & ~\frac{\delta}{p_\text{b}} \bigg( t_\text{m}\Big( \frac{\Theta_c(q_\text{m}^2,q_\text{b},{t_\text{m}}/{t_\text{b}},-{t_\text{cut}}/{t_\text{b}},1,m^*)}{1\!-\!\big(q_\text{m}^2q_\text{b}^{t_\text{m}/t_\text{b}} \big)^{m^*}} + m^* \big(q_\text{m}^2q_\text{b}^{t_\text{m}/t_\text{b}}\big)^{m^*} \frac{\Pi_c(q_\text{m}^2,q_\text{b},{t_\text{m}}/{t_\text{b}},-{t_\text{cut}}/{t_\text{b}},1,m^*)}{\big(1\!-\!\big(q_\text{m}^2q_\text{b}^{t_\text{m}/t_\text{b}} \big)^{m^*}\big)^2} \Big) \nonumber \\
        & \quad + t_\text{cut}\Pi_c(q_\text{m}^2,q_\text{b},{t_\text{m}}/{t_\text{b}},-{t_\text{cut}}/{t_\text{b}},1,m^*) \bigg)~,
        \label{eq:finite_sum_A_b1^-A_b2^-}
    \end{align}
\endgroup
where, in (i), we recall that ${z^*(t_\text{m}/t_\text{b})\!=\!m^*}$.
Similarly, on $A_1^- A_2^-$, 
\begingroup
    \setlength{\abovedisplayskip}{6pt}   
    \setlength{\belowdisplayskip}{6pt}   
    \setlength{\abovedisplayshortskip}{2pt}
    \setlength{\belowdisplayshortskip}{2pt}
\begin{align}
        & ~\mathbb{E}\big(Z \mathds{1}_{A_1^- A_2^-}\big) \nonumber \\
        = & ~\mathbb{E}\big((X_{\text{min}}\!+\!t_\text{cut}) \mathds{1}_{A_{1}^-A_{2}^-}\big) \\
        = & ~\underbrace{\bigg(\frac{p_\text{m}^2 p_\text{b}}{(1\!-\!p_\text{m})^2(1\!-\!p_\text{b})}\bigg)}_{\delta} \sum_{m_\text{b}=1}^{\infty} (m_\text{b} t_\text{b} \!+\! t_\text{cut}) \big(\underbrace{1\!-\!p_\text{b}}_{q_\text{b}} \big)^{m_\text{b}} \!\!\!\sum_{m_1=\left\lceil (m_\text{b} t_\text{b} + t_\text{cut})/t_\text{m} \right\rceil}^{\infty} \!\!\! \big(\underbrace{1\!-\!p_\text{m}}_{q_\text{m}} \big)^{2m_1} \\
        = & ~\frac{\delta}{1\!-\!q_\text{m}^2}\sum_{m_\text{b}=1}^{\infty} (m_\text{b} t_\text{b} + t_\text{cut}) q_\text{b}^{m_\text{b}} q_\text{m}^{2 \left\lceil (m_\text{b} t_\text{b} + t_\text{cut})/t_\text{m} \right\rceil} 
    \end{align}
\endgroup
\begingroup
    \setlength{\abovedisplayskip}{6pt}   
    \setlength{\belowdisplayskip}{6pt}   
    \setlength{\abovedisplayshortskip}{2pt}
    \setlength{\belowdisplayshortskip}{2pt}
    \begin{align}
        = & ~\frac{\delta}{1\!-\!q_\text{m}^2} \Big(t_\text{b}\Theta_c\big(q_\text{b},q_\text{m}^2,\frac{t_\text{b}}{t_\text{m}},-\frac{t_\text{cut}}{t_\text{m}},1,\infty\big) + t_\text{cut}\Pi_c\big(q_\text{b},q_\text{m}^2,\frac{t_\text{b}}{t_\text{m}},-\frac{t_\text{cut}}{t_\text{m}},1,\infty\big) \Big) \qquad~ \\
        \overset{(\text{i})}{=} & ~\frac{\delta}{1\!-\!q_\text{m}^2} \bigg(t_\text{b} \Big( \frac{\Theta_c(q_\text{b},q_\text{m}^2,{t_\text{b}}/{t_\text{m}},-{t_\text{cut}}/{t_\text{m}},1,{m^*t_\text{m}}/{t_\text{b}})}{1\!-\!\big(q_\text{b} q_\text{m}^{2t_\text{b}/t_\text{m}}\big)^{m^* t_\text{m}/t_\text{b}}} \nonumber \\
        & \quad + \frac{m^*t_\text{m}}{t_\text{b}} \big( q_\text{b} q_\text{m}^{2t_\text{b}/t_\text{m}}\big)^{m^*t_\text{m}/t_\text{b}}  \frac{\Pi_c(q_\text{b},q_\text{m}^2,{t_\text{b}}/{t_\text{m}},-{t_\text{cut}}/{t_\text{m}},1,{m^*t_\text{m}}/{t_\text{b}})}{\big(1\!-\!\big(q_\text{b} q_\text{m}^{2t_\text{b}/t_\text{m}}\big)^{m^* t_\text{m}/t_\text{b}}\big)^2} \nonumber \\
        & \quad + t_\text{cut}\frac{\Pi_c(q_\text{b},q_\text{m}^2,{t_\text{b}}/{t_\text{m}},-{t_\text{cut}}/{t_\text{m}},1,{m^*t_\text{m}}/{t_\text{b}})}{1\!-\!\big(q_\text{b} q_\text{m}^{2t_\text{b}/t_\text{m}}\big)^{m^* t_\text{m}/t_\text{b}}} \bigg) ~,
        \label{eq:finite_sum_A_1^-A_2^-}    
    \end{align}
\endgroup
where, in (i), we recall that ${z^*(t_\text{b}/t_\text{m})\!=\!m^*t_\text{m}/t_\text{b}}$.
Finally, on $A_1^- A_b^-$, we have
\begin{align}
    & ~\mathbb{E}\big(Z \mathds{1}_{A_1^- A_b^-}\big) \nonumber \\
    = & ~\mathbb{E}\big((X_{\text{min}}\!+\!t_\text{cut}) \mathds{1}_{A_{1}^-A_{b}^-}\big) \\
    = & ~\underbrace{\bigg(\frac{p_\text{m}^2 p_\text{b}}{(1\!-\!p_\text{m})^2(1\!-\!p_\text{b})}\bigg)}_{\delta} \sum_{m_2=1}^{\infty} (m_2 t_\text{m} \!+\! t_\text{cut}) (\underbrace{1\!-\!p_\text{m}}_{q_\text{m}})^{m_2} \!\!\!\! \sum_{k=\left\lceil (m_2 t_\text{m} + t_\text{cut})/(m^* t_\text{m}) \right\rceil}^{\infty} \!\!\!\! {(\underbrace{1\!-\!p_\text{m}}_{q_\text{m}})}^{km^*} {(\underbrace{1\!-\!p_\text{b}}_{q_\text{b}})}^{km^*t_\text{m}/t_\text{b}} \\
    = & ~\delta \sum_{m_2=1}^{\infty} (m_2 t_\text{m} \!+\! t_\text{cut}) q_\text{m}^{m_2} \sum_{k=\left\lceil (m_2 t_\text{m} + t_\text{cut})/(m^* t_\text{m}) \right\rceil}^\infty {\big(\underbrace{q_\text{m}^{m^*}q_\text{b}^{m^*t_\text{m}/t_\text{b}}}_{\gamma}\big)}^{k}   \\
    = & ~\frac{\delta}{1\!-\!\gamma}\sum_{m_2=1}^{\infty} (m_2 t_\text{m} \!+\! t_\text{cut}) q_\text{m}^{m_2}\gamma^{\left\lceil (m_2 t_\text{m} + t_\text{cut})/(m^*t_\text{m}) \right\rceil}   \\
    = &~ \frac{\delta}{1\!-\!\gamma}\Big( t_\text{m} \Theta_c\big(q_\text{m},\gamma,\frac{1}{m^*},-\frac{t_\text{cut}}{m^*t_\text{m}},1,\infty\big) + t_\text{cut}\Pi_c\big(q_\text{m},\gamma,\frac{1}{m^*},-\frac{t_\text{cut}}{m^*t_\text{m}},1,\infty\big) \Big) \\
    \overset{(\text{i})}{=} &~ \frac{\delta}{1\!-\!\gamma} \bigg(t_\text{m}\Big(\frac{\Theta_c(q_\text{m},\gamma,{1}/{m^*},-{t_\text{cut}}/{(m^*t_\text{m})},1,m^*)}{1\!-\!\big(q_\text{m}\gamma^{1/m^*}\big)^{m^*}} \nonumber\\
    & ~ + m^* \big(q_\text{m}\gamma^{1/m^*}\big)^{m^*} \frac{\Pi_c(q_\text{m},\gamma,{1}/{m^*},-{t_\text{cut}}/{(m^*t_\text{m})},1,m^*)}{\big(1\!-\!\big(q_\text{m}\gamma^{1/m^*}\big)^{m^*}\big)^2} \Big) \!+\! t_\text{cut}\frac{\Pi_c(q_\text{m},\gamma,{1}/{m^*},-{t_\text{cut}}/{(m^*t_\text{m})},1,m^*)}{1\!-\!\big(q_\text{m}\gamma^{1/m^*}\big)^{m^*}} \bigg) ~,
    \label{eq:finite_sum_A_1^-A_b^-}
\end{align}
where in (i), we note that ${z^*(1/m^*) \!=\! m^*}$ and $\big(q_\text{m}\gamma^{1/m^*}\big)^{m^*} \!\!=\! \beta$.
By plugging~\eqref{eq:finite_sum_A_12^-}, \eqref{eq:finite_sum_A_1b^-}, \eqref{eq:finite_sum_A_b1^-}, \eqref{eq:finite_sum_A_12^-A_1b^-}, \eqref{eq:finite_sum_A_b1^-A_b2^-}, \eqref{eq:finite_sum_A_1^-A_2^-}, and \eqref{eq:finite_sum_A_1^-A_b^-} in~\eqref{eq:inclusion-exclusion_part_2}, we obtain the value of $\mathbb{E}(Z\mathds{1}_{Y=0})$.
Thus, by plugging these in~\eqref{eq:zy1},~\eqref{eq:inclusion-exclusion_part_2}, and~\eqref{eq:ent_gen_time_int} and subsequently in~\eqref{eq:def_rate_teleportation_int}, we obtain the teleportation rate in the IN.

\subsection{Closed-form Expressions for Relevant Infinite Sums} \label{subsec:InfiniteToFiniteSum}

In this appendix, we derive closed-form expressions for infinite sums  in \eqref{eq:Pi_c_finite},\eqref{eq:Pi_f_finite},\eqref{eq:Theta_c_finite},\eqref{eq:Theta_f_finite}, which facilitates exact computations of these quantities.
We first consider the following term with \( x, y \in [0,1) \), \( q \in \mathbb{Q} \), and \( \alpha \in \mathbb{R} \):
\begingroup
    \setlength{\abovedisplayskip}{0pt}   
    \setlength{\belowdisplayskip}{6pt}   
    \setlength{\abovedisplayshortskip}{2pt}
    \setlength{\belowdisplayshortskip}{2pt}
    \begin{align}\label{eq:recallPiC}
        \Pi_c(x,y,q,\alpha,l,\infty) = \sum_{i=l}^{\infty} x^i y^{\lceil iq - \alpha \rceil}~.
    \end{align}
\endgroup
We now define
\begingroup
    \setlength{\abovedisplayskip}{2pt}   
    \setlength{\belowdisplayskip}{6pt}   
    \setlength{\abovedisplayshortskip}{2pt}
    \setlength{\belowdisplayshortskip}{2pt}
    \begin{align}
        z^*(q) = \min\{z \in \mathbb{N} : zq \in \mathbb{N}\}~.
    \end{align}
\endgroup
\looseness=-1
For brevity, we often suppress the argument $q$ of $z^*$ when it is clear from the context.
Observing that
\begingroup
    \setlength{\abovedisplayskip}{2pt}   
    \setlength{\belowdisplayskip}{6pt}   
    \setlength{\abovedisplayshortskip}{2pt}
    \setlength{\belowdisplayshortskip}{2pt}
    \begin{align}\label{eq:recurrencePattern}
        \sum_{i = l+z^*}^{l+2z^*-1} x^i y^{\lceil iq - \alpha \rceil} = (xy^q)^{z^*} \sum_{i=l}^{l+z^*-1} x^i y^{\lceil iq - \alpha \rceil} = (xy^q)^{z^*} \Pi_c(x, y, q, \alpha, l, l\!+\!z^*\!-\!1)~,
    \end{align}
\endgroup
we can rewrite~\eqref{eq:recallPiC} as
\begingroup
    \setlength{\abovedisplayskip}{2pt}   
    \setlength{\belowdisplayskip}{6pt}   
    \setlength{\abovedisplayshortskip}{2pt}
    \setlength{\belowdisplayshortskip}{2pt}
    \begin{align}
        \Pi_c(x,y,q,\alpha,l,\infty) = \sum_{j=0}^\infty (xy^q)^{jz^*} \sum_{i=l}^{l+z^*-1} x^i y^{\lceil iq - \alpha \rceil} = \frac{\Pi_c(x,y,q,\alpha,l,l\!+\!z^*\!-\!1)}{1 - (xy^q)^{z^*}}~.
    \end{align}
\endgroup
\looseness = -1 Similarly, we obtain the closed-form expressions for $\Pi_{f}, \Pi_{cc}, \Pi_{cf}$ and $\Pi_{ff}$, introduced in~\eqref{eq:def_Pi}, ~\eqref{eq:def_Pi_cc, Pi_cf}, and~\eqref{eq:def_Pi_ff}.

Next, we consider
\begingroup
    \setlength{\abovedisplayskip}{-4pt}   
    \setlength{\belowdisplayskip}{6pt}   
    \setlength{\abovedisplayshortskip}{2pt}
    \setlength{\belowdisplayshortskip}{2pt}
    \begin{align} \label{eq:Theta_cInfiniteSum}
        \Theta_c(x,y,q,\alpha,l,\infty) = \sum_{i=l}^{\infty} i x^i y^{\lceil iq - \alpha \rceil}~.
    \end{align}
\endgroup
Similar to~\eqref{eq:recurrencePattern}, we observe that
\begingroup
    \setlength{\abovedisplayskip}{6pt}   
    \setlength{\belowdisplayskip}{6pt}   
    \setlength{\abovedisplayshortskip}{2pt}
    \setlength{\belowdisplayshortskip}{2pt}
    \begin{align}
        \sum_{i=l+z^*}^{l+2z^*-1} i x^i y^{\lceil iq - \alpha \rceil} = (xy^q)^{z^*} \big(\Theta_c(x,y,q,\alpha,l,l\!+\!z^*\!-\!1) + z^*\Pi_c(x,y,q,\alpha,l,l\!+\!z^*\!-\!1)\big).
    \end{align}
\endgroup
Therefore,
\begingroup
    \setlength{\abovedisplayskip}{6pt}   
    \setlength{\belowdisplayskip}{6pt}   
    \setlength{\abovedisplayshortskip}{2pt}
    \setlength{\belowdisplayshortskip}{2pt}
    \begin{align}
        \Theta_c(x,y,q,\alpha,1,\infty) = &~ \sum_{j=0}^\infty (xy^q)^{jz^*}\sum_{i=l}^{l+z^*-1} i x^i y^{\lceil iq-\alpha \rceil} + \sum_{j=1}^\infty jz^* (xy^q)^{jz^*}\sum_{i=l}^{l+z^*-1} x^i y^{\lceil iq-\alpha \rceil} \\
        = &~ \frac{\Theta_c(x,y,q,\alpha,l,l\!+\!z^*\!-\!1)}{1 - (xy^q)^{z^*}} + \frac{z^* (xy^q)^{z^*} \Pi_c(x,y,q,\alpha,l,l\!+\!z^*\!-\!1)}{(1 - (xy^q)^{z^*})^2}~.
    \end{align}
\endgroup
Similarly, we can obtain closed-form expressions for the rest of the terms in~\eqref{eq:def_Theta}, ~\eqref{eq:def_Theta_cc, Theta_cf}, and ~\eqref{eq:def_Theta_ff}.
Finally, for the following term, with $x,y,s\in [0,1)$, $r, q\in\mathbb{Q}$, and $\kappa \in \mathbb{R}$:
\begingroup
    \setlength{\abovedisplayskip}{6pt}   
    \setlength{\belowdisplayskip}{6pt}   
    \setlength{\abovedisplayshortskip}{2pt}
    \setlength{\belowdisplayshortskip}{2pt}
    \begin{align} \label{eq:delta_infity}
        \Delta(x,y,s,r,\kappa,q,l,\infty) = \sum_{i=l}^{\infty} x^i \sum_{j=\ceil{ir-\kappa}}^{\floor{ir}} y^j s^{\ceil{jq}} ~,
    \end{align}
\endgroup
we define
\begingroup
    \setlength{\abovedisplayskip}{-2pt}   
    \setlength{\belowdisplayskip}{6pt}   
    \setlength{\abovedisplayshortskip}{2pt}
    \setlength{\belowdisplayshortskip}{2pt}
    \begin{align}
        \Bar{z}(r,q) = \min \big\{z\in\mathbb{N} : zr, zrq \in \mathbb{N} \big\}
    \end{align}
\endgroup
Suppressing the arguments $q$ and $r$ of $\Bar{z}$ for brevity, we observe that
\begingroup
    \setlength{\abovedisplayskip}{6pt}   
    \setlength{\belowdisplayskip}{2pt}   
    \setlength{\abovedisplayshortskip}{2pt}
    \setlength{\belowdisplayshortskip}{2pt}
    \begin{align}
        \sum_{i=l+\Bar{z}}^{l+2\Bar{z}-1} x^i \sum_{j=\ceil{ir-\kappa}}^{\floor{ir}} y^j s^{\ceil{jq}} & = (x y^r s^{rq})^{\Bar{z}} \sum_{i=l}^{l+\Bar{z}-1} x^i \sum_{j=\ceil{ir-\kappa}}^{\floor{ir}} y^j s^{\ceil{jq}} = (x y^r s^{rq})^{\Bar{z}} \Delta(x,y,s,r,\kappa,q,l,l\!+\!\Bar{z}\!-\!1)~.
    \end{align}
\endgroup
Thus, ~\eqref{eq:delta_infity} can be expressed as 
\begingroup
    \setlength{\abovedisplayskip}{6pt}   
    \setlength{\belowdisplayskip}{6pt}   
    \setlength{\abovedisplayshortskip}{2pt}
    \setlength{\belowdisplayshortskip}{2pt}
    \begin{align}
        \Delta(x,y,s,r,\kappa,q,l,\infty) = \sum_{n=1}^{\infty} (x y^r s^{rq})^{n\Bar{z}} \sum_{i=l}^{l+\Bar{z}-1} x^i \sum_{j=\ceil{ir-\kappa}}^{\floor{ir}} y^j s^{\ceil{jq}} = \frac{\Delta(x,y,s,r,\kappa,q,l,l\!+\!\Bar{z}\!-\!1)}{1\!-\!(x y^r s^{rq})^{\Bar{z}}} ~.
    \end{align}
\endgroup

\section{Calculation of Individual Terms in the Expression of Expected Teleportation Fidelity in the Intercity Network}
\label{sec:cutOff_fidelity}

\subsection{Calculations for individual terms in \texorpdfstring{$U(v,\alpha)$}{U(v,alpha)}}
\label{subsec:terms_U_alpha}

Here, we calculate the individual terms of~\eqref{eq:def_U_alpha}, which facilitates the calculation of $U_1(v)$ and $U_2(v)$ via~\eqref{eq:U_1_andU_2_to_U_alpha}.
On $A_{12}^+$
\begingroup
    \setlength{\abovedisplayskip}{6pt}   
    \setlength{\belowdisplayskip}{6pt}   
    \setlength{\abovedisplayshortskip}{2pt}
    \setlength{\belowdisplayshortskip}{2pt}
    \begin{align}
        & ~\mathbb{E}(e^{-v((\alpha-1)X_1 - X_2)} \mathds{1}_{A_{12}^+}) \nonumber \\
        =& \underbrace{\bigg(\frac{p_\text{m}^2 p_\text{b}}{(1\!-\!p_\text{m})^2(1\!-\!p_\text{b})}\bigg)}_{\delta} \sum_{m_1=\left\lceil t_\text{b} / t_\text{m}\right\rceil}^{\infty} \sum_{\substack{m_2=\text{max}(1,\\ \left\lceil m_1-t_\text{cut}^{\prime}/t_\text{m}\right\rceil)}}^{m_1} \sum_{m_\text{b}=\left\lceil m_2 t_\text{m}/t_\text{b}\right\rceil}^{\left\lfloor m_1 t_\text{m}/t_\text{b}\right\rfloor} \!\!\!\!\!\! e^{-v(\alpha-1) m_1t_\text{m} + vm_2t_\text{m}} (\underbrace{1\!-\!p_\text{m}}_{q_\text{m}})^{m_1} (\underbrace{1\!-\!p_\text{m}}_{q_\text{m}})^{m_2} (\underbrace{1\!-\!p_\text{b}}_{q_\text{b}})^{m_\text{b}}   \\
        =&~ \delta \sum_{m_1=\left\lceil t_\text{b} / t_\text{m}\right\rceil}^{\infty}\big(\underbrace{e^{-v(\alpha-1) t_\text{m}} q_\text{m}}_{=:\thinspace d_s}\big)^{m_1} \!\!\!\!\!\sum_{m_2=\text{max}(1, \left\lceil m_1-t_\text{cut}^{\prime}/t_\text{m}\right\rceil)}^{m_1} \!\!\!\big(\underbrace{e^{v t_\text{m}} q_\text{m}}_{=:\thinspace d_m}\big)^{m_2} \sum_{m_\text{b}=\left\lceil m_2 t_\text{m}/t_\text{b}\right\rceil}^{\left\lfloor m_1 t_\text{m}/t_\text{b}\right\rfloor} q_\text{b}^{m_\text{b}} \\
        =&~ \frac{\delta}{1\!-\!q_\text{b}} \sum_{m_1=\left\lceil t_\text{b}/t_\text{m}\right\rceil}^{\infty} \!\! d_s^{m_1} \!\!\!\!\!\! \sum_{m_2=\max \left(1,\left\lceil m_1-t_\text{cut}^\prime / t_\text{m}\right\rceil\right)}^{m_1} \!\!\!\!\!\!\!\!\! d_m^{m_2} \big( q_\text{b}^{\ceil{m_2 t_\text{m}/t_\text{b}}} - q_\text{b}^{\floor{m_1t_\text{m}/t_\text{b}}+1} \big) \\
        =&~ \frac{\delta}{p_\text{b}} \sum_{m_1=\left\lceil t_\text{b}/t_\text{m}\right\rceil}^{\infty} \!\! d_s^{m_1} \!\!\!\!\!\! \sum_{m_2=\max \left(1,\left\lceil m_1-t_\text{cut}^\prime / t_\text{m}\right\rceil\right)}^{m_1} \!\!\!\!\!\!\!\!\!\!\!\!\!\! d_m^{m_2} q_\text{b}^{\ceil{m_2 t_\text{m}/t_\text{b}}} - \frac{\delta q_\text{b}}{p_\text{b}(1\!-\!d_m)} \!\!\sum_{m_1=\left\lceil t_\text{b}/t_\text{m}\right\rceil}^\infty \!\!\!\!\! d_s^{m_1} q_\text{b}^{\left\lfloor m_1 t_\text{m}/t_\text{b}\right\rfloor}
        (d_m^{\max(1,\ceil{m_1-t_\text{cut}^\prime/t_\text{m}})} \!-\! d_m^{m_1+1}) \\
        =&~ \frac{\delta}{p_\text{b}} \bigg( \underbrace{\sum_{m_1=\left\lceil t_\text{b}/t_\text{m}\right\rceil}^{1+\floor{t_\text{cut}^\prime/t_\text{m}}} \!\! d_s^{m_1} \sum_{m_2=1}^{m_1} d_m^{m_2} q_\text{b}^{\ceil{m_2 t_\text{m}/t_\text{b}}}}_{=:\thinspace S_{{12}^+}^{(1)}} + \sum_{m_1=2+\floor{t_\text{cut}^\prime/t_\text{m}}}^{\infty} \!\!\! d_s^{m_1} \!\!\! \sum_{m_2=\ceil{m_1-t_\text{cut}^\prime/t_\text{m}}}^{m_1} \!\!\! d_m^{m_2} q_\text{b}^{\ceil{m_2 t_\text{m}/t_\text{b}}} \bigg)   \nonumber \\
        & \quad - \frac{\delta q_\text{b}}{p_\text{b}(1\!-\!d_m)} \bigg(d_m\underbrace{\!\!\!\!\sum_{m_1=\ceil{t_\text{b}/t_\text{m}}}^{1+\floor{t_\text{cut}^\prime/t_\text{m}}} \!\!\! d_s^{m_1} q_\text{b}^{\floor{m_1t_\text{m}/t_\text{b}}}} _{=:\thinspace S_{{12}^+}^{(2)}} + \!\! \sum_{m_1=2+\floor{t_\text{cut}^\prime/t_\text{m}}}^\infty \!\!\!\!\!\!\!d_s^{m_1} q_\text{b}^{\floor{m_1t_\text{m}/t_\text{b}}} d_m^{\ceil{m_1-t_\text{cut}^\prime/t_\text{m}}} \nonumber \\
        & \quad - d_m\!\!\! \sum_{m_1=\ceil{t_\text{b}/t_\text{m}}}^\infty \!\!\! (d_s d_m)^{m_1} q_\text{b}^{\floor{m_1 t_\text{m}/t_\text{b}}} \bigg) \\
        =&~ \frac{\delta}{p_\text{b}} \Big(S_{{12}^+}^{(1)} - \frac{q_\text{b} d_m}{1\!-\!d_m} S_{{12}^+}^{(2)} + \sum_{m_1=2+\floor{t_\text{cut}^\prime/t_\text{m}}}^{\infty} d_s^{m_1} \!\!\!\!\!\!\sum_{m_2=\ceil{m_1-t_\text{cut}^\prime/t_\text{m}}}^{m_1} \!\!\!\!\!\!d_m^{m_2} q_\text{b}^{\ceil{m_2 t_\text{m}/t_\text{b}}} \nonumber \\
        & \quad - \frac{ q_\text{b} d_m^{\thinspace -\floor{t_\text{cut}^\prime/t_\text{m}}}}{1\!-\!d_m} \sum_{m_1=2+\floor{t_\text{cut}^\prime/t_\text{m}}}^{\infty} \!\! (d_s d_m)^{m_1} q_\text{b}^{\floor{m_1t_\text{m}/t_\text{b}}} + \frac{ q_\text{b} d_m}{1\!-\!d_m} \sum_{m_1=\ceil{t_\text{b}/t_\text{m}}}^{\infty} \!\! (d_s d_m)^{m_1} q_\text{b}^{\floor{m_1 t_\text{m}/t_\text{b}}} \Big) \\
        =&~ \frac{\delta}{p_\text{b}} \Big(S_{{12}^+}^{(1)} - \frac{q_\text{b} d_m}{1\!-\!d_m} S_{{12}^+}^{(2)} + \Delta\big(d_s,d_m,q_\text{b},1,\frac{t_\text{cut}^\prime}{t_\text{m}},\frac{t_\text{m}}{t_\text{b}}, 2\!+\!\floor{\frac{t_\text{cut}^\prime}{t_\text{m}}}\!,\infty \big) \nonumber \\
        & \quad -\! \frac{q_\text{b} d_m^{\thinspace -\floor{t_\text{cut}^\prime/t_\text{m}}}}{1\!-\!d_m}\Pi_f\big(d_sd_m,q_\text{b},\frac{t_\text{m}}{t_\text{b}},0, 2\!+\!\floor{\frac{t_\text{cut}^\prime}{t_\text{m}}}\!,\infty\big) + \frac{q_\text{b}d_m}{1\!-\!d_m} \Pi_f\big(d_sd_m,q_\text{b},\frac{t_\text{m}}{t_\text{b}},0, \ceil{\frac{t_\text{b}}{t_\text{m}}}\!,\infty \big) \! \Big) 
    \end{align}
\endgroup
\begin{align}
    \overset{(\text{i})}{=} &~ \frac{\delta}{p_\text{b}} \Big(S_{{12}^+}^{(1)} - \frac{q_\text{b} d_m}{1\!-\!d_m} S_{{12}^+}^{(2)} + 
    \frac{1}{1\!-\!(d_sd_mq_\text{b}^{t_\text{m}/t_\text{b}})^{m^*} } \Delta\big(d_s,d_m,q_\text{b},1,\frac{t_\text{cut}^\prime}{t_\text{m}},\frac{t_\text{m}}{t_\text{b}}, 2\!+\!\floor{\frac{t_\text{cut}^\prime}{t_\text{m}}}\!, 1\!+\!\floor{\frac{t_\text{cut}^\prime}{t_\text{m}}} \!+\! m^* \big) \nonumber \\
    & \quad -\! \frac{q_\text{b} d_m^{\thinspace -\floor{t_\text{cut}^\prime/t_\text{m}}}}{1\!-\!d_m} \frac{1}{1\!-\!(d_sd_mq_\text{b}^{t_\text{m}/t_\text{b}})^{m^*}} \Pi_f\big(d_sd_m,q_\text{b},\frac{t_\text{m}}{t_\text{b}},0, 2\!+\!\floor{\frac{t_\text{cut}^\prime}{t_\text{m}}}\!, 1 \!+\! \floor{\frac{t_\text{cut}^\prime}{t_\text{m}}} \!+\! m^*\big) \nonumber \\
    & \quad + \frac{q_\text{b}d_m}{1\!-\!d_m} \frac{1}{1\!-\!(d_sd_mq_\text{b}^{t_\text{m}/t_\text{b}})^{m^*} } \Pi_f\big(d_sd_m,q_\text{b},\frac{t_\text{m}}{t_\text{b}},0, \ceil{\frac{t_\text{b}}{t_\text{m}}}\!, \ceil{\frac{t_\text{b}}{t_\text{m}}} \!+\! m^* \!-\! 1 \big)\! \Big) ~,
    \label{eq:indiv_term_U_alpha_A_12+}
\end{align}
where we used the definitions of $\Delta$ and $\Pi_f$ from~\eqref{eq:def_Delta} and~\eqref{eq:def_Pi}.
In (i), we used ${\Bar{z}(1,t_\text{m}/t_\text{b}) \!=\! m^*}$ and ${z^*(t_\text{m}/t_\text{b}) \!=\! m^*}$.

On $A_{{1b}^+}$, we have
\begingroup
    \setlength{\abovedisplayskip}{6pt}   
    \setlength{\belowdisplayskip}{6pt}   
    \setlength{\abovedisplayshortskip}{2pt}
    \setlength{\belowdisplayshortskip}{2pt}
    \begin{align}
        & ~\mathbb{E}(e^{-v((\alpha-1)X_1 - X_b)} \mathds{1}_{A_{1b}^+}) \nonumber \\
        =& \underbrace{\bigg(\frac{p_\text{m}^2 p_\text{b}}{(1\!-\!p_\text{m})^2(1\!-\!p_\text{b})}\bigg)}_{\delta} \sum_{m_1=\left\lceil t_\text{b}/t_\text{m}\right\rceil}^{\infty} \sum_{\substack{m_\text{b}=\text{max}(1,\\ \left\lceil(m_1 t_\text{m}-t_\text{cut}^{\prime}) / t_\text{b}\right\rceil)}}^{\left\lfloor m_1 t_\text{m} / t_\text{b}\right\rfloor}\sum_{m_2=\left\lceil m_\text{b} t_\text{b}/t_\text{m}\right\rceil}^{m_1} \!\!\!\!\!\!\!\! e^{-v(\alpha-1) m_1t_\text{m} + vm_\text{b}t_\text{b}} q_\text{m} (\underbrace{1\!-\!p_\text{m}}_{q_\text{m}})^{m_1} (\underbrace{1\!-\!p_\text{m}}_{q_\text{m}})^{m_2} (\underbrace{1\!-\!p_\text{b}}_{q_\text{b}})^{m_\text{b}} \\
        =&~ \delta \!\! \sum_{m_1=\left\lceil t_\text{b}/t_\text{m}\right\rceil}^{\infty} \!\!\big(\underbrace{e^{-v(\alpha-1) t_\text{m}} q_\text{m}}_{d_s}\big)^{m_1} \sum_{\substack{m_\text{b}=\text{max}(1, \\\left\lceil(m_1 t_\text{m}-t_\text{cut}^{\prime}) / t_\text{b}\right\rceil)}}^{\left\lfloor m_1 t_\text{m} / t_\text{b}\right\rfloor} \!\! \big(\underbrace{e^{v t_\text{b}} q_\text{b}}_{=:\thinspace d_b}\big)^{m_\text{b}} \sum_{m_2=\left\lceil m_\text{b} t_\text{b}/t_\text{m}\right\rceil}^{m_1} q_\text{m}^{m_2} \\
        =&~ \frac{\delta}{1\!-\!q_\text{m}} \sum_{m_1=\left\lceil t_\text{b} / t_\text{m}\right\rceil}^{\infty} d_s^{m_1} \sum_{\substack{m_\text{b}=\max (1, \\\left\lceil(m_1 t_\text{m}-t_\text{cut}^{\prime}) / t_\text{b}\right\rceil)}}^{\left\lfloor m_1 t_\text{m} / t_\text{b}\right\rfloor} d_b^{m_\text{b}} \big(q_\text{m}^{\left\lceil m_\text{b} t_\text{b} / t_\text{m}\right\rceil} - q_\text{m}^{m_1+1}  \big)  \\
        =&~ \frac{\delta}{p_\text{m}} \sum_{m_1=\left\lceil t_\text{b} / t_\text{m}\right\rceil}^{\infty} d_s^{m_1} \sum_{\substack{m_\text{b}=\max (1,\\ \left\lceil(m_1 t_\text{m}-t_\text{cut}^{\prime}) / t_\text{b}\right\rceil)}}^{\left\lfloor m_1 t_\text{m} / t_\text{b}\right\rfloor} \!\!\!\!\!\! d_b^{m_\text{b}} q_\text{m}^{\left\lceil m_\text{b} t_\text{b} / t_\text{m}\right\rceil}  \nonumber \\
        & \quad - \frac{\delta q_\text{m}}{p_\text{m}(1\!-\!d_b)} \sum_{m_1=\left\lceil t_\text{b} / t_\text{m}\right\rceil}^{\infty} \!\!\!\!\!\! (d_s q_\text{m})^{m_1}\Big(d_b^{~\max (1, \ceil{(m_1 t_\text{m}-t_\text{cut}^{\prime}) / t_\text{b}})} - d_b^{\left\lfloor m_1 t_\text{m} / t_\text{b}\right\rfloor + 1}\Big) \\
        =&~ \frac{\delta}{p_\text{m}} \Big( \underbrace{\sum_{m_1=\left\lceil t_\text{b} / t_\text{m}\right\rceil}^{\floor{(t_\text{b}+t_\text{cut}^\prime)/t_\text{m}}} d_s^{m_1} \sum_{m_\text{b}=1}^{\floor{m_1t_\text{m}/t_\text{b}}} d_b^{m_\text{b}} q_\text{m}^{\ceil{m_\text{b}t_\text{b}/t_\text{m}}}}_{=:\thinspace S_{{1b}^+} } + \sum_{m_1=1+\floor{(t_\text{b}+t_\text{cut}^\prime)/t_\text{m}}}^\infty \!\!\! d_s^{m_1} \!\!\! \sum_{m_\text{b}=\ceil{(m_1t_\text{m}-t_\text{cut}^\prime)/t_\text{b}}}^{\floor{m_1t_\text{m}/t_\text{b}}} \!\!\!\!\!\! d_b^{m_\text{b}} q_\text{m}^{\ceil{m_\text{b}t_\text{b}/t_\text{m}}} \Big) \nonumber \\
        & \quad - \frac{\delta q_\text{m}}{p_\text{m}(1\!-\!d_b)} \Big(d_b \sum_{m_1=\ceil{t_\text{b}/t_\text{m}}}^{\floor{(t_\text{b}+t_\text{cut}^\prime)/t_\text{m}}} \!\!\!(d_sq_\text{m})^{m_1} + \sum_{m_1=1+\floor{(t_\text{b}+t_\text{cut}^\prime)/t_\text{m}}}^{\infty} \!\!\!\!\!\! (d_sq_\text{m})^{m_1} d_b^{\ceil{(m_1t_\text{m}-t_\text{cut}^\prime)/t_\text{b}}} \nonumber \\
        & \quad - d_b \!\!\! \sum_{m_1=\ceil{t_\text{b}/t_\text{m}}}^\infty \!\!\!\!\! (d_sq_\text{m})^{m_1} d_b^{\floor{m_1t_\text{m}/t_\text{b}}} \Big)  \\
        =&~ \frac{\delta}{p_\text{m}}\Big( S_{{1b}^+} + \Delta\big(d_s,d_b,q_\text{m},\frac{t_\text{m}}{t_\text{b}},\frac{t_\text{cut}^\prime}{t_\text{b}},\frac{t_\text{b}}{t_\text{m}}, 1\!+\!\floor{\frac{t_\text{b}\!+\!t_\text{cut}^\prime}{t_\text{m}}}\!,\infty \big) \Big) \nonumber \\
        & \quad - \frac{\delta q_\text{m}}{p_\text{m}(1\!-\!d_b)} \bigg(d_b\frac{(d_sq_\text{m})^{\ceil{t_\text{b}/t_\text{m}}} - (d_sq_\text{m})^{\floor{(t_\text{b}+t_\text{cut}^\prime)/t_\text{m}}+1}}{1\!-\!d_sq_\text{m}} \nonumber \\
        & \quad + \Pi_c\big(d_sq_\text{m},d_b,\frac{t_\text{m}}{t_\text{b}},\frac{t_\text{cut}^\prime}{t_\text{b}}, 1\!+\!\floor{\frac{t_\text{b}\!+\!t_\text{cut}^\prime}{t_\text{m}}}\!,\infty \big) - d_b \Pi_f\big(d_sq_\text{m},d_b,\frac{t_\text{m}}{t_\text{b}},0,\ceil{\frac{t_\text{b}}{t_\text{m}}},\infty \big) \bigg)
    \end{align}
\endgroup
\begin{align}
    \overset{(\text{i})}{=} &~ \frac{\delta}{p_\text{m}}\Big( S_{{1b}^+} + \frac{1}{1\!-\!\big(d_s d_b^{\thinspace t_\text{m}/t_\text{b}} q_\text{m}\big)^{m^*}} \Delta\big(d_s,d_b,q_\text{m},\frac{t_\text{m}}{t_\text{b}},\frac{t_\text{cut}^\prime}{t_\text{b}},\frac{t_\text{b}}{t_\text{m}}, 1\!+\!\floor{\frac{t_\text{b}\!+\!t_\text{cut}^\prime}{t_\text{m}}}\!, \floor{\frac{t_\text{b}\!+\!t_\text{cut}^\prime}{t_\text{m}}} \!+\! m^* \big) \Big) \nonumber \\
    & \quad - \frac{\delta q_\text{m}}{p_\text{m}(1\!-\!d_b)} \bigg(d_b\frac{(d_sq_\text{m})^{\ceil{t_\text{b}/t_\text{m}}} - (d_sq_\text{m})^{\floor{(t_\text{b}+t_\text{cut}^\prime)/t_\text{m}}+1}}{1\!-\!d_sq_\text{m}} \nonumber \\
    & \quad + \frac{1}{1\!-\!\big(d_s q_\text{m} d_b^{\thinspace t_\text{m}/t_\text{b}} \big)^{m^*}} \Pi_c\big(d_sq_\text{m},d_b,\frac{t_\text{m}}{t_\text{b}},\frac{t_\text{cut}^\prime}{t_\text{b}}, 1\!+\!\floor{\frac{t_\text{b}\!+\!t_\text{cut}^\prime}{t_\text{m}}}\!,\floor{\frac{t_\text{b}\!+\!t_\text{cut}^\prime}{t_\text{m}}} \!+\! m^* \big) \nonumber \\
    & \quad - d_b \frac{1}{1\!-\!\big(d_s q_\text{m} d_b^{\thinspace t_\text{m}/t_\text{b}} \big)^{m^*}}\Pi_f\big(d_sq_\text{m},d_b,\frac{t_\text{m}}{t_\text{b}},0,\ceil{\frac{t_\text{b}}{t_\text{m}}}\!,\ceil{\frac{t_\text{b}}{t_\text{m}}} \!+\! m^* \!-\! 1 \big) \bigg) ~,
    \label{eq:indiv_term_U_alpha_A_1b+}
\end{align}
where, in (i), we used ${\Bar{z}(t_\text{m}/t_\text{b},t_\text{b}/t_\text{m}) \!=\! m^*}$ and ${z^*(t_\text{m}/t_\text{b}) \!=\! m^*}$.

On $A_{12}^+A_{1b}^+$, we have
\begingroup
    \setlength{\abovedisplayskip}{6pt}   
    \setlength{\belowdisplayskip}{10pt}   
    \setlength{\abovedisplayshortskip}{2pt}
    \setlength{\belowdisplayshortskip}{2pt}
    \begin{align}
        & ~\mathbb{E}(e^{-v((\alpha-1)X_1 - X_\text{min})} \mathds{1}_{A_{12}^+A_{1b}^+}) \nonumber \\
        =& \underbrace{\bigg(\frac{p_\text{m}^2 p_\text{b}}{(1\!-\!p_\text{m})^2(1\!-\!p_\text{b})}\bigg)}_{\delta} \sum_{k=1}^{\infty} \sum_{m_1=k m^*}^{k m^*+\left\lfloor t_\text{cut}^{\prime}/t_\text{m}\right\rfloor} \!\!\!\!\!\! e^{-v(\alpha-1) m_1t_\text{m} + v km^*t_\text{m}} (\underbrace{1\!-\!p_\text{m}}_{q_\text{m}})^{m_1} (\underbrace{1\!-\!p_\text{m}}_{q_\text{m}})^{km^*} (\underbrace{1\!-\!p_\text{b}}_{q_\text{b}})^{km^*t_\text{m}/t_\text{b}} \\
        =&~ \delta \sum_{k=1}^{\infty}\big(\underbrace{e^{v t_\text{m}} q_\text{m} q_\text{b}^{t_\text{m}/t_\text{b}}}_{=:\thinspace d_{ms}}\big)^{k m^*} \sum_{m_1=k m^*}^{k m^*+\left\lfloor t_\text{cut}^{\prime}/t_\text{m}\right\rfloor}\big(\underbrace{e^{-v(\alpha-1) t_\text{m}} q_\text{m}}_{d_s}\big)^{m_1} \\
        =&~ \frac{\delta}{1\!-\!d_s}  \sum_{k=1}^{\infty} d_{ms}^{~k m^*} \big(d_s^{~km^*} \!-\! d_s^{~k m^* + \floor{t_\text{cut}^\prime/t_\text{m}}+1}\big)  \\
        =&~ \frac{\delta \big(1 \!-\! d_s^{\left\lfloor t_\text{cut}'/t_\text{m}\right\rfloor+1}\big)}{1\!-\!d_s}  \sum_{k=1}^{\infty} (d_{ms} d_s)^{k m^*}  \\
        =&~ \frac{\delta \big(1 \!-\! d_s^{\left\lfloor t_\text{cut}'/t_\text{m}\right\rfloor+1}\big) }{ 1\!-\!d_s } \frac{(d_{ms} d_s)^{m^*}}{ 1\!-\!(d_{ms} d_s)^{m^*}} ~.
        \label{eq:indiv_term_U_alpha_A_12+A_1b+}
    \end{align}
\endgroup

On $A_b^+$, we have
\begin{align}
    & ~\mathbb{E}(e^{-v(\alpha X_b - X_1 - X_2)} \mathds{1}_{A_{b}^+}) \nonumber \\ 
    =& \underbrace{\bigg(\frac{p_\text{m}^2 p_\text{b}}{(1\!-\!p_\text{m})^2(1\!-\!p_\text{b})}\bigg)}_{\delta} \sum_{m_\text{b}=1}^\infty \!\!\sum_{\substack{m_1=\max(1,\\ \ceil{(m_\text{b}t_\text{b}-t_\text{cut}^\prime)/t_\text{m}})}}^{\floor{m_\text{b}t_\text{b}/t_\text{m}}} \sum_{\substack{m_2=\max(1, \\ \ceil{(m_\text{b}t_\text{b}-t_\text{cut}^\prime)/t_\text{m}}}}^{\floor{m_\text{b}t_\text{b}/t_\text{m}}} \!\!\!\! e^{-v\alpha m_\text{b}t_\text{b} + v m_1t_\text{m} + vm_2t_\text{m}} (\underbrace{1\!-\!p_\text{m}}_{q_\text{m}})^{m_1} (\underbrace{1\!-\!p_\text{m}}_{q_\text{m}})^{m_2} (\underbrace{1\!-\!p_\text{b}}_{q_\text{b}})^{m_\text{b}} \\
    = &~\delta \sum_{m_\text{b}=1}^{\infty}\big(\underbrace{e^{-v \alpha t_\text{b}} q_\text{b}}_{=:\thinspace{d_{\alpha b}}}\big)^{m_\text{b}}\Big(\sum_{m_1=\text{max}(1, \left\lceil(m_\text{b} t_\text{b}-t_\text{cut}^{\prime}) / t_\text{m}\right\rceil)}^{\left\lfloor m_\text{b} t_\text{b} / t_\text{m}\right\rfloor}\big(\underbrace{e^{v t_\text{m}} q_\text{m}}_{=:\thinspace d_m}\big)^{m_1}\Big)^2 \\
    =&~ \frac{\delta}{(1\!-\!d_m)^2} \sum_{m_\text{b}=1}^\infty d_{\alpha b}^{\thinspace m_\text{b}} \big( d_m^{\thinspace 2\max(1,\ceil{(m_\text{b}t_\text{b}-t_\text{cut}^\prime)/t_\text{m}})} - 2d_m^{\thinspace \max(1,\ceil{(m_\text{b}t_\text{b}-t_\text{cut}^\prime)/t_\text{m}})+\floor{m_\text{b}t_\text{b}/t_\text{m}}+1} + d_m^{\thinspace 2\floor{m_\text{b}t_\text{b}/t_\text{m}}+2} \big)
\end{align}
\begin{align}
    =&~ \frac{\delta}{(1\!-\!d_m)^2} \bigg( \underbrace{\sum_{m_\text{b}=1}^{\floor{(t_\text{m}+t_\text{cut}^\prime)/t_\text{b}}} \!\!\! d_{\alpha b}^{\thinspace m_\text{b}}\big(d_m^{\thinspace 2} - 2 d_m^{\thinspace\floor{m_\text{b}t_\text{b}/t_\text{m}}+2} \big)}_{=:\thinspace S_{b^+} }  \!\!\!\! \nonumber \\
    & \quad + \!\!\!\!\!\! \sum_{m_\text{b}=1+\floor{(t_\text{m}+t_\text{cut}^\prime)/t_\text{b}}}^\infty\!\!\!\!\!\!\!\!\!\!\!\! d_{\alpha b}^{\thinspace m_\text{b}} \big(d_m^{\thinspace 2\ceil{(m_\text{b}t_\text{b}-t_\text{cut}^\prime)/t_\text{m}}} \!-\! 2d_m^{\thinspace \ceil{(m_\text{b}t_\text{b}-t_\text{cut}^\prime)/t_\text{m}}+\floor{m_\text{b}t_\text{b}/t_\text{m}}+1} \big) \!+\!\! \sum_{m_\text{b}=1}^\infty d_{\alpha b}^{\thinspace m_\text{b}} d_m^{\thinspace 2\floor{m_\text{b}t_\text{b}/t_\text{m}}+2}  \bigg) \qquad  \\
    =&~ \frac{\delta}{(1\!-\!d_m)^2} \bigg( S_{b^+} + \!\!\!\!\!\! \sum_{m_\text{b}=1+\floor{(t_\text{m}+t_\text{cut}^\prime)/t_\text{b}}}^\infty \!\!\!\!\!\!\!\!\!\!\!\! d_{\alpha b}^{\thinspace m_\text{b}} d_m^{\thinspace 2\ceil{(m_\text{b}t_\text{b}-t_\text{cut}^\prime)/t_\text{m}}} - 2 d_m \!\!\!\!\!\!\!\!\!\!\!\! \sum_{m_\text{b}=1+\floor{(t_\text{m}+t_\text{cut}^\prime)/t_\text{b}}}^\infty \!\!\!\!\!\!\!\!\!\!\!\! d_{\alpha b}^{\thinspace m_\text{b}} d_m^{\thinspace \ceil{(m_\text{b}t_\text{b}-t_\text{cut}^\prime)/t_\text{m}}+\floor{m_\text{b}t_\text{b}/t_\text{m}}} \big) \nonumber \\
    & \quad + d_m^{\thinspace 2}\sum_{m_\text{b}=1}^\infty d_{\alpha b}^{\thinspace m_\text{b}} d_m^{\thinspace 2\floor{m_\text{b}t_\text{b}/t_\text{m}}} \bigg) 
\end{align}
\begin{align}
    =&~ \frac{\delta}{(1\!-\!d_m)^2} \Big( S_{b^+} + \Pi_c\big(d_{\alpha b},d_m^{\thinspace 2},\frac{t_\text{b}}{t_\text{m}},\frac{t_\text{cut}^\prime}{t_\text{m}}, 1\!+\!\floor{\frac{t_\text{m}\!+\!t_\text{cut}^\prime}{t_\text{b}}}\!,\infty \big) \nonumber \\
    & \quad - 2d_m \Pi_{cf}\big(d_{\alpha b},d_m,\frac{t_\text{b}}{t_\text{m}},\frac{t_\text{cut}^\prime}{t_\text{m}},0,1\!+\!\floor{\frac{t_\text{m}\!+\!t_\text{cut}^\prime}{t_\text{b}}}\!,\infty\big) + d_m^{\thinspace 2} \Pi_f\big(d_{\alpha b},d_m^{\thinspace 2},\frac{t_\text{b}}{t_\text{m}},0,1,\infty \big) \Big) \\
    \overset{(\text{i})}{=} &~ \frac{\delta}{(1\!-\!d_m)^2} \Big( S_{b^+}  + \frac{1}{ 1\!-\!\big(d_{\alpha b} d_m^{\thinspace 2t_\text{b}/t_\text{m}}\big)^{m^* t_\text{m}/t_\text{b}} } \Pi_c\big(d_{\alpha b},d_m^{\thinspace2},\frac{t_\text{b}}{t_\text{m}},\frac{t_\text{cut}^\prime}{t_\text{m}}, 1\!+\!\floor{\frac{t_\text{m}\!+\!t_\text{cut}^\prime}{t_\text{b}}}\!, \floor{\frac{t_\text{m}\!+\!t_\text{cut}^\prime}{t_\text{b}}} \!+\! \frac{m^*t_\text{m}}{t_\text{b}} \big) \nonumber \\
    & \quad - 2d_m \frac{1}{ 1\!-\!\big(d_{\alpha b} d_m^{\thinspace 2t_\text{b}/t_\text{m}}\big)^{m^* t_\text{m}/t_\text{b}} } \Pi_{cf}\big(d_{\alpha b},d_m,\frac{t_\text{b}}{t_\text{m}},\frac{t_\text{cut}^\prime}{t_\text{m}},0, 1\!+\!\floor{\frac{t_\text{m}\!+\!t_\text{cut}^\prime}{t_\text{b}}}\!, \floor{\frac{t_\text{m}\!+\!t_\text{cut}^\prime}{t_\text{b}}} \!+\!\frac{m^*t_\text{m}}{t_\text{b}}\big) \nonumber \\
    & \quad + d_m^{\thinspace 2} \frac{1}{ 1\!-\!\big(d_{\alpha b} d_m^{\thinspace 2t_\text{b}/t_\text{m}}\big)^{m^* t_\text{m}/t_\text{b}} } \Pi_f\big(d_{\alpha b},d_m^{\thinspace 2},\frac{t_\text{b}}{t_\text{m}},0,1,\frac{m^*t_\text{m}}{t_\text{b}} \big) \Big) ~,
    \label{eq:indiv_term_U_alpha_A_b+}
\end{align}
where we used ${z^*(t_\text{b}/t_\text{m}) \!=\! m^*t_\text{m}/t_\text{b}}$ in (i).

On $A_1^+ A_2^+$, we have
\begin{align}
    &~\mathbb{E}(e^{-v((\alpha-1) X_1 - X_b)} \mathds{1}_{A_{1}^+ A_{2}^+}) \nonumber \\
    =& \underbrace{\bigg(\frac{p_\text{m}^2 p_\text{b}}{(1\!-\!p_\text{m})^2(1\!-\!p_\text{b})}\bigg)}_{\delta} \sum_{m_1=\ceil{t_\text{b}/t_\text{m}}}^\infty \sum_{\substack{m_\text{b}=\max(1,\\ \ceil{m_1t_\text{m}-t_\text{cut}^\prime)/t_\text{b}}}}^{\floor{m_1t_\text{m}/t_\text{b}}} e^{-v(\alpha-1)m_1t_\text{m} + v m_\text{b}t_\text{b}}  (\underbrace{1\!-\!p_\text{m}}_{q_\text{m}})^{2m_1} (\underbrace{1\!-\!p_\text{b}}_{q_\text{b}})^{m_\text{b}} \\
    = &~\delta \sum_{m_1=\left\lceil t_\text{b}/t_\text{m}\right\rceil}^{\infty}\big(\underbrace{e^{-v(\alpha-1) t_\text{m}} q_\text{m}^2}_{=:\thinspace d_{\alpha m}}\big)^{m_1} \sum_{m_\text{b}=\text{max}(1, \left\lceil(m_1 t_\text{m}-t_\text{cut}^{\prime})/t_\text{b}\right\rceil)}^{\left\lfloor m_1 t_\text{m} / t_\text{b}\right\rfloor}\big(\underbrace{e^{v t_\text{b}} q_\text{b}}_{d_b}\big)^{m_\text{b}}  \\
    =&~ \frac{\delta}{1\!-\!d_b} \sum_{m_1=\ceil{t_\text{b}/t_\text{m}}}^{\infty} d_{\alpha m}^{\thinspace m_1} \Big(d_b^{\thinspace \max(1,\ceil{(m_1t_\text{m}-t_\text{cut}^\prime)/t_\text{b}})} -d_b^{\thinspace \floor{m_1t_\text{m}/t_\text{b}}+1}\Big) \\
    = & ~ \frac{\delta}{1\!-\!d_b} \Big( {\sum_{m_1=\ceil{t_\text{b}/t_\text{m}}}^{\floor{(t_\text{b}+t_\text{cut}^\prime)/t_\text{m}}} d_{\alpha m}^{\thinspace m_1} d_b} ~+ \!\!\!\!\! \sum_{m_1=1+\floor{(t_\text{b}+t_\text{cut}^\prime)/t_\text{m}}}^{\infty}\!\!\!\!\!\!\!\! d_{\alpha m}^{\thinspace m_1} d_b^{\ceil{(m_1t_\text{m}-t_\text{cut}^\prime)/t_\text{b}}}  - d_b \!\!\!\! \sum_{m_1=\ceil{t_\text{b}/t_\text{m}}}^{\infty} d_{\alpha m}^{\thinspace m_1} d_b^{\floor{m_1t_\text{m}/t_\text{b}}} \Big)
\end{align}
\begin{align}
    =&~ \frac{\delta}{1\!-\!d_b} \bigg( \underbrace{\frac{d_b \big(d_{\alpha m}^{\ceil{t_\text{b}/t_\text{m}}} - d_{\alpha m}^{\floor{(t_\text{b}+t_\text{cut}^\prime)/t_\text{m}}+1}\big)}{1\!-\!d_{\alpha m}}}_{=:\thinspace S_{1^+2^+} } + \thinspace \Pi_c\big(d_{\alpha m},d_b,\frac{t_\text{m}}{t_\text{b}},\frac{t_\text{cut}^\prime}{t_\text{b}}, 1\!+\!\floor{\frac{t_\text{b}\!+\!t_\text{cut}^\prime}{t_\text{m}}}\!,\infty \big) \qquad \qquad ~ \nonumber \\
    &\quad - d_b \Pi_f\big( d_{\alpha m},d_b,\frac{t_\text{m}}{t_\text{b}},0, \ceil{\frac{t_\text{b}}{t_\text{m}}}\!,\infty\big)  \bigg)  \\
    \overset{(\text{i})}{=}&~ \frac{\delta}{1\!-\!d_b} \bigg(S_{1^+2^+} + \frac{1}{1\!-\!\big(d_{\alpha m}d_b^{\thinspace t_\text{m}/t_\text{b}}\big)^{m^*}} \Pi_c\big(d_{\alpha m},d_b,\frac{t_\text{m}}{t_\text{b}},\frac{t_\text{cut}^\prime}{t_\text{b}}, 1\!+\!\floor{\frac{t_\text{b}\!+\!t_\text{cut}^\prime}{t_\text{m}}}\!, \floor{\frac{t_\text{b}\!+\!t_\text{cut}^\prime}{t_\text{m}}} \!+\! m^* \big) \nonumber \\
    & \quad - d_b \frac{1}{1\!-\!\big(d_{\alpha m}d_b^{\thinspace t_\text{m}/t_\text{b}}\big)^{m^*}} \Pi_f\big(d_{\alpha m},d_b,\frac{t_\text{m}}{t_\text{b}},0, \ceil{\frac{t_\text{b}}{t_\text{m}}}\!, \ceil{\frac{t_\text{b}}{t_\text{m}}} \!+\!m^* \!-\!1 \big) \bigg)~,
    \label{eq:indiv_term_U_alpha_A_1+A_2+}
\end{align}
where we recall that ${z^*(t_\text{m}/t_\text{b}) \!=\! m^*}$ in (i).

On $A_1^+ A_b^+$, we have
\begin{align}
    & ~\mathbb{E}(e^{-v((\alpha-1) X_1 - X_2)} \mathds{1}_{A_{1}^+ A_{b}^+}) \nonumber \\
    =& \underbrace{\bigg(\frac{p_\text{m}^2 p_\text{b}}{(1\!-\!p_\text{m})^2(1\!-\!p_\text{b})}\bigg)}_{\delta} \sum_{k=1}^\infty \sum_{\substack{m_2=\text{max}(1,\\ \left\lceil k m^* - t_\text{cut}^{\prime}/t_\text{m}\right\rceil)}}^{k m^*} \!\!\!\! e^{-v(\alpha-1)km^*t_\text{m} + vm_2t_\text{m}} (\underbrace{1\!-\!p_\text{m}}_{q_\text{m}})^{km^*}(\underbrace{1\!-\!p_\text{m}}_{q_\text{m}})^{m_2} (\underbrace{1\!-\!p_\text{b}}_{q_\text{b}})^{km^*t_\text{m}/t_\text{b}} \\
    = & ~\delta \thinspace\sum_{k=1}^{\infty}\big(\underbrace{e^{-v(\alpha-1) t_\text{m}} q_\text{m} q_\text{b}^{t_\text{m}/t_\text{b}}}_{=:\thinspace d_{\alpha mb}}\big)^{k m^*} \!\!\!\! \sum_{\substack{m_2=\text{max}(1,\\ \left\lceil k m^* - t_\text{cut}^{\prime}/t_\text{m}\right\rceil)}}^{k m^*} \!\!\! \big(\underbrace{e^{v t_\text{m}} q_\text{m}}_{\thinspace d_m}\big)^{m_2} \\
    =&~ \frac{\delta}{1\!-\!d_m} \Big( \sum_{k=1}^{\floor{(t_\text{m}+t_\text{cut}^\prime)/m^*t_\text{m}}} \!\!\! d_{\alpha mb}^{\thinspace km^*} d_m + \!\!\!\!\!\! \sum_{k=1+\floor{(t_\text{m}+t_\text{cut}^\prime)/(m^*t_\text{m})}}^\infty \!\!\!\!\!\!\!\! d_{\alpha mb}^{\thinspace km^*} d_m^{\thinspace \ceil{km^*-t_\text{cut}^\prime/t_\text{m}}} - \sum_{k=1}^\infty d_{\alpha mb}^{\thinspace km^*} d_m^{\thinspace km^*+1} \Big) \\
    =&~ \frac{\delta}{1\!-\!d_m} \bigg(\frac{d_m\big(d_{\alpha mb}^{\thinspace m^*} - d_{\alpha mb}^{\thinspace m^*+m^*\floor{(t_\text{m}+t_\text{cut}^\prime)/m^*t_\text{m}}} \big)}{1\!-\!d_{\alpha mb}^{\thinspace m^*}} \nonumber \\
    & \quad + \frac{d_m^{\thinspace -\floor{t_\text{cut}^\prime/t_\text{m}}}(d_{\alpha mb}d_m)^{m^*+m^*\floor{(t_\text{m}+t_\text{cut}^\prime)/m^*t_\text{m}}} - d_m (d_{\alpha mb}d_m)^{m^*}}{1\!-\!(d_{\alpha mb}d_m)^{m^*}}  \bigg) ~.
    \label{eq:indiv_term_U_alpha_A_1+A_b+}
\end{align}

On $A_1^+ A_2^+ A_b^+$, we have
\begin{align}
    &~\mathbb{E}(e^{-v(\alpha-2) X_{\text{max}}} \mathds{1}_{A_1^+ A_2^+ A_b^+}) \nonumber \\ 
    =& \underbrace{\bigg(\frac{p_\text{m}^2 p_\text{b}}{(1\!-\!p_\text{m})^2(1\!-\!p_\text{b})}\bigg)}_{\delta} \thinspace\sum_{k=1}^{\infty} e^{-v(\alpha-2)km^*t_\text{m}} {(\underbrace{1\!-\!p_\text{m}}_{q_\text{m}})}^{m^*k} {(\underbrace{1\!-\!p_\text{m}}_{q_\text{m}})}^{m^*k} {(\underbrace{1\!-\!p_\text{b}}_{q_\text{b}})}^{m^*kt_\text{m}/t_\text{b}}  \\
    = &~ \delta \sum_{k=1}^{\infty} e^{-v(\alpha-2)k m^* t_\text{m}} \big(\underbrace{q_\text{m}^{2 m^*} q_\text{b}^{m^* t_\text{m}/t_\text{b}}}_{\beta}\big)^k \\ 
    = &~ \frac{\delta \beta e^{-v(\alpha-2)m^* t_\text{m}}}{1-\beta e^{-v(\alpha-2) m^* t_\text{m}}} ~.
    \label{eq:indiv_term_U_alpha_A_1+A_2+A_b+}
\end{align}

\subsection{Calculations of individual terms in \texorpdfstring{$U_3(v)$}{U3(v)}}
\label{sec:calculation_indiv_terms_U_3}

In this section, we calculate the individual terms of~\eqref{eq:term_expansion_U_3}.
The first expected value on $A_{12}^-$ is 
\begin{align}
    &~\mathbb{E}\big(e^{-X_{2}/t_{\text{coh}}} \thinspace\mathds{1}_{A_{12}^-}\big) \nonumber \\
    = &~ \underbrace{\bigg(\frac{p_\text{m}^2 p_\text{b}}{(1\!-\!p_\text{m})^2(1\!-\!p_\text{b})}\bigg)}_{\delta} \thinspace \sum_{m_2=1}^{\infty} \thinspace \sum_{\substack{m_1=(m_2\\+\left\lceil t_\text{cut} / t_\text{m}\right\rceil)}}^{\infty} \sum_{m_\text{b}=\left\lceil m_2 t_\text{m} / t_\text{b}\right\rceil}^{\left\lfloor m_1 t_\text{m} / t_\text{b}\right\rfloor} \!\!\!\! e^{-m_2 t_\text{m} / t_{\text{coh}}} {(\underbrace{1\!-\!p_\text{m}}_{q_\text{m}})}^{m_1} {(\underbrace{1\!-\!p_\text{m}}_{q_\text{m}})}^{m_2} {(\underbrace{1\!-\!p_\text{b}}_{q_\text{b}})}^{m_\text{b}} \\
    = &~ \delta \sum_{m_2=1}^{\infty} \big(\underbrace{q_\text{m} e^{-t_\text{m} / t_{\text{coh}}}}_{=:\thinspace \kappa_\text{m}}\big)^{m_2} \!\! \sum_{m_1=m_2+\left\lceil t_\text{cut} / t_\text{m}\right\rceil}^{\infty} \!\!\!\! q_\text{m}^{m_1} \sum_{m_\text{b}=\left\lceil m_2 t_\text{m} / t_\text{b}\right\rceil}^{\left\lfloor m_1 t_\text{m} / t_\text{b}\right\rfloor} q_\text{b}^{m_\text{b}} \\
    = &~ \frac{\delta}{p_\text{b}} \sum_{m_2=1}^\infty \kappa_\text{m}^{m_2} \sum_{m_1=m_2+\ceil{t_\text{cut}/t_\text{m}}}^\infty q_\text{m}^{m_1} q_\text{b}^{\ceil{m_2t_\text{m}/t_\text{b}}} - \frac{\delta q_\text{b}}{p_\text{b}} \sum_{m_2=1}^\infty \kappa_\text{m}^{m_2} \sum_{m_1=m_2+\ceil{t_\text{cut}/t_\text{m}}}^\infty q_\text{m}^{m_1} q_\text{b}^{\floor{m_1t_\text{m}/t_\text{b}}} 
\end{align}
\begin{align}
    =&~ \frac{\delta q_\text{m}^{\ceil{t_\text{cut}/t_\text{m}}}}{p_\text{m} p_\text{b}} \sum_{m_2=1}^\infty (\kappa_\text{m} q_\text{m})^{m_2} q_\text{b}^{\ceil{m_2t_\text{m}/t_\text{b}}} - \frac{\delta q_\text{b}}{p_\text{b}} \sum_{m_1=1+\ceil{t_\text{cut}/t_\text{m}}}^\infty q_\text{m}^{m_1} q_\text{b}^{\floor{m_1t_\text{m}/t_\text{b}}} \sum_{m_2=1}^{m_1-\ceil{t_\text{cut}/t_\text{m}}} \kappa_\text{m}^{m_2}   \\
    =&~ \frac{\delta q_\text{m}^{\ceil{t_\text{cut}/t_\text{m}}}}{p_\text{m} p_\text{b}} \Pi_c\big(\kappa_\text{m}q_\text{m},q_\text{b},\frac{t_\text{m}}{t_\text{b}},0,1,\infty \big)  - \frac{\delta q_\text{b} \kappa_\text{m}}{p_\text{b} (1\!-\!\kappa_\text{m})} \Big( \sum_{m_1=1+\ceil{t_\text{cut}/t_\text{m}}}^\infty q_\text{m}^{m_1} q_\text{b}^{\floor{m_1t_\text{m}/t_\text{b}}} \nonumber \\
    & \quad -\kappa_\text{m}^{\thinspace -\ceil{t_\text{cut}/t_\text{m}}} \!\!\!\!\!\!\!\! \sum_{m_1=1+\ceil{t_\text{cut}/t_\text{m}}}^\infty (\kappa_\text{m} q_\text{m})^{m_1} q_\text{b}^{\floor{m_1t_\text{m}/t_\text{b}}} \big)  \Big)  \\
    =&~ \frac{\delta q_\text{m}^{\ceil{t_\text{cut}/t_\text{m}}}}{p_\text{m} p_\text{b}} \Pi_c\big(\kappa_\text{m}q_\text{m},q_\text{b},\frac{t_\text{m}}{t_\text{b}},0,1,\infty \big)  - \frac{\delta q_\text{b} \kappa_\text{m}}{p_\text{b} (1\!-\!\kappa_\text{m})} \Big(\Pi_f\big(q_\text{m},q_\text{b},\frac{t_\text{m}}{t_\text{b}},0,1\!+\!\ceil{\frac{t_\text{cut}}{t_\text{m}}}\!,\infty\big) \nonumber \\
    & \quad - \kappa_\text{m}^{-\ceil{t_\text{cut}/t_\text{m}}} \Pi_f(\kappa_\text{m}q_\text{m},q_\text{b},\frac{t_\text{m}}{t_\text{b}},0,1\!+\!\ceil{\frac{t_\text{cut}}{t_\text{m}}}\!,\infty) \Big) \\
    \overset{(\text{i})}{=}&~ \frac{\delta q_\text{m}^{\ceil{t_\text{cut}/t_\text{m}}}}{p_\text{m} p_\text{b}} \frac{1}{1\!-\!\big(\kappa_\text{m} q_\text{m} q_\text{b}^{t_\text{m}/t_\text{b}}\big)^{m^*}} \Pi_c\big(\kappa_\text{m}q_\text{m},q_\text{b},\frac{t_\text{m}}{t_\text{b}},0,1,m^*\big) \nonumber \\
    & \quad - \frac{\delta q_\text{b} \kappa_\text{m}}{p_\text{b} (1\!-\!\kappa_\text{m})} \Big(\frac{1}{1\!-\!\big(q_\text{m} q_\text{b}^{t_\text{m}/t_\text{b}}\big)^{m^*}}\Pi_f\big(q_\text{m},q_\text{b},\frac{t_\text{m}}{t_\text{b}},0, 1\!+\!\ceil{\frac{t_\text{cut}}{t_\text{m}}}\!,\ceil{\frac{t_\text{cut}}{t_\text{m}}} \!+\!m^*\big) \nonumber \\
    & \quad - \frac{\kappa_\text{m}^{-\ceil{t_\text{cut}/t_\text{m}}}}{1\!-\!\big(\kappa_\text{m} q_\text{m} q_\text{b}^{t_\text{m}/t_\text{b}}\big)^{m^*}} \Pi_f\big(\kappa_\text{m}q_\text{m},q_\text{b},\frac{t_\text{m}}{t_\text{b}},0, 1\!+\!\ceil{\frac{t_\text{cut}}{t_\text{m}}}\!,\ceil{\frac{t_\text{cut}}{t_\text{m}}} \!+\! m^*\big) \Big) ~,
    \label{eq:indiv_term_U_3_A_12-}
\end{align}
where in (i), we used ${z^*(t_\text{m}/t_\text{b}) \!=\!m^*}$.
 
On $A_{1b}^-$, we have
\begin{align}
    &~\mathbb{E}\big(e^{-X_{b}/t_{\text{coh}}} \thinspace\mathds{1}_{A_{1b}^-}\big) \nonumber \\
    =& \underbrace{\bigg(\frac{p_\text{m}^2 p_\text{b}}{(1\!-\!p_\text{m})^2(1\!-\!p_\text{b})}\bigg)}_{\delta} \sum_{m_\text{b}=1}^{\infty} \sum_{m_1=\ceil{(m_\text{b} t_\text{b}+t_\text{cut})/t_\text{m}}}^{\infty} \sum_{m_2=\ceil{m_\text{b} t_\text{b}/t_\text{m}}}^{m_1} \!\!\!\! e^{-m_\text{b} t_\text{b} / t_{\text{coh}}} {(\underbrace{1\!-\!p_\text{m}}_{q_\text{m}})}^{m_1} {(\underbrace{1\!-\!p_\text{m}}_{q_\text{m}})}^{m_2} {(\underbrace{1\!-\!p_\text{b}}_{q_\text{b}})}^{m_\text{b}} \\
    = &~ \delta \sum_{m_\text{b}=1}^{\infty} \big(\underbrace{{q_\text{b} e^{-t_\text{b}/t_{\text{coh}}}}}_{=:\thinspace \kappa_\text{b}} \big)^{m_\text{b}} \sum_{m_1=\ceil{(m_\text{b} t_\text{b}+t_\text{cut})/t_\text{m}}}^{\infty} q_\text{m}^{m_1} \sum_{m_2=\ceil{m_\text{b} t_\text{b}/t_\text{m}}}^{m_1} q_\text{m}^{m_2}
\end{align}
\begin{align}
    =&~ \frac{\delta}{p_\text{m}} \sum_{m_\text{b}=1}^\infty \kappa_\text{b}^{m_\text{b}} q_\text{m}^{\ceil{m_\text{b}t_\text{b}/t_\text{m}}} \!\!\!\!\!\!\!\! \sum_{m_1=\ceil{(m_\text{b}t_\text{b}+t_\text{cut})/t_\text{m}}}^\infty \!\!\!\!\!\!\!\!\!\!\!\!q_\text{m}^{m_1} - \frac{\delta q_\text{m}}{p_\text{m}} \sum_{m_\text{b}=1}^\infty \kappa_\text{b}^{m_\text{b}} \!\!\!\!\!\!\!\! \sum_{m_1=\ceil{(m_\text{b}t_\text{b}+t_\text{cut})/t_\text{m}}}^\infty \!\!\!\!\!\!\!\! q_\text{m}^{2m_1} \\
    =&~ \frac{\delta}{p_\text{m}^2} \sum_{m_\text{b}=1}^\infty \kappa_\text{b}^{m_\text{b}} q_\text{m}^{\ceil{m_\text{b}t_\text{b}/t_\text{m}} + \ceil{(m_\text{b}t_\text{b}+t_\text{cut})/t_\text{m}}} - \frac{\delta q_\text{m}}{p_\text{m} (1\!-\!q_\text{m}^2)} \sum_{m_\text{b}=1}^\infty \kappa_\text{b}^{m_\text{b}} q_\text{m}^{2\ceil{(m_\text{b}t_\text{b}+t_\text{cut})/t_\text{m}}}    \\
    =&~ \frac{\delta}{p_\text{m}^2} \Pi_{cc}\big(\kappa_\text{b},q_\text{m},\frac{t_\text{b}}{t_\text{m}},-\frac{t_\text{cut}}{t_\text{m}},0,1,\infty \big) - \frac{\delta q_\text{m}}{p_\text{m} (1\!-\!q_\text{m}^2)} \Pi_{c}\big(\kappa_\text{b},q_\text{m}^2,\frac{t_\text{b}}{t_\text{m}},-\frac{t_\text{cut}}{t_\text{m}},1,\infty\big)    \\
    \overset{(\text{i})}{=} &~ \frac{\delta}{p_\text{m}^2} \frac{\Pi_{cc}\big(\kappa_\text{b},q_\text{m},{t_\text{b}}/{t_\text{m}},-{t_\text{cut}}/{t_\text{m}},0,1,{m^*t_\text{m}}/{t_\text{b}}\big)}{1\!-\!\big(\kappa_\text{b} q_\text{m}^{2t_\text{b}/t_\text{m}}\big)^{m^*t_\text{m}/t_\text{b}}} - \frac{\delta q_\text{m}}{p_\text{m} (1\!-\!q_\text{m}^2)} \frac{\Pi_{c}\big(\kappa_\text{b},q_\text{m}^2,{t_\text{b}}/{t_\text{m}},-{t_\text{cut}}/{t_\text{m}},1,{m^*t_\text{m}}/{t_\text{b}}\big)}{1\!-\!\big(\kappa_\text{b} q_\text{m}^{2t_\text{b}/t_\text{m}}\big)^{m^*t_\text{m}/t_\text{b}}} ~,
    \label{eq:indiv_term_U_3_A_1b-}
\end{align}
where, in (i), we use ${z^*(t_\text{b}/t_\text{m}) \!=\! m^*t_\text{m}/t_\text{b}}$.

On $A_{b1}^-$, we have
\begin{align}
    &~\mathbb{E}\big(e^{-X_{1}/t_{\text{coh}}} \thinspace\mathds{1}_{A_{b1}^-}\big) \nonumber \\
    =& \underbrace{\bigg(\frac{p_\text{m}^2 p_\text{b}}{(1\!-\!p_\text{m})^2(1\!-\!p_\text{b})}\bigg)}_{\delta} \thinspace \sum_{m_1=1}^\infty \thinspace \sum_{m_\text{b}=\ceil{(m_1 t_\text{m}+t_\text{cut})/t_\text{b}}}^{\infty} \!\!\sum_{m_2=m_1}^{\floor{m_\text{b} t_\text{b}/t_\text{m}}} \!\!\!\! e^{-m_1t_\text{m}/t_{\text{coh}}} {(\underbrace{1\!-\!p_\text{m}}_{q_\text{m}})}^{m_1} {(\underbrace{1\!-\!p_\text{m}}_{q_\text{m}})}^{m_2} {(\underbrace{1\!-\!p_\text{b}}_{q_\text{b}})}^{m_\text{b}}  \\
    = &~ \delta \sum_{m_1=1}^{\infty} \big(\underbrace{q_\text{m} e^{-t_\text{m}/t_{\text{coh}}}}_{\kappa_\text{m}} \big)^{m_1} \!\!\!\!\!\! \sum_{m_\text{b}=\ceil{(m_1 t_\text{m}+t_\text{cut})/t_\text{b}}}^{\infty} \!\!\!\!\!\!\!\! q_\text{b}^{m_\text{b}} \sum_{m_2=m_1}^{\floor{m_\text{b} t_\text{b}/t_\text{m}}} q_\text{m}^{m_2}  \\
    =&~ \frac{\delta}{p_\text{m}} \sum_{m_1=1}^\infty (\kappa_\text{m} q_\text{m})^{m_1} \!\!\!\!\!\!\!\! \sum_{m_\text{b}=\ceil{(m_1t_\text{m}+t_\text{cut})/t_\text{b}}}^\infty \!\!\!\!\!\!\!\! q_\text{b}^{m_\text{b}} - ~ \frac{\delta q_\text{m}}{p_\text{m}} \sum_{m_1=1}^\infty \kappa_\text{m}^{m_1} \!\!\!\!\!\!\sum_{m_\text{b}=\ceil{(m_1t_\text{m}+t_\text{cut})/t_\text{b}}}^\infty \!\!\!\!\!\!\!\! q_\text{b}^{m_\text{b}} q_\text{m}^{\floor{m_\text{b}t_\text{b}/t_\text{m}}} \\
    \overset{(\text{i})}{=} &~ \frac{\delta}{p_\text{m} p_\text{b}} \sum_{m_1=1}^\infty (\kappa_\text{m} q_\text{m})^{m_1} q_\text{b}^{\ceil{(m_1t_\text{m}+t_\text{cut})/t_\text{b}}} - ~\frac{\delta q_\text{m}}{p_\text{m}} \sum_{m_\text{b}=\ceil{(t_\text{m}+t_\text{cut})/t_\text{b}}}^\infty \!\!\!\!\!\!\!\! q_\text{b}^{m_\text{b}} q_\text{m}^{\floor{m_\text{b}t_\text{b}/t_\text{m}}} \!\! \sum_{m_1=1}^{\floor{(m_\text{b}t_\text{b}-t_\text{cut})/t_\text{m}}} \!\!\!\!\!\!\!\! \kappa_\text{m}^{m_1}  \\
    =&~ \frac{\delta}{p_\text{m} p_\text{b}} \Pi_c\big(\kappa_\text{m}q_\text{m},q_\text{b},\frac{t_\text{m}}{t_\text{b}},-\frac{t_\text{cut}}{t_\text{b}},1,\infty \big) - \frac{\delta q_\text{m}}{p_\text{m}} \!\sum_{m_\text{b}=\ceil{(t_\text{m}+t_\text{cut})/t_\text{b}}}^\infty \!\! \!\!\!\!\!\!\!\! q_\text{b}^{m_\text{b}} q_\text{m}^{\floor{m_\text{b}t_\text{b}/t_\text{m}}} \! \Big( \frac{\kappa_\text{m} \!-\! \kappa_\text{m}^{\floor{(m_\text{b}t_\text{b}-t_\text{cut})/t_\text{m}}+1}}{1\!-\!\kappa_\text{m}} \Big) \\
    =&~ \frac{\delta}{p_\text{m} p_\text{b}} \Pi_c\big(\kappa_\text{m}q_\text{m},q_\text{b},\frac{t_\text{m}}{t_\text{b}},-\frac{t_\text{cut}}{t_\text{b}},1,\infty \big) - \frac{\delta q_\text{m}\kappa_\text{m}}{p_\text{m}(1\!-\!\kappa_\text{m})} \Big( \!\sum_{m_\text{b}=\ceil{(t_\text{m}+t_\text{cut})/t_\text{b}}}^\infty \!\! \!\!\!\!\!\!\!\! q_\text{b}^{m_\text{b}} q_\text{m}^{\floor{m_\text{b}t_\text{b}/t_\text{m}}} \nonumber \\
    & \quad - \!\!\!\!\!\! \sum_{m_\text{b}=\ceil{(t_\text{m}+t_\text{cut})/t_\text{b}}}^\infty \!\! \!\!\!\!\!\!\!\! q_\text{b}^{m_\text{b}} q_\text{m}^{\floor{m_\text{b}t_\text{b}/t_\text{m}}} \kappa_\text{m}^{\floor{(m_\text{b}t_\text{b}-t_\text{cut})/t_\text{m}}}\Big) \\
    =&~ \frac{\delta}{p_\text{m} p_\text{b}} \Pi_c\big(\kappa_\text{m}q_\text{m},q_\text{b},\frac{t_\text{m}}{t_\text{b}},-\frac{t_\text{cut}}{t_\text{b}},1,\infty \big) - \frac{\delta q_\text{m}\kappa_\text{m}}{p_\text{m}(1\!-\!\kappa_\text{m})} \Big( \Pi_f\big(q_\text{b},q_\text{m},\frac{t_\text{b}}{t_\text{m}},0, \ceil{\frac{t_\text{m}\!+\!t_\text{cut}}{t_\text{b}}}\!,\infty\big) \nonumber \\
    & \quad - \Gamma\big(q_\text{b},q_\text{m},\kappa_\text{m},\frac{t_\text{b}}{t_\text{m}},\frac{t_\text{cut}}{t_\text{m}},\ceil{\frac{t_\text{m}\!+\!t_\text{cut}}{t_\text{b}}}\!,\infty \big)\Big)
\end{align}
\begingroup
    \setlength{\abovedisplayskip}{4pt}   
    \setlength{\belowdisplayskip}{4pt}   
    \setlength{\abovedisplayshortskip}{2pt}
    \setlength{\belowdisplayshortskip}{2pt}
    \begin{align}
        \overset{(\text{ii})}{=}&~ \frac{\delta}{p_\text{m} p_\text{b}} \frac{1}{1\!-\!\big(\kappa_\text{m} q_\text{m} q_\text{b}^{t_\text{m}/t_\text{b}}\big)^{m^*}} \Pi_c\big(\kappa_\text{m}q_\text{m},q_\text{b},\frac{t_\text{m}}{t_\text{b}},-\frac{t_\text{cut}}{t_\text{b}},1,m^*\big) \nonumber \\
        & \quad - \frac{\delta q_\text{m} \kappa_\text{m}}{p_\text{m}(1\!-\!\kappa_\text{m})} \Big( \frac{1}{1\!-\! \big(q_\text{b} q_\text{m}^{t_\text{b}/t_\text{m}} \big)^{m^*t_\text{m}/t_\text{b}}} \Pi_f\big(q_\text{b},q_\text{m},\frac{t_\text{b}}{t_\text{m}},0,\ceil{\frac{t_\text{m}\!+\!t_\text{cut}}{t_\text{b}}}\!,\ceil{\frac{t_\text{m}\!+\!t_\text{cut}}{t_\text{b}}} \!+\!\frac{m^*t_\text{m}}{t_\text{b}} \!-\! 1\big) \nonumber \\
        & \quad - \frac{1}{1\!-\! \big(q_\text{b} q_\text{m}^{t_\text{b}/t_\text{m}} \kappa_\text{m}^{t_\text{b}/t_\text{m}} \big)^{m^*t_\text{m}/t_\text{b}}} \Gamma\big(q_\text{b},q_\text{m},\kappa_\text{m},\frac{t_\text{b}}{t_\text{m}},\frac{t_\text{cut}}{t_\text{m}}, \ceil{\frac{t_\text{m}\!+\!t_\text{cut}}{t_\text{b}}}\!, \ceil{\frac{t_\text{m}\!+\!t_\text{cut}}{t_\text{b}}} \!+\! \frac{m^*t_\text{m}}{t_\text{b}} \!-\! 1\big) \Big) ~,
        \label{eq:indiv_term_U_3_A_b1-}
    \end{align}
\endgroup
where we have changed the order of summation in (i).
Note that, in (ii), we recall that ${z^*(t_\text{m}/t_\text{b}) \!=\! m^*}$ and ${z^*(t_\text{b}/t_\text{m}) \!=\! m^*t_\text{m}/t_\text{b}}$, respectively.

On $A_{12}^- A_{1b}^-$, we have
\begin{align}
    &~\mathbb{E}\big(e^{-X_{2}/t_{\text{coh}}} \thinspace\mathds{1}_{A_{12}^- A_{1b}^-}\big) \nonumber \\
    =&~ \underbrace{\bigg(\frac{p_\text{m}^2 p_\text{b}}{(1\!-\!p_\text{m})^2(1\!-\!p_\text{b})}\bigg)}_{\delta} \thinspace \sum_{i=1}^\infty \thinspace \sum_{m_1=im^*+\ceil{t_\text{cut}/t_\text{m}}}^{\infty} e^{-im^*t_\text{m}/t_{\text{coh}}} {(\underbrace{1\!-\!p_\text{m}}_{q_\text{m}})}^{m_1} {(\underbrace{1\!-\!p_\text{m}}_{q_\text{m}})}^{im^*} {(\underbrace{1\!-\!p_\text{b}}_{q_\text{b}})}^{im^*t_\text{m}/t_\text{b}}  \\
    = &~ \delta \sum_{i=1}^{\infty} \big(e^{-t_\text{m}/t_{\text{coh}}} q_\text{m} q_\text{b}^{t_\text{m}/t_\text{b}}\big)^{im^*} \sum_{m_1=im^*+\ceil{t_\text{cut}/t_\text{m}}}^{\infty} q_\text{m}^{m_1} \\
    = &~ \frac{\delta q_\text{m}^{\ceil{t_\text{cut}/t_\text{m}}}}{p_\text{m}} \sum_{i=1}^{\infty} \big( \underbrace{e^{-m^*t_\text{m}/t_{\text{coh}}} q_\text{m}^{2m^*} q_\text{b}^{m^*t_\text{m}/t_\text{b}}}_{=: \thinspace \tilde{\beta}}\big)^{i} \\
    = &~ \frac{\delta q_\text{m}^{\ceil{t_\text{cut}/t_\text{m}}} \tilde{\beta} }{p_\text{m}(1\!-\!\tilde{\beta} )} ~.
    \label{eq:indiv_term_U_3_A_12-A_1b-}
\end{align}
On $A_{b1}^- A_{b2}^-$, we have
\begin{align}
    &~\mathbb{E}\big(e^{-X_{1}/t_{\text{coh}}} \thinspace\mathds{1}_{A_{b1}^- A_{b2}^-}\big) \nonumber \\
    =&~ \underbrace{\bigg(\frac{p_\text{m}^2 p_\text{b}}{(1\!-\!p_\text{m})^2(1\!-\!p_\text{b})}\bigg)}_{\delta} \thinspace \sum_{m_1=1}^\infty \thinspace \sum_{m_\text{b}=\ceil{(m_1 t_\text{m}+t_\text{cut})/t_\text{b}}}^{\infty} e^{-m_1t_\text{m}/t_{\text{coh}}} {(\underbrace{1\!-\!p_\text{m}}_{q_\text{m}})}^{2m_1}{(\underbrace{1\!-\!p_\text{b}}_{q_\text{b}})}^{m_\text{b}}  \\
    = &~ \delta \sum_{m_1=1}^{\infty} \big(q_\text{m} \underbrace{q_\text{m} e^{-t_\text{m}/t_{\text{coh}}}}_{\kappa_\text{m}} \big)^{m_1} \!\! \sum_{m_\text{b}=\ceil{(m_1 t_\text{m}+t_\text{cut})/t_\text{b}}}^{\infty} q_\text{b}^{m_\text{b}} \nonumber \\
    = &~ \frac{\delta}{p_\text{b}} \sum_{m_1=1}^{\infty} (\kappa_\text{m} q_\text{m})^{m_1} q_\text{b}^{\ceil{(m_1 t_\text{m}+t_\text{cut})/t_\text{b}}} \\
    = &~ \frac{\delta}{p_\text{b}} \Pi_c\big(\kappa_\text{m} q_\text{m}, q_\text{b},\frac{t_\text{m}}{t_\text{b}},-\frac{t_\text{cut}}{t_\text{b}},1,\infty \big) \\
    \overset{(\text{i})}{=} &~ \frac{\delta}{p_\text{b}} \frac{\Pi_c\big(\kappa_\text{m} q_\text{m}, q_\text{b},{t_\text{m}}/{t_\text{b}},-{t_\text{cut}}/{t_\text{b}},1,m^* \big)}{1\!-\!\big(\kappa_\text{m}q_\text{m}q_\text{b}^{t_\text{m}/t_\text{b}}\big)^{m^*}} ~,
    \label{eq:indiv_term_U_3_A_b1-A_b2-}
\end{align}
where, in (i), we recall that ${z^*(t_\text{m}/t_\text{b}) \!=\! m^*}$.

On $A_{1}^- A_{2}^-$, we have
\begin{align}
    &~\mathbb{E}\big(e^{-X_{b}/t_{\text{coh}}} \thinspace\mathds{1}_{A_{1}^- A_{2}^-}\big) \nonumber \\
    =& \underbrace{\bigg(\frac{p_\text{m}^2 p_\text{b}}{(1\!-\!p_\text{m})^2(1\!-\!p_\text{b})}\bigg)}_{\delta} \thinspace \sum_{m_\text{b}=1}^\infty \thinspace \sum_{m_1=\ceil{(m_\text{b} t_\text{b}+t_\text{cut})/t_\text{m}}}^{\infty} \!\!\!\!\!\!\!\! e^{-m_\text{b}t_\text{b}/t_{\text{coh}}} {(\underbrace{1\!-\!p_\text{m}}_{q_\text{m}})}^{2m_1} {(\underbrace{1\!-\!p_\text{b}}_{q_\text{b}})}^{m_\text{b}}   \\
    = &~ \delta \sum_{m_\text{b}=1}^{\infty} \big(\underbrace{q_\text{b} e^{-t_\text{b}/t_{\text{coh}}}}_{\kappa_\text{b}} \big)^{m_\text{b}} \!\!\! \sum_{m_1=\ceil{(m_\text{b} t_\text{b}+t_\text{cut})/t_\text{m}}}^{\infty} \!\!\! q_\text{m}^{2 m_1} \nonumber \\
    = &~ \frac{\delta}{1\!-\! q_\text{m}^2} \sum_{m_\text{b}=1}^{\infty} \kappa_\text{b}^{m_\text{b}} q_\text{m}^{2\ceil{(m_\text{b} t_\text{b}+t_\text{cut})/t_\text{m})}} \\
    = &~ \frac{\delta}{1\!-\! q_\text{m}^2} \Pi_c\big(\kappa_\text{b},q_\text{m}^2,\frac{t_\text{b}}{t_\text{m}},-\frac{t_\text{cut}}{t_\text{m}},1,\infty\big) \\
    \overset{(\text{i})}{=} &~ \frac{\delta}{(1\!-\! q_\text{m}^2)} \frac{\Pi_c\big(\kappa_\text{b},q_\text{m}^2,{t_\text{b}}/{t_\text{m}},-{t_\text{cut}}/{t_\text{m}},1,m^*t_\text{m}/t_\text{b}\big)}{1\!-\!\big(\kappa_\text{b}q_\text{m}^{2t_\text{b}/t_\text{m}}\big)^{m^*t_\text{m}/t_\text{b}} } ~,
    \label{eq:indiv_term_U_3_A_1-A_2-}
\end{align}
where, in (i), we recall that ${z^*(t_\text{b}/t_\text{m}) \!=\! m^*t_\text{m}/t_\text{b}}$.

On $A_{1}^- A_{b}^-$, we have
\begin{align}
    &~\mathbb{E}\big(e^{-X_{2}/t_{\text{coh}}} \thinspace\mathds{1}_{A_{1}^- A_{b}^-}\big) \nonumber \\
    =& \underbrace{\bigg(\frac{p_\text{m}^2 p_\text{b}}{(1\!-\!p_\text{m})^2(1\!-\!p_\text{b})}\bigg)}_{\delta} \sum_{m_2=1}^{\infty} \thinspace \sum_{k=\ceil{(m_2 t_\text{m}+t_\text{cut})/(m^* t_\text{m})}}^{\infty} \!\!\!\!\!\!\!\! e^{-m_2t_\text{m}/t_{\text{coh}}} {(\underbrace{1\!-\!p_\text{m}}_{q_\text{m}})}^{km^*} {(\underbrace{1\!-\!p_\text{m}}_{q_\text{m}})}^{m_2} {(\underbrace{1\!-\!p_\text{b}}_{q_\text{b}})}^{km^*t_\text{m}/t_\text{b}} \\
    = &~ \delta \sum_{m_2=1}^{\infty} \big(\underbrace{q_\text{m} e^{-t_\text{m}/t_{\text{coh}}}}_{\kappa_\text{m}}\big)^{m_2} \!\!\!\! \sum_{k=\ceil{(m_2 t_\text{m}+t_\text{cut})/(m^* t_\text{m})}}^{\infty} (\underbrace{q_\text{m}^{m^*} q_\text{b}^{m^* t_\text{m}/t_\text{b}}}_{\gamma})^k  \\
    = &~ \frac{\delta}{1\!-\!\gamma} \sum_{m_2=1}^{\infty} \kappa_\text{m}^{m_2} \gamma^{\ceil{(m_2 t_\text{m}+t_\text{cut})/(m^*t_\text{m})}} \\
    = &~ \frac{\delta}{1\!-\!\gamma} \Pi_c\big(\kappa_\text{m},\gamma,\frac{1}{m^*},-\frac{t_\text{cut}}{m^*t_\text{m}},1,\infty\big)  \\
    \overset{(\text{i})}{=} &~ \frac{\delta}{1\!-\!\gamma} \frac{\Pi_c\big(\kappa_\text{m},\gamma,{1}/{m^*},-{t_\text{cut}}/{(m^*t_\text{m})},1,m^*\big)}{1\!-\!(\kappa_\text{m} \gamma^{1/m^*})^{m^*}} ~,
    \label{eq:indiv_term_U_3_A_1-A_b-}
\end{align}
where, in (i), we recall that ${z^*(1/m^*) \!=\! m^*}$.

\section{Teleporting a mixed state using a Werner state}
\label{sec:teleportation_fidelity}

\begin{lemma}
Consider the teleportation of a data qubit in a mixed state 
\begingroup
    \setlength{\abovedisplayskip}{6pt}   
    \setlength{\belowdisplayskip}{6pt}   
    \setlength{\abovedisplayshortskip}{2pt}
    \setlength{\belowdisplayshortskip}{2pt}
    \begin{align}
    \label{eq:data-qubit-density-matrix}
        \rho_{\mathrm{data}} = p_d \ket{\phi}\!\bra{\phi} + (1-p_d)\frac{\mathbb{I}_2}{2},
    \end{align}
\endgroup
where $p_d \in [0,1]$ is the probability weight of the pure state $\ket{\phi}$ and $\mathbb{I}_2$ is the $2 \times 2$ identity matrix.
Suppose Alice and Bob share an entangled resource described by the Werner state
\begingroup
    \setlength{\abovedisplayskip}{6pt}   
    \setlength{\belowdisplayskip}{6pt}   
    \setlength{\abovedisplayshortskip}{2pt}
    \setlength{\belowdisplayshortskip}{2pt}
    \begin{align}
    \label{eq:werner-state}
        \rho_w = w \ket{\Phi^+}\!\bra{\Phi^+} + (1-w) \frac{\mathbb{I}_4}{4},
    \end{align}
\endgroup
where $\ket{\Phi^+}$ is the maximally entangled Bell state, $w \in [0,1]$ in the corresponding Werner parameter and $\mathbb{I}_2$ is the $2 \times 2$ identity matrix.
Assume that each qubit is subject to decoherence modelled by a depolarising channel $\mathcal{E}_t$ acting as
\begingroup
    \setlength{\abovedisplayskip}{6pt}   
    \setlength{\belowdisplayskip}{6pt}   
    \setlength{\abovedisplayshortskip}{2pt}
    \setlength{\belowdisplayshortskip}{2pt}
    \begin{align}
        \mathcal{E}_t(\ket{\psi}\!\bra{\psi}) = e^{-t/t_{\mathrm{coh}}} \ket{\psi}\!\bra{\psi} + \bigl(1 - e^{-t/t_{\mathrm{coh}}}\bigr) \frac{\mathbb{I}_2}{2},
    \end{align}
\endgroup
with coherence time $t_{\mathrm{coh}}$ and storage duration $t$.
Further, assume that local operations are instantaneous compared to classical communication and entanglement generation times, so that the time for Bell measurement and application of Pauli corrections is negligible.
Then, the fidelity of the teleported qubit with respect to $\ket{\phi}$ is given by
\begingroup
    \setlength{\abovedisplayskip}{6pt}   
    \setlength{\belowdisplayskip}{6pt}   
    \setlength{\abovedisplayshortskip}{2pt}
    \setlength{\belowdisplayshortskip}{2pt}
    \begin{align}
    \label{eq:teleportation_fidelity_general_expression}
        F_{\mathrm{tel}} = \frac{1 + w p_d e^{-t_{\mathrm{class}}/t_{\mathrm{coh}}}}{2}~,
    \end{align}
\endgroup
where $t_{\text{class}}$ denotes the classical communication time required to transmit the Pauli correction message.
\end{lemma}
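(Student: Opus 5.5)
The plan is to exploit linearity of the teleportation map together with the fact that all the noise in the statement is depolarising, which commutes with Pauli corrections. First, I would write the standard teleportation protocol as a quantum channel $\mathcal{T}_{\rho_w}$ acting on the data qubit. This channel consists of a Bell measurement on the data qubit and Alice's half of $\rho_w$, followed by the Pauli correction $\sigma_{ij}$ at Bob's side.

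Next, I would show that for the Werner resource $\rho_w$ this channel is exactly a depolarising channel with parameter $w$, i.e. $\mathcal{T}_{\rho_w}(\rho)=w\rho+(1-w)\mathbb{I}_2/2$. This follows by linearity in the resource:
\begin{itemize}
\item the $\ket{\Phi^+}$ component of $\rho_w$ teleports perfectly;
\item the $\mathbb{I}_4/4$ component makes Bob's qubit maximally mixed and uncorrelated with the measurement outcome, and $\sigma_{ij}(\mathbb{I}_2/2)\sigma_{ij}=\mathbb{I}_2/2$, so it outputs $\mathbb{I}_2/2$ regardless of the correction.
\end{itemize}

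Second, I would account for the memory decoherence at Bob's node during $t_{\mathrm{class}}$. Under the instantaneous-operation assumption, Alice's half is measured immediately, so only Bob's qubit waits. The depolarising channel $\mathcal{E}_{t_{\mathrm{class}}}$ acts on Bob's qubit before the correction. Since $\mathcal{E}_t$ is unitarily covariant, $\sigma\,\mathcal{E}_t(\cdot)\,\sigma^\dagger=\mathcal{E}_t(\sigma\cdot\sigma^\dagger)$, so the noise may be moved past the correction and applied to the output. Equivalently, applying $\mathcal{E}_t$ to one half of a Werner state gives a Werner state with parameter $w e^{-t/t_{\mathrm{coh}}}$; this is the one-qubit analogue of the computation in \ref{assumption:depolarising_noise}, using the maximally mixed marginal. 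Either route gives an overall output of $D_{\lambda}(\rho_{\mathrm{data}})$ with $\lambda=w e^{-t_{\mathrm{class}}/t_{\mathrm{coh}}}$, where $D_\lambda(\rho)=\lambda\rho+(1-\lambda)\mathbb{I}_2/2$.

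Third, I would compose with the input. Since $\rho_{\mathrm{data}}=D_{p_d}(\ket{\phi}\!\bra{\phi})$ and depolarising channels compose multiplicatively, the output is $D_{\lambda p_d}(\ket{\phi}\!\bra{\phi})$. Its fidelity with $\ket{\phi}$ is $\lambda p_d+(1-\lambda p_d)/2=(1+wp_de^{-t_{\mathrm{class}}/t_{\mathrm{coh}}})/2$, which is exactly the claimed expression.

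The main obstacle is the first step: establishing rigorously that teleportation through $\ket{\Phi^+}\!\bra{\Phi^+}$ with the standard corrections is the identity channel for every outcome. I would verify this directly from the Bell-basis expansion $\ket{\psi}_A\ket{\Phi^+}_{A'B}=\tfrac12\sum_{ij}\ket{\Phi_{ij}}_{AA'}\sigma_{ij}^{\dagger}\ket{\psi}_B$. Extending this from pure to mixed inputs then follows by linearity. A second point to confirm is that the outcome probabilities are $1/4$ for each branch, so that averaging over outcomes introduces no extra weighting.
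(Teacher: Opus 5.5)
Your proof is correct, but it is organised differently from the paper's.

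\textbf{The paper's route.} The paper expands $\rho_w$ in the full Bell basis, with weights $p_1=(1+3w)/4$ and $p_2=p_3=p_4=(1-w)/4$. It carries out the protocol gate by gate (CNOT, then Hadamard) on the explicit density matrix for the $\ket{\Psi^-}$ resource, and treats the other three Bell resources ``similarly''. It then collapses the resulting Pauli mixture with the twirl identity $\rho_{\mathrm{data}}+\rho_z+\rho_x+\rho_{zx}=2\,\mathbb{I}$ to obtain $w\rho_{\mathrm{data}}+(1-w)\mathbb{I}/2$, keeping $p_d$ inside the matrix throughout. Finally it simply applies $\mathcal{E}_{t_{\mathrm{class}}}$ to the already-corrected output.

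\textbf{Your route.} You split the resource only as $w\ket{\Phi^+}\!\bra{\Phi^+}+(1-w)\mathbb{I}_4/4$. You identify the two component channels (identity and complete depolarisation) without any gate-level computation. You then factor the mixed input as $D_{p_d}$ applied to $\ket{\phi}\!\bra{\phi}$ and use multiplicativity of depolarising channels.

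\textbf{Comparison.} Your argument is shorter and more structural, and it makes explicit a point the paper glosses over. Physically, the memory noise acts on Bob's qubit \emph{before} the Pauli correction. What justifies applying it to the corrected output is the unitary covariance of $\mathcal{E}_t$. Equivalently, noise on Bob's half commutes with Alice's measurement and turns $\rho_w$ into a Werner state with parameter $we^{-t_{\mathrm{class}}/t_{\mathrm{coh}}}$. Your argument also generalises immediately, for instance to any Bell-diagonal resource or any unitarily covariant memory noise. In exchange, the paper's explicit computation is self-contained and displays the per-outcome post-measurement states and the required corrections concretely. You instead import these from the standard Bell-basis identity; your phase convention there, $\sigma_{ij}^\dagger$ versus $\sigma_{ij}$ or $\sigma_{ij}^*$, is immaterial at the density-matrix level.
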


\begin{proof}
We begin with the density matrix representation of the joint state between Alice and Bob throughout the stages of the standard teleportation protocol. 
The data qubit is represented as ${\ket{\phi} = a\ket{0}+b\ket{1}}$, where $a,b$ are complex numbers.
The entanglement resource in~\eqref{eq:werner-state} can be written as a sum of the Bell states  ${\ket{\Phi^+}}$, ${\ket{\Phi^-}}$, ${\ket{\Psi^+}}$, and ${\ket{\Psi^-}}$ as follows
\begingroup
    \setlength{\abovedisplayskip}{6pt}   
    \setlength{\belowdisplayskip}{6pt}   
    \setlength{\abovedisplayshortskip}{2pt}
    \setlength{\belowdisplayshortskip}{2pt}
    \begin{align}
    \label{eq:werner_state_maximally_mixed}
        \rho_w = p_1 \ket{\Phi^+}\!\!\bra{\Phi^+} + p_2\ket{\Phi^-}\!\!\bra{\Phi^-} + p_3\ket{\Psi^+}\!\!\bra{\Psi^+} + p_4\ket{\Psi^-}\!\!\bra{\Psi^-}~,
    \end{align}
\endgroup
where ${p_1=(1+3w)/4}$ and ${p_2 = p_3 = p_4 = (1 - w)/4}$.
In the following section, we investigate teleportation using these maximally entangled states separately.
We consider that Alice holds the data qubit $\rho_{\text{data}}$, as given in~\eqref{eq:data-qubit-density-matrix}. 
We begin with the resource state ${\ket{\Psi^-}=(\ket{01}-\ket{10})/\sqrt{2}}$ being shared between Alice and Bob. 
In this context, according to the standard convention, the first and second qubits belong to Alice: the first qubit is the data qubit, and the second is half of the shared entanglement resource. 
Bob holds the third qubit.
For clarity, we explicitly write the qubit indices in the first line, while omitting them in subsequent expressions for brevity. 
Following the stages of the standard teleportation protocol with the state $\ket{\Psi^-}$, the state immediately before Alice’s measurement is written as:
\begingroup
    \setlength{\abovedisplayskip}{6pt}   
    \setlength{\belowdisplayskip}{6pt}   
    \setlength{\abovedisplayshortskip}{2pt}
    \setlength{\belowdisplayshortskip}{2pt}
    \begin{align}
        & (\rho_{\text{data}})_1\otimes (\ket{\Psi^-}\!\bra{\Psi^-} )_{23}  \nonumber \\
        =& \Big( \big(p_d\abs{a}^2\!+\!\frac{1\!-\!p_d}{2}\big)\ket{0}\!\bra{0} \!+\! p_dab^*\ket{0}\!\bra{1} \!+\! p_da^*b\ket{1}\!\bra{0} \!+\! \big(p_d\abs{b}^2+\frac{1\!-\!p_d}{2}\big)\ket{1}\!\bra{1}\big)_{1} \nonumber \\
        & ~ \otimes \frac{1}{2}\Big(\ket{01}\!\bra{01} \!-\! \ket{01}\!\bra{10} \!-\! \ket{10}\!\bra{01} \!+\! \ket{10}\!\bra{10}\Big)_{23} \\
        \overset{O_1}{\longrightarrow} & \frac{1}{2}\big(p_d\abs{a}^2\!+\!\frac{1\!-\!p_d}{2} \big) \big(\ket{001}\!\bra{001} \!-\! \ket{001}\!\bra{010} \!-\! \ket{010}\!\bra{001} \!+\!\ket{010}\!\bra{010} \big) \nonumber \\
        & ~ + \frac{p_dab^*}{2}\big(\ket{001}\!\bra{111} \!-\! \ket{001}\!\bra{100} \!-\! \ket{010}\!\bra{111} \!+\!\ket{010}\!\bra{100} \big) \nonumber \\
        & ~ + \frac{p_da^*b}{2}\big(\ket{111}\!\bra{001} \!-\! \ket{111}\!\bra{010} \!-\! \ket{100}\!\bra{001} \!+\!\ket{100}\!\bra{010} \big) \nonumber \\
        & ~ + \frac{1}{2}\big(p_d\abs{b}^2\!+\!\frac{1\!-\!p_d}{2} \big) \big(\ket{111}\!\bra{111} \!-\! \ket{111}\!\bra{100} \!-\! \ket{100}\!\bra{111} \!+\!\ket{100}\!\bra{100} \big)
        \end{align}
    \endgroup
\begingroup
    \setlength{\abovedisplayskip}{6pt}   
    \setlength{\belowdisplayskip}{6pt}   
    \setlength{\abovedisplayshortskip}{2pt}
    \setlength{\belowdisplayshortskip}{2pt}
    \begin{align}
        \overset{O_2}{\longrightarrow} & \frac{1}{4}\big(p_d\abs{a}^2\!+\!\frac{1\!-\!p_d}{2} \big) \big((\ket{001}\!+\!\ket{101})(\bra{001}\!+\!\bra{101}) \!-\! (\ket{001}\!+\!\ket{101})(\bra{010}\!+\!\bra{110}) \nonumber \\
        & ~ - (\ket{010}\!+\!\ket{110})(\bra{001}\!+\!\bra{101}) \!+\! (\ket{010}\!+\!\ket{110})(\bra{010}\!+\!\bra{110}) \big) \nonumber \\
        & ~ + \frac{p_dab^*}{4} \big((\ket{001}\!+\!\ket{101})(\bra{011}\!-\!\bra{111}) \!-\! (\ket{001}\!+\!\ket{101})(\bra{000}\!-\!\bra{100}) \nonumber \\
        & ~ - (\ket{010}\!+\!\ket{110})(\bra{011}\!-\!\bra{111}) \!+\! (\ket{010}\!+\!\ket{110})(\bra{000}\!-\!\bra{100}) \big) \nonumber \\
        & ~ + \frac{p_da^*b}{4} \big((\ket{011}\!-\!\ket{111})(\bra{001}\!+\!\bra{101}) \!-\! (\ket{011}\!-\!\ket{111})(\bra{010}\!+\!\bra{110}) \nonumber \\
        & ~ - (\ket{000}\!-\!\ket{100})(\bra{001}\!+\!\bra{101}) \!+\! (\ket{000}\!-\!\ket{100})(\bra{010}\!-\!\bra{110}) \big) \nonumber \\
        & ~ + \frac{1}{4}\big(p_d\abs{b}^2\!+\!\frac{1\!-\!p_d}{2} \big) \big((\ket{011}\!-\!\ket{111})(\bra{011}\!-\!\bra{111}) \!-\! (\ket{011}\!-\!\ket{111})(\bra{000}\!-\!\bra{100}) \nonumber \\
        & ~ - (\ket{000}\!-\!\ket{100})(\bra{011}\!-\!\bra{111}) \!+\! (\ket{000}\!-\!\ket{100})(\bra{000}\!-\!\bra{100}) \big) \\
        = & ~\frac{1}{4} \ket{00}\!\bra{00} \Big( \big(p_d\abs{a}^2 \!+\!\frac{1\!-\!p_d}{2}\big)\ket{1}\!\bra{1} \!-\! p_dab^* \ket{1}\!\bra{0} \!-\! p_da^*b \ket{0}\!\bra{1} \!+\! \big(p_d\abs{b}^2 \!+\! \frac{1\!-\!p_d}{2}\big)\ket{0}\!\bra{0} \Big) \nonumber \\
        & ~ + \frac{1}{4} \ket{01}\!\bra{01} \Big( \big(p_d\abs{a}^2 \!+\!\frac{1\!-\!p_d}{2}\big)\ket{0}\!\bra{0} \!-\! p_dab^* \ket{0}\!\bra{1} \!-\! p_da^*b \ket{1}\!\bra{0} \!+\! \big(p_d\abs{b}^2 \!+\! \frac{1\!-\!p_d}{2}\big)\ket{1}\!\bra{1} \Big) \nonumber \\
        & ~ + \frac{1}{4} \ket{10}\!\bra{10} \Big( \big(p_d\abs{a}^2 \!+\!\frac{1\!-\!p_d}{2}\big)\ket{1}\!\bra{1} \!+\! p_dab^* \ket{1}\!\bra{0} \!+\! p_da^*b \ket{0}\!\bra{1} \!+\! \big(p_d\abs{b}^2 \!+\! \frac{1\!-\!p_d}{2}\big)\ket{0}\!\bra{0} \Big) \nonumber \\
        & ~ + \frac{1}{4} \ket{11}\!\bra{11} \Big( \big(p_d\abs{a}^2 \!+\!\frac{1\!-\!p_d}{2}\big)\ket{0}\!\bra{0} \!+\! p_dab^* \ket{0}\!\bra{1} \!+\! p_da^*b \ket{1}\!\bra{0} \!+\! \big(p_d\abs{b}^2 \!+\! \frac{1\!-\!p_d}{2}\big)\ket{1}\!\bra{1} \Big) \nonumber \\
        & ~ + \rho_{\text{res}} \\
        =& ~\frac{1}{4} \ket{00}\!\bra{00} \otimes \underbrace{ZX\rho_\text{data}XZ}_{\rho_{zx}} + \frac{1}{4}\ket{01}\!\bra{01}\otimes \underbrace{Z\rho_\text{data}Z}_{\rho_{z}} + \frac{1}{4}\ket{10}\!\bra{10}\otimes \underbrace{X\rho_\text{data}X}_{\rho_{x}} + \frac{1}{4}\ket{11}\!\bra{11}\otimes \rho_\text{data}  \nonumber \\
        & ~ + \rho_\text{res}~,
    \end{align}
\endgroup
where $\rho_\text{res}$ denotes the terms not contributing to the final state when measured in the standard basis.
The operations $O_1$ and $O_2$ correspond to ${CNOT}_{12}\otimes {I}_3$ and ${H}_1\otimes {I}_2\otimes {I}_3$, respectively, with ${I}$ denoting the identity operator and the subscripts indicating the qubits on which the gates act.
We can perform similar calculations for teleportation using the remaining three maximally entangled states $\ket{\Phi^+}$, $\ket{\Phi^-}$, and $\ket{\Psi^+}$.
Therefore, starting the teleportation using the state $\rho_w$ from~\eqref{eq:werner_state_maximally_mixed} as the entanglement resource, to teleport the data qubit $\rho_\text{data}$, the state of the system immediately before Alice's measurement is given by:
\begin{align} \label{eq:teleported-state-density-matrix}
    &\frac{1}{4}\ket{00}\!\bra{00} \underbrace{\big(p_1\rho_\text{data} + p_2\rho_{z} + p_3\rho_{x} + p_4\rho_{zx}\big)}_{=:\rho_{{\Phi}^+}} + \frac{1}{4}\ket{01}\!\bra{01} \underbrace{\big(p_1\rho_{x} + p_2\rho_{zx} + p_3\rho_{\text{data}} + p_4\rho_{z}\big)}_{=:\rho_{{\Phi}^-}} \nonumber \\
    & + \frac{1}{4}\ket{10}\!\bra{10} \underbrace{\big(p_1\rho_{z} + p_2\rho_{\text{data}} + p_3\rho_{zx} + p_4\rho_{x}\big)}_{=:\rho_{{\Psi}^+}} + \frac{1}{4}\ket{11}\!\bra{11} \underbrace{\big(p_1\rho_{zx} + p_2\rho_{x} + p_3\rho_{z} + p_4\rho_{\text{data}}\big)}_{=:\rho_{{\Psi}^-}} \nonumber \\
    & + \rho_{\text{res}}^\prime~,
\end{align}
where $\rho_{\text{res}}^\prime$ collects all terms not contributing to the measurement result. 
After Alice performs a Bell-state measurement on the first two qubits, she communicates the outcome to Bob over a classical channel.
Based on the measurement outcome, Bob applies the appropriate Pauli correction to recover the desired output state $\rho_{\Phi^+}$ as
\begin{align}
    \rho_{\Phi^+} = X\rho_{\Phi^-}X = Z\rho_{\Psi^+}Z = ZX\rho_{\Psi^-}ZX~.
\end{align}
Substituting the specific probabilities for the Werner state, given by $p_1 \!=\! ({1\!+\!3w}/{4})$ and ${p_2 \!=\! p_3 \!=\! p_4 \!=\! ({1\!-\!w}/{4})}$ into~\eqref{eq:teleported-state-density-matrix}, and using the identity ${(\rho_\text{data}\!+\!\rho_z\!+\!\rho_x\!+\!\rho_{zx}) \!=\! 2\thinspace\mathbb{I}}$, we can express ${\rho_{\Phi^+} \!=\! w\rho_\text{data} \!+\!(1\!-\!w)\mathbb{I}/2}$.
However, the classical communication of the measurement outcome from Alice to Bob takes a finite time $t_{\text{class}}$, during which the stored qubit in Bob's memory undergoes decoherence.
Thus, the resulting teleported state is
\begingroup
    \setlength{\abovedisplayskip}{6pt}   
    \setlength{\belowdisplayskip}{6pt}   
    \setlength{\abovedisplayshortskip}{2pt}
    \setlength{\belowdisplayshortskip}{2pt}
    \begin{align}
        \rho_\text{tel} &= \mathcal{E}_{t_\text{class}}\Big(w\rho_\text{data} + (1\!-\!w)\frac{\mathbb{I}}{2}\Big) \nonumber \\
        &= wp_d e^{-t_\text{class}/t_\text{coh}} \ket{\phi}\!\bra{\phi} + (1\!-\!wp_d e^{-t_\text{class}/t_\text{coh}})\frac{\mathds{1}}{2}~.
    \end{align}
\endgroup
Consequently, the teleportation fidelity is given by
\begingroup
    \setlength{\abovedisplayskip}{6pt}   
    \setlength{\belowdisplayskip}{6pt}   
    \setlength{\abovedisplayshortskip}{2pt}
    \setlength{\belowdisplayshortskip}{2pt}
    \begin{align}
        F_\text{tel} = \text{Tr}(\ket{\phi}\!\!\bra{\phi}\rho_\text{tel}) = \frac{1\!+\!w p_d e^{-t_\text{class}/t_\text{coh}}}{2}~. \nonumber
    \end{align}
\endgroup
Thus, we have established~\eqref{eq:teleportation_fidelity_general_expression}.
\end{proof}

\end{document}